\newcommand{\fP}{\mathfrak{A}}
\newcommand{\fQ}{\mathfrak{B}}
\long\def\ignore#1{}
\newtheorem{theorem}{Theorem}
\newtheorem{lemma}[theorem]{Lemma}
\newtheorem{corollary}[theorem]{Corollary}
\newtheorem{definition}[theorem]{Definition}
\newtheorem{observation}[theorem]{Observation}
\newtheorem*{rep@theorem}{\rep@title}
\newcommand{\newreptheorem}[2]{%
\newenvironment{rep#1}[1]{%
 \def\rep@title{#2 \textbf{\ref{##1}}}%
 \begin{rep@theorem}}%
 {\end{rep@theorem}}}
\newcommand{\commentout}[1]{}
\newcommand{\eat}[1]{}
\newcommand{\calH}{{\mathcal H}}
\newcommand{\dualH}{{\mathcal H}^\star}
\newcommand{\calL}{{\mathcal L}}
\newcommand{\calI}{{\mathcal I}}
\newcommand{\calP}{{\mathcal P}}
\newcommand{\calK}{{\mathcal K}}
\newcommand{\bcalK}{\overline{\mathcal K}}
\newcommand{\calS}{{\mathcal S}}
\newcommand{\calR}{{\mathcal R}}
\newcommand{\calE}{{\mathcal E}}
\newcommand{\dualE}{{\mathcal E}^\star}
\newcommand{\TK}{\mathsf{TK}}
\newcommand{\extent}{{\mathfrak E}}
\newcommand{\Prob}{{\mathrm{Pr}}}
\renewcommand{\Pr}{{\mathrm{Pr}}}
\newcommand{\Exp}{{\mathbb{E}}}
\newcommand{\obj}{\mathsf{obj}}
\newcommand{\E}{{\mathbb{E}}}
\newcommand{\R}{{\mathbb{R}}}
\newcommand{\dist}{\mathrm{d}}
\renewcommand{\d}{\mathrm{d}}
\newcommand{\ver}{\mathrm{Ver}}
\newcommand{\polar}{\star}
\newcommand{\perror}{\tau}
\renewcommand{\S}{\mathbb{S}}
\newcommand{\bH}{\overline{H}}
\newcommand{\bK}{\overline{K}}
\newcommand{\tX}{\widetilde{X}}
\newcommand{\tO}{\widetilde{O}}
\newcommand{\pois}{\mathrm{Pois}}
\newcommand{\e}{\epsilon}
\newcommand{\indicator}{\mathbb{I}}
\newcommand{\p}{p}
\renewcommand{\P}{\mathcal{P}}
\newcommand{\V}{\mathcal{V}}
\newcommand{\poly}{\mathrm{poly}}
\newcommand{\polylog}{\mathrm{polylog}}
\newcommand{\Halfplanes}{\mathds{H}}
\newcommand{\expkernel}{$\eps$-\textsc{exp-kernel}}
\newcommand{\exprkernel}{$(\e,r)$-\textsc{fpow-kernel}}
\newcommand{\probkernel}{$(\eps,\perror)$-\textsc{quant-kernel}}
\newcommand{\conedecomp}{\mathds{A}}
\newcommand{\pu}{\Pr^R}
\newcommand{\interior}{\mathrm{int}\,}
\newcommand{\segment}{\mathsf{seg}}
\newcommand{\goodseg}{\mathsf{Gs}}
\newcommand{\badseg}{\mathsf{Bs}}
\newcommand{\dw}{\omega}   
\newcommand{\innerprod}[2]{\langle #1,#2 \rangle}
\newcommand{\CH}{\mathsf{ConvH}}
\newcommand{\eps}{\varepsilon}
\renewcommand{\epsilon}{\varepsilon}
\newcommand{\grad}{\nabla}
\newcommand{\hypercube}{\overline{\mathbb{C}}}
\newcommand{\M}{M}
\newcommand{\topic}[1]{\vspace{0.2cm}\noindent {\bf #1}}
\newcommand{\jeff}[1]{{\color{blue} $\langle${\sffamily\small Jeff: }#1$\rangle$}}
\newcommand{\jian}[1]{{\color{red} $\langle${\sffamily\small Jian: }#1$\rangle$}}
\begin{document}

\title{
$\varepsilon$-Kernel Coresets for Stochastic Points
}


\author[1]{Lingxiao Huang}
\author[1]{Jian Li}
\author[2]{Jeff M. Phillips}
\author[3]{Haitao Wang}
\affil[1]{\small Tsinghua University, Beijing 100084, China, {\tt lijian83@mail.tsinghua.edu.cn}}
\affil[2]{\small University of Utah, Salt Lake City, UT 84112, USA, {\tt jeffp@cs.utah.edu}}
\affil[3]{\small Utah State University, Logan, UT 84322, USA, {\tt haitao.wang@usu.edu}}

\maketitle
\thispagestyle{empty}

\begin{abstract}
With the dramatic growth in the number of application domains that generate probabilistic, noisy and uncertain data,
there has been an increasing interest in designing algorithms for  geometric or combinatorial optimization problems
over such data.
In this paper, we initiate the study of constructing $\eps$-kernel coresets for uncertain points.
We consider uncertainty in the existential model where each point's location is fixed but only occurs with a certain probability,
and the locational model where each point has a probability distribution describing its location.
An $\eps$-kernel coreset approximates the width of a point set in any direction.
We consider approximating the expected width (an \expkernel), as well as
the probability distribution on the width (an \probkernel) for any direction. 
We show that there exists a set of $O(\eps^{-(d-1)/2})$ deterministic points which approximate the expected width under the existential and locational models,
and we provide efficient algorithms for constructing such coresets.
We show, however, it is not always possible to find a subset of the original uncertain points which provides such an approximation.
However, if the existential probability of each point is lower bounded by a constant,
an \expkernel\ is still possible.
We also provide efficient algorithms for
construct an \probkernel\ coreset in nearly linear time.
Our techniques utilize or connect to
several important notions in probability and geometry,
such as Kolmogorov distances, VC uniform convergence and Tukey depth,
and may be useful in other geometric optimization problem in
stochastic settings.
Finally, combining with known techniques, we show a few applications to approximating the extent of uncertain functions,
maintaining extent measures for stochastic moving points and some shape fitting problems under uncertainty.
\eat{
In particular, we obtain the first PTAS for the minimum stochastic $j$-flat-center problem,
generalizing a previous result for stochastic minimum enclosing ball.
}
\end{abstract}

\section{Introduction}

\topic{Uncertain Data Models:}
The wide deployment of sensor monitoring infrastructure and
increasing prevalence of technologies such as data integration and
cleaning \cite{cheng2008cleaning,dong2007data} have resulted
in an abundance of uncertain, noisy and probabilistic data in many scientific and application domains.
Managing, analyzing, and solving optimization problems over such data have become
an increasingly important issue and have attracted significant attentions from several research communities
including theoretical computer science, databases, machine learning
and wireless networks.
\eat{
Consider a typical sensor monitoring application, where sensors are used to monitor the identities,
locations and other features of a number of objects (such as people, vehicles, or animals).
However, the sensor data can be very noisy and even conflicting with each other (e.g.,
the same object appears in two different locations at the same time).
A major way to capture such uncertain sensor readings
is to use probabilistic models (see e.g., \cite{conf/cidr/DeshpandeGM05,DBLP:conf/vldb/DeshpandeGMHH04,jeffery2006declarative}).
}

In this paper, we focus on two stochastic models, the existential model and locational model.
Both models have been studied extensively for a variety of computational geometry problems or combinatorial problems,
such as closest pairs \cite{KCS11b}, nearest neighbors \cite{agarwal2012nearest,KCS11b}, minimum spanning trees \cite{huang2015approximating,KCS11a},
convex hulls \cite{AHPSYZ14,FHKS16,SVY13,XLJ16},
maxima \cite{afshani2011approximate}, perfect matchings \cite{huang2015approximating}, clustering \cite{cormode2008approximation,guha2009exceeding} , minimum enclosing balls \cite{enclosingball14} and
range queries \cite{ADP13,agarwal2012range,li2016range}.

\begin{enumerate}
\item Existential uncertainty model:
In this model,
there are a set $\P$ of $n$ points
in $\R^d$. Throughout this paper, we assume that the dimension $d$ is a constant.
Each point $v \in \P$ is associated with a real number (called {\em existential probability})
$\p_{v}\in [0,1]$ which
indicates that $v$ is present independently with probability $p_v$.
\item Locational uncertainty model:
There are a set $\P$ of $n$ points and the existence of each point is certain.
However, the location of each point $v\in\P$ is a random location in  $\R^d$.
We assume the probability distribution is discrete and independent of other points.
For a point $v\in \P$ and a location $s\in \R^d$, we use $\p_{v,s}$ to denote the probability that the location of point $v$ is $s$.
Let $S$ be the set of all possible locations, and let $|S| = m$ be the number of all such locations.
\end{enumerate}

In the locational uncertainty model, we distinguish the use of the terms ``points'' and ``locations'';
a point refers to the object with uncertain locations and a location refers to a point (in the usual sense) in $\R^d$.
We will use capital letters (e.g., $P,S,\ldots$) to denote  sets of deterministic points and
calligraphy letters ($\calP, \calS,\ldots$) to denote sets of stochastic points.

\topic{Coresets:}
Given a large dataset $P$ and a class $C$ of queries, a coreset $S$
is a dataset of much smaller size such that for every
query $r\in C$, the answer $r(S)$ for the small dataset $S$ is close to the answer $r(P)$ for the original large dataset $P$.
Coresets~\cite{Phi16} have become more relevant in the era of big data as they can drastically reduce the size of a dataset while guaranteeing that answers for certain queries are provably close.
An early notion of a coreset concerns the directional width problem
(in which a coreset is called an $\epsilon$-kernel)
and several other geometric shape-fitting problems in the seminal paper \cite{agarwal2004approximating}.

We introduce some notations and review the definition of $\epsilon$-kernel. We assume the dimension $d$ is a constant.
For a set $P$ of deterministic points,
the {\em support function} $f(P,u)$ is defined to be
$
f(P,u)=\max_{p\in P} \innerprod{u}{p}$  for  $u\in \R^d,
$
where $\innerprod{.}{.}$ is the inner product.
The {\em directional width} of $P$ in direction $u\in \R^d$,
denoted by $\dw(P,u)$, is defined by
$
\dw(P,u)=f(P,u)+f(P,-u).
$
It is not hard to see that the support function and the directional width
only depend on the convex hull of $P$.
A subset $Q\subseteq P$ is called an \emph{$\epsilon$-kernel of $P$} if for
each direction $u\in \R^d$,
$
(1-\eps) \dw(P,u)
\leq \dw(Q,u)
\leq \dw(P,u).
$
For any set of $n$ points,
there is an $\epsilon$-kernel of size $O(\epsilon^{-(d-1)/2})$~\cite{agarwal2004approximating,agarwal2005geometric}, which can be constructed
in $O(n + \eps^{-(d-3/2)})$ time~\cite{Cha06,YAPV04}.

\subsection{Problem Formulations}
In this paper we focus on constructing $\eps$-kernel coresets when the input data is uncertain.  These results not only provide an understanding of how to compactly represent an approximate convex hull under uncertainty, but can lead to solutions to a variety of other shape-fitting problems.

\topic{$\eps$-Kernels for expectations.}
Suppose $\calP$ is a set of stochastic points
(in either the existential or locational uncertainty model).
Define the {\em expected} directional width of $\calP$ in direction $u$
to be
$
\dw(\calP,u)=\Exp_{P\sim\calP}[\dw(P,u)],
$
where $P\sim \calP$ means that $P$ is a (random) realization of $\calP$.

\begin{definition}
\label{def:expkernel}
For a constant $\epsilon>0$, a set $S$ of (deterministic or stochastic) points in $\R^d$
is called an \emph{\expkernel} of $\calP$, if for all directions $u\in \R^d$,
\[
(1-\epsilon)\dw(\calP,u)\leq \dw(S,u)\leq \dw(\calP,u).
\]
\end{definition}

Recall in the deterministic setting, we require that
the $\epsilon$-kernel $S$ be a subset of the original point set
(we call this {\em the subset constraint}).
It is important to consider the subset constraint since it can reveal how concisely arbitrary uncertain point sets can be represented with just a few uncertain points (size depending only on $\eps$).
For $\eps$-kernels on deterministic points, $\Omega(\eps^{-(d-1)/2})$ points may be required and can always be found under the subset constraint~\cite{agarwal2004approximating,agarwal2005geometric}.
However, in the stochastic setting, we will show this is no longer true.  Coresets without the subset constraint, in fact made of deterministic points, can sometimes be obtained when no coreset with the subset constraint is possible.


\topic{$\eps$-Kernels for probability distributions.}
Sometimes it is useful to obtain more than just the expected value (say of the width) on a query; rather one may want (an approximation of) a representation of the full probability distribution that the query can take.

\begin{definition}
\label{def:expkernelCDF}
For a constant $\epsilon,\perror>0$, a set $\calS$ of stochastic points in $\R^d$
is called an \emph{\probkernel} of $\calP$, if for all directions $u$ and all $x\geq 0$,
\begin{align}
\label{eq:quantcore}
\Pr_{P\sim \calP}\Bigl[\dw(P,u)\leq (1-\epsilon)x\Bigr]-\perror\leq \Pr_{S\sim \calS}\Bigl[\dw(S,u)\leq x\Bigr] \leq
\Pr_{P\sim \calP}\Bigl[\dw(P,u)\leq (1+\epsilon)x\Bigr]+\perror.
\end{align}
\end{definition}

In the above definition, we do not require the points in $\calS$ are independent.
So when they are correlated, we will specify the distribution of $\calS$.
If all points in $\calP$ are deterministic and $\perror<0.5$, the above definition essentially boils down to
requiring $(1-\epsilon)\dw(\calP,u)\leq \dw(\calS,u)\leq (1+\epsilon)\dw(\calP,u)$.
Assuming the coordinates of the input points are bounded,
an \probkernel\ ensures that for any choice of $u$, the cumulative distribution
function of $\omega(\calS,u)$ is within a distance $\eps$ under the  L\'{e}vy metric, to that of $\omega(\P,u)$.
\footnote{
Assuming the coordinates of the input points are bounded,
the requirement for an \probkernel\ is in fact stronger than that of L\'{e}vy distance being no larger than $\epsilon$
as the former requires a multiplicative error on length, which gives better guarantee when the length is small.
}.

\topic{$\e$-Kernels for expected fractional powers.} Sometimes, the
notion \expkernel\ is not powerful enough for certain shape fitting problems
(e.g., the minimum enclosing cylinder problem and the minimum spherical shell problem)
in the stochastic setting.
The main reason is the appearance of the
$l_2$-norm in the objective function.
So we need to be able to handle the fractional powers in the objective function.
For a set $P$ of points in $\R^d$,
the polar set of $P$ is defined to be
$P^{\polar}=\{u\in \R^d\mid \langle u,v\rangle\geq 0, \forall v\in P\}$.
Let $r$ be a positive integer.
Given a set $P$ of points in $\R^d$ and $u\in P^\polar$, we define a function
$$
T_r(P,u)=\max_{v\in P}\innerprod{u}{v}^{1/r}-\min_{v\in P}\innerprod{u}{v}^{1/r}.
$$
We only care about the directions in $\calP^{\polar}$ (i.e., the polar of the points in $\calP$)
for which $T_r(P,u), \forall P\sim \calP$ is well defined.

\begin{definition}
\label{def:fun}
For a constant $\epsilon>0$, a positive integer $r$, a set $\calS$ of stochastic points in $\R^d$ is called an \exprkernel\ of $\calP$,
if for all directions $u\in \calP^{\polar}$,
$$
(1-\e)\Exp_{P\sim \calP}[T_r(P,u)]\leq \Exp_{P\sim \calS}[T_r(P,u)]\leq (1+\e)\Exp_{P\sim \calP}[T_r(P,u)].
$$
\end{definition}

\subsection{Our Results}

Now, we discuss the main technical results of the paper.

\topic{$\eps$-Kernels for expectations.}
First, we consider \expkernel s under various constraints.
Our first main result is that an \expkernel\ of size $O(\eps^{-(d-1)/2})$
exists for both existential and locational uncertainty model
and can be constructed in nearly linear time.

\begin{theorem}
\label{thm:expconstruction}
$\calP$ is a set of $n$ uncertain points in $\R^d$
(in either locational uncertainty model or existential uncertainty model).
There exists an \expkernel\ of size $O(\eps^{-(d-1)/2})$ for $\calP$.
For existential uncertainty model (locational uncertainty model resp.),
such an \expkernel\ can be constructed in $O(\epsilon^{-(d-1)} n\log n)$ time,
($O(\epsilon^{-(d-1)} m\log m)$ time resp.),
where $n$ is the number of points and $m$ is the total number of possible locations.
\end{theorem}

The existential result is a simple Minkowski sum argument.
We first show that there exists a convex polytope $M$
such that for any direction, the directional width of $M$ is exactly the same as the expected directional width of $\calP$ (Lemma~\ref{lm:existence}).
This immediately implies the existence of a \expkernel\ consisting $O(\eps^{-(d-1)/2})$ deterministic points
(using the result in \cite{agarwal2004approximating}), but without the subset constraint.
The Minkowski sum argument seems to
suggest that the complexity of $M$ is exponential.
However, we show that the complexity of $M$ is in fact polynomial
$O(n^{2d-2})$
and we can construct it explicitly in $O(n^{2d-1}\log n)$ time
(Theorem~\ref{thm:constructM}).

Although the complexity of $M$ is polynomial, we cannot afford to
construct it explicitly if we are to construct an \expkernel\ in
nearly linear time.
Thus we construct the \expkernel\ without explicitly constructing
$M$.
In particular, we show that it is possible to find the extreme vertex of $M$
in a given direction in nearly linear time,
by computing the gradient of the support function of $M$.
We also provide quadratic-size data structures that can calculate the exact width $\dw(\calP,\cdot)$ in logarithmic time under both models in $\R^2$ (Appendix~\ref{sec:computing}).

We also show that under subset constraint
(i.e., the \expkernel\ is required to be
a subset of the original point set,
with the same probability distribution for each chosen point),
there is no \expkernel\ of sublinear size
(Lemma~\ref{lm:negative}).
However, if there is a constant lower bound $\beta>0$ on the existential probabilities
(called $\beta$-assumption),
we can construct an \expkernel\ of constant size (Theorem~\ref{thm:beta} and  Appendix~\ref{sec:appendix}).

\topic{$\eps$-Kernels for probability distributions.}
Now, we describe our main results for \probkernel s.
We first propose a quite simple but general algorithm
for constructing  \probkernel s, which achieves the following guarantee.

\begin{theorem}
\label{thm:probconstruction}
An \probkernel\ of size
$\tO\left(\perror^{-2}\e^{-3(d-1)/2}\right)$
can be constructed in $\widetilde{O}\left(n\perror^{-2}\e^{-(d-1)}\right)$
time, under both existential and locational uncertainty models.
\end{theorem}

The algorithm is surprisingly simple.
Take a certain number of i.i.d. realizations,
compute an $\e$-kernel for each realization, and then
associate each kernel with probability $1/N$
(so the points are not independent).
The analysis requires the VC uniform convergence bound for
unions of halfspaces.
The details can be found in Section~\ref{sec:qkernel}.

For existential uncertainty model, we can improve the size bound as follows.

\begin{theorem}
\label{thm:probconstructionexsit}
$\calP$ is a set of uncertain points in $\R^d$ with existential uncertainty. Let $\lambda=\sum_{v\in \calP}(-\ln (1-p_v))$.
There exists an \probkernel\ for $\calP$,
which consists of a set of
independent uncertain points of cardinality $\min\{\tO(\perror^{-2}\max\{\lambda^2,\lambda^4\}),
\tO(\eps^{-(d-1)}\perror^{-2})\}$.
The algorithm for constructing such a coreset runs in  $\tO(n\log^{O(d)} n)$ time.
\end{theorem}

\eat{
If we do not require that the \probkernel\ be a collection of
independent stochastic points, there is a simpler construction, which works
for both existential and locational models.
}


We note that another advantage of the improved construction is that the \probkernel\
is a set of independent stochastic points (rather than correlated points as in
Theorem~\ref{thm:probconstruction}).
We achieve the improvement by two algorithms.
The first algorithm transforms the Bernoulli distributed variables into
Poisson distributed random variables and creates a probability distribution
using the parameters of the Poissons, from which we take a number of i.i.d. samples as the coreset.
Our analysis leverages the additivity of Poisson distributions and the VC uniform convergence bound (for halfspaces).
However, the number of samples required depends on $\lambda(\calP)$,
so the first algorithm only works when $\lambda(\calP)$ is small.
The second algorithm complements the first one by identifying a convex set $K$ that lies in the convex hull of $\calP$ with high probability ($K$ exists when $\lambda(\calP)$ is large) and uses a small size deterministic $\epsilon$-kernel to approximate $K$.
The points in $\bK=\calP\setminus K$ can be approximated using the same sampling algorithm as in the first algorithm
and we can show that $\lambda(\bK)$ is small, thus requiring only a small number of samples.
Our algorithm can be easily extended to $\R^d$ for any constant $d$
and the size of the coreset is $\tO(\perror^{-2}\eps^{-(d-1)})$.
In Section~\ref{subsec:linearprobkernel}, we show such an \probkernel\ can be computed in $O(n\polylog n)$ time
using an iterative sampling algorithm.
Our technique has some interesting connections
to other important geometric problems (such as the Tukey depth problem) \cite{matousek1991computing},
may be interesting in its own right.

\topic{$\e$-Kernels for expected fractional powers.}
For \exprkernel s,
we provide a linear time algorithm for constructing
an \exprkernel\ of size $\tO(\epsilon^{-(rd-r+2)})$ in the existential uncertainty model under the $\beta$-assumption.
The algorithm is almost the same as the construction in Section~\ref{sec:qkernel}
except that some parameters are different.

\begin{theorem} (Section~\ref{sec:rfunction})
\label{thm:exprconstruction}
An \exprkernel\ of size $\tO(\e^{-(rd-r+2)})$
can be constructed in $\widetilde{O}\left(n\e^{-(rd-r+4)/2}\right)$ time
in the existential uncertainty model under the $\beta$-assumption.
\eat{
In particular, the \exprkernel\ consists of $N=\tO(\e^{-(rd-r+4)/2})$ point sets, each occuring with probability $1/N$
and containing $O(\e^{-r(d-1)/2})$ deterministic points.
}
\end{theorem}

\topic{Applications to Uncertain Function Approximation and Shape Fitting.}
Finally, we show that the above results, combined with the duality
and linearization arguments \cite{agarwal2004approximating},
can be used to obtain constant size coresets for
the function extent problem in the stochastic setting,
and to maintain extent measures for stochastic moving points.

Using the above results, we also obtain efficient approximation schemes for
various shape-fitting problems in the stochastic setting, such as minimum enclosing ball, minimum spherical shell,
minimum enclosing cylinder and minimum cylindrical shell
in different stochastic settings.
We summarize our application results in
the following theorems.
The details can be found in Section~\ref{sec:app}.

\begin{theorem}
\label{thm:squarecenter}
Suppose $\calP$ is a set of $n$ independent stochastic points in $\R^d$ under either existential or locational
uncertainty model.
There are linear time approximation schemes
for the following problems:
(1)  finding a center point $c$ to minimize $\E[\max_{v\in \calP} \| v-c \|^2]$;
(2) finding a center point $c$ to minimize $\E[\obj(c)]=\E[\max_{v\in P} \| v-c \|^2-\min_{v\in P} \| v-c \|^2]$.
Note that when $d=2$ the above two problems correspond to minimizing the expected areas of the enclosing ball and the enclosing annulus, respectively.
\end{theorem}

Under $\beta$-assumption, we can obtain efficient
approximation schemes
for the following shape fitting problems.

\begin{theorem}
\label{thm:shapefittingbeta}
Suppose $\calP$ is a set of $n$ independent stochastic points in $\R^d$,
each appearing with probability at least $\beta$, for some fixed constant $\beta>0$.
There are linear time approximation schemes for minimizing
the expected radius (or width) for the minimum spherical shell, minimum enclosing cylinder, minimum cylindrical shell problems over $\calP$.
\end{theorem}

\eat{
\begin{enumerate}
\item (Section~\ref{sec:expkernel})
Using a simple Minkowski sum argument, we first show that there exists a convex polytope $M$
such that for any direction, the directional width of $M$ is exactly the same as the expected directional width of $\calP$.
This immediately implies the existence of a deviant \expkernel\ consisting $O(1/\eps^{(d-1)/2})$ deterministic points
(using the result in \cite{agarwal2004approximating}).
Moreover, we show the complexity of $M$ is $O(n^{2d-2})$,
and we can construct it explicitly in $O(n^{2d-1}\log n)$ time.
However, we show that
in $\R^d$ we can construct an \expkernel\ of size $O(1/\eps^{(d-1)/2})$ in $O(1/ \epsilon^{d-1} n \log n)$ time in the existential model and in $O(1/ \epsilon^{d-1} m \log m)$ time in the locational model.
To achieve such an improvement, we have to construct the \expkernel\ without explicitly constructing
$M$. We also provide quadratic-size data structures that can calculate the exact width $\dw(\calP,\cdot)$ in logarithmic time under both models in $\R^2$ (Appendix~\ref{sec:computing}).


\item (Section~\ref{sec:expsubset}) We show that no \expkernel\ is possible under the subset constraint for the existential or the locational model of uncertainty. However, if we assume that $p_v\geq \beta$ for all $v$ and some constant $\beta>0$,
    we can construct an \expkernel\ of size $\widetilde{O}(\frac{1}{\epsilon^{(d-1)/2}})$ under the subset constraint.
    (We omit $\mathrm{polylog}(1/\epsilon)$ factor in the $\widetilde{O}$ notation).

\item (Section~\ref{sec:probkernel})
We provide a randomized construction for an \probkernel\ for the existential model.
It requires $\widetilde{O}(1/\eps^4)$ uncertain points in $\R^2$.
Constructing \probkernel s is significantly more difficult than constructing \expkernel s since an
\probkernel\ must consist of uncertain points and we need to figure out both the locations and probabilities of these points.
For this purpose, we present two algorithms. The first algorithm transforms the Bernoulli distributed variables into
Poisson distributed random variables and creates a probability distribution
using the parameters of the Poissons, from which we take a number of i.i.d. samples as the coreset.
Our analysis leverages the additivity of Poisson distributions and the VC uniform convergence bound (for halfspaces).
However, the number of samples required depends on $\lambda(\calP)$ (see definition in Sec.~\ref{sec:probkernel}),
so the first algorithm only works when $\lambda(\calP)$ is small.
The second algorithm complements the first one by identifying a convex set $K$ that lies in the convex hull of $\calP$ with high probability ($K$ exists when $\lambda(\calP)$ is large) and uses a small size deterministic $\epsilon$-kernel to approximate $K$.
The points in $\bK=\calP\setminus K$ can be approximated using the same sampling algorithm as in the first algorithm
and we can show that $\lambda(\bK)$ is small, thus requiring only a small number of samples.
Our algorithm can be easily extended to $\R^d$ for any constant $d$
and the size of the coreset is $\min\{\tO(1/\perror^{2}), \tO(1/\eps^{d-1}\perror^2)\}$.
In Section~\ref{subsec:linearprobkernel}, we show such an \probkernel\ can be computed in $O(n\polylog n)$ time
using an iterative sampling algorithm.

If we do not require that the \probkernel\ be a collection of
independent stochastic points, there is a simpler construction, which works
for both existential and locational models.
We simply take $N=\tO\bigl(\perror^{-2}\e^{-(d-1)}\bigr)$ i.i.d. realizations,
compute an $\e$-kernel for each realization, and then
associate each kernel with probability $1/N$
(so the points are not independent).
The analysis requires the VC uniform convergence bound for
unions of halfspaces.
The details can be found in Section~\ref{sec:qkernel}.

\item (Section~\ref{sec:rfunction})
We show that a linear time algorithm for constructing
an \exprkernel\ of size $\tO(1/\epsilon^{rk-r+2})$ in the existential uncertainty model under the $\beta$-assumption.
The algorithm is almost the same as the construction in Section~\ref{sec:qkernel}
except that some parameters are different.

\item (Section~\ref{sec:app})
We show how to utilize some of the above results
to obtain constant size coresets for
the function extent problem in the stochastic setting,
and to maintain extent measures for stochastic moving points.
We also obtain linear time approximation schemes for
various shape-fitting problems in the stochastic setting.

\end{enumerate}
}

\subsection{Other Related Work}
Besides the stochastic models mentioned above,
geometric uncertain data has also been studied in the \emph{imprecise} model
~\cite{bs-ads-04,hm-ticpps-08,k-bmips-08,ls-dtip-08,nt-teb-00,obj-ue-05,kl-lbbsd-10}.
In this model, each point is provided with a region where it might be.  This originated with the study of imprecision in data representation~\cite{gss-cscah-93,gss-egbra-89}, and can be used to provide upper and lower bounds on several geometric constructs such as the diameter, convex hull, and flow on terrains~\cite{DHLS13,kl-lbbsd-10}.

Convex hulls have been studied for uncertain points:
upper and lower bounds are provided under the imprecise model~\cite{ES11,LvK08,nt-teb-00,kl-lbbsd-10},
distributions of circumference and volume are calculated in the locational model~\cite{JLP11,loffler2009shape},
the most likely convex hull is found in the existential model in $\R^2$ and shown NP-hard for $\R^d$ for $d > 2$ and in the locational model~\cite{SVY13},
and the probability a query point is inside the convex hull~\cite{AHPSYZ14,FHKS16,XLJ16}.
As far as we know, the expected complexity of the convex hull under uncertain points has not been studied, although it has been studied~\cite{HP11} under other random data models.

There is a large body of literature~\cite{Phi16} on constructing coresets for various problems, such as shape fitting \cite{agarwal2004approximating,agarwal2005geometric},
shape fitting with outliers \cite{har2004shape},
clustering \cite{chen2009coresets,FL11,feldman2012data,har2004coresets,  langberg2010}, integrals \cite{langberg2010}, matrix approximation and regression \cite{deshpande2006matrix,FL11}
 and in different settings, such as geometric data streaming \cite{agarwal2005geometric,Cha06}
 and privacy setting \cite{feldman2009private}.
Coresets were constructed for imprecise points~\cite{LvK08} to help derive results for approximating convex hulls and a variety of other shape-fitting problems, but because of the difference in models, these approaches do not translate to existential or locational models.
In the locational model, coresets are created for range counting queries~\cite{ADP13} under the subset constraint, but again these techniques do not translate because $\eps$-kernel coresets in general cannot be constructed from a density-preserving subset of the data, as is preserved for the range counting coresets.
Also in the locational model (and directly translating to the existential model) L\"offler and Phillips~\cite{loffler2009shape} show how a large set of uncertain points can be approximated with a set of deterministic point sets, where each certain point set can be an $\eps$-kernel.  This can provide approximations similar to the \probkernel\ with space $O(\eps^{-(d+3)/2}\log(1/\delta))$ with probability at least $1-\delta$.  However it is not a coreset of the data, and answering width queries requires querying $O(\eps^{-2}\log(1/\delta))$ deterministic point sets.

Recently, Munteanu et al. \cite{enclosingball14} studied the minimum enclosing ball problem over stochastic points,
and obtained an efficient approximation scheme.
Their algorithm and analysis utilize the results
from the deterministic coreset literature~\cite{ackermann2010clustering}.
However, they do not directly address the problem of
constructing coresets for stochastic points and
it is also unclear how to extend their technique
to other shape fitting problems, such as minimum spherical
shells.

Technically, our \probkernel\ construction bears some similarity
to the coreset by Har-Peled and Wang~\cite{har2004shape} for handling outliers.
From the dual (function extent) perspective,
they want to approximate the distance between two level sets in an arrangement of hyperplanes,
and (the dual of) $\calH$ in Section~\ref{subsec:linearprobkernel}
also needs to be (approximately) sandwiched by two fractional level sets
(our hyperplanes have weights).
However, we have an important requirement that the total weight outside $(1+\epsilon)\calH$ must be small, which cannot be addressed by their technique.




\section{$\eps$-Kernels for Expectations of Width}
\label{sec:expkernel}

We first state our results in this section for the existential uncertainty model.
All results can be extended to the locational uncertainty model, with slightly different bounds
(essentially replacing the number of points $n$ with the number of locations $m$) or assumptions.
We describe the difference for locational model in the appendix.

For simplicity of exposition, we assume in this section that all points in $\calP$ are in general positions
and all $p_v$s are strictly between $0$ and $1$.
For any $u,v\in \R^d$, we use $\innerprod{u}{v}$
to denote the usual inner product $\sum_{i=1}^d u_iv_i$.
For ease of notation, we write $v\succ_u w$ as a shorthand notation for
$\innerprod{u}{v}> \innerprod{u}{w}$.
For any $u\in \R^d$,
the binary relation $\succ_u$ defines a total order
of all vertices in $\calP$.
(Ties should be broken in an arbitrary but consistent manner.)
We call this order
the {\em canonical order of $\calP$ with respect to $u$}.
For any two points $u$ and $v$, we use $\dist(u,v)$ or $\|v-u\|$
to denote their Euclidean distance.
For any two sets of points, $A$ and $B$,
the Minkowski sum of $A$ and $B$ is defined as
$A\oplus B:=\{a+b\mid a\in A, b\in B\}$.
Recall the definitions for a set $P$ of deterministic points and a direction $u\in \R^d$,
the support function is $f(P,u)=\max_{p\in P} \innerprod{u}{p}$ and
the {\em directional width} is
$
\dw(P,u)=f(P,u)-f(P,-u).
$
The support function and the directional width only depend on the convex hull of $P$.


\begin{lemma}
\label{lm:existence}
Consider a set $\calP$ of uncertain points in $\R^d$ (in either locational uncertainty model or existential uncertainty model).
There exists a set $S$ of deterministic points in $\R^d$ (which may not be a subset of $\calP$) such that
$
\dw(u,\calP) = \dw(u,S)
$ for all $u\in \R^d$.
\end{lemma}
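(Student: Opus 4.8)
\emph{Proof plan.} The approach is a Minkowski-sum argument that collapses the expectation over all realizations into the support function of a single polytope. First I would note that for a deterministic set $P$ and any $u$, the width $\dw(P,u)=\max_{v\in P}\innerprod{u}{v}-\min_{v\in P}\innerprod{u}{v}$ equals $f(D_P,u)$, the value at $u$ of the support function of the centrally symmetric polytope $D_P:=\CH(\{v-w:v,w\in P\})$ (with the convention $D_\emptyset=\{0\}$, since $\dw(\emptyset,\cdot)\equiv 0$). The key algebraic fact is that support functions are additive under Minkowski addition and homogeneous under nonnegative scaling: if $h_1,h_2$ are the support functions of polytopes $Q_1,Q_2$, then $h_1+h_2$ is the support function of $Q_1+Q_2$, and likewise for convex combinations. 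Since in both uncertainty models there are only finitely many realizations $P$, each occurring with probability $\Pr[P\sim\calP]$ and these summing to $1$, linearity of expectation then gives
\[
\dw(u,\calP)=\sum_P \Pr[P\sim\calP]\, f(D_P,u)=f(M,u), \qquad M:=\sum_P \Pr[P\sim\calP]\, D_P,
\]
where the sum defining $M$ is a Minkowski combination; $M$ is a polytope (a finite Minkowski sum of polytopes) and is centrally symmetric, $M=-M$, being a convex combination of centrally symmetric sets.

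Second, I would turn this polytope into a finite point set. Let $S$ be the (finite) vertex set of the rescaled polytope $\tfrac12 M$. Then, using $M=-M$ and the convexity identity $\tfrac12 M+\tfrac12 M=M$,
\[
\dw(u,S)=f\bigl(\CH(S)-\CH(S),u\bigr)=f\bigl(\tfrac12 M-\tfrac12 M,u\bigr)=f\bigl(\tfrac12 M+\tfrac12 M,u\bigr)=f(M,u)=\dw(u,\calP)
\]
for every $u\in\R^d$, which is the claim. The locational model is handled verbatim, with the realizations $P$ ranging over the finite product of the per-point location supports.

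The step I expect to require the most care is this last conversion, and specifically the factor $\tfrac12$: the polytope $M$ encodes the \emph{expected width} as its \emph{support} function, whereas we need a point set $S$ whose \emph{width} function equals $f(M,\cdot)$, and the width function of a set $Q$ is $f(Q-Q,\cdot)$; taking $\CH(S)=M$ would overshoot by a factor of two, and it is precisely the central symmetry of $M$ (inherited from each $D_P$) that makes $\CH(S)=\tfrac12 M$ come out exactly right. The remaining ingredients --- that only finitely many realizations occur in each model, that a finite Minkowski sum of polytopes is again a polytope with finitely many vertices, and the handling of the empty realization --- are routine and independent of $d$.
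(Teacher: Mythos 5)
Your proof is correct and rests on the same core idea as the paper: collapse the expectation into a single convex body via the Minkowski additivity of support functions, then take its vertices as $S$. The paper's version is a bit more direct---it sets $M=\sum_{P}\Pr[P]\,\CH(P)$ and observes $f(M,\pm u)=\sum_P\Pr[P]\,f(P,\pm u)$, so $\dw(M,u)=f(M,u)+f(M,-u)=\dw(\calP,u)$ immediately, with $S$ simply the vertex set of $M$. By instead averaging the difference bodies $D_P=\CH(P)-\CH(P)$, you encode the \emph{width} (rather than the support value) as $f(M,\cdot)$, which creates the factor-of-two mismatch when you convert back to a point set; you handle this correctly by rescaling to $\tfrac12 M$ and using the central symmetry $M=-M$ together with $\tfrac12 M+\tfrac12 M=M$. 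So your route is valid but slightly more circuitous: the detour through difference bodies buys you an explicitly centrally symmetric representative, at the cost of the extra scaling step that the paper's choice of $M$ avoids entirely.
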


\begin{proof}
By the definition of the expected directional width of $\calP$, we have that
\[
\dw(\calP,u)=\Exp_{P\sim\calP}[\dw(P,u)]=\sum_{P\sim\calP}\Pr[P]\Bigl(f(P, u)+f(P,-u)\Bigr).
\]
Consider the Minkowski sum
$
\M=\M(\calP):=\sum_{P\sim \calP} \Pr[P] \CH(P),
$
where $\CH(P)$ is the convex hull of $P$ (including the interior).
It is well known that the Minkowski sum of a set of convex sets is also convex.
Moreover, it also holds that for all $u\in \R^d$ (see e.g., \cite{schneider1993convex})
$
f(\M,u)= \sum_{P\sim\calP}\Pr[P]f(P,u).
$
Hence, $\dw(\calP,u)=\dw(\M,u)$ for all $u\in \R^d$.
\end{proof}

By the result in \cite{agarwal2004approximating}, we know that for any convex body in $\R^d$,
there exists an $\epsilon$-kernel of size $O(\epsilon^{-(d-1)/2})$.
Combining with Lemma~\ref{lm:existence}, we can immediately obtain the following corollary, which is the first half of Theorem~\ref{thm:expconstruction}.

\begin{corollary}
\label{cor:existence}
For any $\epsilon>0$, there exists an \expkernel\ of size $O(\epsilon^{-(d-1)/2})$.
\end{corollary}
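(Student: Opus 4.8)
The plan is to combine Lemma~\ref{lm:existence} with the classical deterministic $\epsilon$-kernel bound of~\cite{agarwal2004approximating} in essentially one step. First I would invoke Lemma~\ref{lm:existence} to obtain a convex polytope $\M=\M(\calP)$ with $\dw(\M,u)=\dw(\calP,u)$ for every direction $u\in\R^d$. Since $\M$ is a convex body, the result of Agarwal, Har-Peled, and Varadarajan yields a set $S$ of $O(1/\epsilon^{(d-1)/2})$ points (which we may take inside $\M$, e.g.\ among its vertices) such that $(1-\epsilon)\,\dw(\M,u)\le\dw(S,u)\le\dw(\M,u)$ for all $u$.

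The second step is just to chain the two facts: for every $u\in\R^d$,
\[
(1-\epsilon)\,\dw(\calP,u)=(1-\epsilon)\,\dw(\M,u)\le\dw(S,u)\le\dw(\M,u)=\dw(\calP,u),
\]
which is exactly the defining inequality of an \expkernel\ in Definition~\ref{def:expkernel}. Since $S$ consists of deterministic points (equivalently, a stochastic set in which every point is present with probability $1$), and Definition~\ref{def:expkernel} explicitly allows the coreset to be deterministic or stochastic, there is nothing further to verify.

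There is no substantive obstacle here; the only care needed is that the bound of~\cite{agarwal2004approximating} is usually phrased for finite point sets, so one should apply it to the vertex set of $\M$ — which is finite because $\M$ is a Minkowski sum of finitely many bounded polytopes — or else cite the standard extension to arbitrary convex bodies in~\cite{agarwal2005geometric}; the size bound $O(1/\epsilon^{(d-1)/2})$ is the same either way. For the narrative I would also stress that the resulting $S$ is a \emph{weak} coreset that discards all structure of $\calP$ beyond its expected support function, which is precisely why the subsequent sections work to (i) produce such an $S$ without explicitly constructing the potentially large polytope $\M$, and (ii) determine when a genuine subset of $\calP$ can serve as the coreset.
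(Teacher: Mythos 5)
Your argument is exactly the paper's: invoke Lemma~\ref{lm:existence} to replace $\calP$ by the convex polytope $\M$ with identical directional widths, then apply the $O(1/\epsilon^{(d-1)/2})$ deterministic $\epsilon$-kernel bound of~\cite{agarwal2004approximating} (or its convex-body form) to $\M$. Your side remarks about $\M$ being a bounded polytope and about weak coresets are correct and harmless, but not needed; the core chain of inequalities matches the paper's one-line proof.
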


Recall that in Lemma~\ref{lm:existence}, the Minkowski sum
$
\M=\sum_{P\sim \calP} \Pr[P] \CH(P).
$
Since $\M$ is the Minkowski sum of exponential many convex polytopes, so $\M$ is also a convex polytope.
At first sight, the complexity of $\M$ (i.e., number of vertices) could be exponential.
However, as we will show shortly, the complexity of $\M$ is in fact polynomial.

We need some notations first.
For each pair $(r,w)$ of points in $\calP$ consider the hyperplane $H_{r,w}$ that passes
 through the origin and is orthogonal to the line connecting $r$ and $w$.
We call these ${n\choose 2}$ hyperplanes {\em the separating hyperplanes induced by $\calP$}
and use $\Gamma$ to denote the set.
Each such hyperplane divides $\R^{d}$ into 2 halfspaces.
\eat{
\jeff{I had tried to simplify this argument. I had a real hard time following it, and I think it will be apparent to reviewers who know much about high dimensional convex geometry.}

Furthermore we can use the standard Gauss Map to associate each vector $u$ from the origin with a point on the $(d-1)$-dimensional sphere $\S^{d-1}$.  Each $H \in \Gamma$ corresponds with $(d-2)$-dimensional sphere that separates $\S^{d-1}$ into two half-spheres.  Let $\Upsilon$ be this set of half-spheres.

For every two points $q_1$ and $q_2$ on $\S^{d-1}$ that are in the same set of half-spheres correspond to two directions $u_1$ and $u_2$ that generate the same canonical ordering of $\calP$.  This follows since the ordering only changes on the $(d-2)$-dimensional spheres corresponding to $H_{r,w} \in \Gamma$ where for direction $u$ within $H_{v,w}$ has $\innerprod{u}{r} = \innerprod{u}{w}$.
Thus for all $v \in \calP$ we have that $\Pr_{P \sim \calP} [f(P,u_1) = f(v,u_1)] = \Pr_{P \sim \calP} [f(P,u_2) = f(v,u_2)]$.  Given a set of half-spheres $G \subset \Upsilon$, we say a direction $u$ is in $G$ if it corresponds to a point $q \in \S^{d-1}$ that is in the intersection of all half-spheres $G$.
We can define a probability $\rho_{v,G} = \Pr_{P \sim \calP} [f(P,u) = f(v,u)]$ for each $v \in \calP$, that is the same for any direction $u$ in $G$.
Then $f(M,u) = \sum_{P \sim \calP} \Pr[P] f(P,u) = \sum_{v \in \calP} \rho_{v,G} f(v,u)$ for any $u$ in $G$.
Thus for each subset $G \subset \Upsilon$, the complexity of $M$ is constant as it is defined by a single bounding halfspace.

It only remains to bound the number of distinct subsets of $\Upsilon$, or equivalently the number of distinct subsets of $\Gamma$.  By general position, at most $d-1$ halfspaces of $\Gamma$ can intersect (recall they all already intersect the origin).  And each intersection is incident to $2^d$ subsets (recall $2^d = O(1)$ for $d$ constant).  Thus there can be at most ${ |\Gamma| \choose d-1} = O((n^2)^{d-1}) = O(n^{2d-2})$ intersections and the same asymptotic number of subsets.  We can thus conclude the following lemma.

\begin{lemma}
\label{lm:complexity}
For $d$ constant, the complexity of $M$ is $O(n^{2d-2})$.
\end{lemma}

\jeff{Here is the rest of the old argument in blue:}
\color{blue}
}
For all vectors $u\in \R^d$ in each halfspace, the order of $\innerprod{r}{u}$ and $\innerprod{w}{u}$
is the same (i.e., we have $r \succ_u w$ in one halfspace and $w\succ_u r$ in the other).
Those hyperplanes in $\Gamma$ pass through the origin and thus
partition $\R^{d}$ into $d$-dimensional polyhedra cones.
\footnote{
We ignore the lower dimensional cells in the arrangement.
}
We denote this {\em arrangement} as $\conedecomp(\Gamma)$.

\eat{
\begin{figure}[t]
\centering
\includegraphics[width=0.7\linewidth]{partition}
\caption{
LHS: The figure depicts a pentagon $M$ in $\R^2$ to illustrate some intuitive facts in
convex geometry.
(1) The plane can be divided into 5 cones $C_1,\ldots, C_5$, by 5 angles $\theta_1,\ldots, \theta_5$.
$u_{\theta_i}$ is the unit vector corresponding to angle $\theta_i$.
Each cone $C_i$ corresponds to a vertex $v_i$ and
for any direction $u\in C_i$, $f(\M,u)=\innerprod{u}{v_i}$
and the vector $\grad f(\M,u)$ is $v_i$.
(2) Each direction $\theta_i$ is perpendicular to an edge of $\M$.
$\M=\cap_{i=1}^5 H_i$ where $H_i$ is the supporting halfplane with normal vector $u_{\theta_i}$.
RHS: $W=\{v_3=v_{\max},v_4,v_5,v_6,v_7=v_{\min}\}$ is the set of vertices surrounded by the dashed cycle.
}
\label{fig:partition}
\end{figure}
}

\eat{
\begin{figure}[t]
\centering
\includegraphics[width=0.3\linewidth]{searchmin}
\caption{
$W=\{v_3=v_{\max},v_4,v_5,v_6,v_7=v_{\min}\}$ is the set of vertices surrounded by the dashed cycle.
}
\label{fig:searchmin}
\end{figure}
}

Consider an arbitrary cone  $C\in \conedecomp(\Gamma)$.
Let $\interior C$ denote the interior of $C$.
We can see that for all vectors $u\in \interior C$, the canonical order of $\calP$
with respect to $u$ is the same
(since all vector $u\in \interior C$ lie in the same set of halfspaces).
We use $|\M|$ to denote the complexity of $M$, i.e., the number of vertices in $\CH(\M)$.

\begin{lemma}
\label{lm:complexity}
Assuming the existential model and $p_v\in (0,1)$ for all $v\in \calP$,
the complexity of $M$ is the same as the cardinality of $\conedecomp(\Gamma)$, i.e.,
$|M|=|\conedecomp(\Gamma)|.$
Moreover, each cone $C\in \conedecomp(\Gamma)$ corresponds to exactly one
vertex $v$ of $\CH(M)$ in the following sense:
the gradient $\grad f(\M,u) = v$ for all $u\in \interior C$
(note that here $v$ should be understood as a vector).
\end{lemma}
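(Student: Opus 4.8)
The plan is to analyze the support function $f(\M,u)$ cone by cone over the arrangement $\conedecomp(\Gamma)$, show it is linear on each cone, and argue that distinct cones give distinct gradients — so the vertices of $\CH(\M)$ are in bijection with the cones.

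First I would fix a cone $C\in\conedecomp(\Gamma)$ and recall that, since the canonical order of $\calP$ with respect to $u$ is constant over $\interior C$, the probability $\rho_v := \Pr_{P\sim\calP}[f(P,u)=\innerprod{u}{v}]$ that a given point $v\in\calP$ is the $\succ_u$-maximizer among the realized points is the same for every $u\in\interior C$; explicitly, if $v_1\succ_u v_2\succ_u\cdots\succ_u v_n$ is the canonical order, then $\rho_{v_i}=p_{v_i}\prod_{j<i}(1-p_{v_j})$, which depends only on the order, hence only on $C$. By Lemma~\ref{lm:existence} (or rather its proof), $f(\M,u)=\sum_{P\sim\calP}\Pr[P]f(P,u)=\sum_{v\in\calP}\rho_v\innerprod{u}{v}=\innerprod{u}{\,\sum_{v\in\calP}\rho_v v\,}$ for all $u\in\interior C$. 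Setting $q_C:=\sum_{v\in\calP}\rho_v v$, this shows $f(\M,u)=\innerprod{u}{q_C}$ is linear in $u$ on $\interior C$, so $\grad f(\M,u)=q_C$ throughout $\interior C$. Since the support function of a convex body equals $\innerprod{u}{v}$ on the normal cone of a vertex $v$ (and is linear precisely there), each $q_C$ is a vertex of $\CH(\M)$ and $C$ is contained in its normal cone; because the $\interior C$ cover all of $\R^d$ up to a lower-dimensional set, every vertex of $\CH(\M)$ arises this way, giving the surjection from cones to vertices.

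The main obstacle is injectivity: I must show that two distinct cones $C\neq C'$ yield distinct points $q_C\neq q_{C'}$, i.e., the map $C\mapsto q_C$ does not collapse. Adjacent cones in $\conedecomp(\Gamma)$ are separated by a single hyperplane $H_{r,w}$, and crossing it swaps exactly the relative order of $r$ and $w$ in the canonical order while leaving all else fixed; I would compute $q_C-q_{C'}$ explicitly and show it is a nonzero multiple of $(r-w)$. Concretely, if $r$ is immediately before $w$ on one side and immediately after on the other, the weights $\rho_r,\rho_w$ change by transferring mass proportional to $\prod_{\text{points before both}}(1-p_\cdot)$ between the $r$-term and the $w$-term, yielding $q_C-q_{C'}=c\,(r-w)$ with $c\neq 0$ precisely because all $p_v\in(0,1)$ (so no weight in the product vanishes). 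Since $r\neq w$ by general position, $q_C\neq q_{C'}$ for adjacent cones; and because $\CH(\M)$ is convex, each vertex has a single connected normal cone, so a vertex shared by two non-adjacent cones would force it to be shared along a connecting chain of adjacent cones, contradicting the adjacent-cone case. Hence the correspondence is a bijection and $|M|=|\conedecomp(\Gamma)|$.

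I would close by remarking that the $p_v\in(0,1)$ hypothesis is used exactly twice — once to guarantee $\rho_v>0$ for every $v$ appearing in the order (so that $q_C$ genuinely depends on every coordinate of the order) and once for the nonvanishing constant $c$ above — and that general position ensures no two of the induced hyperplanes $H_{r,w}$ coincide and no $r=w$, so the arrangement's full-dimensional cells are well-defined and the swap across each facet is a genuine transposition. The locational-model variant, deferred to the appendix, replaces $n$ by $m$ but the structure of the argument is identical.
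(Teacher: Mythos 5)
Your proof is correct and follows essentially the same route as the paper: compute the gradient on each cone of $\conedecomp(\Gamma)$ and observe it is a constant vector $q_C=\sum_v\rho_v v$; show by explicit computation that crossing a facet of the arrangement changes the gradient by a nonzero multiple of $(r-w)$ (your $c\,(r-w)$ matches the paper's $D\,p_{v_i}p_{v_{i+1}}(v_i-v_{i+1})$); invoke the correspondence between vertices of a polytope and the maximal cells of the normal fan to conclude. The one place you are slightly more careful than the paper is the last step: the paper asserts that distinct gradients on adjacent cones is "only possible if" the two partitions coincide, whereas you spell out the chain argument — two cones mapping to the same vertex would be joined through the vertex's connected normal cone by a chain of adjacent cells, forcing an adjacent pair with equal gradients, a contradiction. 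That is exactly the gap-filling needed to make the paper's closing sentence airtight, and it is worth keeping explicit.
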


\begin{proof}
We have shown that $M$ is a convex polytope.
We first note that the support function
uniquely defines a convex body (see e.g., \cite{schneider1993convex}).
We need the following well known fact in convex geometry (see e.g., \cite{ghosh1998support}):
For any convex polytope $M$,
$\R^d$ can be divided into exactly $|\M|$ polyhedra cones (of dimension $d$, ignoring the boundaries),
such that each such cone $C_v$ corresponds to a vertex $v$ of $\M$,
and for each vector $u\in C_v$, it holds $f(\M,u)=\innerprod{u}{v}$
(i.e., the maximum of $f(\M,u)=\max_{v'\in \M}\innerprod{u}{v'}$ is achieved by $v$ for all $u\in C_v$).\footnote{
One intuitive way to see this is as follows:
The support function for a polytope is just the upper envelope of a finite set of linear functions, thus
a piecewise linear function, and the domain of each piece is a polyhedra cone.
}
See Figure~\ref{fig:partition} for an example in $\R^2$.
Hence, for each $u\in \interior C_v$
the gradient of of the support function (as a function of $u$) is exactly $v$:
\begin{align}
\label{eq:grad2}
\grad f(\M,u) =\Bigl\{\frac{\partial f(\M,u)}{\partial u_j}\Bigr\}_{j\in [d]}=
\Bigl\{\frac{\partial \innerprod{u}{v}}{\partial u_j}\Bigr\}_{j\in [d]} = \Big\{\frac{\partial \sum_{j\in [d]} v_ju_j}{\partial u_j}\Big\}_{j\in [d]}= v,
\end{align}
where $u_j$ is the $j$th coordinate of $u$.
With a bit abuse of notation,
we denote the set of cones defined above by $\conedecomp(\M)$. 

Now, consider a cone  $C\in \conedecomp(\Gamma)$.
We show that for all $u\in \interior C$, $\grad f(\M,u)$ is a distinct constant vector independent of $u$.
In fact, we know that $f(\M,u)=f(\calP,u)=\sum_{v\in \calP}\pu(v,u)\innerprod{v}{u}$,
where $\pu(v,u) =\prod_{v' \succ_u v} (1-p_{v'}) p_v$.
For all $u\in \interior C$, the $\pu(v,u)$ value is the same since
the value only depends on the canonical order with respect to $u$, which is the same for all $u\in C$.
Hence, we can get that for all $u\in \interior C$,
\begin{align}
\label{eq:grad}
\grad f(\M,u)=\sum_{v\in \calP}\pu(v,u) v,
\end{align}
which is a constant independent of $u$.
We prove the lemma by showing that
the gradient $\grad f(\M,u)$ must be different
for two adjacent cones $C_1,C_2$ (separated by some hyperplane in $\Gamma$) in $\conedecomp(\Gamma)$.
Suppose $u_1\in \interior C_1$ and $u_2\in \interior C_2$.
Consider the canonical orders $O_1$ and $O_2$ of $\calP$ with respect to $u_1$ and $u_2$ respectively.
Since $C_1$ and $C_2$ are adjacently, $O_1$ and $O_2$ only differ by
one swap of adjacent vertices.
W.l.o.g., assume that $O_1=\{v_1,\ldots, v_i, v_{i+1},\ldots,v_n\}$
and $O_1=\{v_1,\ldots, v_{i+1}, v_{i},\ldots,v_n\}$.
Using \eqref{eq:grad}, we can get that
\begin{align*}
\grad f(\M,u_1)-\grad f(\M,u_2) & = \pu(v_i,u_1) v_i+\pu(v_{i+1},u_1) v_{i+1}  - \pu(v_i,u_2)  v_i - \pu(v_{i+1},u_2) v_{i+1}  \\
&=D \cdot (p_{v_i} v_i +(1-p_{v_i})p_{v_{i+1}}v_{i+1} -p_{v_{i+1}} v_{i+1} -(1-p_{v_{i+1}})p_{v_{i}}v_{i})\\
&=D \cdot p_{v_i} p_{v_{i+1}}(v_i-v_{i+1})\ne 0
\end{align*}
where $D=\prod_{j=1}^{i-1} (1-p_{v_j})\ne 0$.

In summary, we have shown in the first paragraph that $\grad f(\M,u)$
is piecewise constant, with a distinct constant in each cone in $\conedecomp(\M)$.
The same also holds for $\conedecomp(\Gamma)$.
This is only possible if $\conedecomp(\Gamma)$ (thinking as a partition of $\R^d$)
partitions $\R^d$ exactly the same way as $\conedecomp(\M)$ does.
Hence, we have $\conedecomp(\Gamma)=\conedecomp(\M)$
and the lemma follows immediately.
\end{proof}

\begin{figure}[t]
\centering
\includegraphics[width=0.35\linewidth]{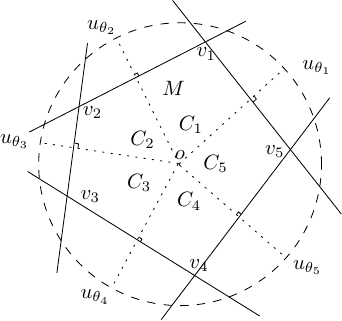}
\caption{
The figure depicts a pentagon $M$ in $\R^2$ to illustrate some intuitive facts in
convex geometry.
(1) The plane can be divided into 5 cones $C_1,\ldots, C_5$, by 5 angles $\theta_1,\ldots, \theta_5$.
$u_{\theta_i}$ is the unit vector corresponding to angle $\theta_i$.
Each cone $C_i$ corresponds to a vertex $v_i$ and
for any direction $u\in C_i$, $f(\M,u)=\innerprod{u}{v_i}$
and the vector $\grad f(\M,u)$ is $v_i$.
(2) Each direction $\theta_i$ is perpendicular to an edge of $\M$.
$\M=\cap_{i=1}^5 H_i$ where $H_i$ is the supporting halfplane with normal vector $u_{\theta_i}$.
}
\label{fig:partition}
\end{figure}

Since $O(n^2)$ hyperplanes passing through the origin can
divide $\R^{d}$ into at most $O({n^2 \choose d-1})$ $d$-dimensional polyhedra cones
(see e.g., \cite{agarwal2000arrangements}), we immediately obtain the following corollary.
\begin{corollary} \label{cor:Mpolybound}
It holds that $|\M|\leq O({n^2\choose d-1})=O(n^{2d-2})$.
\end{corollary}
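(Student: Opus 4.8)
The plan is to combine Lemma~\ref{lm:complexity} with a standard counting bound for central hyperplane arrangements; essentially nothing new is needed beyond the work already done in Lemma~\ref{lm:complexity}. By that lemma (under the stated assumptions of the existential model with $p_v\in(0,1)$) we have $|\M|=|\conedecomp(\Gamma)|$, so it suffices to bound the number of full-dimensional cones in the arrangement $\conedecomp(\Gamma)$ of the ${n\choose 2}$ separating hyperplanes induced by $\calP$, all of which pass through the origin.

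First I would record that $|\Gamma|={n\choose 2}=O(n^2)$. The remaining task is the purely combinatorial question: how many $d$-dimensional cells can an arrangement of $N$ hyperplanes through a common point produce in $\R^d$? The cleanest way to see the bound is to intersect the arrangement with the unit sphere $\S^{d-1}$: each hyperplane through the origin meets $\S^{d-1}$ in a great $(d-2)$-sphere, and the $d$-dimensional cones of $\conedecomp(\Gamma)$ are in bijection with the $(d-1)$-dimensional faces of the resulting arrangement of $N$ great subspheres on $\S^{d-1}$. The combinatorial complexity of an arrangement of $N$ such great $(d-2)$-spheres on $\S^{d-1}$ is $O({N\choose d-1})$, exactly as for an affine arrangement of $N$ hyperplanes in $\R^{d-1}$ (see e.g.~\cite{agarwal2000arrangements}): by general position every $\le d-1$ of the hyperplanes meet in a single antipodal point-pair, each incident to a constant number $2^{d}=O(1)$ of cells, so the cell count is governed by $\sum_{k\le d-1}{N\choose k}=O({N\choose d-1})$.

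Putting these together yields
\[
|\M|=|\conedecomp(\Gamma)|=O\!\left({|\Gamma|\choose d-1}\right)=O\!\left({n^2\choose d-1}\right)=O(n^{2d-2})
\]
for constant $d$, which is the claimed bound. I do not expect any real obstacle here: the only point requiring a moment's care is that the relevant arrangement lives on the sphere rather than in affine space, but its asymptotic cell count coincides with that of an affine arrangement of $N$ hyperplanes in $\R^{d-1}$; the substantive content was already established in Lemma~\ref{lm:complexity}, and this corollary is simply the bookkeeping that turns the exact identity $|\M|=|\conedecomp(\Gamma)|$ into a polynomial bound.
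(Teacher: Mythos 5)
Your proposal matches the paper's argument: both invoke Lemma~\ref{lm:complexity} to reduce the bound on $|\M|$ to counting full-dimensional cones in the arrangement of the $O(n^2)$ hyperplanes through the origin, and then cite the standard $O\bigl({N \choose d-1}\bigr)$ bound for such central arrangements (the paper cites \cite{agarwal2000arrangements} for exactly this). Your extra detail via the great-sphere picture on $\S^{d-1}$ is a fine way to justify the cited bound but does not change the approach.
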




\color{black}

The proof of Lemma~\ref{lm:complexity} can be easily made constructive.
We only need to compute the set $\Gamma$ of all $O(n^2)$ hyperplanes and the arrangement $\conedecomp(\Gamma)$
in $O(n^{2d-2})$ time (see e.g., \cite{agarwal2000arrangements,edelsbrunner1986constructing}).
Given each cone $C\in \conedecomp(\Gamma)$, we can calculate $\grad f(\M, u)$ for any $u\in C$, which gives exactly
one vertex of $M$ by \eqref{eq:grad2}, in $O(n \log n)$ time using the algorithm described in Lemma \ref{lem:10}.

\begin{theorem}
\label{thm:constructM}
In $\R^d$ for constant $d$, the polytope $M$ which defines $f(\calP,u)$ for any direction $u$ can be described with $O(n^{2d-2})$ vertices in $\R^d$, and can be computed in $O(n^{2d-1} \log n)$ time.
In $\R^2$, the runtime can be improved to $O(n^2 \log n)$.
\end{theorem}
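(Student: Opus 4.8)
The bound on the number of vertices is already in hand: Lemma~\ref{lm:complexity} (and the corollary following it) shows that the vertices of $\CH(\M)$ are in one-to-one correspondence with the full-dimensional cones of the arrangement $\conedecomp(\Gamma)$ of the $\binom n2$ separating hyperplanes, and a central arrangement of $O(n^2)$ hyperplanes in $\R^d$ has $O\!\left(\binom{n^2}{d-1}\right)=O(n^{2d-2})$ such cones. So the plan for the constructive part is: (i) form the $\binom n2$ normals and the hyperplanes $\Gamma$; (ii) build the arrangement $\conedecomp(\Gamma)$ together with one interior direction $u_C$ for each cone $C$; and (iii) for each $C$, output the point $\grad f(\M,u_C)=\sum_{v\in\calP}\pu(v,u_C)\,v$ given by Eq.~\eqref{eq:grad}, which by Lemma~\ref{lm:complexity} is precisely the vertex of $\M$ associated with $C$ (and these points are pairwise distinct, so their collection is exactly the vertex set of $\M$).

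For a general constant $d$: step (i) costs $O(n^2)$, and in step (ii) the arrangement of $O(n^2)$ hyperplanes in $\R^d$, together with a representative interior direction per cell, can be built in time proportional to its complexity, i.e.\ $O(n^{2d-2})$, by standard arrangement-construction algorithms \cite{agarwal2000arrangements,edelsbrunner1986constructing}. For step (iii), fix a cone $C$ with representative $u_C$: sort $\calP$ into the canonical order $\succ_{u_C}$ in $O(n\log n)$ time, then in one linear pass compute the prefix products $\prod_{v'\succ_{u_C} v}(1-p_{v'})$, hence all the weights $\pu(v,u_C)=p_v\prod_{v'\succ_{u_C}v}(1-p_{v'})$, and finally the weighted sum $\sum_v\pu(v,u_C)\,v$. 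This is $O(n\log n)$ per cone, so $O(n^{2d-2}\cdot n\log n)=O(n^{2d-1}\log n)$ in total, which dominates the $O(n^{2d-2})$ spent on the arrangement itself.

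For $d=2$ the arrangement $\conedecomp(\Gamma)$ is just $O(n^2)$ rays from the origin, i.e.\ $O(n^2)$ angular breakpoints, which we sort around the circle in $O(n^2\log n)$ time; between consecutive breakpoints lies one cone. By general position, crossing a single breakpoint changes the canonical order by exactly one transposition of two adjacent elements $v_i,v_{i+1}$, and — as already computed in the proof of Lemma~\ref{lm:complexity} — the corresponding change in the output vertex $\sum_v\pu(v,u)\,v$ equals $D\cdot p_{v_i}p_{v_{i+1}}(v_i-v_{i+1})$ with $D=\prod_{j<i}(1-p_{v_j})$. So the plan is: sort $\calP$ once for the first cone in $O(n\log n)$ time and thereafter maintain the canonical order in an array together with the prefix products of the $(1-p_v)$'s along it; each transposition at position $i$ changes only a constant number of array entries (in fact only the $i$-th prefix product), exposes $D$ in $O(1)$ time, and updates the running vertex in $O(1)$ time. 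Sweeping over all $O(n^2)$ cones this way costs $O(n^2)$ on top of the sort, for $O(n^2\log n)$ overall.

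The routine-but-delicate points will be: in general $d$, arguing that the cell enumeration delivers an interior representative for every full-dimensional cone within the stated bound — clean here because, per the footnote, we discard lower-dimensional faces; and in $\R^2$, checking that a single adjacent transposition really perturbs only $O(1)$ of the maintained prefix products, so that the incremental update is honestly $O(1)$ rather than $O(n)$. General position of $\calP$ is exactly what guarantees that consecutive cones differ by a single swap and that all $\binom n2$ breakpoint directions are distinct.
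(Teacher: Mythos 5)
Your proposal is correct and matches the paper's approach: for general $d$ it builds the central arrangement $\conedecomp(\Gamma)$ in $O(n^{2d-2})$ time and then spends $O(n\log n)$ per cone computing $\grad f(\M,u_C)$ via the canonical order (exactly Lemma~\ref{lem:10}), and for $\R^2$ it sorts the $O(n^2)$ angular breakpoints and sweeps, updating the current vertex in $O(1)$ per adjacent transposition, which is precisely what the paper delegates to Lemma~\ref{lem:20}. The only cosmetic difference is bookkeeping in the sweep: you maintain prefix products of the $(1-p_v)$'s to read off $D$, whereas the paper's Lemma~\ref{lem:20} maintains the per-point weights $\Pr^R(v,u)=\pu(v,u)$ directly and updates the two that change by the ratio rule; both yield $O(1)$ per breakpoint.
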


 The improved running time in $\R^2$ is derived in Lemma \ref{lem:20} by carefully constructing each vertex of $M$ in $O(1)$ time using its neighboring vertex.  The extra $O(\log n)$ is needed to sort the vertices of $M$ to determine neighbors.

\eat{
\subsection{A Nearly Linear Time Algorithm for Constructing an \expkernel\ in $\R^2$}
\label{sec:linearalgo}

Now, we prove the main algorithmic result of this section:
we can find an \expkernel\ in nearly linear time.
If we already have the Minkowski sum $\M$,
we can directly use the algorithm in \cite{agarwal2004approximating} to find an $\epsilon$-kernel for $\M$.
However, constructing $\M$ explicitly takes $O(n^2\log n)$ time according to Theorem~\ref{thm:constructM}
and this cannot be improved in general as the complexity of $\M$ is quadratic too.
Therefore, in order to achieve sub-quadratic time coreset construction, we can not compute $\M$ explicitly.

\begin{theorem}
\label{thm:expconstruction}
$\calP$ is a set of uncertain points in $\R^2$ with existential uncertainty.
An \expkernel\ of size $O(1/\sqrt{\eps})$ for $\calP$
can be constructed in $O(\frac{1}{\sqrt{\epsilon}}n\log^2 n)$ time.
\end{theorem}


The following lemma provides an efficient procedure for finding the extreme vertex in $\M$ along any give direction, and is useful in several places later as well.
\begin{lemma}
\label{lm:extreme}
Given any direction $u\in \R^2$, we can find in $O(n \log^2 n)$ time a vertex $v^\star\in \M$, at which
$\innerprod{v}{u}$ is maximized, over all $v\in \M$.
\end{lemma}
\begin{proof}
In order to achieve the desired time complexity,
we need the slope selection problem, defined as follows.
Given $n$ points in the plane and an integer $i$, the {\em slope selection problem} is to find
the pair of points whose connecting line has the $i$-th smallest slope.
The problem can be solved in $O(n\log n)$ time \cite{katz1993optimal,bronnimann1998optimal}.

Recall the definitions of the separating hyperplanes $\Gamma$ and the arrangement $\conedecomp(\Gamma)$.
In $\R^2$, $\Gamma$ is just a set of ${n \choose 2}$ lines passing through the origin.
Each such line is orthogonal to the line connecting two points in $\calP$.
Therefore, finding the line in $\Gamma$ with the $i$-th smallest slope can be done
in $O(n \log n)$ time using the slope selection problem (after rotating everything by $90^\circ$).
By finding two lines in $\Gamma$ with the $i$-th and $i+1$-th smallest slopes, we can
identify the $i$-th cone $C_i \in \conedecomp(\Gamma)$ in $O(n \log n)$
(suppose all cones are $C_1,\ldots, C_k$, arranged in a clockwise order).

By Lemma~\ref{lm:complexity}, each cone in $\conedecomp(\Gamma)$ correspond to exactly one
vertex of $M$ and the clockwise order of the cones is also a clockwise order of the vertices of $M$.
Suppose the set of vertices of $M$ is $\{v_1,\ldots, v_k\}$  and $v_i$ is the vertex corresponding to $C_i$.
Moreover, given a cone $C$, we can compute $\grad f(M, u)$ for any $u\in C$,
in $O(n\log n)$ time (by Lemma~\ref{lem:10}).
Hence, we can compute $v_i$ in $O(n\log n)$ time.

Now, consider $g(i)=\innerprod{v_i}{u}$ as a function of $i$.
It is easy to see that $g$ is a cyclic shift of unimodal function
\footnote{
Here, we say a function $g: [k]\rightarrow \R$ is unimodal if there is some $t\in [k]$ such that
$g$ is {\em strictly} increasing before $t$ and
{\em strictly} decreasing after $t+1$.
Here, the strict monotonicity requirement is important for the binary search algorithm to work.
For example, $\{0,0,\ldots, 0,1,0,\ldots, 0\}$ is not strictly monotone before or after the peak
and no binary search algorithm can identify the peak using logarithmic queries.
The strict monotonicity holds for us because each cone in $\conedecomp(\Gamma)$ corresponds to a distinct vertex in $M$
(under the general position condition).
}.
Therefore, we can use binary search to find out $i$ that maximizes $g(i)$
in $O(\log n)$ iterations.
In each iteration, we need $O(n\log n)$ time to compute $v_i$.
Hence, the total running time is $O(n\log^2 n)$.
\end{proof}

Next, we need to find an affine transform $T$ such that
the convex polygon $\M'=T(\M)$ is $\alpha$-fat for some constant $\alpha$.
We recall that a set $P$ of points is {\em $\alpha$-fat}, for some constant $\alpha\leq 1$,
if there exists a point $x\in \R^d$, and a hypercube $\hypercube$ centered at $x$
such that $\alpha \hypercube \subset \CH(P)\subset \hypercube$.
According to \cite{barequet2001efficiently}, in order to construct such $T$ (in $\R^2$),
it suffices to identify two points in $\M$ such that their distance is a constant approximation of
the diameter of $\M$.
The following lemma (proven in the appendix) shows this can be done without computing $\M$ explicitly.

\begin{lemma}
\label{lm:transform}
We find an affine transform $T$ in $O(n\log^2 n)$ time,
such that the convex polygon $\M'=T(\M)$ is $\alpha$-fat for some constant $\alpha$.
\end{lemma}

After obtaining $T$, we apply $T$ to $\calP$ in linear time.
Notice that $M'=T(\M(\calP))=\M(T(\calP))$.
Therefore, Lemma~\ref{lm:extreme} also holds for $M'$
(i.e., we can search over $M'$ the maximum vertex in any given direction in $O(n\log^2 n)$ time).

Let the vertex $v_i$ of $\M'$ be the one
that maximizes $\innerprod{u_i}{v}$ over all $v\in \M'$
where $u_i=u_{\theta_i}=(\cos i\cdot \frac{2\pi}{K}, \sin i\cdot \frac{2\pi}{K})$.
Based on the previous discussion, all $v_i$s can be computed in $O(Kn\log^2n)=O(\frac{1}{\epsilon} n\log^2 n)$ time.
It follows from Theorem 1 of \cite{CM03} (see also \cite{SH03}) that $S=\{v_i\}_{0\leq i<K}$ is an $\eps$-kernel.
We can then run existing $\eps$-kernel algorithms~\cite{Cha06,YAPV04} in $O(|S|)$ time to reduce the size of $S$ or $O(1/\sqrt{\eps})$.
We can also directly get an \expkernel\ of size $O(1/\sqrt{\epsilon})$
in $O(\frac{1}{\sqrt{\epsilon}} n\log^2 n)$ time; these details can be found in the appendix.
}

\subsection{A Nearly Linear Time Algorithm for Constructing \expkernel s}
\label{sec:linearalgo}

Now, we prove the main algorithmic result Theorem~\ref{thm:expconstruction} of this section:
we can find an \expkernel\ in nearly linear time.
If we already have the Minkowski sum $\M$,
we can directly use the algorithm in \cite{agarwal2004approximating} to find an $\epsilon$-kernel for $\M$.
However, constructing $\M$ explicitly takes $O(n^{2d-1}\log n)$ time according to Theorem~\ref{thm:constructM}
and this cannot be improved in general as the complexity of $\M$ is $O(n^{2d-2})$.
Therefore, in order to achieve a nearly linear time coreset construction, we can not compute $\M$ explicitly.
For ease of description, we first consider existential uncertainty model.
The details for locational uncertainty model can be found in Appendix~\ref{sec:appendix}.

\begin{theorem}\label{thm:expconstructionexistential}
(second half of Theorem~\ref{thm:expconstruction}, for existential model)
$\calP$ is a set of $n$ uncertain points in $\R^d$ with existential uncertainty.
An \expkernel\ of size $O(\eps^{-(d-1)/2})$ for $\calP$
can be constructed in $O(\epsilon^{-(d-1)} n\log n)$ time.
\end{theorem}


The following simple lemma provides an efficient procedure for finding the extreme vertex in $\M$ along any give direction, and is useful in several places later as well.
\begin{lemma}
\label{lm:extreme}
Given any direction $u\in \R^d$, we can find in $O(n \log n)$ time a vertex $v^\star\in \M$, at which
$\innerprod{v}{u}$ is maximized, over all $v\in \M$.
\end{lemma}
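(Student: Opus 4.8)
The plan is to sidestep the polytope $\M$ entirely and read off the maximizing vertex directly from the closed form for $f(\M,\cdot)$ obtained in the proof of Lemma~\ref{lm:complexity}. Recall that $f(\M,u)=f(\calP,u)=\sum_{v\in\calP}\pu(v,u)\innerprod{v}{u}$, where $\pu(v,u)=p_v\prod_{v'\succ_u v}(1-p_{v'})$, and that by \eqref{eq:grad} together with Lemma~\ref{lm:complexity} the gradient
\[
\grad f(\M,u)=\sum_{v\in\calP}\pu(v,u)\,v
\]
is precisely a vertex of $\M$ at which $\innerprod{\cdot}{u}$ is maximized over $\M$. So the task reduces to evaluating this single weighted sum of the input points.

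First I would compute the canonical order of $\calP$ with respect to $u$, i.e.\ sort the $n$ points in decreasing order of $\innerprod{v}{u}$, breaking ties by the fixed consistent rule; this is one comparison sort costing $O(n\log n)$. Next I would sweep the sorted list from the largest point downward, maintaining the running product $\pi_v:=\prod_{v'\succ_u v}(1-p_{v'})$, updated by a single multiplication per step, which produces all coefficients $\pu(v,u)=p_v\,\pi_v$ in $O(n)$ total time. Finally I would accumulate $v^\star:=\sum_{v\in\calP}\pu(v,u)\,v$, which is $n$ additions of $d$-dimensional vectors, i.e.\ $O(nd)=O(n)$ for constant $d$. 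The sort dominates, giving the claimed $O(n\log n)$ bound, and the output $v^\star=\grad f(\M,u)$ is the desired maximizing vertex.

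The one place that needs a little care — and I expect it to be the only genuine subtlety — is when $u$ lies on a separating hyperplane of $\Gamma$, so that the canonical order is not strict and $\grad f(\M,u)$ is not literally well defined. There $f(\M,\cdot)$ is still continuous, convex and piecewise linear, so its subdifferential at $u$ is the convex hull of the gradients of the finitely many linear pieces whose cones of $\conedecomp(\Gamma)$ contain $u$ in their closure; each such gradient is a vertex of $\M$, and by linearity on the corresponding cone plus continuity each attains $f(\M,u)=\max_{v\in\M}\innerprod{v}{u}$. Running the sweep with the fixed tie-breaking rule simply selects one of these vertices, which is exactly what ``a vertex $v^\star\in\M$ at which $\innerprod{v}{u}$ is maximized'' asks for. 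Hence the procedure is correct for every direction $u$ and runs in $O(n\log n)$ time.
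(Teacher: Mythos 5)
Your proof is correct and follows essentially the same route as the paper: the paper's proof also reduces Lemma~\ref{lm:extreme} to evaluating $v^\star=\grad f(\M,u)=\sum_{v\in\calP}\pu(v,u)\,v$ via \eqref{eq:grad2} and \eqref{eq:grad}, and then defers to Lemma~\ref{lem:10} for exactly the sort-and-sweep computation you spell out (sort by $\innerprod{\cdot}{u}$ in $O(n\log n)$, sweep to get the cumulative products $\pu(v,u)$, accumulate the weighted sum in $O(n)$). Your extra paragraph on the degenerate case where $u$ lies on a separating hyperplane, handled by the fixed tie-breaking rule and a subdifferential argument, is a sound elaboration of the paper's terse ``ties should be broken in an arbitrary but consistent manner'' remark, but does not change the method.
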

\begin{proof}
Fix an arbitrary direction $u\in \R^d$.
From the proof of Lemma~\ref{lm:complexity} (in particular \eqref{eq:grad2}), we know that
the vertex $v^\star\in \M$ that maximizes
$\innerprod{v}{u}$ can be computed by
$v^\star=\grad f(\M,u)$.
Using~\eqref{eq:grad}, $\grad f(\M,u)$ can be easily computed in $O(n\log n)$ time
(see Lemma~\ref{lem:10} for the details).
\end{proof}

Next, we need to find an affine transform $T$ such that
the convex polytope $\M'=T(\M)$ is $\alpha$-fat for some constant $\alpha$.
We recall that a set $P$ of points is {\em $\alpha$-fat}, for some constant $\alpha\leq 1$,
if there exists a point $x\in \R^d$, and a unit hypercube $\hypercube$ centered at $x$
such that $\alpha \hypercube \subset \CH(P)\subset \hypercube$.
According to Chapter 22 in \cite{har2011geometric}, in order to construct such $T$,
it suffices to identify two points in $\M$ such that their distance is a constant approximation of
the diameter of $\M$.
The following lemma (proven in Appendix~\ref{sec:appendix}) shows this can be done without computing $\M$ explicitly.

\begin{lemma}
\label{lm:transform}
We find an affine transform $T$ in $O(2^{O(d)}n\log n)$ time,
such that the convex polytope $\M'=T(\M)$ is $\alpha$-fat for some constant $\alpha$ ($\alpha$ may depend on $d$).
\end{lemma}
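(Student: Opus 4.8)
The plan is to reduce the construction of the affine transform $T$ to the problem of finding a constant-factor approximation to the diameter of $M$, and then to find such an approximate diameter using only the extreme-vertex oracle of Lemma~\ref{lm:extreme} (which runs in $O(n\log n)$ time per query), so that the whole procedure never constructs $M$ explicitly. By the result of Barequet and Har-Peled~\cite{barequet2001efficiently}, once we have two points $a,b\in M$ with $\|a-b\|\ge c\cdot\diam(M)$ for some constant $c=c(d)$, we can in $O(2^{O(d)})$ additional time build an affine transform $T$ making $M'=T(M)$ $\alpha$-fat for a constant $\alpha$ depending only on $d$; this step is purely a function of the two points and a bounding box aligned with the segment $ab$, so it does not require knowing $M$. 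Hence it suffices to produce the approximate diameter within the claimed time bound.

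The key steps are as follows. First, fix any unit vector $u_1$ (say a coordinate direction) and use Lemma~\ref{lm:extreme} to compute $p=\grad f(M,u_1)$, the vertex of $M$ extreme in direction $u_1$. Second, compute $q=\grad f(M, (q'-p)/\|q'-p\|)$ where we take $q'$ to be the vertex extreme in direction $-u_1$; more simply, iterate the extreme-vertex oracle a constant number of times in a fixed set of $2^{O(d)}$ directions (the $\pm$ coordinate axes, or the directions toward the current farthest pair) to obtain a pair $a,b$ of vertices of $M$ whose distance approximates $\diam(M)$ up to a factor depending only on $d$. The standard fact here is that the farthest pair among the $O(d)$ axis-extreme vertices of any convex body is within a $\sqrt{d}$ factor of the true diameter; alternatively, one round of the classic ``pick an arbitrary point, take its farthest extreme vertex, then take that vertex's farthest extreme vertex'' gives a $2$-approximation, and each such ``farthest extreme vertex'' query is realized by one call to the oracle in direction pointing away from the current point (since $M$ is convex, its diameter is attained at vertices, and the vertex maximizing $\langle v, w-p\rangle$ over $v\in M$ is $\grad f(M, (w-p)/\|w-p\|)$, which is at least as far from $p$ as any point of $M$ in that halfspace). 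Third, feed the resulting pair $a,b$ into the Barequet--Har-Peled construction to obtain $T$. Since each oracle call costs $O(n\log n)$ and we make only $2^{O(d)}$ of them, and the final transform construction costs $2^{O(d)}$ time, the total is $O(2^{O(d)} n\log n)$.

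The main obstacle is purely a bookkeeping one: verifying that the extreme-vertex oracle in a direction $w-p$ actually returns a point whose distance from $p$ is a constant-factor approximation to $\max_{v\in M}\|v-p\|$, and then chaining this with convexity to get a global diameter approximation. This is where one must be slightly careful, because $\grad f(M,u)$ maximizes $\langle v,u\rangle$, not $\|v-p\|$; the fix is the observation that for the point $w^\ast$ realizing $\max_{v\in M}\|v-p\|$, the direction $u^\ast=(w^\ast-p)/\|w^\ast-p\|$ has $\langle w^\ast,u^\ast\rangle-\langle p,u^\ast\rangle=\|w^\ast-p\|$, so the oracle's answer $\grad f(M,u^\ast)$ has projection onto $u^\ast$ at least this large, hence distance from $p$ at least $\|w^\ast-p\|$ minus a controllable slack—and running the two-step farthest-point heuristic absorbs the constant. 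Everything else (applying $T$ to $\calP$ in linear time, and the fact that $M(T(\calP))=T(M(\calP))$ since affine maps commute with Minkowski sums and convex hulls) is routine and already noted in the surrounding text.
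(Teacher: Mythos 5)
Your primary route is essentially the paper's: reduce to finding two points of $\M$ at distance within a constant factor of $\diam(\M)$, obtain them by calling the extreme-vertex oracle of Lemma~\ref{lm:extreme} on a fixed set of $2^{O(d)}$ directions and taking the farthest pair of returned vertices, then hand the pair to Barequet--Har-Peled. The paper instantiates the direction set as a $\pi/3$-net on $\mathbb{S}^{d-1}$ (giving a $2$-approximation of the diameter via the observation that $\|x(u)-x(-u)\| \ge \dw(\M,u) \ge \|y_1-y_2\|\cos(\pi/3)$); your stated variant using the $\pm$ coordinate axes gives only a $\sqrt{d}$-approximation, which is still fine because $\alpha$ is allowed to depend on $d$. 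Either is correct and the proof structure is the same.

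The alternative you float --- the iterative ``farthest-point heuristic'' --- is not implementable as written, and the claim that it ``absorbs the constant'' has no justification. The oracle takes a \emph{direction} and returns the vertex of $\M$ extreme in that direction; it does not return the vertex of $\M$ Euclidean-farthest from a given point $p$. Your own observation that querying in $u^\ast = (w^\ast-p)/\|w^\ast-p\|$ returns a point at least $\|w^\ast-p\|$ from $p$ presupposes you already know $w^\ast$ (equivalently $u^\ast$). Starting with a provisional direction and re-querying toward the returned vertex is an iterative scheme whose per-round progress toward the true farthest distance is not bounded below by a constant factor, so there is no reason it terminates with a constant-factor approximation after a constant (or even $\log$) number of rounds. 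You should excise this alternative; the fixed-direction-set argument already completes the proof within the claimed $O(2^{O(d)} n \log n)$ time budget.
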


After obtaining $T$, we apply $T$ to $\calP$ in linear time.
Notice that $M'=T(\M(\calP))=\M(T(\calP))$.
Therefore, Lemma~\ref{lm:extreme} also holds for $M'$
(i.e., we can search over $M'$ the maximum vertex in any given direction in $O(n\log n)$ time).


Let $\delta=O(\epsilon \alpha/d)$.
We compute a set $\calI$ of $O(\delta^{-(d-1)})=O(\epsilon^{-(d-1)})$ points on the unit sphere $\mathbb{S}^{d-1}$
such that for any point $v\in \mathbb{S}^{d-1}$, there is a point $u\in \calI$ such that $\|u-v\|\leq \delta$ (see e.g., \cite{agarwal1992farthest, chan2000approximating}).
For each $u$ in  $\calI$, we include $-u$ in $\calI$ as well.
For each direction $u\in \calI$, we compute $x(u)=\arg\max_{x\in M'} \innerprod{x}{u}$.
Based on the previous discussion, all $\{x(u)\}_{u\in \calI}$ can be computed in $O(\delta^{-(d-1)} n\log n)=O(\epsilon^{-(d-1)} n\log n)$ time.

\begin{lemma}
\label{lm:directionkernel}
$S=\{x(u)\}_{u\in \calI}$ is an $\eps$-kernel for $M'$.
\footnote{
This is a folklore result.
A proof of the 2D case can be found in \cite{CM03}. The general case is a straightforward extension and
we provide a proof in Appendix~\ref{sec:appendix} for completeness.
}
\end{lemma}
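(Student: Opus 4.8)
The plan is to show the standard fact that a sufficiently dense set of directions on $\S^{d-1}$, together with the corresponding extreme points, yields an $\eps$-kernel, where we crucially exploit the $\alpha$-fatness of $M'$ guaranteed by Lemma~\ref{lm:transform}. Fix an arbitrary unit direction $v\in\S^{d-1}$; we must bound $\dw(M',v)-\dw(S,v)$ from above by $\eps\,\dw(M',v)$ (the other inequality $\dw(S,v)\le\dw(M',v)$ is immediate since $S\subseteq M'$). First I would pick $u\in\calI$ with $\|u-v\|\le\delta$ and $-u\in\calI$ as well (by construction). Let $p=x(u)=\arg\max_{x\in M'}\innerprod{x}{u}$ be the vertex of $S$ extreme in direction $u$; by definition $f(S,v)\ge\innerprod{p}{v}$. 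The key estimate is to compare $f(M',v)$ with $\innerprod{p}{v}$.

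The central step: let $q=\arg\max_{x\in M'}\innerprod{x}{v}$, so $f(M',v)=\innerprod{q}{v}$. Then
\[
f(M',v)-\innerprod{p}{v} = \innerprod{q}{v}-\innerprod{p}{v} = \innerprod{q-p}{v}.
\]
I would write $\innerprod{q-p}{v}=\innerprod{q-p}{u}+\innerprod{q-p}{v-u}$. The first term is $\le 0$ because $p$ maximizes $\innerprod{\cdot}{u}$ over $M'$ and $q\in M'$. For the second term, Cauchy–Schwarz gives $\innerprod{q-p}{v-u}\le\|q-p\|\cdot\|v-u\|\le\delta\,\|q-p\|\le\delta\,\diam(M')$. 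Doing the same analysis with $-u,-v$ in place of $u,v$ bounds $f(M',-v)-f(S,-v)$ by $\delta\,\diam(M')$ as well, so altogether
\[
\dw(M',v)-\dw(S,v)\le 2\delta\,\diam(M').
\]
Now I invoke $\alpha$-fatness: since $\alpha\hypercube\subseteq\CH(M')\subseteq\hypercube$ for a unit hypercube $\hypercube$, on one hand $\diam(M')\le\diam(\hypercube)=\sqrt d$, and on the other hand for \emph{every} direction $v$ we have $\dw(M',v)\ge\dw(\alpha\hypercube,v)\ge\alpha$ (the width of a cube of side $\alpha$ in any direction is at least $\alpha$). Hence
\[
\dw(M',v)-\dw(S,v)\le 2\delta\sqrt d\le \frac{2\delta\sqrt d}{\alpha}\,\dw(M',v),
\]
and choosing $\delta=\eps\alpha/(2\sqrt d)=O(\eps\alpha/d)$ (consistent with the lemma's choice) makes this $\le\eps\,\dw(M',v)$, as desired.

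I expect the only real subtlety to be the two-sided bookkeeping — one must apply the direction-approximation argument to both $u$ and $-u$ to control $f(M',v)$ and $f(M',-v)$ simultaneously, and one must be careful that the lower bound $\dw(M',v)\ge\alpha$ holds uniformly over all directions $v$ (which is exactly what fatness buys us, and why the affine transform $T$ was needed — without it $\dw(M,v)$ could be arbitrarily small in some direction, ruining the relative error). Everything else is routine Cauchy–Schwarz and the packing bound $|\calI|=O(1/\delta^{d-1})=O(1/\eps^{d-1})$ for a $\delta$-net on $\S^{d-1}$. A more detailed write-up of this standard argument is deferred to Appendix~\ref{sec:appendix}.
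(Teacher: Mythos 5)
Your argument is correct and follows essentially the same route as the paper's proof: pick a $\delta$-close direction $u\in\calI$, split the error term via Cauchy--Schwarz, and use $\alpha$-fatness to turn the additive bound $2\delta\,\diam(M')$ into a multiplicative one (your decomposition tracks $f(M',\pm v)-f(S,\pm v)$ separately, which is slightly cleaner bookkeeping than the paper's direct manipulation of $\dw(S,v)$, but the content is identical). The only small blemish is the factor-of-two arithmetic around the hypercube: with $\hypercube=[-1,1]^d$, $\alpha\hypercube$ has width at least $2\alpha$ and $\diam(\hypercube)=2\sqrt d$, which shifts your constants slightly but does not affect the choice $\delta=O(\eps\alpha/\sqrt d)$.
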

Finally, we can then run existing $\eps$-kernel algorithms~\cite{Cha06,YAPV04} in $O(|S|)$ time to further reduce the size of $S$ to $O(\eps^{-(d-1)/2})$, which finishes the proof of Theorem~\ref{thm:expconstructionexistential}.
Lemma~\ref{lm:complexity} and Theorem~\ref{thm:expconstructionexistential} hold for locational uncertainty models as well. The details can be found in Appendix~\ref{sec:appendix}.

\subsection{\expkernel\ Under the Subset Constraint}
\label{sec:expsubset}

First, we show that under the subset constraint
(i.e., the \expkernel\ is required to be
a subset of the original point set,
with the same probability distribution for each chosen point),
there exists no \expkernel\ with small size in general.
\footnote{If we require the $\eps$-\textsc{exp-kernel} to be
a subset of the original point set, but
with possibly different probability for each chosen point,
we do not know whether there always exists
an \expkernel\ with small size.}

\begin{lemma}
\label{lm:negative}
For some constant $\epsilon>0$,
there exist a set $\calP$ of stochastic points such that
no $o(n)$ size \expkernel\ exists for $\calP$ under the subset constraint
(for both locational model and
existential model).
\end{lemma}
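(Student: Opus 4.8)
The plan is to exhibit an explicit family of stochastic point sets for which every small subset fails badly in some direction. I would place the $n$ points on (or very near) the boundary of a convex curve — say the unit circle in $\R^2$, embedded in $\R^d$ — at positions $v_1,\dots,v_n$ that are the vertices of a regular $n$-gon, and give each point a \emph{tiny} existential probability, e.g.\ $p_{v_i}=1/n$ for all $i$ (in the locational model, one instead lets each ``point'' be located at one of two far-apart spots, realizing $v_i$ with probability $1/n$ and some fixed faraway anchor otherwise, but the circle construction is the cleanest to describe). The key quantitative fact is that $\dw(\calP,u)=\Exp[\dw(P,u)]$ is a \emph{sum} of contributions, each scaled by a probability, so when the probabilities are uniformly $1/n$ the expected width in \emph{every} direction is within a constant factor of $\frac1n\sum_i$ (of the relevant extreme coordinates), i.e.\ $\dw(\calP,u)=\Theta(1/n)$ for \emph{all} $u$, with the constant independent of $u$. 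This uniformity across directions is exactly what a subset coreset cannot reproduce with few points.

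The core of the argument is then a counting/pigeonhole step. Suppose $Q\subseteq\calP$ is an \expkernel\ under the subset constraint, so $Q$ consists of some subset of the $v_i$'s each retaining probability $1/n$, and $|Q|=k$. For a direction $u$ chosen to point ``at'' a gap of $Q$ — precisely, a direction $u_j$ for which the maximizer of $\innerprod{\cdot}{u_j}$ over all $n$ original points is some $v_j\notin Q$, and moreover the points of $Q$ that are ``near the top'' in direction $u_j$ are geometrically far (in inner product) from $v_j$ — the expected width $\dw(Q,u_j)$ is dominated by $\frac1n$ times the top contributions from $Q$, which lies strictly below $v_j$'s coordinate. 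Since $Q$ has only $k$ points on a circle of $n$ equally spaced points, there is a direction where the nearest point of $Q$ to the top is $\Omega(1/k)$ angular distance away, so its inner product falls short of the true maximum by a constant (depending only on that angular gap, hence $\to$ a fixed constant once $k=o(n)$, in fact for any $k\le n/2$ the gap in inner product is at least some absolute constant $c>0$ because there must be \emph{two} consecutive original vertices both absent). Plugging in: $\dw(\calP,u_j)\ge \frac1n\cdot\Theta(1)$ coming (among other terms) from the $v_j$ term with weight roughly $1/n$, whereas every term in $\dw(Q,u_j)$ that could compete is suppressed by the constant-factor shortfall $c$, giving $\dw(Q,u_j)\le (1-\Omega(1))\dw(\calP,u_j)$, contradicting the $(1-\eps)$ requirement for small enough fixed $\eps$.

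I would organize the write-up as: (i) define the instance and compute $\dw(\calP,u)$ as a function of $u$, establishing the $\Theta(1/n)$-uniform-in-$u$ bound with explicit constants (here one uses $p_{v'}\in(0,1)$ so the canonical-order weights $\pu(v,u)=\prod_{v'\succ_u v}(1-p_{v'})\,p_v$ are all comparable when probabilities are $1/n$); (ii) fix an arbitrary $k=o(n)$-size subset $Q$ and use pigeonhole on the $n$ circular positions to find two adjacent absent vertices, and let $u_j$ bisect that arc; (iii) lower bound $\dw(\calP,u_j)$ by isolating the term from the absent extreme vertex; (iv) upper bound $\dw(Q,u_j)$ by noting every surviving point has inner product with $u_j$ at most $\cos(\theta)$ for a fixed angle $\theta$ bounded away from $0$; (v) combine to get the constant-factor gap and conclude.

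The main obstacle I anticipate is step (iii)–(iv) bookkeeping: $\dw$ for the stochastic set is an expectation over \emph{all} realizations, not just the top one, so I must verify that the ``lower-order'' realizations (where the true top vertex is absent) do not inflate $\dw(Q,u_j)$ back up to match $\dw(\calP,u_j)$ — this is where uniformity of the probabilities is doing the real work, since it forces the weight $\pu(v,u_j)$ on \emph{any} vertex of $Q$ to be $O(1/n)$ while its inner product is at most $\cos\theta<1$, and there are only $k$ of them, so the whole sum is at most $(k/n)\cos\theta\cdot(\text{const})$, which we compare against the $\Theta(1/n)$ true width; one has to be slightly careful that $k/n=o(1)$ does not accidentally make $\dw(Q,u_j)$ \emph{smaller} than $(1-\eps)\dw(\calP,u_j)$ for a trivial reason (it does, but in the \emph{wrong} direction only if the whole set $Q$ contributes too little — which is precisely the violation we want), so the clean statement is that for this instance \emph{any} subset of size $\le n/2$ already violates the $(1-\eps)$ bound in some direction, and a short separate remark handles why even the correct expected \emph{upper} bound $\dw(Q,u)\le\dw(\calP,u)$ is automatic. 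A secondary point worth a sentence: the same circle-on-a-line degenerate construction works in the locational model by the reduction indicated above, or even more simply by making $\calP$ a single ``point'' with $n$ candidate locations.
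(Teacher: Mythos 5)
Your instinct to use existential probability $p_v = 1/n$ for every point is exactly the right lever, but the central quantitative claim in your write-up is wrong, and the rest of the argument is built on it.

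You assert that $\dw(\calP,u)=\Theta(1/n)$ uniformly in $u$ for $n$ points on a unit circle each with probability $1/n$. That is off by a factor of $n$: the correct value is $\Theta(1)$. To see it, note that $\dw(\calP,u)=\Exp_{P\sim\calP}[\dw(P,u)]$, and $\dw(P,u)>0$ only when at least two points are realized. With $n$ independent Bernoulli$(1/n)$ points, the number realized is approximately Poisson$(1)$, so $\Pr[|P|\ge 2]=\Theta(1)$, and conditioned on that event the two extreme points on the unit circle are typically $\Theta(1)$ apart along any direction $u$. Hence $\dw(\calP,u)=\Theta(1)$. Your error appears to come from implicitly averaging the coordinates $\innerprod{u}{v_i}$ with weight $1/n$ each; but the support function is an expected \emph{maximum}, and the canonical-order weights $\pu(v_j,u)=\frac{1}{n}(1-\frac{1}{n})^{j-1}$ form a geometric sequence that sums to $\Theta(1)$, placing most of the mass on $\Theta(n)$ near-extreme points whose coordinates do not cancel.

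Because of this, the ``pigeonhole over gap directions'' machinery is not needed and in fact obscures what is going on. The real reason a subset $Q\subseteq\calP$ of size $k=o(n)$ fails is much blunter: $\Pr[\text{at least two points of }Q\text{ realized}]=O(k^2/n^2)=o(1)$, so $\dw(Q,u)=o(1)$ for \emph{every} direction $u$ simultaneously, while $\dw(\calP,u)=\Theta(1)$. There is no need to find a bad direction, to argue about angular gaps, or to bound $\cos\theta$ away from $1$; the subset loses in all directions at once because it almost never produces even a degenerate two-point realization. This is exactly the argument the paper gives, with a cleaner instance: $n/2$ co-located points at the origin and $n/2$ co-located at $x=1$ (existential), or a single-dimensional analogue for the locational model. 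That choice removes all geometry from the problem and makes the ``expected width drops to $o(1)$'' computation a one-line Bernoulli estimate.

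So: your construction would in fact serve as a witness, but the proof you sketch for it does not hold as written. You should replace steps (i)–(v) with the direct probabilistic estimate above (probability that fewer than two points of the subset appear), and you may as well simplify the instance to two co-located clusters. One further small caution: in your step (iv) you argue every surviving point has inner product at most $\cos\theta<1$; but even granting the flawed $\Theta(1/n)$ baseline, this only bounds the support function $f(Q,u_j)$, not the width $\dw(Q,u_j)=f(Q,u_j)+f(Q,-u_j)$, and the gap direction that hurts $f(Q,u_j)$ generally does not also hurt $f(Q,-u_j)$, so the claimed constant-factor loss in the width would need a separate argument anyway.
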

\begin{proof}
To see this in the existential uncertainty model,
simply consider $n$ points, each with existence probability $1/n$.
$n/2$ of them co-locate at the origin
and the other $n/2$ of them co-locate at $x=1$.
It is not hard to see that the expected length of the diameter is $\Omega(1)$
but the expected length of the diameter of any $o(n)$ size subset is only $o(1)$
(with high probability, no point would even appear).

The case for the locational uncertainty model is as simple.
Again, consider $n$ points.
For each point, with probability $1/n$, it appears at $x=1$.
Otherwise, its position is the origin (with probability $1-1/n$).
It is not hard to see that the expected length of the diameter of the original point set is $\Omega(1)$,
while that of any $o(n)$ size subset is only $o(1)$
(with high probability, no point would realize at $x=1$).
\end{proof}

In light of the above negative result, we make the following {\em $\beta$-assumption}:
we assume each possible location realizes a point with probability at least $\beta$, for a constant $\beta >0$.
The proof of the following theorem can be found in Appendix~\ref{sec:appendix}.


\begin{theorem}
\label{thm:beta}
Under the $\beta$-assumption, in the existential uncertainty model, there is an \expkernel\ in $\R^d$
of size $O(\beta^{-(d-1)}\epsilon^{-(d-1)/2} \log (1/\epsilon) )$ that satisfies the subset constraint.
\end{theorem}

\eat{
\begin{proof}
Our algorithm is based on the coreset construction approach in \cite{SH03}
and similar to the one in Section~\ref{sec:linearalgo}. 
We first transform the point set (treat all points as deterministic ones in this step) to an {\em $\alpha$-fat} one
using the algorithm in \cite{barequet2001efficiently}.
Assume that $\alpha \hypercube \subset \CH(P)\subset \hypercube$
where $\hypercube=[-1,1]^d$ and $\alpha$ is a constant only depending on $d$.
Let $\S^{d-1}$ be the unit sphere centered at the origin.
Let $\epsilon_1= \epsilon \alpha\beta^2/4$.
One can construct a set $\calI$ of $K=O(1/\epsilon_1^{d-1})$ points on $\S^{d-1}$ so that for any point $x\in \S^{d-1}$, there exists a point $y\in \calI$
such that $||x-y||\leq \epsilon_1$ (see e.g., \cite{agarwal2004approximating}).
For each $u_i\in \calI$, let $b(u_i)$ be the set of $L$ vertices
that maximizes $\innerprod{u_i}{v}$ over all $v\in \calP$
(i.e., the first $L$ vertices in the canonical order w.r.t. $u_i$)
where $L = \log_{1-\beta} \epsilon \alpha \beta^2 =O(\log(1/\epsilon))$.
Let $\calS=\cup_{0\leq i<K} b(u_i)$. Now, we show that $\calS$ is an \expkernel\ for $\calP$.

We first establish a lower bound of $\dw(\calP,u)$ for any unit vector $u$.
Since $\alpha\hypercube\subset \CH(P)$,
we know there is a point $v\in \CH(P)$
such that $\innerprod{u}{v}\geq \alpha$
and a different point
$w\in \CH(P)$
such that $\innerprod{u}{w}\leq -\alpha$.
Hence, we have that
$$\dw(\calP,u)\geq \beta^2(\innerprod{u}{v}-\innerprod{u}{w})\geq 2\alpha\beta^2.$$

Consider an arbitrary direction $u\in \S^{d-1}$.
Now, we bound the absolute difference between $f(\calP,u)$ and $f(\calS,u)$.
Let $u' \in \calI$ be a point such that $||u'-u||\leq \epsilon_1$
(by the construction of $\calI$, we know such $u'$ must exist).
Now, we show that for any real value $x\in [-1,1]$,
\begin{align}
\label{eq:probbound}
\Pr_{P\sim \calP'}[f(P, u) \geq x] \leq  \Pr_{S\sim \calS'}[f(S, u) \geq x-2\epsilon_1]+\epsilon_1.
\end{align}
For an arbitrary point $v$ with $||v||\leq 1$, by Cauchy-Schwarz, we can see that
\begin{align}
\label{eq:vecdist}
|\innerprod{v}{u}-\innerprod{v}{u'}|
&=|\innerprod{v}{u-u'}| \leq ||v||\cdot ||u-u'||
\leq \epsilon_1
\end{align}
Fix an arbitrary $x\in [-1,1]$.
Let $I(x)=\{v\in \calP \mid
\innerprod{v}{u} \geq x\}$.
Similarly, define $I'(x)=\{v\in \calP \mid \innerprod{v}{u'}\geq x\}$.
By \eqref{eq:vecdist}, we know that
$I(x)\subseteq I'(x-\epsilon_1)\subseteq I(x-2\epsilon_1)$.
We consider two cases:
\begin{enumerate}
\item $|I'(x-\epsilon_1)|\leq L$:
In this case, we can see that $I(x)\subseteq I'(x-\epsilon_1)\subseteq b(u')$
and thus $I'(x-\epsilon_1)=I'(x-\epsilon_1)\cap \calS$,
which further implies that $I(x)\subseteq I(x-2\epsilon_1)\cap \calS$.
Hence
\begin{align*}
\Pr_{P\sim \calP'}[f(P, u) \geq x] & = 1-\prod_{v\in I(x)}(1-p_v)
\leq  1-\prod_{v\in I(x-2\epsilon_1)\cap \calS}(1-p_v) \\
&= \Pr_{S\sim \calS'}[f(S, u) \geq x-2\epsilon_1].
\end{align*}
\item $|I'(x-\epsilon_1)|>L$:
In this case,  $b(u')\subseteq I'(x-\epsilon_1)\cap \calS \subseteq I(x-2\epsilon_1)\cap\calS$.
So, we have
\begin{align*}
\Pr_{S\sim \calS'}[f(S, u) \geq x-2\epsilon_1] & =  1-\prod_{v\in I(x-2\epsilon_1)\cap \calS}(1-p_v)
\geq   1-\prod_{v\in b(u')}(1-p_v) \\
& \geq 1-(1-\beta)^L \geq 1-\epsilon_1.
\end{align*}
\end{enumerate}
So, in either case, \eqref{eq:probbound} is satisfied.
We also need the following basic fact about the expectation:
For a random variable $X$, if $\Pr[X\geq a]=1$, then
$\Exp[X]=\int_{b}^{\infty} \Pr[X\geq x]\d x + b$ for any $b\leq a$.
Since $-1\leq f(P, u)\leq 1$ for any realization $P$, we have
\begin{align*}
f(\calP,u) & =\int_{-1}^{\infty} \Pr_{P\sim \calP}[f(P, u) \geq x] \d x -1 \\
&\leq \int_{-1}^{\infty} \Pr_{S\sim \calS}[f(S, u) \geq x-2\epsilon_1] \d x + 2\epsilon_1 -1.\\
&= \int_{-1-2\epsilon_1}^{\infty} \Pr_{S\sim \calS}[f(S, u) \geq x] \d x -1-2\epsilon_1 + 4\epsilon_1.\\
& = f(\calS,u)+4\epsilon_1,
\end{align*}
where the only inequality is due to \eqref{eq:probbound} and the fact that
$\Pr_{P\sim \calP}[f(P, u) \geq x]=\Pr_{S\sim \calS}[f(S, u) \geq x]=0$ for $x>1$.
Similarly, we can get have $f(\calS, -u)\geq
f(\calP,-u) -4\epsilon_1$.
By the choice of $\epsilon_1$,
we have that $8\epsilon_1\leq \epsilon\cdot 2\alpha\beta^2\leq \epsilon\dw(\calP,u)$. Hence,
$\dw(\calS,u) \geq \dw(\calP,u) -8\epsilon_1 \geq (1-\epsilon)\dw(\calP,u)$.
\end{proof}
}

\section{$\eps$-Kernels for Probability Distributions of Width}
\label{sec:probkernel}

Recall $\calS$ is an \probkernel\ if for all $x\geq 0$,
$
\Pr_{P\sim\calP}\Bigl[\dw(P,u)\leq (1-\epsilon)x\Bigr]-\perror\leq \Pr_{P\sim\calS}\Bigl[\dw(S,u)\leq x\Bigr] \leq
\Pr_{P\sim\calP}\Bigl[\dw(P,u)\leq (1+\epsilon)x\Bigr]+\perror.
$
For ease of notation,
we sometimes write
$\Pr\bigl[\dw(\calP,u)\leq t\bigr]$ to denote
$\Pr_{P\sim \calP}\bigl[\dw(P,u)\leq t\bigr]$,
and abbreviate the above
as
$
\Pr\Bigl[\dw(S,u)\leq x\Bigr] \in
\Pr\Bigl[\dw(\calP,u)\leq (1\pm \epsilon)x\Bigr]\pm\perror.
$
We first provide a simple linear time algorithm for constructing
an \probkernel\ for both existential and locational models, in Section~\ref{sec:qkernel}.
The points in the constructed kernel are not independent.
Then, for existential models,
we provide a nearly linear time \probkernel\ construction where
all stochastic points in the kernel are independent in Section~\ref{subsec:probkernelexistential}.

\subsection{A Simple \probkernel\ Construction}
\label{sec:qkernel}

In this section, we show a linear time algorithm for
constructing an \probkernel\ for any stochastic model
if we can sample a realization from the model in linear time
(which is true for both locational and existential uncertainty models).
\eat{
Recall $\calS$ is an \probkernel\ if for all $x\geq 0$,
$
\Pr_{\calP}\Bigl[\dw(P,u)\leq (1-\epsilon)x\Bigr]-\perror\leq \Pr_\calS\Bigl[\dw(S,u)\leq x\Bigr] \leq
\Pr_\calP\Bigl[\dw(P,u)\leq (1+\epsilon)x\Bigr]+\perror.
$
For ease of notation,
we sometimes write
$\Pr\bigl[\dw(\calP,u)\leq t\bigr]$ to denote
$\Pr_{P\sim \calP}\bigl[\dw(P,u)\leq t\bigr]$,
and abbreviate the above
as
$
\Pr_\calS\Bigl[\dw(S,u)\leq x\Bigr] \in
\Pr_\calP\Bigl[\dw(P,u)\leq (1\pm \epsilon)x\Bigr]\pm\perror.
$
}


\topic{Algorithm:}
Let $N=O\bigl(\perror^{-2}\e^{-(d-1)}\log \frac{1}{\e}\bigr)$.
We sample $N$ independent realizations from the stochastic model.
Let $\calH_i$ be the convex hull of the present points in the $i$th realization.
For $\calH_i$, we use the algorithm in \cite{agarwal2004approximating} to find a
deterministic $\epsilon$-kernel $\calE_i$ of size $O(\epsilon^{-(d-1)/2})$.
Our \probkernel\ $\calS$ is the following simple stochastic model:
with probability $1/N$, all points in $\calE_i$ are present.
Hence, $\calS$ consists of $O\bigl(\perror^{-2}\e^{-3(d-1)/2}\log \frac{1}{\e}\bigr)$ points (two such points either co-exist
or are mutually exclusive).
Hence, for any direction $u$,
$
\Prob[\dw(\calS,u)\leq t]=\frac{1}{N}\sum_{i=1}^N \indicator(\dw(\calE_i,u)\leq t),
$
where $\indicator(\cdot)$ is the indicator function.

\eat{
One useful property of the algorithm in \cite{agarwal2004approximating}
is that $\calE_i$ is a subset of the vertices of $\calH_i$.
Hence the convex polygon $\CH(\calE_i)$ is contained in $\calH_i$.
Since $\calE_i$ is an $\epsilon$-kernel, $(1+\epsilon)\CH(\calE_i))$ (properly shifted) contains $\calH_i$.
\eat{
\footnote{
In fact, most existing algorithms (e.g.,~\cite{agarwal2004approximating})
identify a point in the interior of $\calH$ as origin, and compute an $\epsilon$-kernel $\calE_i$
such that $f(\calE_i, u)\geq \frac{1}{1+\epsilon}f(\calH_i,u)$ for all directions $u$.
So, $\calH_i\subseteq (1+\epsilon)\CH(\calE_i)$ since $f(\calH_i,u)\leq f((1+\epsilon)\CH(\calE_i), u)$ for all directions $u$.
}
}
Let $\calK_i=\CH(\calE_i)$.
}

For a realization $P\sim \calP$, we use $\calE(P)$ to denote the deterministic $\epsilon$-kernel for $P$.
So, $\calE(P)$ is a random set of points, and we can think of $\calE_1,\ldots, \calE_N$ as samples from the random set.
Now, we show $\calS$ is indeed an \probkernel.
We start with the following simple observation.


\begin{observation}
\label{ob:quant}
For any $t\geq 0$ and any direction $u$, we have that
$$
\Pr[\dw(\calP,u)\leq t]\leq \Prob_{P\sim \calP}[\dw(\calE(P),u)\leq t]\leq \Pr[\dw(\calP,u)\leq (1+ \e)t].
$$
\end{observation}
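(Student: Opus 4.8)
The plan is to deduce the observation from a deterministic, per-realization sandwich that is then lifted to probabilities by monotonicity; the geometry of $\epsilon$-kernels enters only through two standard facts about the construction of~\cite{agarwal2004approximating}.

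First I would record these two facts for an arbitrary fixed realization $P\sim\calP$. (i) The kernel $\calE(P)$ returned by~\cite{agarwal2004approximating} is a subset of the vertices of $\CH(P)$, so $\CH(\calE(P))\subseteq\CH(P)$ and hence $\dw(\calE(P),u)\leq\dw(P,u)$ for every direction $u$. (ii) Since $\calE(P)$ is an $\epsilon$-kernel of $P$, we also have $\dw(P,u)\leq(1+\epsilon)\,\dw(\calE(P),u)$ for every $u$ --- running the construction with parameter $\epsilon/2$ (or using the ``$f(\calE(P),u)\geq\frac{1}{1+\epsilon}f(\CH(P),u)$'' form of the guarantee) makes the multiplicative slack exactly $1+\epsilon$ rather than $\frac{1}{1-\epsilon}$. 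Both (i) and (ii) hold for every outcome $P$ in the sample space of $\calP$ and every $u$; this is the only place convex geometry is used.

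Next, fix $t\geq0$ and a direction $u$. For the left inequality I would invoke (i): any realization with $\dw(P,u)\leq t$ also has $\dw(\calE(P),u)\leq\dw(P,u)\leq t$, so as events over the sample space $\{\dw(P,u)\leq t\}\subseteq\{\dw(\calE(P),u)\leq t\}$, and taking probabilities over $P\sim\calP$ gives $\Pr[\dw(\calP,u)\leq t]\leq\Pr_{P\sim\calP}[\dw(\calE(P),u)\leq t]$. For the right inequality I would invoke (ii): if $\dw(\calE(P),u)\leq t$ then $\dw(P,u)\leq(1+\epsilon)\dw(\calE(P),u)\leq(1+\epsilon)t$, so $\{\dw(\calE(P),u)\leq t\}\subseteq\{\dw(P,u)\leq(1+\epsilon)t\}$, and taking probabilities gives $\Pr_{P\sim\calP}[\dw(\calE(P),u)\leq t]\leq\Pr[\dw(\calP,u)\leq(1+\epsilon)t]$. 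Combining the two chains of inclusions yields the observation.

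There is no real obstacle: the statement is just the probability-monotone image of a pointwise double inequality, and the only thing needing care is the kernel convention, i.e.\ ensuring the error is $(1+\epsilon)$ and not $\frac{1}{1-\epsilon}$, which the rescaling above takes care of. (The genuinely substantive step of this section --- showing that the empirical quantity $\frac{1}{N}\sum_{i=1}^N\indicator(\dw(\calE_i,u)\leq t)$ concentrates uniformly over all $u$ and $t$ around $\Pr_{P\sim\calP}[\dw(\calE(P),u)\leq t]$ via a VC uniform-convergence bound --- comes afterward; the present observation merely pins down the deterministic target of that concentration.)
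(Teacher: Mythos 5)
Your proof is correct and follows the same route as the paper: establish the pointwise sandwich $\frac{1}{1+\epsilon}\dw(P,u)\leq\dw(\calE(P),u)\leq\dw(P,u)$ for each realization $P$, then lift to probabilities by event inclusion. The paper's proof is a one-line version of exactly this argument.
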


\begin{proof}
For any realization $P$ of $\calP$, we have
$ \frac{1}{1+\e}\dw(P,u)\leq\dw(\calE(P),u)\leq \dw(P,u).
$
The observation follows by combining all realizations.
\end{proof}

We only need to show that $\calS$ is an \probkernel\ for $\calE(P)$.
We need the following two theorems.

\begin{theorem} (Theorem 5.22 in~\cite{har2011geometric}) (VC-dimension)
\label{thm:quant1}
Let $S_1=(X,\calR^1),\ldots,S_k=(X,\calR^k)$ be range spaces with VC-dimension $\delta_1,\ldots,\delta_k$, respectively. Next, let $f(r_1,\ldots,r_k)$ be a function that maps any $k$-tuple of sets $r_1\in \calR^1,\ldots, r_k\in \calR^k$ into a subset of $X$. Consider the range set
$$ \calR'=\{f(r_1,\ldots,r_k)\mid r_1\in \calR^1,\ldots, r_k\in \calR^k\}
$$
and the associated range space $(X,\calR')$. Then, the VC-dimension of $(X,\calR')$ is bounded by $O(k\delta \log k)$,
where $\delta=\max_i \delta_i$.
\end{theorem}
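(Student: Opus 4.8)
This is a standard composition bound for VC\nobreakdash-dimension, so I would not try to improve on it but simply reconstruct the usual Sauer--Shelah counting argument. The plan: fix a set $A\subseteq X$ with $|A|=m$ that is shattered by the composed range space $(X,\calR')$, and bound $m$ from above. The one structural fact the whole proof rests on is that, for any tuple $r_1\in\calR^1,\ldots,r_k\in\calR^k$, the trace $f(r_1,\ldots,r_k)\cap A$ is completely determined by the tuple of traces $(r_1\cap A,\ldots,r_k\cap A)$. This is exactly where the hypotheses are used: all the $S_i$ live over the \emph{common} ground set $X$, and $f$ outputs a subset of $X$, so restricting the output to $A$ commutes with feeding in the restricted inputs. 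Consequently the number of distinct subsets of $A$ realized by ranges in $\calR'$ is at most the number of distinct $k$\nobreakdash-tuples $(r_1\cap A,\ldots,r_k\cap A)$.

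Next I would apply the Sauer--Shelah lemma to each $S_i$ separately: the number of distinct traces of $\calR^i$ on $A$ is at most $\sum_{j=0}^{\delta_i}\binom{m}{j}$, which for $m\ge\delta_i\ge 1$ is at most $(em/\delta_i)^{\delta_i}\le (em)^{\delta_i}$ (the degenerate cases $\delta_i=0$ or $m<\delta_i$ only make the bound smaller and can be dispatched in a line). Multiplying over $i=1,\ldots,k$, the number of admissible tuples of traces — hence the number of traces of $\calR'$ on $A$ — is at most $\prod_{i=1}^k (em)^{\delta_i}=(em)^{\sum_i\delta_i}\le (em)^{k\delta}$ with $\delta=\max_i\delta_i$. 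Since shattering $A$ requires $2^m$ distinct traces, we obtain the key inequality $2^m\le (em)^{k\delta}$.

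Finally I would extract the bound on $m$ from $2^m\le (em)^{k\delta}$. Taking logarithms gives $m\le k\delta\,\log_2(em)=k\delta(\log_2 e+\log_2 m)$, and a routine manipulation of this transcendental inequality (e.g.\ if $m\ge 4k\delta$ then $m/2\le k\delta\log_2 m$, so $m/\log_2 m\le 2k\delta$, and monotonicity of $x/\log_2 x$ then pins down $m=O(k\delta\log(k\delta))$) yields $m=O(k\delta\log k)$ under the mild assumption — harmless and satisfied in all the applications here, where $k$ is a parameter and $\delta$ is a constant — that $\delta$ is polynomially bounded in $k$; otherwise one states the bound as $O(k\delta\log(k\delta))$. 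This is the only place that calls for a little care. The proof has no genuine obstacle: the essential idea is the structural observation of the first paragraph (that the trace of a composed range on $A$ is a function of the traces of its constituents), which is what licenses applying Sauer--Shelah factor by factor; everything after that is bookkeeping and the mechanical solution of the final inequality.
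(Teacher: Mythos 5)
The paper does not prove this statement; it is cited as a black box from \cite{har2011geometric}, so there is no internal proof to compare against. Your reconstruction follows the standard Sauer--Shelah composition argument, but the step you single out as ``the one structural fact the whole proof rests on'' is not correctly justified and in fact fails for the theorem as literally stated.

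You assert that $f(r_1,\ldots,r_k)\cap A$ is determined by the tuple of traces $(r_1\cap A,\ldots,r_k\cap A)$, reasoning that since ``$f$ outputs a subset of $X$, restricting the output to $A$ commutes with feeding in the restricted inputs.'' That inference is false for an arbitrary map from tuples of subsets to subsets, and without it the theorem itself is false as stated: take $k=1$, $X=\{1,\ldots,n\}$, and $\calR^1$ equal to the singletons together with $\emptyset$ (VC dimension $1$). Any surjection $f$ from these $n+1$ ranges onto the power set of a $\lfloor\log_2(n+1)\rfloor$-point subset $A\subseteq X$ makes $(X,\calR')$ shatter $A$, so its VC dimension is $\Omega(\log n)$, unbounded in $n$. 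The hypothesis one actually needs --- satisfied in every application in this paper, where $f$ is always a finite union --- is that $f$ acts \emph{pointwise}: membership of $x$ in $f(r_1,\ldots,r_k)$ depends only on which of $r_1,\ldots,r_k$ contain $x$. Only then does $f(r_1,\ldots,r_k)\cap A=f(r_1\cap A,\ldots,r_k\cap A)$ hold, which is what licenses bounding the number of traces of $\calR'$ on $A$ by the product of the per-factor Sauer--Shelah counts. You should state that hypothesis explicitly rather than try to deduce it from the shared ground set.

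A smaller remark: by loosening $\sum_{j\le\delta}\binom{m}{j}\le(em/\delta)^\delta$ to $(em)^\delta$ you end up needing the side condition $\delta=\mathrm{poly}(k)$. Keeping the $/\delta$ gives $2^m\le(em/\delta)^{k\delta}$, hence $m\le k\delta\log_2(em/\delta)$; when you substitute $m=\Theta(k\delta\log k)$ the $\delta$ cancels inside the logarithm, and the stated bound $O(k\delta\log k)$ follows with no side condition.
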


Suppose $(X,\calR)$ is a range space and $\mu$ is a probability measure over $X$.
We say a subset $C\subset X$ is an {\em $\e$-approximation} of the range space
if for any range $R\in \calR$, we have
$
\Bigl| \mu_C(R)-\mu(R) \Bigr|\leq \epsilon,
$
where $\mu_C(R)=|C\cap R|/|C|$.
We need the following celebrated uniform convergence result,
first established by Vapnik and Chervonenkis~\cite{vapnik1971uniform}.

\begin{theorem} (See Theorem 4.9 in~\cite{anthony2009neural})
\label{thm:quant2}
Suppose $(X,\calR)$ is any range space with VC-dimension at most $V$, where $|X|$ is finite and
$\mu$ is a probability measure defined over $X$.
For any $\e,\delta>0$, a random subset $C\subseteq X$ (according to $\mu$) of cardinality
$
s=O\left(\e^{-2}\left(V+\log (1/\delta)\right)\right)
$
is an $\e$-approximation for $X$ with probability $1-\delta$.
\end{theorem}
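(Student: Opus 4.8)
\emph{Proof sketch (plan).} This is the classical Vapnik--Chervonenkis uniform convergence bound for $\e$-approximations, and the plan is to prove it by the standard symmetrization argument. Let $C=(z_1,\dots,z_s)$ be the i.i.d.\ sample from $\mu$ and write $\Delta(R)=\mu_C(R)-\mu(R)$; the goal is to bound $\Pr\bigl[\sup_{R\in\calR}|\Delta(R)|>\e\bigr]$. First I would introduce an independent ``ghost'' sample $C'=(z_1',\dots,z_s')$ of the same size $s$. For any fixed range the deviation $|\mu_{C'}(R)-\mu(R)|$ has variance at most $1/(4s)$, so by Chebyshev it is at most $\e/2$ with probability $\ge 1/2$ once $s$ exceeds a constant multiple of $1/\e^2$; the usual conditioning on the witnessing range then gives
\[
\Pr\Bigl[\sup_{R}|\mu_C(R)-\mu(R)|>\e\Bigr]\ \le\ 2\,\Pr\Bigl[\sup_{R}\bigl|\mu_C(R)-\mu_{C'}(R)\bigr|>\tfrac{\e}{2}\Bigr],
\]
and the right-hand event depends only on the $2s$ points of the double sample $Z=C\cup C'$.

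Next I would run the random-swap (permutation) argument: conditioned on the unordered multiset $Z$, the pair $(C,C')$ is distributed as the result of independently exchanging, for each index $j$, the members $z_j$ and $z_j'$ with probability $1/2$. For a \emph{fixed} $R$ this makes $\mu_C(R)-\mu_{C'}(R)=\frac{1}{s}\sum_{j=1}^{s}\sigma_j\bigl(\mathbf{1}[z_j\in R]-\mathbf{1}[z_j'\in R]\bigr)$ with $\sigma_j$ i.i.d.\ Rademacher, so Hoeffding's inequality bounds $\Pr\bigl[|\mu_C(R)-\mu_{C'}(R)|>\e/2\bigr]\le 2\exp(-\e^2 s/8)$. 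I would then invoke the Sauer--Shelah lemma: since the VC-dimension is at most $V$, the number of distinct traces $R\cap Z$ over $R\in\calR$ is at most $\sum_{i\le V}\binom{2s}{i}=(2s)^{O(V)}$, and the supremum event only depends on $R$ through its trace. A union bound over these traces, combined with the symmetrization step, bounds the overall failure probability by $4\,(2s)^{O(V)}\exp(-\e^2 s/8)$.

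Finally I would set this quantity to at most $\delta$ and solve for $s$; absorbing the slowly growing $\log s$ term yields $s=O\!\bigl(\tfrac{1}{\e^2}(V\log\tfrac{1}{\e}+\log\tfrac{1}{\delta})\bigr)$, which already suffices for every invocation of Theorem~\ref{thm:quant2} in this paper. The one genuinely delicate point is sharpening this to the exact form $s=O\!\bigl(\tfrac{1}{\e^2}(V+\log\tfrac{1}{\delta})\bigr)$ stated above: the blunt union bound over all $(2s)^{O(V)}$ traces is lossy by a $\log(1/\e)$ factor, and removing it requires replacing that step with a chaining argument over a dyadic sequence of scales $\e,\e/2,\e/4,\dots$ (equivalently, grouping the ranges into successively finer ``bracketing'' classes and summing the Hoeffding bounds geometrically). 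I expect this chaining refinement to be the main obstacle; everything else is routine bookkeeping, and in any case the result can simply be cited as the paper does.
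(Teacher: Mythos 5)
The paper does not prove this theorem; it simply cites Theorem 4.9 of Anthony and Bartlett, which is exactly what you ultimately recommend. Your outline of the classical symmetrization-plus-Sauer--Shelah argument is accurate, you correctly flag the extra $\log(1/\eps)$ factor that the blunt union bound over traces leaves behind (and that a chaining or bracketing refinement removes), and you are right that the weaker $O\!\bigl(\tfrac{1}{\eps^2}(V\log\tfrac{1}{\eps}+\log\tfrac{1}{\delta})\bigr)$ bound already suffices for every use of Theorem~\ref{thm:quant2} in this paper, since those uses all carry $\tO$ notation that absorbs polylogarithmic factors.
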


Now, we are ready to prove the main lemma in this section.

\begin{lemma}
\label{lm:quant1}
Let $N=O(\perror^{-2}\e^{-(d-1)}\log (1/\e))$.
For any $t\geq 0$ and any direction $u$, 
\vspace{-.1in}
\[
\Prob[\dw(\calS,u)\leq t]\in \Prob_{P\sim \calP}[\dw(\calE(P),u)\leq t]\pm \perror.
\]
\end{lemma}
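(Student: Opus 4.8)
The plan is to recognize $\Pr[\dw(\calS,u)\le t]=\frac1N\sum_{i=1}^N\indicator(\dw(\calE_i,u)\le t)$ as the empirical average, over $N$ i.i.d.\ samples, of the indicator of a single range, and then to apply the VC uniform convergence bound. Concretely, I would take $X$ to be the (finite) set of realizations of $\calP$ with the probability measure $\mu$ given by $\mu(\{P\})=\Pr[P]$ — this is finite in both the existential and locational models — and recall that $P\mapsto\calE(P)$ is a deterministic map into point sets of size at most $k:=O(1/\eps^{(d-1)/2})$. For a direction $u\in\R^d$ and a threshold $t\ge 0$, set $R_{u,t}:=\{P\in X:\dw(\calE(P),u)\le t\}$ and $\calR:=\{R_{u,t}:u\in\R^d,\ t\ge0\}$. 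Since the $N$ realizations $P_1,\dots,P_N$ drawn by the algorithm are i.i.d.\ from $\mu$, and $\calS$ assigns each $\calE_i=\calE(P_i)$ mass $1/N$, we have $\Pr[\dw(\calS,u)\le t]=\mu_C(R_{u,t})$ for the sample $C=\{P_1,\dots,P_N\}$, while $\Pr_{P\sim\calP}[\dw(\calE(P),u)\le t]=\mu(R_{u,t})$. Hence it suffices to show that, with constant probability, $C$ is a $\perror$-approximation of $(X,\calR)$; by Theorem~\ref{thm:quant2} this follows once we bound the VC-dimension of $(X,\calR)$ by $O\bigl(\eps^{-(d-1)}\log(1/\eps)\bigr)$.

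The heart of the argument is this VC-dimension bound. Pad each $\calE(P)$ to exactly $k$ points (repeating points if needed; this leaves $\dw$ unchanged) and write them as $q_1(P),\dots,q_k(P)$. Then $\dw(\calE(P),u)=\max_{(a,b)\in[k]^2}\langle u,\,q_a(P)-q_b(P)\rangle$, so $R_{u,t}=\bigcap_{(a,b)\in[k]^2}R^{a,b}_{u,t}$, where $R^{a,b}_{u,t}:=\{P:\langle u,\,q_a(P)-q_b(P)\rangle\le t\}$. For a fixed pair $(a,b)$, the embedding $P\mapsto(q_a(P)-q_b(P),\,1)\in\R^{d+1}$ turns $R^{a,b}_{u,t}$ into the trace of a halfspace with normal vector $(u,-t)$, so the range space $\bigl(X,\{R^{a,b}_{u,t}\}_{u,t}\bigr)$ has VC-dimension $O(d)$. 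The key point is that all $\ell:=k^2$ of these ranges are cut by the \emph{same} parameter $(u,t)$, so $\calR$ is contained in the family of all $\ell$-fold intersections, one range taken from each of these $\ell$ range spaces. Applying Theorem~\ref{thm:quant1} with $f$ the intersection map and $\delta=O(d)$, and using that VC-dimension does not increase when passing to a subfamily, we get that $(X,\calR)$ has VC-dimension $O(\ell\,\delta\log\ell)=O\bigl(d\,k^2\log k\bigr)=O\bigl(\eps^{-(d-1)}\log(1/\eps)\bigr)$ for constant $d$.

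Finally, plugging $V=O\bigl(\eps^{-(d-1)}\log(1/\eps)\bigr)$ into Theorem~\ref{thm:quant2} with a constant failure probability $\delta$ shows that a sample of size $N=O\bigl(\tfrac1{\perror^2}(V+\log\tfrac1\delta)\bigr)=O\bigl(\tfrac1{\perror^2\eps^{d-1}}\log\tfrac1\eps\bigr)$ is a $\perror$-approximation of $(X,\calR)$ with probability $1-\delta$; on that event $|\mu_C(R_{u,t})-\mu(R_{u,t})|\le\perror$ for every $u$ and every $t\ge0$, which is exactly the claimed inclusion $\Pr[\dw(\calS,u)\le t]\in\Pr_{P\sim\calP}[\dw(\calE(P),u)\le t]\pm\perror$. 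I expect the VC-dimension estimate to be the only real obstacle: one must set up the decomposition so that the $k^2$ sub-ranges are governed by a single shared parameter (so Theorem~\ref{thm:quant1} genuinely applies), and dispatch the minor technicalities of finiteness of $X$ (automatic for the existential and locational models) and of padding each kernel to a fixed number $k$ of possibly repeated points.
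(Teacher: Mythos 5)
Your proof is essentially the same as the paper's: both reduce the event $\dw(\calE(P),u)\le t$ (or its complement) to a composition of $O(k^2)$ halfspace conditions on pairs of kernel points, invoke Theorem~\ref{thm:quant1} to bound the VC-dimension of the composed range space by $O(k^2\log k)=O(\eps^{-(d-1)}\log(1/\eps))$, and then apply Theorem~\ref{thm:quant2}. The only differences are cosmetic: you work with the intersection side $\dw\le t$ while the paper works with the union side $\dw\ge t$, and you pad kernels to a fixed size by repeating existing points (which preserves the width unconditionally) whereas the paper pads with the origin (which preserves width only when the origin lies in the kernel's hull) — your choice is the cleaner one.
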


\begin{proof}
Let $L=O(\e^{-(d-1)/2})$.
We first note that $\calE(P)$ has at most $n^L$ possible realizations
since each $\e$-kernel is of size at most $L$,
We first build a mapping $g$ that maps each realization $\calE(P)$
to a point in $\R^{dL}$, as follows:
Consider a realization $P$ of $\calP$.
Suppose $\calE(P)=\{(x^1_1,\ldots, x^1_d),\ldots, (x^L_1,\ldots, x^L_d)\}$
(if $|\calE(P)|<L$, we pad it with $(0,\ldots, 0)$).
We let
$$
g(\calE(P))=(x^1_1,\ldots, x^1_d,\ldots, x^L_1,\ldots, x^L_d)\in \R^{dL}.
$$
For any $t\geq 0$ and any direction $u\in \R^d$,
note that $\dw(\calE(P),u)\geq t$ holds
if and only if there exists some $1\leq i,j\leq |\calE(P)|,i\neq j$ satisfies that
$\sum_{k=1}^d (x^i_k-x^{j}_k) u_k\geq t$,
which is equivalent to saying that
point $g(\calE(P))$ is in the union of those $O(|\calE(P)|^2)$ halfspaces
(for each $i,j$, we have one such halfspace).

Let $X$ be the image set of $g$. Let $(X,\calR^{i,j})$ ($1\leq i,j\leq L,i\neq j$) be a range space, where $\calR^{i,j}$ is the set of halfspaces $\{u=(u_1,\ldots,u_d)\in \R^d\mid \sum_{k=1}^d (x_k^i-x_k^{j})u_k\geq t\}$. Let $\calR'=\{\cup r_{i,j}\mid r_{i,j}\in \calR^{i,j},i,j\in [L]\}$. Note that each $(X,\calR^{i,j})$ has VC-dimension $d+1$. By Theorem~\ref{thm:quant1}, we can see that
the VC-dimension of $(X,\calR')$ is bounded by $O((d+1)L^2\lg L^2)=O(\e^{-(d-1)} \log (1/\e))$.
Notice that $\calS=\{\calE_1,\ldots,\calE_N\}$ is a collection of samples from $\calE(P)$.
Hence, by Theorem~\ref{thm:quant2}, for any $t$ and any direction $u$, we have that
$
\Prob[\dw(\calS,u)\leq t]\in \Prob_{P\sim \calP}[\dw(\calE(P),u)\leq t]\pm \perror.
$
\end{proof}

Combining Observation~\ref{ob:quant} and Lemma~\ref{lm:quant1}, we obtain the following theorem.

\begin{theorem}
\label{thm:quantmain}
Let $N=O(\perror^{-2}\e^{-(d-1)}\log (1/\e))$. For any $t\geq 0$ and any direction $u$, we have that
$$ \Prob[\dw(\calS,u)\leq t]\in \Prob[\dw(\calP,u)\leq (1\pm \e)t)]\pm \perror.
$$
\end{theorem}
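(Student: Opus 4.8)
The plan is to combine the two facts already established just above: Observation~\ref{ob:quant}, which sandwiches $\Prob_{P\sim\calP}[\dw(\calE(P),u)\le t]$ between $\Prob[\dw(\calP,u)\le t]$ and $\Prob[\dw(\calP,u)\le (1+\e)t]$, and Lemma~\ref{lm:quant1}, which says the sampled model $\calS$ approximates $\calE(P)$ in the sense $\Prob[\dw(\calS,u)\le t]\in\Prob_{P\sim\calP}[\dw(\calE(P),u)\le t]\pm\perror$ for every $t\ge 0$ and every direction $u$. Chaining these should immediately give both halves of the desired two-sided bound, but one has to be slightly careful about the direction of the multiplicative error.

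First I would prove the upper bound. Apply Lemma~\ref{lm:quant1} at the value $t$ to get $\Prob[\dw(\calS,u)\le t]\le \Prob_{P\sim\calP}[\dw(\calE(P),u)\le t]+\perror$, and then apply the upper half of Observation~\ref{ob:quant} to bound $\Prob_{P\sim\calP}[\dw(\calE(P),u)\le t]\le\Prob[\dw(\calP,u)\le(1+\e)t]$. Combining yields $\Prob[\dw(\calS,u)\le t]\le\Prob[\dw(\calP,u)\le(1+\e)t]+\perror$, which is the right-hand inequality.

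Next I would prove the lower bound. Apply Lemma~\ref{lm:quant1} again: $\Prob[\dw(\calS,u)\le t]\ge\Prob_{P\sim\calP}[\dw(\calE(P),u)\le t]-\perror$. Now I need a lower bound on $\Prob_{P\sim\calP}[\dw(\calE(P),u)\le t]$ in terms of $\Prob[\dw(\calP,u)\le(1-\e)t]$. From $\dw(\calE(P),u)\le\dw(P,u)$ (the $\epsilon$-kernel is a subset of the realization, so its width is no larger), the event $\{\dw(P,u)\le t\}$ implies $\{\dw(\calE(P),u)\le t\}$; even more simply, $\{\dw(P,u)\le(1-\e)t\}\subseteq\{\dw(P,u)\le t\}\subseteq\{\dw(\calE(P),u)\le t\}$. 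Hence $\Prob_{P\sim\calP}[\dw(\calE(P),u)\le t]\ge\Prob[\dw(\calP,u)\le(1-\e)t]$ — indeed the lower half of Observation~\ref{ob:quant} already states $\Prob[\dw(\calP,u)\le t]\le\Prob_{P\sim\calP}[\dw(\calE(P),u)\le t]$, and monotonicity in $t$ gives the version with $(1-\e)t$. Chaining gives $\Prob[\dw(\calS,u)\le t]\ge\Prob[\dw(\calP,u)\le(1-\e)t]-\perror$, the left-hand inequality. Writing both bounds together in the $\pm$ shorthand yields exactly the statement of the theorem, with the same sample size $N=O(\perror^{-2}\e^{-(d-1)}\log(1/\e))$ inherited from Lemma~\ref{lm:quant1}.

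There is essentially no obstacle here — the theorem is a bookkeeping corollary of the two preceding results — so the only thing to watch is consistency of the inequality directions and the fact that the bound in Lemma~\ref{lm:quant1} holds \emph{simultaneously} for all $t$ and $u$ with the stated probability (it is a uniform-convergence / $\e$-approximation statement over the union-of-halfspaces range space, not a pointwise one), so that the chaining is valid for every $(t,u)$ at once. I would state explicitly that the conclusion holds with the probability guaranteed by Theorem~\ref{thm:quant2}, i.e. with probability at least $1-\delta$ for the $\delta$ absorbed into the $O(\cdot)$ of $N$, and then the theorem follows.
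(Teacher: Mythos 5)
Your proposal is correct and matches the paper exactly: Theorem~\ref{thm:quantmain} is stated there as an immediate corollary of Observation~\ref{ob:quant} and Lemma~\ref{lm:quant1}, and the chaining you describe, including the monotonicity step to pass from $t$ to $(1-\e)t$ in the lower bound, is precisely the intended argument.
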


\topic{Running time:}
In each sample, the size of an $\epsilon$-kernel $\calK_i$ is at most $O\bigl(\epsilon^{-(d-1)/2}\bigr)$.
Note that we can compute $\calK_i$ in $O(n+\e^{-(d-3/2)})$ time~\cite{Cha06,YAPV04}.
We take $O\bigl(\perror^{-2}\e^{-(d-1)}\log (1/\e)\bigr)$ samples in total. So the overall running time is $O\bigl(n\perror^{-2}\e^{-(d-1)}\log (1/\e)+\poly(\frac{1}{\e\perror})\bigr)
=\widetilde{O}\left(n\perror^{-2}\e^{-(d-1)}\right)$.
In summary, we obtain our main result for \probkernel\ in this subsection.

\begin{reptheorem}
{thm:probconstruction} (restated)
An \probkernel\ of size
$\tO\left(\perror^{-2}\e^{-3(d-1)/2}\right)$
can be constructed in $\widetilde{O}\left(n\perror^{-2}\e^{-(d-1)}\right)$
time, under both existential and locational uncertainty models.
\end{reptheorem}

\subsection{Improved \probkernel\ for Existential Models}
\label{subsec:probkernelexistential}

In this section, we show an \probkernel\ $\calS$ can be constructed in nearly linear time
for the existential model, and all points in $\calS$ are independent of each other.
The size bound $ \tO(\perror^{-2}\eps^{-(d-1)})$ (see Theorem~\ref{thm:probconstructionexsit})
is better than that in Theorem~\ref{thm:probconstruction} for the general case, and the independence property may be useful in certain applications.
Moreover, some of the insights developed in this section may be of independent interest
(e.g., the connection to Tukey depth).
Due to the independence requirement, the construction is somewhat more involved.
For ease of the description, we assume the Euclidean plane first.
All results can be easily extended to $\R^d$.
We also assume that all probability values are strictly between 0 and 1
and $0<\epsilon,\perror\leq 1/2$ is a fixed constant.

\eat{
First, we can assume without loss of generality that
$p_v\leq 1-\epsilon/4$ for all $v$.

\begin{lemma}
\label{lm:probchange}
Let $\calP'$ be a new instance produced by replacing each $p_v$ with
$p'_v$ such that $p_v \geq p'_v\geq (1-\epsilon/4)p_v$.
We have that, for any $t$ and $u$,
\[
\Pr[\dw( \calP', u)\leq t]\geq (1-\epsilon/2) \Pr[\dw(\calP,\mu)\leq t].
\]
\end{lemma}
\begin{proof}
Fix any direction $u$.
Rename all points as $v_1,v_2,\ldots,v_n$ according to the increasing
order of their projection to $u$.
First we can see that
\[
\Pr[\dw(\calP,u)\leq t]=\sum_{i,j:i<j; \; v_j-v_i \leq t} p_{v_i}p_{v_j} \prod_{i^-<i}(1-p_{v_{i^-}})\prod_{j^+>j}(1-p_{v_{j^+}}).
\]
Replacing $p_v$s with $p'_v$s, each summand decreases by at most a factor of $(1-\epsilon/4)^2 >1-\epsilon/2$, since only changes in $p_{v_i}$ and $p_{v_j}$ can decreases the summand; changing any $p_{v_{i^-}}$ or $p_{v_{j^+}}$ would only increase the summand.
\end{proof}
}

Let $\lambda(\calP)=\sum_{v\in \calP} (-\ln(1-p_v))$.
In the following, we present two algorithms.
The first algorithm works for any $\lambda(\calP)$ and
produces an \probkernel\ $\calS$ whose size depends on $\lambda(\calP)$.
In Section~\ref{subsec:2}, we present
the second algorithm that only works for $\lambda(\calP)\geq 3\ln(2/\perror)$
but produces an \probkernel\ $\calS$ with a constant size
(the constant only depends on $\epsilon$, $\perror$ and $\delta$).
Thus, we can get a constant size \probkernel\ by
running the first algorithm when $\lambda(\calP)\leq 3\ln(2/\perror)$
and running the second algorithm otherwise.

\subsubsection{Algorithm 1: For Any $\lambda(\calP)$}
\label{subsec:1}

In this section, we present the first algorithm
which works for any $\lambda(\calP)$.
We can think of each point $v$ associated with a Bernoulli random variable $X_v$
that takes value 1 with probability $p_v$ and 0 otherwise.
Now, we replace the Bernoulli random variable $X_v$ by a Poisson distributed random variable
$\tX_v$ with parameter $\lambda_v=-\ln(1-p_v)$ (denoted by $\pois(\lambda_v)$), i.e.,
$
\Pr[\tX_v=k] =\frac{1}{k!}\,\,\lambda_v^k\,\, e^{-\lambda_v}, \text{ for }k=0,1,2,\ldots.
$
Here, $\tX_v=k$ means that there are $k$ realized points located at the position of $v$.
We call the new instance {\em the Poissonized instance corresponding to $\calP$}.
We can check that $\Pr[\tX_v=0]=e^{-\lambda_v}=1-p_v=\Pr[X_v=0]$.
Also note that co-locating points do not affect any directional width,
so the Poissonized instance is essentially
equivalent to the original instance for our problem.

The construction of the \probkernel\ $\calS$ is as follows:
Let $\fP$ be the probability measure over all points in $\calP$ defined by
$\fP(\{v\})=\lambda_v/\lambda$ for every $v\in \calP$, where $\lambda:=\lambda(\calP)=\sum_{v\in \calP}\lambda_v$.
Let $\perror_1$ be a small positive constant
to be fixed later.
We take
$
N=\tilde{O}(\perror_1^{-2})
$
independent samples from $\fP$
(we allow more than one point to be co-located at the same position),
and let $\fQ$ be the empirical measure, i.e., each sample point
having probability $1/N$.
The coreset $\calS$ consists of the $N$ sample points in $\fQ$, each with the same
existential probability $1-\exp{(-\lambda/N)}$.
A useful alternative view of $\calS$ is to think of each point associated
with a random variable $Y_v$ following distribution $\pois(\lambda/N)$
(i.e., the Poissonized instance corresponding to $\calS$).
This finishes the description of the construction.

Now, we start the analysis.
Our goal is to show that $\calS$ is indeed an \probkernel.
The following theorem is a special case of Theorem~\ref{thm:quant2}
(specialized to the range space consisting of all halfplanes),
which shows that the empirical measure $\fQ$ is close to the original measure $\fP$
with respect to all half spaces.

\eat{
First, we provide a bound for $\lambda=\sum_v\lambda_v$, which will be useful for bounding the size of $\calS$.
\begin{lemma}
\label{lm:lambda}
Assuming $p_v\leq 1-\epsilon/4$ for all $v$, we have that
$\lambda =\sum_v \lambda_v \leq O(-\ln \epsilon \cdot \sum_v p_v)$.
\end{lemma}
\begin{proof}
For any $0<\alpha\leq x\leq 1$, it is easy to see that
$\ln (1-x) \geq x \frac{\ln \alpha}{1-\alpha}$.
\end{proof}
}

\begin{theorem}
\label{thm:vapnik}
{\em \cite{anthony2009neural,LLS01}}
We denote the set of all halfplanes by $\Halfplanes$.
With probability $1-\delta$, the empirical measure $\fQ$
(defined by $N=O(\perror_1^{-2}\log (1/\delta))$ independent samples)
satisfies the following:
$$
\sup_{H\in \Halfplanes}\,|\fP(H)-\fQ(H)| \leq \perror_1.
$$
\end{theorem}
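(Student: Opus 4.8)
The plan is to recognize Theorem~\ref{thm:vapnik} as an immediate consequence of the general VC uniform convergence result (Theorem~\ref{thm:quant2}) once we identify the correct range space. First I would set up the range space $(X,\calR)$ where $X=\calP$ (the ground set of possible point locations, a finite set) with the probability measure $\mu=\fP$ given by $\fP(\{v\})=\lambda_v/\lambda$, and where $\calR=\Halfplanes$ is the collection of all halfplanes in $\R^2$. The key structural fact is that the range space of halfplanes in $\R^2$ has VC-dimension $3$ (a constant); this is a classical fact and I would simply cite it. Then Theorem~\ref{thm:quant2}, applied with $V=3$, $\e\mapsto\perror_1$, and $\delta\mapsto\delta$, tells us that a random subset $C\subseteq X$ of cardinality $s=O\bigl(\perror_1^{-2}(3+\log(1/\delta))\bigr)=O\bigl(\perror_1^{-2}\log(1/\delta)\bigr)$ drawn according to $\fP$ is, with probability at least $1-\delta$, an $\perror_1$-approximation of $(X,\Halfplanes)$.

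Next I would translate the definition of $\perror_1$-approximation back into the language of the statement. By definition, $C$ being an $\perror_1$-approximation means that for every halfplane $H\in\Halfplanes$ we have $\bigl|\mu_C(H)-\fP(H)\bigr|\le\perror_1$, where $\mu_C(H)=|C\cap H|/|C|$. But $\mu_C$ is exactly the empirical measure $\fQ$ built from the $N=s$ independent samples (each sample point receiving mass $1/N$), so $\mu_C(H)=\fQ(H)$. Taking the supremum over all $H\in\Halfplanes$ then yields $\sup_{H\in\Halfplanes}|\fP(H)-\fQ(H)|\le\perror_1$ with probability at least $1-\delta$, which is precisely the claim.

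One technical point worth noting: the statement of Theorem~\ref{thm:quant2} as quoted assumes $|X|$ is finite, which holds here since $\calP$ is a finite point set, so there is no measure-theoretic subtlety. If one wanted to avoid even this finiteness hypothesis (for instance if the same argument is later reused when the ground set is not literally finite), one could instead invoke the sample-complexity bound for $\perror$-approximations of bounded-VC-dimension range spaces over arbitrary probability spaces, which is the form attributed to Li--Long--Srinivasan and cited as \cite{LLS01}; the bound $N=O(\perror_1^{-2}\log(1/\delta))$ is unchanged up to constants.

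I do not expect any real obstacle here — the theorem is essentially a restatement of a known uniform convergence bound specialized to halfplanes, and the only ``work'' is bookkeeping: confirming the VC-dimension of halfplanes in the plane is a constant, and checking that the empirical-measure notation $\fQ$ coincides with the $\mu_C$ appearing in the definition of $\perror_1$-approximation. The mild subtlety, if any, is making sure the number of samples $N$ is correctly identified with the size $s$ of the random subset $C$ in Theorem~\ref{thm:quant2}, and that sampling with repetition (the paper explicitly allows co-located sample points) does not affect the argument — it does not, since the uniform convergence bound is stated for i.i.d.\ samples from $\mu$, which is exactly what we do.
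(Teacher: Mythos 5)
Your proposal matches the paper's approach exactly: the paper explicitly introduces Theorem~\ref{thm:vapnik} as ``a special case of Theorem~\ref{thm:quant2} (specialized to the range space consisting of all halfplanes)'' and otherwise just cites \cite{anthony2009neural,LLS01}. You have simply spelled out the specialization (halfplanes have constant VC-dimension, the empirical measure $\fQ$ is the $\mu_C$ of the definition, $N=s$), which is correct and consistent with what the paper intends.
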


From now on, we assume that $\fQ$ satisfies the statement of Theorem~\ref{thm:vapnik}.
We first observe a simple but useful lemma, which is a consequence of Theorem~\ref{thm:vapnik}.
For a halfplane $H$, we use $H\models 0$ to denote the event
that no point is realized in $H$.

\begin{lemma}
\label{lm:approxprob}
With probability $1-\delta$, for any halfplane $H\in \Halfplanes$, we have that
$$
\Pr_\calS\bigl[H\models 0\bigr]\in (1\pm O(\lambda\perror_1))\Pr_\calP\bigl[H\models 0\bigr].
$$
\end{lemma}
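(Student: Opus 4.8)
The plan is to reduce the lemma to the uniform–convergence guarantee of Theorem~\ref{thm:vapnik} by exploiting the Poissonization: once the Bernoulli variables are replaced by Poissons, the probability that a fixed halfplane contains no realized point becomes a clean \emph{exponential} in the mass that the relevant measure puts on that halfplane, so an additive bound on $|\fP-\fQ|$ over halfplanes turns into a multiplicative bound on the ``no point'' probabilities.

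First I would write both sides explicitly. In the Poissonized original instance the events $\{\tX_v=0\}$ are independent over $v$, so for any halfplane $H$,
\[
\Pr_\calP\bigl[H\models 0\bigr]=\prod_{v\in\calP\cap H}\Pr[\tX_v=0]=\prod_{v\in\calP\cap H}e^{-\lambda_v}=\exp\Bigl(-\sum_{v\in\calP\cap H}\lambda_v\Bigr)=\exp\bigl(-\lambda\,\fP(H)\bigr),
\]
using $\fP(H)=\sum_{v\in\calP\cap H}\lambda_v/\lambda$. Identically, in the Poissonized view of $\calS$ each of the $N$ sample points carries an independent $\pois(\lambda/N)$ variable, and the number of sample points lying in $H$ is exactly $N\,\fQ(H)$ (collisions among the i.i.d.\ samples are harmless, as co-located Poissons simply add), whence
\[
\Pr_\calS\bigl[H\models 0\bigr]=\exp\Bigl(-\tfrac{\lambda}{N}\cdot N\,\fQ(H)\Bigr)=\exp\bigl(-\lambda\,\fQ(H)\bigr).
\]

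Next I would invoke Theorem~\ref{thm:vapnik}: with probability $1-\delta$ we have $\sup_{H\in\Halfplanes}|\fP(H)-\fQ(H)|\le\perror_1$ simultaneously over all halfplanes. On this event, for every $H$,
\[
\frac{\Pr_\calS[H\models 0]}{\Pr_\calP[H\models 0]}=\exp\bigl(\lambda(\fP(H)-\fQ(H))\bigr)\in\bigl[e^{-\lambda\perror_1},\,e^{\lambda\perror_1}\bigr].
\]
Finally I would convert this exponential factor into the claimed linear one using the elementary inequalities $1-x\le e^{-x}$ and $e^{x}\le 1+2x$ for $0\le x\le 1$ (recalling that $\perror_1$ is fixed later small enough that $\lambda\perror_1\le 1$), giving $e^{\pm\lambda\perror_1}\subseteq 1\pm O(\lambda\perror_1)$ and hence the lemma.

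There is no real obstacle here; the only points needing a little care are (i) identifying the sample count in $H$ as $N\,\fQ(H)$ so the multiplicity of repeated samples is handled automatically, and (ii) observing that the exponential-to-linear step requires $\lambda\perror_1$ to be bounded — which is consistent with the eventual choice $\perror_1=\Theta(\perror/\lambda)$ and is exactly why the final sample size $N=O((1/\perror_1^2)\log(1/\delta))$ ends up depending on $\lambda(\calP)$.
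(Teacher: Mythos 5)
Your proof is correct and follows essentially the same route as the paper's: exploit Poissonization to write both "no point in $H$" probabilities as $\exp(-\lambda\fP(H))$ and $\exp(-\lambda\fQ(H))$, invoke Theorem~\ref{thm:vapnik} to bound $|\fP(H)-\fQ(H)|\le\perror_1$ uniformly over halfplanes, and convert $e^{\pm\lambda\perror_1}$ into $1\pm O(\lambda\perror_1)$ via the same elementary inequalities. The only minor difference is that you make the handling of co-located repeated samples explicit (via $|\calS\cap H|=N\fQ(H)$), which the paper leaves implicit in the notation $\sum_{v\in\calS\cap H}\lambda/N$.
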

\begin{proof}
Fix an arbitrary halfplane $H\in \Halfplanes$.
Consider the Poissonized instance corresponding to $\calP$.
We first observe that
$
\Pr_{\calP}\bigl[H\models 0\bigr]
=\Pr_{\calP}\bigl[\sum_{v\in \calP\cap H} X_v=0\bigr].
$
Since $X_v$ follows distribution $\pois(\lambda_v)$,
$\sum_{v\in \calP\cap H} X_v$ follows Poisson distribution $\pois\bigl(\sum_{v\in \calP\cap H}\lambda_v\bigr)$.
Similarly, we have that $\Pr_{\calS}\bigl[H\models 0\bigr]
=\Pr_{\calS}\bigl[\sum_{v\in \calS\cap H} Y_v=0\bigr]$ since
$\sum_{v\in \calS\cap H} Y_v$ follows $\pois\bigl(\sum_{v\in \calS\cap H}\lambda/N\bigr)$.
Hence, we can see the following:
\begin{align*}
\Pr_{\calP}\bigl[H\models 0\bigr]
& =\exp{\bigl(-\sum_{v\in \calP\cap H}\lambda_v\bigr)} = \exp{\bigl(-\lambda\fP(H)\bigr)} \\
&\in \exp{\bigl(-\lambda(\fQ(H)\pm \perror_1)\bigr)}
= \exp{\bigl(-\sum_{v\in \calS\cap H}\frac{\lambda}{N}\pm \perror_1\lambda \bigr)} \\
& \in (1\pm O(\lambda\perror_1))\exp{\bigl(-\sum_{v\in \calS\cap H}\frac{\lambda}{N}\bigr)}
 = (1\pm O(\lambda\perror_1))\Pr_{\calS}\bigl[H\models 0\bigr].
\end{align*}
The first inequality follows from Theorem~\ref{thm:vapnik}
and the second is due to the fact that
$
e^{-\epsilon}\geq 1-\epsilon
$
and
$
e^{\epsilon}\leq 1+(e-1)\epsilon
$
for any $0<\epsilon<1$.
\end{proof}

For two real-valued random variables $X,Y$,
we define the Kolmogorov distance $\dist_K(X, Y)$
between $X$ and $Y$ to be
$
\dist_K(X, Y)
= \sup_{t\in \R} |\Pr[X\leq t]-\Pr[Y\leq t]|.
$
We also need the following simple lemma.

\begin{lemma}
\label{lm:kolsum}
Suppose we have four independent random variables $X$, $X'$, $Y$ and $Y'$
such that $\dist_K(X,X')\leq \varepsilon$ and $\dist_K(Y,Y')\leq \varepsilon$
for some $\varepsilon\geq 0$.
Then,
$\dist_K(X+Y,X'+Y')\leq 2\varepsilon$.
\end{lemma}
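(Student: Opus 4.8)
The plan is to prove the bound by interpolation: swap $X$ for $X'$ first, then $Y$ for $Y'$, and control each swap by $\varepsilon$, combining via the triangle inequality for $\dist_K$. Concretely, I would establish the two intermediate estimates $\dist_K(X+Y,\,X'+Y)\le\varepsilon$ and $\dist_K(X'+Y,\,X'+Y')\le\varepsilon$, and then conclude
\[
\dist_K(X+Y,\,X'+Y')\le \dist_K(X+Y,\,X'+Y)+\dist_K(X'+Y,\,X'+Y')\le 2\varepsilon.
\]

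For the first intermediate estimate, fix $t\in\R$ and condition on the value of $Y$. By independence of $X$ (resp.\ $X'$) from $Y$, the law of total probability gives $\Pr[X+Y\le t]=\int \Pr[X\le t-y]\,dF_Y(y)$ and likewise $\Pr[X'+Y\le t]=\int \Pr[X'\le t-y]\,dF_Y(y)$, where $F_Y$ denotes the distribution of $Y$. Subtracting and applying $|\Pr[X\le s]-\Pr[X'\le s]|\le \dist_K(X,X')\le\varepsilon$ with $s=t-y$ yields
\[
\bigl|\Pr[X+Y\le t]-\Pr[X'+Y\le t]\bigr|\le \int \varepsilon\,dF_Y(y)=\varepsilon,
\]
and taking the supremum over $t$ gives $\dist_K(X+Y,\,X'+Y)\le\varepsilon$. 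The second intermediate estimate is symmetric: condition on the value of $X'$ and use independence of $Y$ and of $Y'$ from $X'$ together with $\dist_K(Y,Y')\le\varepsilon$.

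I do not expect any genuine obstacle here; the only point requiring care is that the two conditioning arguments use exactly the pairwise independences $X\perp Y$, $X'\perp Y$, $Y\perp X'$, $Y'\perp X'$, all of which follow from the joint independence of $X,X',Y,Y'$ assumed in the statement. No regularity assumptions on the distributions are needed, since cumulative distribution functions are bounded and measurable, so the interchange in the law of total probability is justified by Fubini's theorem.
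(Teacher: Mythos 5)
Your proposal is correct and follows essentially the same route as the paper: split $\dist_K(X+Y,X'+Y')$ by the triangle inequality through an intermediate mixed term, and bound each piece using the elementary fact that $\dist_K(Z+Y,Z+Y')\le \dist_K(Y,Y')$ when $Z$ is independent of both. The only difference is that you supply a short conditioning/Fubini proof of that elementary fact, which the paper simply states without proof.
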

\begin{proof}
We need the following useful elementary fact about Kolmogorov distance:
Let $X, Y, Z$ be real-valued random variables such that $X$ is independent of $Y$ and independent of
$Z$. Then we have that $\dist_K(X + Y,X + Z)\leq \dist_K(Y,Z)$.
The rest of the proof is straightforward:
$
\dist_K(X+Y,X'+Y')\leq \dist_K(X+Y,X+Y')+\dist_K(X+Y',X'+Y')\leq 2\varepsilon.
$
The first inequality is the triangle inequality.
\end{proof}

Now, we are ready to show that $\calS$ is really an \probkernel.
We note that in this subsection our bound is stronger than \eqref{eq:quantcore} in that we do not need
to relax the length threshold.
We first prove the theorem under a simplified assumption:
we assume that there is a point $v^\star\in \R^2$ (not necessarily an input point), which we call the special point,
that lies in the convex hull of $\calP$ with probability at least $1-\delta/2$.
With the assumption, the proof is much simpler but still instructive as the analysis in Section~\ref{subsec:2} is an extension of this proof.
The general case is proved in Theorem~\ref{lm:boundprob2} and
the proof is more technical and the size bound is slightly worse.

\begin{theorem}
\label{lm:boundprob}
Assume that
there is a special point $v^\star\in \R^2$
that lies in the convex hull of $\calP$ with probability at least $1-\delta/2$.
The parameters of the algorithm are set as
\[
\perror_1=O\Bigl(\frac{\perror}{\lambda}\Bigr)
\quad\text{   and  }\quad
N=O\Bigl(\frac{1}{\perror_1^2}\log \frac{1}{\delta}\Bigr)
=O\Bigl(\frac{\lambda^2}{\perror^2}\log \frac{1}{\delta}\Bigr).
\]
With probability at least $1-\delta$, for any $t\geq 0$ and any direction $u$, we have that
\begin{align}
\label{eq:probcore1}
\Pr\Bigl[\dw(\calS,u)\leq t\Bigr] \in
\Pr\Bigl[\dw(\calP,u)\leq t\Bigr]\pm\perror.
\end{align}
\end{theorem}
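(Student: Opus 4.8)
The plan is to reduce the statement about directional widths $\dw(P,u)\le t$ to a statement about empty halfplanes, where Lemma~\ref{lm:approxprob} already gives us control. Fix a direction $u$ and a threshold $t\ge 0$. The key geometric observation is that, conditioned on the special point $v^\star$ lying in the convex hull of the realization (which happens with probability $\ge 1-\delta/2$ for $\calP$, and should transfer to $\calS$ up to the $\delta$-failure of Theorem~\ref{thm:vapnik}), the event $\dw(P,u)>t$ is governed by how far the realized points reach in the $+u$ and $-u$ directions past $v^\star$. More precisely, write $a = f(P,u) - \innerprod{u}{v^\star}\ge 0$ and $b = f(P,-u)-\innerprod{-u}{v^\star}\ge 0$ for the ``overhangs'' on each side; then $\dw(P,u) = \dw_0 + a + b$ where $\dw_0$ is the contribution from $v^\star$'s two supporting slabs, and the distributions of $a$ and $b$ are each determined by empty-halfplane probabilities: $\Pr[a \le s] = \Pr[H_{u,s}\models 0]$ where $H_{u,s}$ is the halfplane $\{x : \innerprod{u}{x} > \innerprod{u}{v^\star} + s\}$, and similarly for $b$. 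Crucially, $a$ and $b$ are functions of disjoint sets of points (those on either side of the slab through $v^\star$), hence independent, under both $\calP$ and $\calS$ (independence for $\calS$ needs the points of $\calS$ to be independent, which holds by construction).

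The next step is to push the multiplicative closeness of empty-halfplane probabilities from Lemma~\ref{lm:approxprob} into a Kolmogorov-distance bound on the one-sided overhangs. Since $\Pr_\calP[H\models 0] = \exp(-\lambda\fP(H))\le 1$ and $\Pr_\calS[H\models 0]\in (1\pm O(\lambda\perror_1))\Pr_\calP[H\models 0]$, and since probabilities are at most $1$, the multiplicative error is also an additive error of size $O(\lambda\perror_1)$. Therefore, uniformly over $s$, $|\Pr_\calP[a\le s] - \Pr_\calS[a\le s]| \le O(\lambda\perror_1)$, i.e. $\dist_K(a_\calP, a_\calS)\le O(\lambda\perror_1)$, and likewise $\dist_K(b_\calP,b_\calS)\le O(\lambda\perror_1)$. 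By Lemma~\ref{lm:kolsum} applied to $X=a$, $Y=b$ (independent on each side), $\dist_K(a_\calP+b_\calP,\ a_\calS+b_\calS)\le O(\lambda\perror_1)$. Since $\dw(\cdot,u) = \dw_0 + a + b$ with $\dw_0$ a fixed number (determined by $v^\star$ and $u$, same for both since $v^\star$ is fixed), a shift does not change Kolmogorov distance, so $\sup_t|\Pr_\calP[\dw(P,u)\le t] - \Pr_\calS[\dw(S,u)\le t]|\le O(\lambda\perror_1)$. Choosing $\perror_1 = O(\perror/\lambda)$ makes this $\le \perror$, as required, and the stated $N = O((\lambda^2/\perror^2)\log(1/\delta))$ is exactly what Theorem~\ref{thm:vapnik} demands for that $\perror_1$.

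Finally I would handle the conditioning on $v^\star$ being inside the hull. On the $\calP$ side this costs $\delta/2$; on the $\calS$ side I would argue that the probability $v^\star$ is in the hull of $\calS$ is within $O(\lambda\perror_1)$ of the corresponding probability for $\calP$ — this itself is an empty-halfplane-type statement (a point is outside the convex hull iff there is an empty halfplane with the point on its boundary separating it from the realization, and one reduces to finitely many combinatorially distinct halfplanes via the VC bound / the range space of halfplanes), so it again follows from Lemma~\ref{lm:approxprob}. Absorbing this into the constants keeps us inside the $\perror$ budget, and the $\delta$-failure of Theorem~\ref{thm:vapnik} is the only source of the ``with probability $1-\delta$'' qualifier. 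The main obstacle I anticipate is making the ``overhang'' decomposition fully rigorous: one must check that $a$ and $b$ really are independent (the dividing slab is fixed because $v^\star$ and $u$ are fixed, so the point sets on the two sides are deterministic, which gives independence), and that the reduction to a single uniform bound over the continuum of thresholds $s$ is legitimate — here one invokes that the relevant events, as $s$ varies, form a nested family, so the supremum over $s$ is actually attained in the limit and the VC bound over halfplanes (Theorem~\ref{thm:vapnik}, which is uniform over \emph{all} halfplanes) already covers it without any additional union bound. One should also double-check the degenerate case where $v^\star$ lies on the boundary slab or some realized point coincides with the slab through $v^\star$; general position of $\calP$ together with the freedom to perturb $v^\star$ infinitesimally disposes of this.
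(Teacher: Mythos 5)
Your proposal is correct and takes essentially the same route as the paper: conditioning on $v^\star$ lying in the realized convex hull, splitting the width into the two one-sided overhangs past $v^\star$ (your $a$ and $b$ are the paper's $R$ and $L$), translating the multiplicative empty-halfplane bound of Lemma~\ref{lm:approxprob} into Kolmogorov-distance bounds on each side, and combining via Lemma~\ref{lm:kolsum}. The only point of divergence is minor: the paper handles the $\calS$-side conditioning by (implicitly) treating $v^\star$ as a deterministic member of $\calS$, whereas you propose to deduce that $v^\star$ lies in the hull of a realization of $\calS$ with comparable probability via Lemma~\ref{lm:approxprob} — a reasonable alternative patch for the same step, though it needs a little care since ``outside the hull'' is a union over a one-parameter family of halfplanes rather than a single halfplane event.
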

\begin{proof}
We first condition on the event that
$v^\star$ is in the convex hull of all realized points
(which happens with probability at least $1-\delta/2$).   The remainder needs to hold with probability at least $1-\delta/2$.
Under the condition, we can pretend that $v^\star$ is a deterministic point in the original point set
(this does not affect any directional width as $v^\star$ is in the convex hull).

Fix an arbitrary direction $u$ (w.l.o.g., say it is the $x$-axis).
Rename all points as $v_1,v_2,\ldots,v_n$ according to the increasing
order of their projections to $u$.
Suppose $v^\star$ is renamed as $v_k$.
Let the random variable $L$ be the directional width of $\{v_1,\ldots, v_k\}$ with respect to $u$
and $R$ be the directional width of $\{v_k,\ldots, v_n\}$ with respect to $u$.
Since $v^\star$ is assumed to be within the left and right extents, we can easily see that
$\dw(\calP, u)= L+R$.
Similarly, we define $L'$ ($R'$ resp.) to be the directional width of all points
in $\calS$ to the left (right resp.) of $v^\star$.
Since the convex hull of $\calS$ contains $v^\star$, we can also see that $\dw(\calS, u)= L'+R'$.
By Lemma~\ref{lm:approxprob}, we know that
$\dist_K(L,L')\leq O(\lambda\perror_1)$ and
$\dist_K(R,R')\leq O(\lambda\perror_1)$.
By Lemma~\ref{lm:kolsum}, we have that
$\dist_K(\dw(\calS, u), \dw(\calP, u))\leq O(\lambda\perror_1)$.
Let $\perror_1=O(\perror/\lambda)$,
the theorem follows.
\end{proof}

Now, we prove the theorem in the general case, where
the main difficulty comes from the fact that we can not separate the width into
two independent parts $L$ and $R$.
The proof is somewhat technical and can be found in Appendix~\ref{app:probkernel}.

\begin{theorem}
\label{lm:boundprob2}
Let $\perror_1=O(\frac{\perror}{\max\{\lambda,\lambda^2\}})$
  and
$N=O(\frac{1}{\perror_1^2}\log \frac{1}{\delta})
=O(\frac{\max\{\lambda^2,\lambda^4\}}{\perror^2}\log \frac{1}{\delta}).$
With probability at least $1-\delta$, for any $t\geq 0$ and any direction $u$, we have that
$
\Pr\Bigl[\dw(\calS,u)\leq t\Bigr] \in
\Pr\Bigl[\dw(\calP,u)\leq t\Bigr]\pm \perror.
$
\end{theorem}
\eat{
\begin{proof}
Fix an arbitrary direction $u$ (w.l.o.g., say it is the x-axis)
and rename all points in $\calP$ as $v_1,v_2,\ldots,v_n$ as before.
Consider the Poissonized instance of $\calP$.
Let $v'_1,\ldots, v'_N$ be the $N$ points in $\calS$
(also sorted in nondecreasing order of their x-coordinates).
Now, we create a coupling between all mass in
$\fP$ and that in $\fQ$, as follows.
We process all points in $\fP$ from left to right, starting with $v_1$.
The process has $N$ rounds. In each round, we assign exactly $1/N$ units of mass in $\fP$
to a point in $\fQ$.
In the first round, if $v_1$ contains less than $1/N$ units of mass,
we proceed to $v_2, v_3, \ldots v_i$ until we reach $1/N$ units collectively.
We split the last node $v_i$ into two node $v_{i1}$ and $v_{i2}$
so that the mass contained in $v_1,\ldots, v_{i-1},v_{i1}$ is exactly $1/N$,
and we assign those nodes to $v'_1$. We start the next round with $v_{i2}$.
If $v_1$ contains more than $1/N$ units of mass,
we split $v_1$ into $v_{11}$ ($v_{11}$ contains $1/N$ units) and $v_{12}$
and we start the second round with $v_{12}$. We repeat this process
until all mass in $\fP$ is assigned.

The above coupling can be viewed as a mass transportation from $\fP$ to $\fQ$.
We will need one simple but useful property about this transportation:
for any vertical line $x=t$, at most $\perror_1$ units of mass
are transported across the vertical line
(by Theorem~\ref{thm:vapnik}).

In the construction of the coupling, many nodes in $\fP$ may be split.
We rename them to be $v_1,\ldots, v_m$ (according to the order in which they are processed).
The sequence $v_1,\ldots, v_m$ can be divided into $N$ segments, each assigned to a point in $\calS$.
For a point $v'_i$ in $\calS$, let $\segment(i)$ be the segment (the set of points) assigned to $v'_i$.
For any node $v$ and real $t>0$, we use $H(v,t)$ to denote the right open halfplane defined by
the vertical line $x=x(v)+t$, where $x(v)$ is the $x$-coordinate of $v$ (see Figure~\ref{fig:interval}).

Let $X_i$ ($Y_i$ resp.) be the Poisson distributed random variable corresponding to $v_i$  ($v'_i$ resp.)
(i.e., $X_i\sim \pois(\lambda_{v_i})$ and $Y_i\sim \pois(\lambda/N)$ ) for all $i$.
For any $H\subset \R^2$, we write $X(H) =\sum_{v_i\in H\cap \calP} X_i$ and
$Y(H) =\sum_{v'_i\in H\cap \calS} Y_i$.
We can rewrite $\Pr[\dw(\calS,u)\leq t]$ as follows:
\begin{align}
\label{eq:S}
\Pr[\dw(\calS,u)\leq t] & =\sum_{i=1}^N \Pr[v'_i \text{ is the leftmost point and }\dw(\calS,u)\leq t] +\Pr[\text{no point in }\calS\text{ appears} ] \notag\\
& =\sum_{i=1}^N \Pr[Y_i\ne 0] \, \Pr\Bigl[\sum_{j=1}^{i-1}Y_j=0\Bigr]\,\Pr[Y(H(v'_i, t))=0]+\Pr[\sum_{v'_i\in \calS}Y_i=0]
\end{align}
Similarly, we can write that
\footnote{
Note that splitting nodes does not change the distribution of $\dw(\calP,u)$:
Suppose a node $v$ (corresponding to r.v. $X$) was spit to two nodes $v_1$ and $v_2$
(corresponding to $X_1$ and $X_2$ resp.). We can see that
$\Pr[X\ne 0]=\Pr[X_1\ne 0\text{ and }X_2\ne 0]=\Pr[X_1+X_2\ne 0]$.
}
\begin{align}
\label{eq:P}
\Pr[\dw(\calP,u)\leq t]&=\sum_{i=1}^m \Pr[X_i\ne 0]\, \Pr\Bigl[\sum_{j=1}^{i-1}X_j=0\Bigr]\, \Pr[X(H(v_i, t))=0]+\Pr[\sum_{v_i\in \calP}X_i=0]\notag\\
&=\sum_{i=1}^N\sum_{k\in \segment(i)} \Pr[X_k\ne 0]\, \Pr\Bigl[\sum_{j=1}^{k-1}X_j=0\Bigr]\, \Pr[X(H(v_k, t))=0]+\Pr[\sum_{v_i\in \calP}X_i=0]
\end{align}
We proceed by attempting to show each
each summand of \eqref{eq:P} is close to the corresponding one in \eqref{eq:S}.
First, we can see that
$
\Pr[\sum_{v'_i\in \calS}Y_i=0]=\Pr[\sum_{v_i\in \calP}X_i=0]
$
since both $\sum_{v'_i\in \calS}Y_i$ and $\sum_{v_i\in \calP}X_i$ follow
the Poisson distribution $\pois(\lambda)$.

For any segment $i$, we can see that
$\sum_{k\in \segment(i)}\lambda_{v_k}=\lambda/N$.
Moreover, we have $\lambda_{v_k}\leq \lambda/N \leq \epsilon/32$, thus
$\exp(-\lambda_{v_k})\in (1-\lambda_{v_k}, (1+\epsilon/16)(1-\lambda_{v_k}))$.
\begin{align}
\label{eq:ynot0}
\sum_{k\in \segment(i)} \Pr[X_k\ne 0] &
=\sum_{k\in \segment(i)}(1-\exp(-\lambda_{v_k}))
\in (1\pm \frac{\epsilon}{16}) \sum_{k\in \segment(i)}\lambda_{v_k}\notag \\
& \subset (1\pm \frac{\epsilon}{8}) (1-\exp (\frac{\lambda}{N}))
  =(1\pm \frac{\epsilon}{8})\Pr[Y_i\ne 0].
\end{align}
Then,
we notice that for any $k\in \segment(i)$ (i.e., $v_k$ is in the segment assigned to $v'_i$), it holds that
\begin{align}
\label{eq:yiis0}
 \Pr\Bigl[\,\sum_{j=1}^k X_j=0\,\Bigr] \in
 [e^{-i \lambda /N}, e^{-\lambda (i-1)/N}]
\subset  (1\pm \frac{\epsilon}{8}) e^{-\lambda (i-1)/N}
= (1\pm \frac{\epsilon}{8})\Pr\Bigl[\,\sum_{j=1}^{i-1}Y_j=0\,\Bigr].
\end{align}
The first inequality holds because $\sum_{j=1}^k X_j \sim \pois\bigl(\sum_{j=1}^k \lambda_{v_j}\bigr)$
and $\lambda (i-1)/N\leq \sum_{j=1}^k \lambda_{v_j}\leq \lambda i/N$.

If we can show that $\Pr[X(H(v_k, t))=0]$ is close to $\Pr[Y(H(v'_i, t))=0]$ for $k\in \segment(i)$, we can finish
the proof easily since each summand of $\eqref{eq:P}$ would be close to the corresponding one in $\eqref{eq:S}$.
However, this is in general not true and we have to be more careful.

\begin{figure}[t]
\centering
\includegraphics[width=0.6\linewidth]{intervals}
\caption{Illustration of the interval graph $\calI$.
For illustration purpose, co-located points
(e.g., points that are split in $\fP$)
are shown
as overlapping points.
The arrows indicate the assignment of the segments
to the points in $\fQ$.
Theorem~\ref{thm:vapnik} ensures that any vertical line can not
stab many intervals.
}
\label{fig:interval}
\end{figure}

Recall that the sequence $v_1,\ldots, v_m$ is divided into $N$ segments.
Let $K=\lambda/\epsilon$.
We say that the $i$th segment (say $\segment(i)=\{v_j, v_{j+1}, \ldots, v_k\}$) is a {\em good segment}
if
$$
\max\Bigl\{\bigl|\fQ(H(v'_i,t))-\fP(H(v_j,t))\bigr|\, ,\, \bigl|\fQ(H(v'_i,t))-\fP(H(v_k,t))\bigr|\Bigr\}\leq \frac{1}{K}.
$$
Otherwise, the segment is {\em bad}.
For a good segment $\segment(i)$ and any $k\in \segment(i)$,
\begin{align}
\label{eq:his0}
\Pr[X(H(v_k, t))=0] &
=\exp\bigl(-\lambda \fP(H(v_k,t))\bigr)
\in \exp\bigl(-\lambda \fQ(H(v'_i,t))\pm \lambda/K \bigr)\notag\\
&\subset
\Pr[Y(H(v'_i, t))=0]e^{\pm\lambda /K}
\subset
\Pr[Y(H(v'_i, t))=0](1\pm \epsilon/8).
\end{align}
We use $\goodseg$ to denote the set of good segments
and $\badseg$ the set of bad segments.
Now, we consider the summations in both \eqref{eq:S} and \eqref{eq:P} with only good segments.
We have that
\begin{align*}
&\sum_{i\in \goodseg} \sum_{k\in \segment(i)} \Pr[X_k\ne 0]\, \Pr\Bigl[\sum_{j=1}^{k-1}X_j=0\Bigr]\, \Pr[X(H(v_k, t))=0] \\
\in \,\, &
\sum_{i\in \goodseg}  \Pr\Bigl[\sum_{j=1}^{i-1}Y_j=0\Bigr](1\pm \epsilon/8)\, \Pr[Y(H(v'_i, t))=0](1\pm \epsilon/8) \sum_{k\in \segment(i)} \Pr[X_i\ne 0]\\
\subset \,\, &
\sum_{i\in \goodseg} \Pr[Y_i\ne 0]\, \Pr\Bigl[\sum_{j=1}^{i-1}Y_j=0\Bigr]\,\Pr[Y(H(v'_i, t))=0]\pm \epsilon/2,
\end{align*}
where the first inequality is due to \eqref{eq:yiis0} and \eqref{eq:his0}
and the second holds because \eqref{eq:ynot0}

Now, we show the total contributions of bad segments to both \eqref{eq:S} and \eqref{eq:P}
are small.
A key observation is that there are at most $O(\perror_1 N K)$ bad segments.
This can be seen as follows.
Consider all points $v_1,\ldots,v_m$ and $v'_1,\ldots,v'_n$ lying on the same $x$-axis.
For each $i$ (with $\segment(i)=\{v_j, v_{j+1}, \ldots, v_k\}$),
we draw the minimal interval $I_i$ that contains $v'_i, v_j$ and $v_k$.
If the $i$th segment is bad, we also say $I_i$ is a {\em bad interval}.
All intervals $\{I_i\}_i$ define an interval graph $\calI$.
We can see that any vertical line can stab at most $\perror_1 N+1$ intervals,
because at most $\perror_1$ unit of mass can be transported across the vertical line.
and each interval is responsible for a transportation of exactly $1/N$ units of mass
(except the one that intersects the vertical line).
Hence, the interval graph $\calI$ can be colored with at most $\perror_1 N+1$ colors
(this is because the clique number of $\calI$ is at most $\perror_1 N+1$ and
the chromatic number of an interval graph is the same as its clique number).
Consider a color class $C$ (which consists of a set of non-overlapping intervals).
Imagine we move an interval $I$ of length $t$ along the $x$-axis from left to right.
When the left endpoint of $I$ passes through an bad interval in $C$,
by the definition of bad segments, the right endpoint of $I$ passes through $O(N/K)$ segments.
Since the right endpoint of $I$ can pass through at most $N$ segments,
there are at most $O(K)$ bad segments in color class $C$.
So there are at most $O(\perror_1 N K)$ bad segments overall.

The total contribution of bad segments to \eqref{eq:S} is at most
\begin{align*}
\sum_{i\in \badseg} \Pr[Y_i\ne 0] \leq
 O(\perror_1 N K)\times (1-\exp{(-\frac{\lambda}{N})})=O(\perror_1 \lambda K)\leq \frac{\epsilon}{4},
\end{align*}
where $\Pr[Y_i\ne 0]=1-\exp{(-\frac{\lambda}{N})}$ (since $Y_i\sim \pois(\frac{\lambda}{N})$).
The same argument also shows that
the contribution of bad segments to \eqref{eq:P} is also at most $\frac{\epsilon}{4}$.
Hence, the difference between \eqref{eq:S} and \eqref{eq:P} is at most $\epsilon$.
This finishes the proof.
\end{proof}
}

\subsubsection{Algorithm 2: For $\lambda(\calP)> 3\ln(2/\perror)$}
\label{subsec:2}

In the second algorithm, we assume that $\lambda(\calP)=\sum_{v\in \calP} \lambda_v>3\ln(2/\perror)$.
When $\lambda(\calP)$ is large, we cannot directly use the sampling technique
in the previous section since it requires a large number of samples.
However, the condition $\lambda(\calP)\geq 3\ln(2/\perror)$
implies there is a nonempty convex region $\calK$ inside the convex hull of $\calP$ with high probability.
Moreover, we can show the sum of $\lambda_v$ values in $\bcalK=\R^2\setminus \calK$ is small.
Hence, we can use the sampling technique just for $\bcalK$
and use the deterministic $\epsilon$-kernel construction for $\calK$.

Now, we describe the details of our algorithm.
Again consider the Poissonized instance of $\calP$.
Imagine the following process.
Fix a direction $u\in \S^1$.
\footnote{Here, $\S^1$ is the surface of the unit ball in $\R^d$.}
We move a sweep line $\ell_u$ orthogonal to $u$, along the direction $u$, to sweep through the points in $\calP$.
We use $H_u$ to denote the halfplane defined by $\ell_u$ (with normal vector $u$)
and $\bH_u$ denote its complement.
So $\calP(\bH_u)=\calP\cap \bH_u$ is the set of points that have been swept so far.
We stop the movement of $\ell_u$ at the first point such that $\sum_{v\in \bH_u} \lambda_v\geq \ln (2/\perror)$
(ties should be broken in an arbitrary but consistent manner).
One important property about $\bH_u$ is that
$
\Pr[\bH_u\models 0]\leq  \perror/2.
$
We repeat the above process for all directions $u\in \S^1$ and let $\calH=\cap_u H_u$.
Since $\lambda(\calP)> 3\ln(2/\perror)$, by Helly's theorem, $\calH$ is nonempty.
A careful examination of the above process reveals
that $\calH$ is in fact a convex polytope and each edge of the polytope is defined by
two points in $\calP$.
\footnote{
This also implies that we only need to do the sweep for ${n\choose 2}$ directions.
In fact, by a careful rotational sweep, we only need $O(n)$ directional sweeps.
}
Moreover, $\calH$ is the region of points with Tukey depth at least $\ln(2/\perror)$.
\footnote{
The Tukey depth of a point $x\in\calP$ is defined
as the minimum total weight of points of $\calP$ contained in a closed halfspace
whose bounding hyperplane passes through $x$.
}

The construction of the \probkernel\ $\calS$ is as follows.
First, we use the algorithm in \cite{agarwal2004approximating} to find a
deterministic $\epsilon$-kernel $\calE_\calH$ of size $O(\epsilon^{-1/2})$ for $\calH$.
One useful property of the algorithm in \cite{agarwal2004approximating}
is that $\calE_\calH$ is a subset of the vertices of $\calH$.
Hence the convex polytope $\CH(\calE_\calH)$ is contained in $\calH$.
Since $\calE_\calH$ is an $\epsilon$-kernel, $(1+\epsilon)\CH(\calE_\calH))$ (properly shifted) contains $\calH$.
\footnote{
In fact, most existing algorithms (e.g.,~\cite{agarwal2004approximating})
identify a point in the interior of $\calH$ as origin, and compute an $\epsilon$-kernel $\calE_\calH$
such that $f(\calE_\calH, u)\geq \frac{1}{1+\epsilon}f(\calH,u)$ for all directions $u$.
So, $\calH\subseteq (1+\epsilon)\CH(\calE_\calH)$ since $f(\calH,u)\leq f((1+\epsilon)\CH(\calE_\calH), u)$ for all directions $u$.
}
Let $\calK=(1+\epsilon)\CH(\calE_\calH)$ and $\bcalK=\calP\setminus \calK$.
See Figure~\ref{fig:core}.

\begin{figure}[t]
\centering
\includegraphics[width=0.55\linewidth]{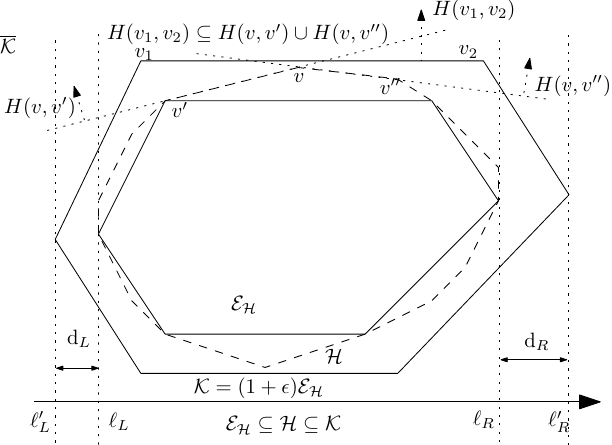}
\caption{The construction of the \probkernel\ $\calS$.
The dashed polygon is $\calH$. The inner solid polygon is $\CH(\calE_\calH)$
and the outer one is $K=(1+\epsilon)\CH(\calE_\calH)$.
$\bcalK$ is the set of points outside $\calK$.
}
\label{fig:core}
\end{figure}

Now, we apply the random sampling construction over $\bcalK$.
More specifically, let $\lambda:=\lambda(\bcalK)=\sum_{v\in \bcalK\cap\calP}\lambda_v$.
Let $\fP$ be the probability measure over $\calP\cap \bcalK$ defined by
$\fP(\{v\})=\lambda_v/\lambda$ for every $v\in \calP\cap \bcalK$.
Let $\perror_1=O(\perror/\lambda)$.
We take
$N=O(\perror_1^{-2}\log (1/\delta))$
independent samples from $\fP$
and let $\fQ$ be the empirical distribution with
each sample point
having probability $1/N$.
The \probkernel\ $\calS$ consists of the $N$ points in $\fQ$, each with the same
existential probability $1-\exp{(-\lambda/N)}$, as well as
all vertices of $\calK$, each with probability $1$.
This finishes the construction of $\calS$.

Now, we show that the size of $\calS$ is constant (only depending on $\epsilon$ and $\delta$),
which is an immediate corollary of the following lemma.
\begin{lemma}
\label{lm:lambdasum}
$
\lambda=\lambda(\bcalK)=\sum_{v\in \bcalK} \lambda_v =O\Bigl(\frac{\ln 1/\perror}{\sqrt{\epsilon}}\Bigr).
$
\end{lemma}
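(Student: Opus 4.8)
The plan is to cover $\R^2\setminus\calK$ with only $O(1/\sqrt{\epsilon})$ halfplanes, each meeting $\calP$ in $\lambda$-mass only $O(\ln(1/\perror))$, and then use $\lambda(A\cup B)\le\lambda(A)+\lambda(B)$. First I would record what the sweeps give. For a direction $w$, let $t(w)$ be the stopping level of the sweep that carved out $\calH$ in direction $-w$; equivalently $t(w)$ is the largest real with $\lambda\bigl(\{x:\langle x,w\rangle\ge t(w)\}\cap\calP\bigr)\ge\ln(2/\perror)$, so that $\lambda\bigl(\{x:\langle x,w\rangle> t(w)\}\cap\calP\bigr)<\ln(2/\perror)$, and $\calH=\bigcap_w\{x:\langle x,w\rangle\le t(w)\}$. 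Since the constraint with outer normal $w$ is the unique one in this intersection having that normal, it must be tight wherever it yields a genuine edge of $\calH$; hence every edge of $\calH$ with outer normal $w$ lies on the line $\{x:\langle x,w\rangle=t(w)\}$. The consequence I would extract is that for \emph{every} direction $v$,
\[
\lambda\bigl(\{x:\langle x,v\rangle>f(\calH,v)\}\cap\calP\bigr)\ <\ 2\ln(2/\perror):
\]
if the $\calH$-face extreme in direction $v$ is an edge this is immediate from the previous sentence; if it is a vertex $p^\star$ with incident edge normals $\nu,\nu'$, then $v$ lies in the cone (of angle $<\pi$) spanned by $\nu,\nu'$, so $v$ is a nonnegative combination of $\nu,\nu'$, and combined with $\langle p^\star,\nu\rangle=t(\nu)$, $\langle p^\star,\nu'\rangle=t(\nu')$ this gives $\{x:\langle x,v\rangle>f(\calH,v)\}\subseteq\{x:\langle x,\nu\rangle>t(\nu)\}\cup\{x:\langle x,\nu'\rangle>t(\nu')\}$, and each set on the right meets $\calP$ in $\lambda$-mass $<\ln(2/\perror)$.

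Next I would bring in the $\epsilon$-kernel. As $\calE_{\calH}$ is a set of vertices of $\calH$ with $|\calE_{\calH}|=O(1/\sqrt{\epsilon})$ and $\calH\subseteq\calK=(1+\epsilon)\CH(\calE_{\calH})$, the set $\calK$ is a convex polygon with at most $O(1/\sqrt{\epsilon})$ edges $E_1,\dots,E_k$; letting $n_j$ be the outer normal of $E_j$ and $G_j$ the open halfplane bounded by the line through $E_j$ and disjoint from $\calK$, we have $\R^2\setminus\calK=\bigcup_{j=1}^k G_j$. Since $\calH\subseteq\calK$, $f(\calH,n_j)\le f(\calK,n_j)$, so $G_j=\{x:\langle x,n_j\rangle>f(\calK,n_j)\}\subseteq\{x:\langle x,n_j\rangle>f(\calH,n_j)\}$, and by the displayed bound $\lambda(G_j\cap\calP)<2\ln(2/\perror)$. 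Hence $\bcalK=(\R^2\setminus\calK)\cap\calP\subseteq\bigcup_j(G_j\cap\calP)$ yields $\lambda(\bcalK)\le\sum_{j=1}^k\lambda(G_j\cap\calP)<2k\ln(2/\perror)=O\!\bigl(\tfrac{\ln(1/\perror)}{\sqrt{\epsilon}}\bigr)$, using $\perror\le1/2$ so that $\ln(2/\perror)=O(\ln(1/\perror))$.

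The step I expect to need the most care is the displayed inequality: the natural cover of $\R^2\setminus\calK$ uses halfplanes perpendicular to the $n_j$, but these directions need not be among the sweep directions that are ``active'' for $\calH$, so $G_j$ is in general \emph{not} contained in a single stopping halfplane $\{x:\langle x,w\rangle\ge t(w)\}$; the two-halfplane cover through the $\calH$-extreme vertex $p^\star$ is exactly what bridges this gap, and it also conveniently avoids ever having to pay for the (possibly large) $\lambda$-value of the single point that tips a sweep past its threshold. I would also note at the outset that, under the standing general-position assumptions and the hypothesis $\lambda(\calP)>3\ln(2/\perror)$ which makes $\calH$ nonempty, $\calH$ is a full-dimensional polygon, so the edge/vertex language above is legitimate.
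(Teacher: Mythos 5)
Your proof is correct and follows essentially the same route as the paper's: cover $\R^2\setminus\calK$ by the $O(1/\sqrt{\epsilon})$ halfplanes bounded by edges of $\calK$, and bound each one's $\lambda$-mass by $2\ln(2/\perror)$ via the two stopping halfplanes incident to the extreme vertex of $\calH$ in that normal direction. You spell out the cone/nonnegative-combination step and the open-versus-closed halfplane distinction that the paper compresses into ``Clearly $\ldots$ strictly contains $\ldots$'' with a figure reference, but the underlying argument is identical.
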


\begin{proof}

We can see that $\bcalK$ is the union of $O(\epsilon^{-1/2})$ half-planes,
each defined by a segment of $\calK$.
It suffices to show the sum of $\lambda_v$ values in each half-plane is $O(\ln (1/\perror))$.
Consider the half-plane $H(v_1,v_2)$ defined by segment $(v_1,v_2)$ of $\calK$.
Suppose $v$ is the vertex of $\calH$ that is closest to the line $(v_1,v_2)$.
Let $(v',v)$ and $(v,v'')$ be the two edges of $\calH$ incident on $v$.
Clearly, $H(v',v)\cup H(v,v'')$,
the union of the two half-planes defined by $(v',v)$ and $(v,v'')$,
strictly contains $H(v_1,v_2)$. See Figure~\ref{fig:core} for an illustration.
Hence,  $\sum_{v\in H(v_1,v_2)}\lambda_v$ is at most $2\ln (1/\perror)$.
\end{proof}

Now, we prove the main theorem in this section. The proof is an extension
of Theorem~\ref{lm:boundprob}. Here, the set $\calK$ plays
a similar role as the special point $v^\star$ in Theorem~\ref{lm:boundprob}.
Unlike Theorem~\ref{lm:boundprob}, we also need to relax the length threshold here,
which is necessary even for deterministic points.
\begin{theorem}
\label{lm:boundprob3}
Let $\lambda=\lambda(\bcalK)$ and $\perror_1=O(\perror/\lambda)$,
and
$
N=O\Bigl(\frac{1}{\perror_1^{2}}\log \frac{1}{\delta}\Bigr)
=O\Bigl( \frac{\ln^2 1/\perror}{\epsilon \perror^2} \log \frac{1}{\delta} \Bigr).
$
With probability at least $1-\delta$, for any $t\geq 0$ and any direction $u$, we have that
\begin{align}
\label{eq:probcore2}
\Pr\Bigl[\dw(\calS,u)\leq t\Bigr] \in
\Pr\Bigl[\dw(\calP,u)\leq (1\pm\epsilon)t\Bigr]\pm\perror.
\end{align}
\end{theorem}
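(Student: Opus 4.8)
\noindent\emph{Proof plan.}
The plan is to reduce the statement, one direction at a time, to the situation of Theorem~\ref{lm:boundprob}, with the deterministic convex region $\calK$ (all of whose vertices lie in $\calS$ with probability $1$) playing the role of the special point $v^\star$; the threshold must be relaxed from $t$ to $(1\pm\epsilon)t$ because $\calK=(1+\epsilon)\CH(\calE_\calH)$ dilates $\calH$ by $1+\epsilon$ and $\calE_\calH$ is only an $\epsilon$-kernel of $\calH$, i.e.\ exactly the loss that is already unavoidable for deterministic sets. Fix a direction $u$, assume without loss of generality it is the $x$-axis, and sort the points by $x$-coordinate. Recall from the construction that $\Pr_\calP[\bH_u\models 0]\le\perror/2$ and $\Pr_\calP[\bH_{-u}\models 0]\le\perror/2$, and that $\calH\subseteq H_u\cap H_{-u}$; since $\calH$ is nonempty (because $\lambda(\calP)>3\ln(2/\perror)$), the halfplanes $\bH_u$ and $\bH_{-u}$ are disjoint. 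Let $E_u$ be the event that $\calP$ realizes a point in $\bH_u$ and a point in $\bH_{-u}$, so $\Pr[\neg E_u]\le\perror$; on $E_u$ the extreme realized points of $\calP$ in directions $\pm u$ lie in $\bH_u$ and $\bH_{-u}$, hence strictly outside the central polygon $\calH$, so $\dw(\calP,u)\ge\dw(\calH,u)=:w_\calH$ on $E_u$, and the realized points falling inside $\calH$ (or inside $\calK$) are irrelevant to $\dw(\calP,u)$.

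Introduce the auxiliary process $\calP'$ obtained from $\calP$ by always inserting the deterministic set $\CH(\calK)$. Since $\calH\subseteq\calK=(1+\epsilon)\CH(\calE_\calH)$ with $\calE_\calH$ an $\epsilon$-kernel of $\calH$, we have $|x_{\min}(\calK)-x_{\min}(\calH)|=O(\epsilon w_\calH)$, $|x_{\max}(\calK)-x_{\max}(\calH)|=O(\epsilon w_\calH)$, and $\dw(\calK,u)\in[w_\calH,(1+\epsilon)w_\calH]$; hence on $E_u$ inserting $\CH(\calK)$ changes $\dw$ by at most $O(\epsilon w_\calH)$, and since $\dw(\calP',u)\ge\dw(\calK,u)\ge w_\calH$ and $\dw(\calP',u)\ge\dw(\calP,u)$ this gives $\dw(\calP,u)\le\dw(\calP',u)\le(1+O(\epsilon))\dw(\calP,u)$ on $E_u$. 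Combined with $\Pr[\neg E_u]\le\perror$ we obtain, for every $t$, $\Pr[\dw(\calP',u)\le t]\le\Pr[\dw(\calP,u)\le(1+\epsilon)t]$ and $\Pr[\dw(\calP',u)\le t]\ge\Pr[\dw(\calP,u)\le(1-\epsilon)t]-\perror$, so it remains only to compare $\dw(\calS,u)$ with $\dw(\calP',u)$ up to an additive $O(\perror)$.

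Both $\calP'$ and $\calS$ contain $\CH(\calK)$ deterministically, so $\dw(\calP',u)=\dw(\calK,u)+L'+R'$ and $\dw(\calS,u)=\dw(\calK,u)+L+R$, where $L',L$ (resp.\ $R',R$) is the amount by which the remaining points stick out to the left (resp.\ right) of $\calK$. Only points of $\bcalK$ (resp.\ its samples), which lie outside $\calK$, contribute to these overhangs, and the left and right overhangs are governed by disjoint point sets, so $L'\perp R'$ and, conditionally on the sampled locations, $L\perp R$. For $s\ge0$ the event $\{L\le s\}$ equals $\{H\models 0\}$ for the halfplane $H=\{x\text{-coord}<x_{\min}(\calK)-s\}$, which is disjoint from $\calK$, and likewise for $L'$, $R$, $R'$; hence Theorem~\ref{thm:vapnik} (which holds with probability $\ge1-\delta$) together with Lemma~\ref{lm:approxprob}, applied to the instance $\bcalK$ with $\lambda=\lambda(\bcalK)=O(\epsilon^{-1/2}\ln(1/\perror))$ (Lemma~\ref{lm:lambdasum}) and $\perror_1=O(\perror/\lambda)$, gives $\dist_K(L,L')\le O(\lambda\perror_1)=O(\perror)$ and $\dist_K(R,R')\le O(\perror)$. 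By Lemma~\ref{lm:kolsum}, $\dist_K(\dw(\calS,u),\dw(\calP',u))\le O(\perror)$; chaining this with the previous paragraph and renaming the hidden constants yields $\Pr[\dw(\calS,u)\le t]\in\Pr[\dw(\calP,u)\le(1\pm\epsilon)t]\pm\perror$, and since the only probabilistic input is Theorem~\ref{thm:vapnik} this holds simultaneously for all $t$ and all $u$ with probability $\ge1-\delta$.

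The main obstacle will be the bookkeeping of the relaxation: one must verify that the several $O(\epsilon\cdot\mathrm{width})$ geometric slacks --- between $\calE_\calH$ and $\calH$, between $\CH(\calE_\calH)$ and its dilate $\calK$, between the extents of $\calK$ and $\calH$, and between a realization's extreme point and the swept boundary $\bH_{\pm u}$ --- compose into a single clean $(1\pm\epsilon)t$ while preserving both the independence of the two overhangs (needed for Lemma~\ref{lm:kolsum}) and the validity of the halfplane VC bound, and that the conditioning on $E_u$ in the $\calP$-versus-$\calP'$ step does not destroy the product structure. Everything else (the sweep properties of $\bH_u$, the bound on $\lambda(\bcalK)$, and the VC and Kolmogorov estimates) is already provided by the construction and the earlier lemmas.
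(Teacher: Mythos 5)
Your proof is correct and follows the same overall strategy as the paper: the deterministic convex region $\calK$ (always present in $\calS$) plays the role of the special point $v^\star$ from Theorem~\ref{lm:boundprob}, the width is split into left/right overhangs outside $\calK$, the event that extreme points realize on both sides is used to absorb a $\pm\perror$, and the Kolmogorov estimate follows from Lemma~\ref{lm:approxprob} together with Lemma~\ref{lm:kolsum} and the bound $\lambda(\bcalK)=O(\eps^{-1/2}\ln(1/\perror))$ from Lemma~\ref{lm:lambdasum}.

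The one genuine difference is in the bookkeeping of where the $(1\pm\epsilon)$ relaxation of the length threshold is charged. You introduce the auxiliary instance $\calP'=\calP\cup\CH(\calK)$, pay the multiplicative $\epsilon$ in the $\calP$-versus-$\calP'$ comparison (since $\calK$ may protrude past the realized extremes by up to $O(\epsilon\,\dw(\calH,u))$), and then compare $\dw(\calS,u)$ to $\dw(\calP',u)$ cleanly using the same reference lines $x_{\min}(\calK),x_{\max}(\calK)$ for both. The paper instead compares $\dw(\calS,u)$ directly to the surrogate $W=L+R+\dist(\ell_L,\ell_R)$ of $\dw(\calP,u)$, with the overhangs of $\calP$ measured from the lines $\ell_L,\ell_R$ through the extremes of $\calE_\calH$ while those of $\calS$ are measured from the lines $\ell'_L,\ell'_R$ through the extremes of $\calK$; the two reference-line offsets $\dist_L,\dist_R$ cancel exactly against $\dist(\ell'_L,\ell'_R)-\dist(\ell_L,\ell_R)$, so that for $t\geq\dist(\ell'_L,\ell'_R)$ the paper obtains the purely additive bound $\Pr[\dw(\calS,u)\leq t]\in\Pr[\dw(\calP,u)\leq t]\pm O(\lambda\perror_1+\perror)$ with no relaxation of the threshold at all, and invokes the $(1\pm\epsilon)$ relaxation only in the residual small-$t$ regime $t<\dist(\ell'_L,\ell'_R)$ where both probabilities are essentially $0$ or $\leq\perror$. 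Your version is slightly weaker (it relaxes the threshold uniformly in $t$) but cleaner conceptually, and it fully suffices for the theorem as stated.
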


\begin{proof}
The proof is similar to Theorem~\ref{lm:boundprob}.
Fix an arbitrary direction $u$ (w.l.o.g., say it is the $x$-axis).
Rename all points in $\calP$ as $v_1,v_2,\ldots,v_n$ according to the increasing
order of their $x$-coordinates.
We use $x(v_i)$ to denote the $x$-coordinate of $v_i$.
Let $\ell_L$ (or $\ell_R$) be the vertical line that passes the leftmost endpoint of $\calE_\calH$
(or the rightmost endpoint of $\calE_\calH$).
We use $x(\ell_L)$ (or $x(\ell_R)$) to denote the $x$ coordinate of $\ell_L$ (or $\ell_R$)
and let $\dist(\ell_L,\ell_R)=|x(\ell_L)-x(\ell_R)|$.
Suppose that $v_1,\ldots, v_k$ lie to the left of $\ell_L$
and $v_r,\ldots, v_n$ lie to the right of $\ell_R$.
Let the random variable $L=x(\ell_L)-f(\{v_1,\ldots, v_k\},-u)$
and $R=f(\{v_r,\ldots, v_n\},u)-x(\ell_R)$.
Let $W=L+R+\dist(\ell_R,\ell_L)$.
We can see that $W$ is close to $\dw(\calP, u)$ in the following sense.
Let $E$ denote the event that
at least one point in $\{v_1,\ldots, v_k\}$ is present
and at least one point in $\{v_r,\ldots, v_n\}$ is present.
Conditioning on $E$,
$W$ is exactly $\dw(\calP,u)$.
Moreover, we can easily see $\Pr[E] \geq (1-\perror/2)^2 \geq 1-\perror$.
Hence, we have
\begin{align*}
\Pr[W\leq t]-\perror & \leq (1-\perror) \Pr[W\leq t]
\leq \Pr[\dw(\calP, u)\leq t \mid E]\Pr[E] \\
& \leq \Pr[\dw(\calP, u)\leq t]= \Pr[\dw(\calP, u)\leq t \mid E]\Pr[E]
+\Pr[\dw(\calP, u)\leq t \mid \neg E]\Pr[\neg E] \\
& \leq  \Pr[W\leq t] +\perror.
\end{align*}

Similarly,
we let $\ell'_L$ (or $\ell'_R$) be the vertical line that passes the leftmost endpoint of $\calK$
(or the rightmost endpoint of $\calK$).
Suppose that $v'_1,\ldots, v'_j$ (points in $\calS$) lie to the left of $\ell_L$
and $v'_s,\ldots, v'_N$ lie to the right of $\ell_R$.
We define $L'= x(\ell'_L)-f(\{v'_1,\ldots,v'_j\}, -u)$
 and $R'=f(\{v'_s,\ldots, v'_N\}, u)-x(\ell'_R)$.
We can also see that $\dw(\calS, u)= L'+R'+\dist(\ell_R,\ell_L)$.

Let $\dist_L=x(\ell_L)-x(\ell'_L)$ and $\dist_R=x(\ell'_R)-x(\ell_R)$.
Let $H_t$ be the half-plane $\{(x,y)\mid x\leq x(\ell'_L)-t\}$.
We can see that for any $t\geq 0$,
\begin{align*}
\Pr[L\leq t+\dist_L] -O(\lambda\perror_1)
& = \Pr[X(H_t)= 0]-O(\lambda\perror_1) \\
& \leq \Pr[L'\leq t]=\Pr[Y(H_t)= 0] \leq \Pr[X(H_t)= 0]+O(\lambda\perror_1) \\
& =\Pr[L\leq t+\dist_L] +O(\lambda\perror_1),
\end{align*}
where the inequalities hold due to Lemma~\ref{lm:approxprob}.
Similarly, we can see that for any $t\geq 0$,
\begin{align*}
\Pr[R\leq t+\dist_R] -O(\lambda\perror_1)
 \leq \Pr[R'\leq t]
\leq \Pr[R\leq t+\dist_L] +O(\lambda\perror_1).
\end{align*}
Therefore, by Lemma~\ref{lm:kolsum}, we have that for any $t>0$,
\begin{align*}
\Pr[L+R\leq t+\dist_L+\dist_R] -O(\lambda\perror_1)
 \leq \Pr[L'+R'\leq t]
\leq \Pr[L+R\leq t+\dist_L+\dist_R] +O(\lambda\perror_1).
\end{align*}
Therefore, we can conclude that for any $t\geq \dist(\ell'_L,\ell'_R)$,
\begin{align*}
 \dw(\Pr[\calS, u)\leq t] & =
 \Pr[L'+R'+\dist(\ell'_L,\ell'_R)\leq t] \\
& \in \Pr[L+R+\dist(\ell'_L,\ell'_R)\leq t+\dist_L+\dist_R] \pm O(\lambda\perror_1) \\
& = \Pr[L+R+\dist(\ell_L,\ell_R)\leq t] \pm O(\lambda\perror_1) \\
& = \Pr[W\leq t] \pm O(\lambda\perror_1) \\
&= \Pr[\dw(\calP, u)\leq t]\pm O(\lambda\perror_1+\perror).
\end{align*}
Noticing that
$
\perror  \geq \Pr[\dw(\calP,u)\leq \dist(\ell_L,\ell_R)]
 \geq \Pr[\dw(\calP,u)\leq (1-\epsilon)\dist(\ell'_L,\ell'_R)],
$
we can obtain that, for any $t< \dist(\ell'_L,\ell'_R)$,
$\Pr[\dw(\calS, u)\leq t] =0
\geq \Pr[\dw(\calP,u)\leq (1-\epsilon)t] - \perror.
$
Moreover, it is trivially true that
 $\Pr[\dw(\calS, u)\leq t]=0
\leq \Pr[\dw(\calP,u)\leq (1-\epsilon)t] + \perror.$
The proof is completed.
\end{proof}

\topic{Higher Dimensions:}
Our constructions can be easily extended to $\R^d$ for any constant $d>2$.
The sampling bound (Theorem~\ref{thm:vapnik}) still holds if the number of samples is
$O(d\perror_1^{-2}\log (1/\delta))=O(\perror_1^{-2}\log (1/\delta))$.
Hence, Theorem~\ref{lm:boundprob} and Theorem~\ref{lm:boundprob2} hold
with the same parameters ($d$ is hidden in the constant).
In order for Algorithm 2 to work, we need $\lambda(\calP)> (d+1)\ln(2/\perror)$
to ensure $\calH$ is nonempty.
Instead of constructing an $\epsilon$-kernel $\calE_\calH$ with $O(\epsilon^{-(d-1)/2})$ vertices,
we construct a convex set $\calK$ which is the intersection of $O(\epsilon^{-(d-1)/2})$ halfspaces and satisfies
$(1-\epsilon)\calK \subseteq \calH \subseteq \calK$ (this can be done by either working with the dual,
or directly using
the construction implicit in \cite{dudley1974metric}).

Now, we briefly sketch how to compute such $\calK$ using the dual approach.
We first compute the dual $\dualH$ of $\calH$ in $\R^d$.
Recall the dual (also called the polar body) $\dualH$ of $\calH$ is
defined as the set $\{x\in \R^d \mid \innerprod{x}{y}\leq 1, y\in \calH\}$.
$\dualH$ has $O(n^d)$ vertices (each corresponding to a face of $\calH$).
Then, compute an $\epsilon$-kernel $\dualE_{\dualH}$ with $O(\epsilon^{-(d-1)/2})$ vertices for $\dualH$.
Taking the dual of $\dualE_{\dualH}$ gives the desired $\calK$, which is an intersection of $O(\epsilon^{-(d-1)/2})$ halfspaces
(each corresponding to a point in $\dualE_{\dualH}$).
The correctness can be easily seen by an argument through the gauge function
$g(\dualE_{\dualH}, x)=\min\{\lambda\geq 0\mid  x\in \lambda\dualE_{\dualH} \}$.
Since $\dualE_{\dualH} \subseteq \dualH \subseteq (1+\epsilon)\dualE_{\dualH}$, we can see that
$\frac{1}{1+\epsilon}g(\dualE_{\dualH}, x)=g((1+\epsilon)\dualE_{\dualH}, x) \leq g(\dualH,x) \leq g(\dualE_{\dualH}, x)$.
The correctness follows from the duality between the gauge function and the support function, which says
$g(\dualE_{\dualH}, x)=f(\calK, x)$
and  $g(\dualH,x)=f(\calH, x)$ for all $x\in \mathbb{S}^{d-1}$ (see e.g., \cite{schneider1993convex}).

We generalize Lemma~\ref{lm:lambdasum} for $\R^d$ by the following lemma.
\begin{lemma}
\label{lm:lambdasumhigh}
There is a convex set $\calK$, which is an intersection of $O(\epsilon^{-(d-1)/2})$ halfspaces and satisfies
$(1-\epsilon)\calK \subseteq \calH \subseteq \calK$. Moreover, we have that $\lambda(\bcalK)= O(\epsilon^{-(d-1)/2}\ln (1/\perror))$.
\end{lemma}
Plugging the new bound of $\lambda(\bcalK)$, we can see that it is enough to set
$$N=O(\perror^{-2}\epsilon^{-(d-1)}\log \frac{1}{\delta} \polylog \frac{1}{\perror})
=\tO(\epsilon^{-(d-1)}\perror^{-2}).$$

\topic{Running Time:}
The algorithm in Section~\ref{subsec:1} takes only $O(Nn)$ time (where $N$ is the size of the kernel, which is constant if $\epsilon$, $\perror$ and $\lambda$ are constant).
The algorithm in Section~\ref{subsec:2} is substantially slower. The most time consuming part is the construction of $\calH$, which is the intersection of all halfspaces.
In $\R^d$, we need to sweep $O(n^d)$ directions (each determined by $d$ points). So the polytope $\calH$ may have
$O(n^d)$ faces.
Using the dual approach, we can compute $\calK$ in $O(n^d)$ time
(linear in the number of points in the dual space) as well.
Overall, the running time is $O(n^{d})$.

\subsubsection{A Nearly Linear Time Algorithm for Constructing \probkernel s}
\label{subsec:linearprobkernel}
We describe a nearly linear time algorithm for constructing an \probkernel\
in the existential uncertainty model.
As mentioned before,
the algorithm in Section~\ref{subsec:1} takes linear time.
So we only need a nearly linear time algorithm for constructing $\calH$ (and $\calK$).
Note that $\calH$ is the set of points in $\R^d$ with Tukey depth at least $\ln(2/\perror)$.
One tempting idea is to utilize the notion of $\epsilon$-approximation (which can be obtained by sampling) to compute
the approximate Tukey depth for the points, as done in \cite{matousek1991computing}.
However, a careful examination of this approach shows that the sample size needs to be as large
as $O(\lambda(\calP))$ (to ensure that for every halfspace, the difference between the real weight and the sample weight is less than, say
$0.1 \ln(2/\perror)$).
Another useful observation is that only points with small (around $\ln(2/\perror)$) Turkey depth are relevant in constructing $\calH$.
Hence, we can first sample an $\epsilon$-approximation of very small size (say $k=O(\log n)$), and use it to quickly identify the region $\calH_1$ in which all points have large (i.e., $\lambda(\calP)/k$) Turkey depth
(so $\calH_1\subseteq \calH$).
Then, we can delete all points inside $\calH_1$ and focus on the remaining points.
Ideally, the total weight of the remaining points can be reduced significantly and a random sample of the same size $k$
would give an $\epsilon'$-approximation of the remaining points for some $\epsilon'<\epsilon$.
We repeat the above until the total weight of the remaining points reduces to a constant,
and then a constant size sample suffices.
However, it is possible that all points have fairly small Tukey depth (consider the case where all points are in convex position),
and no point can be removed.
To resolve the issue, we use the idea in Lemma~\ref{lm:lambdasum}:
there is a convex set $\calK_1$ slightly larger than $\calH_1$ such that the weight of points outside $\calK_1$ is much smaller.
Hence, we can make progress by deleting all points inside $\calK_1$.
Since $\calK_1$ is only slightly larger than $\calH_1$, we do not lose too much in terms of the distance.
Our algorithm carefully implements the above iterative sampling idea.

For ease of exposition, we first focus on $\R^2$.
Consider the Poissonized instance of $\calP$.
We would like to  find two convex sets $\calH$ and $\calK$ satisfying the following properties.
\begin{enumerate}
\item [P1.]
Assume without loss of generality that the origin is in $\calH$.
We require that $\frac{1}{1+\epsilon}\calK\subseteq \calH\subseteq \calK$.
\item [P2.]
For a direction $u\in \mathbb{S}^1$, we use $H(\calH, u)$ to denote the halfplane which does not contain $\calH$
and whose boundary is
the supporting line of $\calH$ with normal direction $u$.
We require that  $\lambda(H(\calH,u))=\sum_{v\in H(\calH,u)} \lambda_v\geq \ln (2/\perror)$
for all directions $u\in \mathbb{S}^1$.
\item [P3.]
$\lambda(\bcalK) = \tO(1/\sqrt{\e})$.
\end{enumerate}
By a careful examination of our analysis in Section~\ref{subsec:2}, we can see the above properties are all we need
for the analysis.

Let $\calH^\star$ denote the $\calH$ found using the exact algorithm in Section~\ref{subsec:2}.
We use the following set of parameters:
$$
z=O(\log n),\quad \e_1=O\left(\frac{\e}{\log n}\right), \quad \epsilon_2=O\left(\sqrt{\frac{\epsilon}{\log n}}\right).
$$
Our algorithm proceeds in rounds.
Initially, let $\calH_0=\CH(\{ v\in \calP \mid \lambda_v \geq \ln(2/\perror)\})$.
In round $i$ (for $1\leq i\leq z$),
we construct two convex sets $\calH_i$ and $\calK_i$
such that
\begin{enumerate}
\item
$\calH_0\subseteq \calK_0\subseteq \calH_1\subseteq \calK_1\subseteq   \ldots \subseteq \calH_z \subseteq \calK_z$;
\item
$\frac{1}{1+\e_1}\calK_i\subseteq \calH_i\subseteq \calK_i$
($\calK_i$ and $\calH_i$ are very close to each other);
\item
$\frac{1}{(1+\e_1)^i}\calH_i \subseteq \calH^\star$
($\calH_i$ is almost contained in $\calH^\star$);
\item
$\lambda(\calP\cap\bcalK_i)\leq \frac{1}{2} \lambda(\calP\cap\bcalK_{i-1})$
(the total weight outside $\calK_i$ reduces by a factor of at least one half).
\end{enumerate}
We repeat the above process
until
$\lambda(\bcalK_i)\leq\tO(1/\sqrt{\e})$.

Before spelling out the details of our algorithm, we need a few definitions.
\begin{definition}
For a set $P$ of weighted points in $\R^d$, we use $\TK(P, \gamma)$ to denote the set of points $x\in \R^d$
with Tukey depth at least $\gamma$.
It is known that $\TK(P, \gamma)$ is convex (see e.g., \cite{matousek1991computing}).
By this definition, $\calH^\star=\TK(\calP, \ln(2/\perror))$.
\end{definition}

\noindent
Recall the definition of $\epsilon$-approximation from Theorem~\ref{thm:quant2}.
By Theorem~\ref{thm:quant2} (or Theorem~\ref{thm:vapnik}),
we can see that a set of $O(\epsilon^{-2}\log(1/\delta))$ sampled points is an $\epsilon \lambda(P)$-approximation with probability $1-\delta$.

We are ready to describe the details of our algorithm.
Initially $\calH_0=\CH(\{ v\in \calP \mid \lambda_v \geq \ln(2/\perror)\})$ (obviously $\calH_0\subseteq \calH^\star$).
Compute a deterministic $\e_1$-kernel $C_{\calH_0}$ of $\calH_0$
and let $\calK_0=(1+\e_1)\CH(C_{\calH_0})$.
Delete all point in $\calP\cap \calK_0$ and let $\calP_1$ be the remaining points in $\calP$ (i.e., $\calP_1=\calP\cap\bcalK_0$).
Let $\ver(\calK_0)$ denote all vertices of $\calK_0$ (notice that some of them may not be original points in $\calP$).
\eat{
In the first round, we compute an $\perror_1\lambda(\calP)$-approximation $\calE_1$ (of size $L=O(1/\perror_1^2)$) of $\calP$ (w.r.t. the weight $\lambda_v$):
$\calE$ is a set of points with uniform weight $\lambda_1$, such that with probability $1-\delta$,
$
|\lambda(H)-\lambda_1(H)| \leq \perror_1 \lambda(\calP),\, \text{ for every halfspace } H.
$
Compute $H_1=\TK(\calE_1, \epsilon_1 \lambda(\calP))$ for $\epsilon_1=...$.
Using the brute-force algorithm mentioned in Section~\ref{subsec:2}, this can be done in $O(\log^{O(d)} n)$ time.
We can see that
$$
H_1\subseteq \TK(\calP, (\epsilon_1-\perror_1)\lambda(\calP)) \subseteq \TK(\calP, \ln(2/\epsilon))= \calH^\star.
$$
Compute a deterministic $\epsilon_3$-kernel $C_{H_1}$ of $H_1$
and let $K_1=(1+\epsilon)\CH(C_{H_1})$.
Using the same proof as Lemma~\ref{xxx}, we can see that
$\lambda(\bcalK_1)\leq O((\epsilon_1+\perror_1)\lambda(\calP)/\sqrt{\epsilon_1}) \leq \lambda(\calP)/2$.
Then, we delete all points in $\calP\cap K_1$ and add all vertices of $K_1$ (denoted as $\ver(K_1)$).
}

Now, suppose we describe the $i$th round for general $i>1$.
We have the remaining vertices in $\calP_{i-1}$ and $\ver(\calK_{i-1})$.
Let each point  $v\in \calP_i$ has the same old weight $\lambda_v$
and each point in $\ver(\calK_{i-1})$ has weight $+\infty$ (to make sure every point in $\calK_{i-1}$ has Turkey depth $+\infty$).
Using random sampling on $\calP_i$, obtain an $\epsilon_2 \lambda(\calP_i)$ -approximation $\calE_i$
(of size $L=O(\epsilon_2^{-2}$) for $\calP_i$.
Then compute (using the brute-force algorithm described in Section~\ref{subsec:2})
$$
\calH_i=\TK(\calE_i\cup \ver(\calK_{i-1}), \max\{4\epsilon_2 \lambda(\calP_i),2\ln(2/\perror)\}).
$$
Note that $\calK_{i-1}\subseteq \calH_i$.
Compute a deterministic $\epsilon_2$-kernel $C_{\calH_i}$ of $\calH_i$
and let $\calK_i=(1+\epsilon_1)\CH(C_{\calH_i})$ (hence $\CH(C_{\calH_i})\subseteq \calH_i\subseteq \calK_i$).
Then, we delete all points in $\calP\cap \calK_i$ and add all vertices of $\calK_i$ (denoted as $\ver(\calK_i)$).
Let $\calP_{i+1}$ be the remaining points in $\calP\cap \bcalK_{i}$.

Our algorithm terminates when $\lambda(\calP_i)\leq\tO(1/\sqrt{\e})$.
Suppose the last round is $z$.
Finally, we let $\calH=\frac{1}{(1+\e_1)^z} \calH_z$ and $\calK=\calK_z$.

First we show the algorithm terminates after at most a logarithmic number of rounds.

\begin{lemma}
\label{lm:boundz}
$z= O(\log n)$.
\end{lemma}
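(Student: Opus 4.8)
The plan is a short geometric‑series estimate built on the weight‑halving invariant of the construction, namely $\lambda(\calP\cap\bcalK_i)\le\frac12\lambda(\calP\cap\bcalK_{i-1})$ for every round $1\le i\le z$, together with an a~priori bound on the weight that survives the initialization round. Since $\calP_{i+1}=\calP\cap\bcalK_i$ and $\calP_1=\calP\cap\bcalK_0$, iterating this invariant gives $\lambda(\calP_i)\le\lambda(\calP_1)/2^{\,i-1}$. The loop stops as soon as $\lambda(\calP_i)\le\tO(1/\sqrt{\perror})$, so it suffices to bound $\lambda(\calP_1)$ and then solve for $i$.

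First I would bound $\lambda(\calP_1)$. By construction $\calH_0=\CH(\{v\in\calP:\lambda_v\ge\ln(2/\perror)\})$ and $\calK_0=(1+\e_1)\CH(C_{\calH_0})$ where $C_{\calH_0}$ is an $\e_1$-kernel of $\calH_0$ centered in its interior, so $\calH_0\subseteq\calK_0$ by the standard kernel inclusion already used in Section~\ref{subsec:2}. Hence every point $v$ with $\lambda_v\ge\ln(2/\perror)$ lies in $\calK_0$ and is deleted; equivalently, every $v\in\calP_1=\calP\cap\bcalK_0$ satisfies $\lambda_v<\ln(2/\perror)$, so
\[
\lambda(\calP_1)=\sum_{v\in\calP_1}\lambda_v<|\calP_1|\cdot\ln(2/\perror)\le n\ln(2/\perror).
\]
Combining with the geometric decay, $\lambda(\calP_i)<n\ln(2/\perror)\,/\,2^{\,i-1}$, so the termination condition $\lambda(\calP_i)\le\tO(1/\sqrt{\perror})$ is certainly met once $2^{\,i-1}\ge n\ln(2/\perror)\cdot\Omega(\sqrt{\perror})$, i.e.\ after at most $1+\log_2\!\bigl(n\ln(2/\perror)\bigr)+O\!\bigl(\log(1/\perror)\bigr)$ rounds. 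Treating $\perror$ as the fixed constant it is, this is $O(\log n)$, which proves $z=O(\log n)$.

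The only genuinely substantive ingredient here is the weight‑halving invariant itself, and this is exactly why the algorithm deletes all of $\calP\cap\calK_i$ rather than only $\calP\cap\calH_i$: $\bcalK_i$ is a union of $O(1/\sqrt{\e_2})$ halfplanes, each cutting $\calK_i=(1+\e_2)\CH(C_{\calH_i})$ off from $\calH_i$, and a Lemma~\ref{lm:lambdasum}-style argument (combined with the $\e_2\lambda(\calP_i)$-approximation guarantee for $\calE_i$) bounds the true weight in each such halfplane by $O(\e_2\lambda(\calP_i))$, for a total of $O(\sqrt{\e_2}\,\lambda(\calP_i))\le\frac12\lambda(\calP_i)$ for the small $\e_2$ fixed above. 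That estimate is packaged into the construction and is used here as a black box; everything else in this lemma is the geometric series displayed above. Thus the main obstacle is not in this lemma but in verifying that invariant, which requires care in transferring the $\e_2$-approximation bound through the Tukey‑depth region $\calH_i$ and the subsequent $(1+\e_2)$-scaling.
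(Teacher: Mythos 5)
Your proof is correct and follows essentially the same route as the paper: bound $\lambda(\calP_1)\le n\ln(2/\perror)$, invoke the weight-halving invariant (justified via the Lemma~\ref{lm:lambdasum}-style argument combined with the $\epsilon_2\lambda(\calP_i)$-approximation guarantee), and conclude by a geometric-series count of rounds. You are somewhat more explicit than the paper about why the initial bound holds (every $v$ with $\lambda_v\ge\ln(2/\perror)$ lies in $\calH_0\subseteq\calK_0$ and is deleted), but the substance is identical.
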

\begin{proof}
If $
\calH_i=\TK(\calE_i\cup \ver(\calK_{i-1}), 2\ln(2/\perror))
$, then we stop since $\lambda(\calP_{i+1})\leq \tilde{O}(1/\sqrt{\e})$ by Lemma~\ref{lm:lambdasum}. Thus, we only need to bound the number of iterations where $
\calH_i=\TK(\calE_i\cup \ver(\calK_{i-1},4\epsilon_2 \lambda(\calP_i)).
$

Initially, it is not hard to see that $\lambda(\calP_1)\leq n\ln(2/\perror)$. Using Lemma~\ref{lm:lambdasum}, we can see that
$\lambda(\calP_{i+1})=\lambda(\calP\cap \bcalK_i)\leq O(5\epsilon_2\lambda(\calP_i)/\sqrt{\e_1}) \leq \lambda(\calP_i)/2$ for the constant defining $\e_1$ sufficiently large.
Hence, $\lambda(\calP_i) \leq \lambda(\calP)/2^i$.
\end{proof}

We need to show $\calH$ and $\calK$ satisfy P1, P2 and P3.
P3 is quite obvious by our algorithm.
It is also not hard to see P1 since $(1+\e_1)^{z+1}\leq 1+\e$ and
$$
\frac{1}{(1+\e)} \calK_z\subseteq \frac{1}{(1+\e_1)^{z+1}} \calK_z\subseteq \frac{1}{(1+\e_1)^z} \calH_z= \calH \subseteq \calH_z\subseteq \calK_z = \calK.
$$
The most difficult part is to show P2 holds:
For every direction $u\in \mathbb{S}^1$,
$\lambda(H(\calH, u))=\sum_{v\in H(\calH, u)} \lambda_v\geq \ln (2/\perror)$.
In fact, we show that $\calH \subseteq \calH^\star$, from which P2 follows trivially, which suffices to prove the following lemma.

\begin{lemma}
$\frac{1}{(1+\e_1)^i}\calH_i \subseteq \calH^\star$ for all $0\leq i\leq z$.
In particular, $\calH\subseteq \calH^\star$.
\end{lemma}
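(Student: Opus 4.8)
The plan is to prove, by induction on $i$, the inclusion (a)~$\frac{1}{(1+\e_1)^i}\calH_i\subseteq\calH^\star$ together with its companion (b)~$\frac{1}{(1+\e_1)^{i+1}}\calK_i\subseteq\calH^\star$; (b) is an immediate consequence of (a) for the same $i$, since $\CH(C_{\calH_i})\subseteq\calH_i$ gives $\calK_i=(1+\e_1)\CH(C_{\calH_i})\subseteq(1+\e_1)\calH_i$, so it is really (a) that has to be established, while it is (b) for $i-1$ that gets fed into round~$i$. Two elementary facts will be used throughout: (i)~if a point $y$ has Tukey depth at least $\gamma$ with respect to a weighted point set, then every closed halfspace containing $y$ has weight at least $\gamma$ (slide its bounding hyperplane until it passes through $y$, which only shrinks the halfspace); and (ii)~a closed halfspace meets a polytope if and only if it contains a vertex of that polytope. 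I also assume, as in P1, that the origin lies in $\calH_0$; then the nested chain $\calH_0\subseteq\calK_0\subseteq\calH_1\subseteq\dots\subseteq\calK_z$ holds ($\calK_{i-1}\subseteq\calH_i$ because every point of $\calK_{i-1}$ has infinite Tukey depth with respect to $\calE_i\cup\ver(\calK_{i-1})$ by fact~(ii), and $\calH_i\subseteq\calK_i$ by the kernel property), so the origin lies in every $\calH_i$ and $\calK_i$ and, since $\calH_0\subseteq\calH^\star$, in $\calH^\star$; this makes all the radial dilations monotone.

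\emph{Base case $i=0$.} Every vertex $v$ of $\calH_0=\CH(\{v\in\calP:\lambda_v\geq\ln(2/\perror)\})$ lies in $\calH^\star$, since any closed halfspace whose bounding hyperplane passes through $v$ contains $v$ and so has $\calP$-weight at least $\lambda_v\geq\ln(2/\perror)$; by convexity of $\calH^\star$ we get $\calH_0\subseteq\calH^\star$, and (b) for $i=0$ follows from $\CH(C_{\calH_0})\subseteq\calH_0$.

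\emph{Inductive step.} Assume (a) and (b) for $i-1$; in particular $\frac{1}{(1+\e_1)^i}\calK_{i-1}\subseteq\calH^\star$. Set $\mu=(1+\e_1)^i$ and fix $x\in\calH_i$; it suffices to show that for every unit vector $u$ the boundary halfspace $G_u=\{z:\innerprod{z}{u}\geq\innerprod{x/\mu}{u}\}$ of $x/\mu$ satisfies $\lambda(\calP\cap G_u)\geq\ln(2/\perror)$, since these halfspaces realise the Tukey depth of $x/\mu$. Compare $G_u$ with $F_u=\{z:\innerprod{z}{u}\geq\innerprod{x}{u}\}$, the halfspace through $x$ with the same normal. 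Because $x\in\calH_i=\TK\bigl(\calE_i\cup\ver(\calK_{i-1}),\,4\epsilon_2\lambda(\calP_i)\bigr)$ and the points of $\ver(\calK_{i-1})$ carry infinite weight, two cases arise. Case~A: $F_u$ contains a vertex $v$ of $\calK_{i-1}$. Then $\innerprod{v/\mu}{u}\geq\innerprod{x/\mu}{u}$, so $v/\mu\in G_u$, and $v/\mu\in\tfrac1\mu\ver(\calK_{i-1})\subseteq\tfrac1\mu\calK_{i-1}\subseteq\calH^\star$; by fact~(i), $\lambda(\calP\cap G_u)\geq\ln(2/\perror)$. Case~B: $F_u$ contains no vertex of $\calK_{i-1}$, hence (fact~(ii)) $F_u\cap\calK_{i-1}=\emptyset$, and the depth condition forces the reweighted $\calE_i$-mass of $F_u$ to be at least $4\epsilon_2\lambda(\calP_i)$; since the origin lies in $\calK_{i-1}$ but $F_u$ avoids $\calK_{i-1}$, we have $0\notin F_u$, i.e.\ $\innerprod{x}{u}>0$ and therefore $F_u\subseteq G_u$; finally the $\epsilon_2\lambda(\calP_i)$-approximation property of $\calE_i$, together with the fact that every point of $\calP$ lying in $F_u$ belongs to $\calP_i=\calP\cap\bcalK_{i-1}$, gives $\lambda(\calP\cap G_u)\geq\lambda(\calP\cap F_u)=\lambda(\calP_i\cap F_u)\geq 3\epsilon_2\lambda(\calP_i)\geq\ln(2/\perror)$, the last step holding because round $i$ is executed only while $\lambda(\calP_i)$ still exceeds the stopping threshold, which is chosen so that $3\epsilon_2$ times it is at least $\ln(2/\perror)$. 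In either case $x/\mu\in\calH^\star$, which proves (a), hence (b), for $i$; taking $i=z$ yields $\calH=\frac{1}{(1+\e_1)^z}\calH_z\subseteq\calH^\star$. (Everything is conditioned on the $O(z)$ sampled $\epsilon_2\lambda(\cdot)$-approximations being accurate, an event of probability $1-\delta$ after adjusting constants and a union bound.)

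\emph{The main obstacle} is Case~B: one must certify that the Tukey region assembled from the small sample $\calE_i$ glued to the frozen core $\ver(\calK_{i-1})$ is genuinely contained in $\calH^\star$ rather than merely close to it, and this is exactly where the accuracy parameters must be balanced --- the sampling error $\epsilon_2\lambda(\calP_i)$ against the inflated depth threshold $4\epsilon_2\lambda(\calP_i)$ against the stopping rule on $\lambda(\calP_i)$ --- so that $3\epsilon_2\lambda(\calP_i)\geq\ln(2/\perror)$ holds in every executed round, while at the same time the geometric decay of $\lambda(\calP_i)$ caps the number of rounds at $z=O(\log n)$ and keeps the accumulated dilation $(1+\e_1)^z\leq1+\perror$ harmless. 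Everything else reduces to the two convexity facts above plus careful sign-tracking of the dilations about the origin.
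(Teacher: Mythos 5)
Your proof is correct and follows essentially the same inductive strategy as the paper's: induct on $i$ using the stronger hypothesis that $\eta^i\calK_{i-1}\subseteq\calH^\star$, then split the supporting-halfspace analysis into the case where the halfspace meets a vertex of $\calK_{i-1}$ (reducing directly to the inductive hypothesis via the infinite weight) and the case where it avoids $\calK_{i-1}$ entirely (so the $4\epsilon_2\lambda(\calP_i)$ depth threshold plus the $\epsilon_2\lambda(\calP_i)$-approximation guarantee forces at least $3\epsilon_2\lambda(\calP_i)\geq\ln(2/\perror)$ mass). The one place you are more explicit than the paper is in Case~B, where you verify the sign condition $\innerprod{x}{u}>0$ (via the origin lying in $\calK_{i-1}$ while $F_u\cap\calK_{i-1}=\emptyset$) before concluding $F_u\subseteq G_u$ — the paper's sweep-line phrasing simply asserts the dilated supporting line is ``even lower'' without isolating this point, so your extra care is a genuine if minor improvement in rigor, not a different approach.
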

\begin{proof}
We prove the lemma by induction.
$\calH_0\subseteq \calH^\star$ clearly satisfies the lemma.
For ease of notation, we let $\eta=1/(1+\e_1)$.
Suppose the lemma is true for $\calH_{i-1}$, from which we can see that
$$
\eta^i \calK_{i-1}\subseteq \eta^{i-1}\calH_{i-1}\subseteq \calH^\star.
$$
Now we show the lemma holds for $\calH_i$.
Consider the $i$th round.
Let $\calE_i$ be
an $\epsilon_2 \lambda(\calP_i)$ -approximation for $\calP_i=\calP \cap\bcalK_{i-1}$
and
$\calH_i=\TK(\calE_i\cup \ver(\calK_{i-1}), 4\epsilon_2 \lambda(\calP_i))$.
Fix an arbitrary direction $u\in \mathbb{S}^1$ (w.l.o.g., assume that $u=(0,-1)$, i.e., the downward direction),
let $H(\eta^i\calH_i,u)$ be a halfplane whose boundary is tangent to $\eta^i \calH_i$.
It suffices to show that
$\lambda(H(\eta^i\calH_i,u))=\sum_{v\in H(\eta^i\calH_i,u)} \lambda_v\geq \ln (2/\perror)$.
We move a sweep line $\ell_u$ orthogonal to $u$, along the direction $u$ (i.e., from top to bottom), to sweep through the points in $\calP_i\cap \ver(\calK_{i-1})$
until the total weight we have swept is at least $\ln(2/\perror)$.
We distinguish two cases:
\begin{enumerate}
\item
$\ell_u$ hits a point $v$ in $\ver(\calK_{i-1})$ (recall the weight for such point is $+\infty$).
We can see that $v$ is the topmost point of $\calK_{i-1}$ and $\calH_i$ (or equivalently, $\ell_u$ is also a supporting line for $\calH_i$).
Since $\eta^i\calK_{i-1}\subseteq  \calH^\star$ by the induction hypothesis,
the topmost point of $\eta^i\calK_{i-1}$ is lower than that for $\calH^\star$.
The topmost point of $\eta^i\calK_{i-1}$ is also the highest point of $\eta^i\calH_i$,
from which we can see
$H(\eta^i\calH_i,u)$ is lower than $H(\calH^\star, u)$, which implies that
$\lambda(H(\eta^i\calH_i,u))\geq \ln (2/\perror)$.
\item
$\ell_u$ stops moving when it hits an original point in $\calP_i$.
Since $\max\{3\epsilon_2 \lambda(\calP_i),2\ln (2/\perror)-\epsilon_2 \lambda(\calP_i)\}> \ln (2/\perror)$,
by definition of $\calH_i$, $H(\calH_i, u)$ can not be higher than $\ell_u$.
The boundary of  $H(\eta^i\calH_i, u)$ is even lower, from which we can see $\lambda(H(\eta^i\calH_i, u))\geq \ln (2/\perror)$.
\end{enumerate}
Hence, every point in $\eta^i\calH_i$ has Tukey depth at least $\ln (2/\perror)$, which implies the lemma.
\end{proof}

\topic{Running time:}
In each round, we compute in linear time an $\epsilon_2$-approximation $\calE_i$ of size $O(\epsilon^{-2}_2\log (1/\delta)) = \polylog(n)$ (with
$\delta=\poly(n)$ to ensure each probabilistic event succeeds with high probability).
$\calK_i$ is a dilation of an $\e_1$-kernel. So the size of $\ver(\calK_i)$ is at most $1/\sqrt{\e_1}=O(\log^{1/2} n)$.
Deciding whether a point is inside $\calK_i$ can be solved in $\polylog(n)$ time, by a linear program with $|\ver(\calK_i)|$ variables.
To compute $\calH_i$, we can use the brute-force algorithm described in Section~\ref{subsec:2}, which takes
$\poly(|\calE_i\cap \ver(\calK_{i-1})|)=\polylog(n)$ time.
There are logarithmic number of rounds. So the overall running time is $O(n\polylog n)$.

\topic{Higher Dimension:}
Our algorithm can be easily extended to $\R^d$ for any constant $d>2$.
In $\R^d$, we let
$\e_1=O(\e/\log n)$ and $\epsilon_2=O(\epsilon/\log n)^{(d-1)/2}$.
With the new parameters, we can easily check that Lemma~\ref{lm:boundz} still holds.
We can construct an \probkernel\ of size
$\min\{O\Bigl(\perror^{-2}\max\{\lambda^2,\lambda^4\}\log (1/\delta)\Bigr),
O(\perror^{-2}\epsilon^{-(d-1)}\log (1/\delta) \polylog (1/\perror)\}$.
The first term is from Theorem~\ref{lm:boundprob2}
and the second from the higher-dimensional extension to Theorem~\ref{lm:boundprob3}.
Now, let us examine the running time.
In $\R^d$, $|\ver(\calK_i)|$ is at most $\e_1^{-(d-1)/2}=O(\log^{(d-1)/2}n)$.
So deciding whether a point is inside $\calK_i$ can be solved in $\log^{O(d)}(n)$ time.
Computing $\calH_i$ takes $\log^{O(d)}(n)$ time using the brute-force algorithm.
So the overall running time is $O(n\log^{O(d)} n)$.

In summary, we obtain the following theorem for \probkernel.

\begin{reptheorem}
{thm:probconstructionexsit} (restated)
	$\calP$ is a set of uncertain points in $\R^d$ with existential uncertainty. Let $\lambda=\sum_{v\in \calP}(-\ln (1-p_v))$.
	There exists an \probkernel\ for $\calP$,
	which consists of a set of
	independent uncertain points of cardinality $\min\{\tO(\perror^{-2}\max\{\lambda^2,\lambda^4\}),
	\tO(\eps^{-(d-1)}\perror^{-2})\}$.
	The algorithm for constructing such a coreset runs in  $\tO(n\log^{O(d)} n)$ time.
\end{reptheorem}

\subsection{\probkernel\ Under the Subset Constraint}
\label{sec:expsubsetprobkernel}

We show it is possible to construct an \probkernel\ in the existential model under the {\em $\beta$-assumption}:
each possible location realizes a point with a probability at least $\beta$, where $\beta >0$ is some fixed constant.
\begin{theorem}
\label{thm:betaprob}
Under the $\beta$-assumption, there is an \probkernel\ in $\R^d$, which is
of size $O(\mu^{-(d-1)/2} \log (1/\mu) )$ and satisfies the subset constraint,
in the existential uncertainty model, where $\mu=\min\{\e,\perror\}$.
\end{theorem}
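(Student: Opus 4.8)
The plan is to run the construction behind Theorem~\ref{thm:beta} with the parameter $\mu=\min\{\e,\perror\}$ in place of $\e$. Concretely: first apply an affine transform (Lemma~\ref{lm:transform}, treating the points of $\calP$ as deterministic) so that the point set becomes $\alpha$-fat; then fix a set $\calI$ of $O(1/\mu^{(d-1)/2})$ directions, closed under negation, with the net property used in Theorem~\ref{thm:beta}; for each $u_i\in\calI$ let $b(u_i)$ be the first $L=O(\log(1/\mu))$ points of $\calP$ in the canonical order with respect to $u_i$; and set $\calS=\bigcup_{u_i\in\calI}b(u_i)$, where every chosen point keeps its original existential probability. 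Then $\calS\subseteq\calP$ with the right probabilities and $|\calS|=O(\frac{1}{\mu^{(d-1)/2}}\log\frac1\mu)$. Affine transforms send directions to directions and scale each directional width by a factor depending only on the direction, so they preserve both the multiplicative $(1\pm\e)$ and the additive $\perror$ guarantees of Definition~\ref{def:expkernelCDF}; hence it suffices to verify \eqref{eq:quantcore} after the transform.

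One side of \eqref{eq:quantcore} is free: couple $\calS$ with $\calP$ on the same Bernoulli variables (legal since the chosen points retain their marginals), so every realization $S$ is a subset of the corresponding $P$ and thus $\dw(S,u)\le\dw(P,u)$ pointwise. Hence for all $u$ and all $x\ge0$, $\Pr[\dw(\calS,u)\le x]\ge\Pr[\dw(\calP,u)\le x]\ge\Pr[\dw(\calP,u)\le(1-\e)x]\ge\Pr[\dw(\calP,u)\le(1-\e)x]-\perror$, the lower inequality of \eqref{eq:quantcore}. The work is the upper inequality, i.e.\ $\Pr[\dw(\calS,u)>x]\ge\Pr[\dw(\calP,u)>(1+\e)x]-\perror$. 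Fix $u$ and let $u^+,u^-\in\calI$ be the net directions near $u/\|u\|$ and $-u/\|u\|$. Let $\mathcal{G}^{+}$ (resp.\ $\mathcal{G}^{-}$) be the event that some point of $b(u^+)$ (resp.\ $b(u^-)$) is realized; since each such point has probability $\ge\beta$, taking the constant in $L=\Theta(\log(1/\mu))$ large enough gives $\Pr[\overline{\mathcal{G}^{+}}],\Pr[\overline{\mathcal{G}^{-}}]\le(1-\beta)^L\le\perror/2$. The structural fact I will use --- exactly the observation underlying the case analysis in the proof of Theorem~\ref{thm:beta} --- is that on $\mathcal{G}^{+}$ the topmost realized point of $b(u^+)$ in direction $u^+$ is actually the topmost realized point of all of $\calP$ in direction $u^+$, because under general position every point outside $b(u^+)$ has strictly smaller $u^+$-projection than every point inside it; so the realization of $\calS$ contains the extreme vertex of the realization of $\calP$ in direction $u^+$, and symmetrically in direction $u^-$ on $\mathcal{G}^{-}$.

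With this, I would run the same case split as in Theorem~\ref{thm:beta}, but tracking the event $\{\dw(\cdot,u)>x\}$ rather than the support function. If the set of points of $\calP$ that are relevant to this event (those that can serve as a leftmost or rightmost realized point witnessing width $>x$ in the nearly-$u$ direction $u^+$) has size at most $L$, then $\calS$ contains all of them and $\Pr[\dw(\calS,u)>x]=\Pr[\dw(\calP,u)>x]$ with no slack at all. Otherwise there are more than $L$ relevant points, the $\beta$-assumption forces one of them into the realization with probability $\ge1-(1-\beta)^L\ge1-\perror$, and the only remaining loss is that $u^+$ is a $\Theta(\sqrt{\mu}\,\alpha\beta^2)$-approximation of $u$ rather than equal to it, which on the $\alpha$-fat instance costs at most the multiplicative $(1+\e)$ relaxation permitted by \eqref{eq:quantcore} --- by the same estimates used to prove the deterministic $\e$-kernel property (cf.\ Lemma~\ref{lm:directionkernel} and \cite{agarwal2004approximating}). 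Combining the two cases with $\Pr[\overline{\mathcal{G}^{+}\cap\mathcal{G}^{-}}]\le\perror$, and rescaling $\e$ by a constant so that the composed slack is genuinely a $(1+\e)$ on the threshold, yields the upper inequality. The main obstacle is that the left and right extents of a realization are not independent, so one cannot simply multiply marginal CDF bounds; the clean way around it is to phrase everything through the leftmost/rightmost-realized-point decomposition used in the proofs of Theorem~\ref{lm:boundprob} and Theorem~\ref{lm:boundprob2}, writing $\Pr[\dw(\cdot,u)\le t]$ as a single sum over ordered pairs $(v_i,v_j)$ weighted by $p_{v_i}p_{v_j}\prod_{k<i}(1-p_{v_k})\prod_{k>j}(1-p_{v_k})$ and matching the sums for $\calS$ and $\calP$ term by term. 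Making the truncation at $L$, the net resolution, and this sum interact cleanly --- so that the $\pm\perror$ absorbs exactly the realizations in which an extreme vertex escapes $\calS$ --- is the delicate step, and it is precisely there that the $\log(1/\mu)$ factor (to push $(1-\beta)^L$ below $\perror$) and the $\mu^{-(d-1)/2}$ net size are simultaneously consumed.
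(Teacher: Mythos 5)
Your high-level plan -- rerun the $\beta$-assumption construction with the parameter $\mu=\min\{\e,\perror\}$ in place of $\e$, use the subset coupling to get one side of \eqref{eq:quantcore} for free, and use the $\beta$-assumption to make the event ``an extreme escapes $\calS$'' improbable -- is the right plan and the paper endorses it in one line (``the algorithm is exactly the same as constructing an \expkernel''). But you have chosen a different concrete construction from the paper's. You take $\calS=\bigcup_{u_i\in\calI} b(u_i)$ with $b(u_i)$ the top $L$ points of $\calP$ in the canonical order of a net direction $u_i$; the paper's actual proof of Theorem~\ref{thm:beta} (Appendix~\ref{sec:appendix}) instead takes $L$ rounds of ``peeling'' iterated deterministic $(\e_1/\sqrt d)$-kernels $\calS_1,\dots,\calS_L$. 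The peeling variant is precisely engineered so that, for \emph{every} direction $u$ simultaneously, if $v_j$ is the first point of $\calP$ in the $u$-order that misses $\calS$ then every $w\in\calL_\calS$ satisfies $\innerprod{w}{u}\ge\innerprod{v_j}{u}-\e_1$ (the footnote in the paper's proof), with no detour through a net; your construction needs the bracketing $I(x)\subseteq I'(x-\e_1)\subseteq I(x-2\e_1)$ to get the analogous property, which doubles the additive slack and is what the paper's own authors tried and abandoned (a commented-out earlier draft of this proof in the source uses your $b(u_i)$ construction). Both can be made to work, so this is a stylistic difference rather than an error; but it does mean your Case 1/Case 2 split should be phrased against $\calL_\calP$ vs.\ $\calL_\calS$ in the \emph{query} direction $u$ (as the paper does), not against ``relevant points in the net direction $u^+$.''

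The genuine gap is in the upper inequality of \eqref{eq:quantcore}. Two specific problems. First, your Case~1 claim is too strong as stated: for the support function, ``relevant'' means the upper level set $I(x)$ and one can get exact equality when $I(x)\subseteq\calS$; for the width $\dw(\cdot,u)>x$ there is no absolute threshold -- the extremes of a realization are whatever happens to realize, and any point of $\calP$ can be the leftmost or rightmost witness -- so ``the set of points that can serve as a leftmost or rightmost realized point witnessing width $>x$'' is not a well-defined truncation of $\calP$, and the claimed ``no slack at all'' does not follow. Second, and more seriously, what both Case~1 and Case~2 actually give you is an \emph{additive} $O(\e_1)$ slack on the threshold (the paper's own key inequality \eqref{eq:probbound} has exactly this form, $\Pr[f(P,u)\ge x]\le\Pr[f(S,u)\ge x-\e_1]+\e_1$). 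But Definition~\ref{def:expkernelCDF} only grants a \emph{multiplicative} $(1\pm\e)x$ slack. These coincide only when $x\gtrsim\e_1/\e$; for smaller $x$ you cannot absorb $O(\e_1)$ into $\e x$, and you need a separate argument showing $\Pr[\dw(\calS,u)\le x<\dw(\calP,u)]\le\perror$ in the small-$x$ regime. Theorem~\ref{lm:boundprob3} deals with the analogous small-$t$ regime by exploiting that its $\calS$ contains a deterministic convex set $\calK$, so $\dw(\calS,u)$ is bounded below almost surely and the small-$t$ events have probability zero -- a trick unavailable here, where $\calS\subseteq\calP$ and $\dw(\calS,u)$ can be arbitrarily small. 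Your closing paragraph correctly identifies the joint-left-right dependence and the sum-over-ordered-pairs decomposition as the place to do the work, but neither your proposal nor the paper's own terse remark (``implicit in the proof of Theorem~\ref{thm:beta}'') spells out how that sum, the truncation at $L$, and the multiplicative-vs-additive slack interact. Until that is written down, the proof is a plan, not a proof.
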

In fact, the algorithm is exactly the same as constructing an \expkernel\
and the proof of the above theorem is implicit in the proof of Theorem~\ref{thm:beta}.

\section{\exprkernel\ Under the $\beta$-Assumption}
\label{sec:rfunction}

In this section, we show an \exprkernel\ exists in the existential uncertainty model
under the $\beta$-assumption.
Recall that the function $T_r(P,u)=\max_{v\in P}\innerprod{u}{v}^{1/r}-\min_{v\in P}\innerprod{u}{v}^{1/r}$.
For ease of notation,
we write
$\Exp[T_r(\calP,u)]$ to denote
$\Exp_{P\sim \calP}[T_r(P,u)]$.
Our goal is to find a set $\calS$ of stochastic points such that for all directions $u\in \calP^{\polar}$, we have that
$\Exp[T_r(\calS,u)]\in (1\pm\e)\Exp[T_r(\calP,u)]$.

Our construction of $\calS$ is almost the same as that
in Section~\ref{sec:qkernel}.
Suppose we sample $N$ (fixed later) independent realizations
and take the $\e_0$-kernel for each of them.
Suppose they are $\{\calE_1,\ldots,\calE_N\}$
and we associate each a probability $1/N$.
We denote the resulting \exprkernel\ by $\calS$.
Hence, for any direction $u\in \calP^{\polar}$,
$\Exp[T_r(\calS,u)]=\frac{1}{N}\sum_{i=1}^N T_r(\calE_i,u)$
and we use this value as the estimation of
$\Exp[T_r(\calP,u)]$.
Now, we show $\calS$ is indeed an \exprkernel.

Recall that we use $\calE(P)$ to denote the deterministic $\epsilon$-kernel for any realization $P\sim \calP$.
We first compare $\calP$ with the random set $\calE(P)$.

\begin{lemma}
\label{lm:exprkernel}
For any $t\geq 0$ and any direction $u\in \calP^{\polar}$, we have that
$$ (1-\e/2)\Exp[T_r(\calP,u)]\leq \Exp_{P\sim \calP}[T(\calE(P),u,r))]\leq \Exp[T_r(\calP,u)].
$$
\end{lemma}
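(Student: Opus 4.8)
The plan is to establish both inequalities \emph{pointwise} over realizations: I will show that for every $P\sim\calP$ and every $u\in\calP^{\polar}$,
\[
(1-\e/2)\,T_r(P,u)\ \le\ T_r(\calE(P),u)\ \le\ T_r(P,u),
\]
and then take $\Exp_{P\sim\calP}[\cdot]$ of both sides. Throughout, write $a=f(P,u)=\max_{v\in P}\innerprod{u}{v}$, $b=-f(P,-u)=\min_{v\in P}\innerprod{u}{v}$ and $w=a-b=\dw(P,u)$; since $u\in\calP^{\polar}$ we have $a\ge b\ge 0$, so every quantity below is a nonnegative base of a $1/r$-th power and is well defined on $\calE(P)$ as well (its points lie in $\CH(P)$). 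The upper bound is immediate: the $\e_0$-kernel $\calE(P)$ produced by the algorithm of \cite{agarwal2004approximating} is a subset of the vertices of $\CH(P)$, so $f(\calE(P),u)\le a$ and $-f(\calE(P),-u)\ge b$, and since $x\mapsto x^{1/r}$ is nondecreasing on $[0,\infty)$ we get $T_r(\calE(P),u)=f(\calE(P),u)^{1/r}-\bigl(-f(\calE(P),-u)\bigr)^{1/r}\le a^{1/r}-b^{1/r}=T_r(P,u)$. Taking expectations gives the right-hand inequality of the lemma.

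For the lower bound I would first localize the extreme projections of $\calE(P)$: combining the kernel guarantee $(1-\e_0)\dw(P,u)\le\dw(\calE(P),u)$ with $f(\calE(P),\pm u)\le f(P,\pm u)$ yields $a':=f(\calE(P),u)\in[a-\e_0 w,\,a]$ and $b':=-f(\calE(P),-u)\in[b,\,b+\e_0 w]$. It then suffices to bound $\bigl(a^{1/r}-(a')^{1/r}\bigr)+\bigl((b')^{1/r}-b^{1/r}\bigr)$ by $\tfrac{\e}{2}\bigl(a^{1/r}-b^{1/r}\bigr)$, using only that $g(x):=x^{1/r}$ is concave, increasing and subadditive on $[0,\infty)$, together with the elementary lower bound $a^{1/r}-b^{1/r}\ge \tfrac{w}{r}\,a^{1/r-1}$ (mean value theorem, $g'$ decreasing). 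The clean case is when the minimum projection is not too small: if $b\ge w$, then $b>0$ and $a\le 2b$, so both differences are controlled by the tangent-line estimate $g(y')-g(y)\le g'(\min\{y,y'\})\,|y'-y|$ together with $b^{1/r-1}/a^{1/r-1}=(a/b)^{1-1/r}\le 2$, giving a total of $O(\e_0)\,T_r(P,u)$.

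The main obstacle is the regime $b<w$, where $g'$ blows up near $0$ and the tangent-line bound on $(b')^{1/r}-b^{1/r}$ is useless; here I would instead invoke subadditivity, $(b')^{1/r}-b^{1/r}\le (b+\e_0 w)^{1/r}-b^{1/r}\le(\e_0 w)^{1/r}=\e_0^{1/r}w^{1/r}$, and use that in this regime $b<a/2$, hence $a^{1/r}-b^{1/r}\ge(1-2^{-1/r})\,a^{1/r}\ge(1-2^{-1/r})\,w^{1/r}$, so this term is at most $C_r\,\e_0^{1/r}\,T_r(P,u)$ for a constant $C_r$ depending only on $r$ (and the $a$-side difference is $a^{1/r}-(a')^{1/r}\le a^{1/r}-((1-\e_0)a)^{1/r}\le\e_0 a^{1/r}$, absorbed the same way). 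Combining the two cases gives $T_r(P,u)-T_r(\calE(P),u)\le C'_r\,\e_0^{1/r}\,T_r(P,u)$; choosing the kernel precision $\e_0=\Theta(\e^r)$ (so each $\calE_i$ has size $O(\e^{-r(d-1)/2})$, which accounts for the larger coreset) makes the right side at most $\tfrac{\e}{2}T_r(P,u)$. Taking $\Exp_{P\sim\calP}[\cdot]$ of the pointwise bound then yields $\Exp_{P\sim\calP}[T_r(\calE(P),u)]\ge(1-\e/2)\Exp[T_r(\calP,u)]$, completing the proof. The only genuinely delicate step is the small-$b$ regime, which is precisely what forces the loss from $\e_0$ to $\e_0^{1/r}$ and thus explains why this coreset is larger than an ordinary $\e$-kernel.
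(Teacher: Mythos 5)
Your proof is correct, and its overall structure—establish a \emph{pointwise} sandwich inequality for each realization $P$, then take $\Exp_{P\sim\calP}$—is exactly the structure of the paper's proof. Where you differ is in how you obtain the pointwise bound $(1-\e/2)T_r(P,u)\le T_r(\calE(P),u)\le T_r(P,u)$: the paper simply invokes Lemma 4.6 of Agarwal--Har-Peled--Varadarajan, whereas you re-derive it from scratch. Your derivation is sound: the localization $a'\in[a-\e_0 w,\,a]$, $b'\in[b,\,b+\e_0 w]$ follows correctly from the kernel guarantee together with $\calE(P)\subseteq P$ (since $a-a'$ and $b'-b$ are both nonnegative and their sum is $\le\e_0 w$); the Case~1 tangent-line bound works because $b\ge w$ forces $a\le 2b$ and $a-\e_0 w\ge b>0$; and the Case~2 use of subadditivity $(b+\e_0 w)^{1/r}\le b^{1/r}+(\e_0 w)^{1/r}$ correctly handles the blow-up of $g'$ near zero. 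Your accounting that the loss is $O_r(\e_0^{1/r})$ rather than $O(\e_0)$ is precisely the content of the cited lemma, and it correctly explains why $\e_0$ must be taken $\Theta_r(\e^r)$—which matches the paper's choice $\e_0=(\e/4(r-1))^r$. The trade-off is that your version is longer but self-contained and makes the source of the exponent-$r$ loss transparent, whereas the paper's version is a one-line citation; there is no substantive mathematical gap in either.

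One small point of care worth flagging: your constant $C_r=2/(1-2^{-1/r})$ grows like $\Theta(r)$, so the hidden constant in ``$\e_0=\Theta(\e^r)$'' is $r$-dependent. This is consistent with the paper's explicit $(4(r-1))^r$ factor, but it is worth stating explicitly if you want the coreset-size bookkeeping in the running-time and size analyses (which depend on $\e_0^{-(d-1)/2}$) to come out exactly as in Theorem~\ref{thm:exprconstruction}.
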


\begin{proof}
By Lemma 4.6 in~\cite{agarwal2004approximating}, we have that
$ (1-\e/2)T_r(P,u)\leq T_r(\calE(P),u)\leq T_r(P,u).
$
The lemma follows by combining all realizations.
\end{proof}

Now we show that $\calS$ is an \exprkernel\ of $\calE(P)$. We first prove the following lemma. The proof is almost the same as that of Lemma~\ref{lm:quant1}, and can be found in Appendix~\ref{app:exprkernel}.

\begin{lemma}
\label{lm:expr1}
Let $N=O\bigl(\e_1^{-2}\e_0^{-(d-1)/2}\log (1/\e_0)\bigr)$, where $\e_0=(\e/4(r-1))^r$, $\e_1=\eps\beta^2$. For any $t\geq 0$ and any direction $u\in \calP^{\polar}$, we have that
$$  \Prob_{P\sim \calS}[\max_{v\in P}\innerprod{u}{v}^{1/r}\geq t]\in \Prob_{P\sim \calP}[\max_{v\in \calE(P)}\innerprod{u}{v}^{1/r}\geq t)]\pm \e_1/4, \text{ and }
$$
$$ \Prob_{P\sim \calS}[\min_{v\in P}\innerprod{u}{v}^{1/r}\geq t]\in \Prob_{P\sim \calP}[\min_{v\in \calE(P)}\innerprod{u}{v}^{1/r}\geq t)]\pm \e_1/4. \quad\quad
$$
\end{lemma}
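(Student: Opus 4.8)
The plan is to follow the proof of Lemma~\ref{lm:quant1} almost verbatim, with the directional width replaced by the two one-sided quantities $\max_{v}\innerprod{u}{v}^{1/r}$ and $\min_{v}\innerprod{u}{v}^{1/r}$, and the $\e$-kernel replaced by the $\e_0$-kernel. First I would record that $\calE_1,\dots,\calE_N$ are $N$ i.i.d.\ samples of the random set $\calE(P)$, $P\sim\calP$, and that this random set takes at most $n^L$ distinct values, where $L=O(\e_0^{-(d-1)/2})$ is the kernel size; this finiteness is what lets the uniform-convergence machinery of Theorems~\ref{thm:quant1} and~\ref{thm:quant2} apply. Since repeating a point of a kernel changes neither its maximum, its minimum, nor $T_r$, I would pad every $\e_0$-kernel to exactly $L$ points by duplicating one of its points, encode each realization $E=\calE(P)=\{v^1,\dots,v^L\}$ as $g(E)=(v^1,\dots,v^L)\in\R^{dL}$, let $X$ be the (finite) image of $g$, and set $C=\{g(\calE_1),\dots,g(\calE_N)\}$.

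Next I would express the two events as membership in simple set families over $X$. Fix $t\ge 0$ and $u\in\calP^{\polar}$; because $u\in\calP^{\polar}$ forces $\innerprod{u}{v}\ge 0$ for every point $v$ in play and $z\mapsto z^{1/r}$ is strictly increasing on $[0,\infty)$, we have $\innerprod{u}{v}^{1/r}\ge t\iff\innerprod{u}{v}\ge t^{r}$, a condition \emph{linear} in the coordinates of $v$. Hence $\max_{v\in E}\innerprod{u}{v}^{1/r}\ge t$ iff $g(E)\in\bigcup_{i=1}^{L}H_i(u,t)$, where $H_i(u,t)=\{x\in\R^{dL}\mid\sum_{k=1}^{d}x^i_k u_k\ge t^{r}\}$ is a halfspace depending only on the $i$-th coordinate block, and $\min_{v\in E}\innerprod{u}{v}^{1/r}< t$ iff $g(E)\in\bigcup_{i=1}^{L}H_i(u,t)^{c}$. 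For each $i$, the family $\{H_i(u,t)\mid u\in\R^d,\,t\ge 0\}$ consists of halfspaces of $\R^{dL}$ touching a single coordinate block, so has VC-dimension at most $d+1$; Theorem~\ref{thm:quant1}, applied with $f$ the union over $i\in[L]$, then bounds the VC-dimension of the resulting range space by $O((d+1)\,L\log L)=O(\e_0^{-(d-1)/2}\log(1/\e_0))$, and the same bound holds for the family built from the $H_i(u,t)^{c}$. (This is where the one-sided setup is cheaper than in Lemma~\ref{lm:quant1}: there one needed all $L^2$ \emph{pairs} of blocks because $\dw$ is a difference over one set, whereas here $\max$ and $\min$ are handled separately, so only $L$ halfspaces enter.)

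Then I would invoke Theorem~\ref{thm:quant2} with approximation parameter $\e_1/4$: with $N=O\bigl(\e_1^{-2}(\e_0^{-(d-1)/2}\log(1/\e_0)+\log(1/\delta))\bigr)=O\bigl(\e_1^{-2}\e_0^{-(d-1)/2}\log(1/\e_0)\bigr)$ samples, $C$ is simultaneously an $(\e_1/4)$-approximation for both range spaces with probability $1-\delta$. For any $u,t$, the empirical mass of $\bigcup_i H_i(u,t)$ equals $\tfrac1N\sum_{i}\indicator(\max_{v\in\calE_i}\innerprod{u}{v}^{1/r}\ge t)=\Prob_{P\sim\calS}[\max_{v\in P}\innerprod{u}{v}^{1/r}\ge t]$ (a realization of $\calS$ is $\calE_i$ with probability $1/N$), while its true mass under the pushforward of $P\sim\calP$ is $\Prob_{P\sim\calP}[\max_{v\in\calE(P)}\innerprod{u}{v}^{1/r}\ge t]$; the $(\e_1/4)$-approximation bound is exactly the first claimed inequality. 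For the minimum, apply the bound to $\bigcup_i H_i(u,t)^{c}$ and use $\Pr[\min_v\innerprod{u}{v}^{1/r}\ge t]=1-\Pr[\min_v\innerprod{u}{v}^{1/r}< t]$ on both $\calS$ and $\calP$, so the error on the complement range transfers directly.

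The VC-dimension bookkeeping and the identification of empirical/true masses with the stated probabilities are routine and essentially the same as in Lemma~\ref{lm:quant1}; I expect no real difficulty there. The only genuinely new points — and the two places that need care — are: (i) the reduction $\innerprod{u}{v}^{1/r}\ge t\iff\innerprod{u}{v}\ge t^{r}$, which is valid only because $u\in\calP^{\polar}$ keeps all inner products nonnegative (so that $z\mapsto z^r$ is order-preserving on the relevant range); and (ii) padding each $\e_0$-kernel to the common size $L$ by duplicating an existing point rather than by inserting the origin, so that the minimum event is not corrupted by a spurious zero. I would also note explicitly that this lemma does not itself use the $\beta$-assumption: $\beta$ enters only through the choice $\e_1=\e\beta^2$, which is what later lets the additive error $\e_1/4$ on the one-sided distributions be upgraded to a multiplicative $(1\pm\e)$ error on $\Exp[T_r(\calP,u)]$ in the theorem that follows.
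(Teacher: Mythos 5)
Your proof is correct and follows essentially the same route as the paper's: encode each $\e_0$-kernel realization as a point in $\R^{dL}$, express the max (resp.\ min) event as membership in a union of $L$ single-block halfspaces (resp.\ their complements), bound the VC-dimension of this union via Theorem~\ref{thm:quant1} by $O((d+1)L\log L)$, and apply Theorem~\ref{thm:quant2} with error $\e_1/4$. One refinement you make that the paper glosses over is worth keeping: padding an undersized kernel by duplicating an existing point rather than with the origin (as the paper writes) — padding with $(0,\dots,0)$ would force the encoded min to be $0$ whenever $|\calE(P)|<L$ and $t>0$, so your choice is the correct one.
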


\begin{lemma}
\label{thm:betaexpr}
Let $N=O\bigl(\beta^{-4}\e^{-(rd-r+4)/2}\log (1/\e)\bigr)$ and $\e_0=(\e/4(r-1))^r$.
$\calS$ constructed above is an \exprkernel\ in $\R^d$.
\end{lemma}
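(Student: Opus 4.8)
The plan is to chain the two facts already in hand. Lemma~\ref{lm:exprkernel} gives the sandwich $(1-\e/2)\Exp[T_r(\calP,u)]\le\Exp_{P\sim\calP}[T_r(\calE(P),u)]\le\Exp[T_r(\calP,u)]$, so it suffices to prove that $\calS$ is an \exprkernel\ of the random set $\calE(P)$ in the sharp form $\bigl|\Exp_{P\sim\calS}[T_r(P,u)]-\Exp_{P\sim\calP}[T_r(\calE(P),u)]\bigr|\le(\e/2)\Exp[T_r(\calP,u)]$ for every $u\in\calP^{\polar}$; adding the two inequalities then yields $\Exp_{P\sim\calS}[T_r(P,u)]\in(1\pm\e)\Exp[T_r(\calP,u)]$, which is Definition~\ref{def:fun}. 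Since $T_r(\cdot,\lambda u)=\lambda^{1/r}T_r(\cdot,u)$ for $\lambda>0$, I may assume $\|u\|=1$; then, as $v$ ranges over the points of $\calP$, the value $\innerprod{u}{v}^{1/r}$ takes finitely many values lying in $[m,M]$ with $m=\min_v\innerprod{u}{v}^{1/r}\ge0$ (using $u\in\calP^{\polar}$) and $M=\max_v\innerprod{u}{v}^{1/r}$, and every nonempty subset of a realization (in particular $\calE(P)\subseteq P$) has both its $\max$ and its $\min$ of $\innerprod{u}{\cdot}^{1/r}$ inside $[m,M]$.

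For the core estimate I would use the tail-integral identity $\Exp[X]=\int_0^\infty\Pr[X\ge t]\,\d t$, valid because all the quantities involved are nonnegative. Write $A(P)=\max_{v\in P}\innerprod{u}{v}^{1/r}$ and $B(P)=\min_{v\in P}\innerprod{u}{v}^{1/r}$, so $T_r(P,u)=A(P)-B(P)$. The integrand $\Pr_{P\sim\calS}[A(P)\ge t]-\Pr_{P\sim\calP}[A(\calE(P))\ge t]$ vanishes for $t<m$ and for $t>M$ by the previous paragraph, and on $[m,M]$ it has absolute value at most $\e_1/4$ by Lemma~\ref{lm:expr1}; hence $\bigl|\Exp_{P\sim\calS}[A(P)]-\Exp_{P\sim\calP}[A(\calE(P))]\bigr|\le(M-m)\e_1/4$, and the identical argument for the $\min$ gives the same bound, so $\bigl|\Exp_{P\sim\calS}[T_r(P,u)]-\Exp_{P\sim\calP}[T_r(\calE(P),u)]\bigr|\le(M-m)\e_1/2$. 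Empty realizations are the only routine wrinkle: under the $\beta$-assumption their total probability is at most $(1-\beta)^n$, which I would absorb into the constants (or rule out by taking $n$ large enough).

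It remains to convert this additive bound into the desired multiplicative one, and this is where the $\beta$-assumption enters and is the main obstacle. If $M=m$ then $T_r(\cdot,u)\equiv0$ and there is nothing to prove, so assume $M>m$ and let $v^\star,w^\star$ be the (distinct) points of $\calP$ attaining $M$ and $m$; each is present with probability at least $\beta$, so both are present with probability at least $\beta^2$, and on that event $A(P)=M$, $B(P)=m$, whence $T_r(P,u)=M-m$ and $\Exp[T_r(\calP,u)]\ge\beta^2(M-m)$. Plugging $\e_1=\e\beta^2$ into the previous step gives $\bigl|\Exp_{P\sim\calS}[T_r(P,u)]-\Exp_{P\sim\calP}[T_r(\calE(P),u)]\bigr|\le(M-m)\e\beta^2/2\le(\e/2)\Exp[T_r(\calP,u)]$, completing the chain. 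No fattening step (unlike Theorem~\ref{thm:beta}) is needed precisely because both the error and the lower bound on $\Exp[T_r(\calP,u)]$ are proportional to the full range $M-m$ of the projected fractional powers; without the $\beta$-assumption, $\Exp[T_r(\calP,u)]$ can be an arbitrarily small fraction of $M-m$ and no coreset of size independent of $n$ can exist. Finally, substituting $\e_1=\e\beta^2$ and $\e_0=(\e/4(r-1))^r$ (with $\beta$ and $r$ constant) into the sample count of Lemma~\ref{lm:expr1} gives $N=O\bigl(\e_1^{-2}\e_0^{-(d-1)/2}\log(1/\e_0)\bigr)=O\bigl(\e^{-(rd-r+4)/2}\log(1/\e)\bigr)$, matching the stated bound.
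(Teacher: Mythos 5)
Your proof is correct and follows essentially the same route as the paper's: relate $\calP$ to the random kernel $\calE(P)$ via Lemma~\ref{lm:exprkernel}, bound $\bigl|\Exp_{P\sim\calS}[T_r(P,u)]-\Exp_{P\sim\calP}[T_r(\calE(P),u)]\bigr|$ by $\e_1(A-B)/2$ using the tail-integral identity together with the pointwise $\pm\e_1/4$ bounds from Lemma~\ref{lm:expr1}, and then convert the additive error to a multiplicative one via the $\beta$-assumption lower bound $\Exp[T_r(\calP,u)]\geq\beta^2(A-B)$. (Your explicit remarks about the integrand vanishing outside $[m,M]$, about scale-invariance in $u$, and about empty realizations are minor clarifications not spelled out in the paper; the paper also contains a small typo in the displayed conclusion of the tail-integral step, writing $\Exp[T_r(\calP,u)]$ where $\Exp[T_r(\calS,u)]$ is meant, which your phrasing correctly avoids.)
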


\begin{proof}
Fix a direction $u\in \calP^{\polar}$.
Let $A=\max_{v\in \calP}\innerprod{u}{v}^{1/r}$, $B=\min_{v\in \calP}\innerprod{u}{v}^{1/r}$.
We observe that $B \leq \max_{v\in P}\innerprod{u}{v}^{1/r}\leq A$ for any realization $P\sim\calP$.
We also need the following basic fact about the expectation:
For a random variable $X$, if $\Pr[X\geq a]=1$, then
$\Exp[X]=\int_{b}^{\infty} \Pr[X\geq x]\d x + b$ for any $b\leq a$. Thus, we have that
\begin{align*}
\Exp_{P\sim \calP}[\max_{v\in \calE(P)}\innerprod{u}{v}^{1/r}] &
=\int_{B}^{A} \Pr_{P\sim \calP}[\max_{v\in \calE(P)}\innerprod{u}{v}^{1/r} \geq x]\d x +B \\
&\leq \int_{B}^{A} \Pr_{P\sim \calS}[\max_{v\in P}\innerprod{u}{v}^{1/r} \geq x] \d x + B+\epsilon_1 (A-B)/4\\
&= \Exp_{P\sim \calS}[\max_{v\in P}\innerprod{u}{v}^{1/r}]+\epsilon_1 (A-B)/4,
\end{align*}
where the first inequality is due to Lemma~\ref{lm:expr1}. Similarly, we can show the following two inequalities:
$$
\Exp_{P\sim \calS}[\max_{v\in P}\innerprod{u}{v}^{1/r}]\in \Exp_{P\sim \calP}[\max_{v\in \calE(P)}\innerprod{u}{v}^{1/r}]\pm \epsilon_1 (A-B)/4,
$$
$$
\Exp_{P\sim \calS}[\min_{v\in P}\innerprod{u}{v}^{1/r}]\in \Exp_{P\sim \calP}[\min_{v\in \calE(P)}\innerprod{u}{v}^{1/r}]\pm \epsilon_1 (A-B)/4.
$$
Recall that $T_r(P,u)=\max_{v\in P}\innerprod{u}{v}^{1/r}-\min_{v\in P}\innerprod{u}{v}^{1/r}$.
By the linearity of expectation, we conclude that
$$
\Exp[T_r(\calP,u)]\in \Exp_{P\sim \calP}[T_r(\calE(P),u)]\pm \epsilon_1 (A-B)/2.
$$
Combining Lemma~\ref{lm:exprkernel}, we have that $\Exp[T_r(\calS,u)]\in (1\pm \e/2)\Exp[T_r(\calP,u)]\pm \epsilon_1 (A-B)/2$.
By the $\beta$-assumption, we know that $\Exp[T_r(\calP,u)]\geq \beta^2(A-B)$.
Thus, $\epsilon_1 (A-B)/2\leq \frac{\e}{2}\Exp[T_r(\calP,u)]$,
and $\Exp[T_r(\calS,u)]\in (1\pm \e)\Exp[T_r(\calP,u)]$.
\end{proof}

\topic{Running time:}
In each sample, the size of a deterministic $\epsilon_0$-kernel $\calE_i$ is at most $O\bigl(\epsilon_0^{-(d-1)/2}\bigr)$. Note that constructing an $\e_0$-kernel can be solved in linear time.
We take $O\bigl(\e_1^{-2}\e_0^{-(d-1)/2}\log (1/\e_0)\bigr)$ samples in total. So the overall running time is $O\bigl(n\beta^{-4}\e^{-(rd-r+4)/2}\log  (1/\e)+\poly(1/\e)\bigr)
=\widetilde{O}\left(n\e^{-(rd-r+4)/2}\right)$.

Note that each $\e_0$-kernel contains $O(\e^{-r(d-1)/2})$ points.
We take $N=O\bigl(\e_1^{-2}\e_0^{-(d-1)/2}\log (1/\e_0)\bigr)$ independent samples.
So the total size of \exprkernel\ is $O(\beta^{-4}\e^{-(rd-r+2)}\log(1/\e))$.
In summary, we obtain the following theorem.

\begin{reptheorem}
{thm:exprconstruction}
(restated)
An \exprkernel\ of size $\tO(\e^{-(rd-r+2)})$
can be constructed in $\widetilde{O}\left(n\e^{-(rd-r+4)/2}\right)$ time
in the existential uncertainty model under the $\beta$-assumption.
In particular, the \exprkernel\ consists of $N=\tO(\e^{-(rd-r+4)/2})$ point sets, each occuring with probability $1/N$
and containing $O(\e^{-r(d-1)/2})$ deterministic points.
\end{reptheorem}

\section{Applications}
\label{sec:app}

In this section, we show that our coreset results for the directional width problem readily imply
several coreset results for other stochastic problems, just as in the deterministic setting.
We introduce these stochastic problems and briefly summarize our results below.

\subsection{Approximating the Extent of Uncertain Functions}
We first consider the problem of approximating the extent of a set $\calH$ of uncertain functions.
As before,
we consider both the existential model and the locational model of uncertain functions.
\begin{enumerate}
\item
In the existential model, each uncertain function $h$ is a function in $\mathbb{R}^d$ associated with a existential probability $p_f$,
which indicates the probability that $h$ presents in a random realization.
\item
In the locational model, each uncertain function $h$ is associated with a finite set $\{h_1,h_2,\ldots\}$ of deterministic functions in $\mathbb{R}^d$.
Each $h_i$ is associated with a probability value $p(h_i)$, such that $\sum_i p(h_i)=1$.
In a random realization, $h$ is independently  realized to some $h_i$, with probability $p(h_i)$.
\end{enumerate}
We use $\calH$ to denote the random instance, that is a random set of functions.
We use $h\in \calH$ to denote the event that the deterministic function $h$ is present in the instance.
For each point $x\in \mathbb{R}^d$, we let the random variable
$
\extent_\calH(x) = \max_{h\in \calH} h(x) - \min_{h\in \calH} h(x)
$
be the extent of $\calH$ at point $x$.
Suppose $\calS$ is another set of uncertain functions.
We say $\calS$ is the \expkernel\ for $\calH$ if
$
(1-\epsilon)\extent_\calH(x) \leq
\extent_\calS(x) \leq
\extent_\calH(x)
$ for any $x\in \R^d$.
We say $\calS$ is the \probkernel\ for $\calH$ if
$
\Pr_{S\sim \calS}\Bigl[\extent_S(x)\leq t\Bigr] \in
\Pr_{H\sim \calH}\Bigl[\extent_H(x)\leq (1\pm \epsilon)t\Bigr]\pm \phi.
$
for any $t\geq 0$ and any $x\in \R^d$.

Let us first focus on linear functions in $\mathbb{R}^d$.
Using the {\em duality transformation} that maps linear function
$y=a_1x_1+\ldots+a_dx_d+a_{d+1}$ to the point
$(a_1,\ldots,a_{d+1})\in \mathbb{R}^{d+1}$, we can reduce the extent problem
to the directional width problem in $\mathbb{R}^{d+1}$.
Let $\calH$ be a set of uncertain linear functions (under either existential or locational model) in $\R^d$ for constant $d$.
From Theorem~\ref{thm:constructM}
and Corollary~\ref{cor:existence},
we can construct a set $S$ of $O(n^{2d})$ deterministic linear functions in $\R^d$, such that
$\extent_S(x) =\E[\extent_\calH(x)]$ for any $x\in \R^d$.
Moreover, for any $\epsilon>0$, there exists an \expkernel\ of size $O(\epsilon^{-d/2})$
and an \probkernel\ of size $\tO(\perror^{-2}\eps^{-d})$.
Using the standard linearization technique \cite{agarwal2004approximating}, we can obtain the following generalization for uncertain polynomials.

\begin{theorem}
\label{thm:uncertainpoly}
Let $\calH$ be a family of uncertain polynomials in $\R^d$ (under either existential or locational model) that
admits linearization of dimension $k$.
We can construct a set $M$ of $O(n^{2k})$ deterministic polynomials, such that
$\extent_M(x) =\E[\extent_\calH(x)]$ for any $x\in \R^d$.
Moreover, for any $\epsilon>0$, there exists an \expkernel\ of size $O(\epsilon^{-k/2})$
and an \probkernel\ of size $\min\{\tO(\perror^{-2}\max\{\lambda^2,\lambda^4\}), \tO(\eps^{-k}\perror^{-2})\}$.  Here $\lambda=\sum_{h\in \calH}(-\ln (1-p_h))$.
\end{theorem}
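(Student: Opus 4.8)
The plan is to reduce the uncertain‑polynomial extent problem, in dimension $d$, to the uncertain‑point directional‑width problem in dimension $k+1$, exactly as in the deterministic linearization argument of~\cite{agarwal2004approximating}, and then to invoke the three results already established for uncertain points: Theorem~\ref{thm:constructM} (the exact polytope $M$), Corollary~\ref{cor:existence} (\expkernel), and Theorem~\ref{thm:probconstruction} (\probkernel).

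First I would unpack ``admits a linearization of dimension $k$'': there is a map $\phi=(\varphi_1,\dots,\varphi_k)\colon\R^d\to\R^k$ such that every polynomial $h$ of the family can be written $h(x)=a_0+\sum_{i=1}^{k} a_i\varphi_i(x)$ for a coefficient vector $c_h=(a_0,a_1,\dots,a_k)\in\R^{k+1}$ depending only on $h$. Setting $z=\phi(x)$, each $h$ becomes the affine function $\hat h(z)=\innerprod{c_h}{(1,z)}$ on $\R^k$, so $\extent_\calH(x)=\extent_{\hat\calH}(\phi(x))$, where $\hat\calH$ is the uncertain family of these affine functions, inheriting the existential probabilities (resp.\ the location distributions) of $\calH$. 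The standard duality sending $\hat h$ to the point $c_h\in\R^{k+1}$ turns $\hat\calH$ into a set $\hat\calP$ of $n$ uncertain points in $\R^{k+1}$, and for every fixed realization one checks at once that $\extent_{\hat H}(z)$ equals the directional width of the corresponding deterministic point set in direction $(1,z)\in\R^{k+1}$. Hence $\E[\extent_\calH(x)]=\dw(\hat\calP,(1,\phi(x)))$ in the notation of Section~\ref{sec:expkernel}, and, for each $x$, the cumulative distribution of $\extent_\calH(x)$ coincides with that of $\dw(\,\cdot\,,(1,\phi(x)))$ over realizations of $\hat\calP$.

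Next I would push the three uncertain‑point results through this correspondence with the substitution $d\mapsto k+1$. For the exact statement, Theorem~\ref{thm:constructM} applied to $\hat\calP$ in $\R^{k+1}$ gives a polytope $M'$ with $O(n^{2(k+1)-2})=O(n^{2k})$ vertices and $\dw(M',u)=\dw(\hat\calP,u)$ for all $u$; dualizing each vertex of $M'$ to an affine function and precomposing with $\phi$ yields a set $M$ of $O(n^{2k})$ deterministic polynomials with $\extent_M(x)=\dw(M',(1,\phi(x)))=\E[\extent_\calH(x)]$ for every $x\in\R^d$. For the \expkernel, Corollary~\ref{cor:existence} in $\R^{k+1}$ yields an $\eps$-kernel $Q$ of $\hat\calP$ of size $O(1/\eps^{((k+1)-1)/2})=O(1/\eps^{k/2})$; since $(1-\eps)\dw(\hat\calP,u)\le\dw(Q,u)\le\dw(\hat\calP,u)$ holds for \emph{all} directions $u$, in particular for every $u=(1,\phi(x))$, dualizing $Q$ and precomposing with $\phi$ gives an \expkernel\ $\calS$ for $\calH$ of the claimed size. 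For the \probkernel, Theorem~\ref{thm:probconstruction} in $\R^{k+1}$ yields $\hat\calS$ of size $\max\{\tO(1/\perror^4),\tO(1/\eps^{(k+1)-1}\perror^2)\}=\max\{\tO(1/\perror^4),\tO(1/\eps^{k}\perror^2)\}$ with $\Pr[\dw(\hat\calS,u)\le t]\in\Pr[\dw(\hat\calP,u)\le(1\pm\eps)t]\pm\perror$ for all $u$ and $t$; dualizing and precomposing with $\phi$ preserves the (possibly correlated) joint distribution of the coreset and produces a \probkernel\ $\calS$ for $\calH$ of the same size. All three reductions are identical in the existential and the locational models, since the probability structure is transported unchanged and the cited theorems hold in both.

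I do not expect a genuinely hard step here; the work is essentially the bookkeeping of the two transformations. The one point needing care — and the main minor obstacle — is checking that restricting to the affine slice of directions $\{(1,z):z\in\R^k\}\subset\R^{k+1}$ loses nothing. In the direction we need it this is immediate, since our $\R^{k+1}$ kernels are valid for \emph{all} directions, hence for that slice; conversely, because $\extent_\calH(x)$ depends only on the values $h(x)=\hat h(\phi(x))$, precomposing the kernel's affine functions with $\phi$ exactly reproduces the approximation guarantee at every $x\in\R^d$, and no direction off the slice is ever queried. The stated size bounds then follow by substituting $d\mapsto k+1$ into Theorem~\ref{thm:constructM}, Corollary~\ref{cor:existence} and Theorem~\ref{thm:probconstruction}.
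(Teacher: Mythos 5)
Your reduction—linearize $\calH$ to affine functions on $\R^k$, dualize those to uncertain points in $\R^{k+1}$, observe that $\extent_\calH(x)$ becomes the directional width of the dual point set in direction $(1,\phi(x))$, and then pull back the uncertain‑point coreset results with the substitution $d\mapsto k+1$—is precisely the argument the paper uses for this theorem, and the exponent bookkeeping for $M$ (Theorem~\ref{thm:constructM} gives $O(n^{2(k+1)-2})=O(n^{2k})$) and for the \expkernel\ (Corollary~\ref{cor:existence} gives $O(1/\eps^{k/2})$) is correct.

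There is, however, one concrete slip in the \probkernel\ part. You invoke Theorem~\ref{thm:probconstruction} and write down the size bound $\max\{\tO(1/\perror^4),\tO(1/\eps^{k}\perror^2)\}$, but Theorem~\ref{thm:probconstruction} gives $\tO\bigl(\perror^{-2}\e^{-3(d-1)/2}\bigr)$, which under $d\mapsto k+1$ is $\tO\bigl(\perror^{-2}\eps^{-3k/2}\bigr)$, not the bound you stated. The bound you wrote is exactly what Theorem~\ref{thm:probconstructionexsit} gives (with $d\mapsto k+1$), but that theorem is stated only for the \emph{existential} model. So either you should cite Theorem~\ref{thm:probconstructionexsit} and restrict the \probkernel\ claim to the existential case, or cite Theorem~\ref{thm:probconstruction} and accept the weaker exponent $\eps^{-3k/2}$ valid for both models. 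In particular, your closing assertion that ``the cited theorems hold in both'' models is not accurate for the \probkernel\ piece as you have arranged the citations. The rest of the argument is sound and matches the paper.
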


Now, we consider functions of the form $u(x)=p(x)^{1/r}$ where $p(x)$ is a polynomial and $r$ is a positive integer.
We call such a function a {\em fractional polynomial}.
We still use $\calH$ to denote the random set of fractional polynomials.
Let $\calH^{\polar}\subseteq \R^d$ be the set of points such that for any points $x\in \calH^{\polar}$ and
any function $u\in \calH$, we have $u(x)\geq 0$.
For each point $x\in \calH^{\polar}$, we let the random variable
$
\extent_{r,\calH}(x) = \max_{h\in \calH} h(x)^{1/r} - \min_{h\in \calH} h(x)^{1/r}.
$
We say another random set $\calS$ of functions  is the \exprkernel\ for $\calH$ if
$
(1-\epsilon)\extent_{r,\calH}(x) \leq
\extent_{r,\calS}(x) \leq
\extent_{r,\calH}(x)
$
for any $x\in \calH^{\polar}$.
By the duality transformation and Theorem~\ref{thm:exprconstruction}, we can obtain the following result.

\begin{theorem}
\label{thm:fpoweruncertainpoly}
Let $\calH$ be a family of uncertain fractional polynomials in $\R^d$
in the existential uncertainty model under the $\beta$-assumption.
Further assume that each polynomial admits a linearization of dimension $k$.
For any $\epsilon>0$, there exists an \exprkernel\ of size $\tO(\epsilon^{-(rk-r+2)})$.
Furthermore, the \exprkernel\ consists of
$N=O\bigl(\e^{-(rk-r+4)/2}\bigr)$ sets,
each occurring with probability $1/N$ and containing $O\bigl(\e^{-r(k-1)/2}\bigr)$ deterministic
fractional polynomials.
\end{theorem}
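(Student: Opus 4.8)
The plan is to lift the problem to uncertain points in $\R^k$ via the standard linearization/duality transformation, apply Theorem~\ref{thm:exprconstruction} there, and pull the resulting coreset back to fractional polynomials. Write each uncertain fractional polynomial $h\in\calH$ as $h=p_h^{1/r}$ for an uncertain polynomial $p_h$ carrying the same existential probability. A linearization of dimension $k$ provides fixed functions $\varphi_1,\dots,\varphi_k\colon\R^d\to\R$ and, for each $h$, a coefficient vector $v_h\in\R^k$ with $p_h(x)=\sum_{j=1}^k (v_h)_j\varphi_j(x)=\innerprod{u(x)}{v_h}$, where $u(x):=(\varphi_1(x),\dots,\varphi_k(x))$. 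This converts $\calH$ into a set $\mathcal V=\{v_h:h\in\calH\}$ of uncertain points in $\R^k$ in the existential model, with unchanged existential probabilities, so the $\beta$-assumption persists. Moreover $h(x)\ge 0\iff p_h(x)\ge 0\iff\innerprod{u(x)}{v_h}\ge 0$, so $x\in\calH^{\polar}$ implies $u(x)\in\mathcal V^{\polar}$, and for every such $x$ and every realization
\[
\extent_{r,\calH}(x)=\max_{h}\innerprod{u(x)}{v_h}^{1/r}-\min_{h}\innerprod{u(x)}{v_h}^{1/r}=T_r(\{v_h\},u(x)),
\]
whence $\Exp[\extent_{r,\calH}(x)]=\Exp[T_r(\mathcal V,u(x))]$.

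Next I would invoke Theorem~\ref{thm:exprconstruction} on $\mathcal V$ (same $r$, ambient dimension $k$), obtaining an \exprkernel\ $\calS_{\mathcal V}$ of total size $\tO(\e^{-(rk-r+2)})$ that is a distribution over $N=\tO(\e^{-(rk-r+4)/2})$ deterministic point sets $\calE_1,\dots,\calE_N\subseteq\R^k$, each occurring with probability $1/N$ and each of size $O(\e^{-r(k-1)/2})$. The step I would lean on is that, in that construction, each $\calE_i$ is a deterministic $\e_0$-kernel of a realization of $\mathcal V$ and hence a \emph{subset} of that realization; in particular every point of $\calE_i$ is some $v_h$. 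I then let $\calS$ be the random set of fractional polynomials equal, with probability $1/N$, to $\{h:v_h\in\calE_i\}$ for each $i$; no inversion of the (possibly non-injective) map $h\mapsto v_h$ is needed.

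Finally I would check correctness: for any $x\in\calH^{\polar}$ the identification above gives $\extent_{r,\calS}(x)=T_r(\calS_{\mathcal V},u(x))$ as random variables, so
\[
\Exp[\extent_{r,\calS}(x)]=\Exp[T_r(\calS_{\mathcal V},u(x))]\in(1\pm\e)\,\Exp[T_r(\mathcal V,u(x))]=(1\pm\e)\,\Exp[\extent_{r,\calH}(x)],
\]
using $u(x)\in\mathcal V^{\polar}$ and the \exprkernel\ guarantee for $\mathcal V$. This yields an \exprkernel\ for $\calH$ consisting of $N=O(\e^{-(rk-r+4)/2})$ sets of $O(\e^{-r(k-1)/2})$ deterministic fractional polynomials each, of total size $\tO(\e^{-(rk-r+2)})$, as claimed. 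The only delicate part is the bookkeeping of the lift — checking that the $\beta$-assumption and the restriction to polar directions transfer faithfully to $\mathcal V$, and that the coreset points produced by Theorem~\ref{thm:exprconstruction}, being drawn from realizations, pull back to genuine members of the original family of fractional polynomials; the rest is an immediate translation.
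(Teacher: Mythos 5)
Your proof takes exactly the route the paper intends: lift via the linearization to uncertain points, apply Theorem~\ref{thm:exprconstruction} in the lifted space, and pull the (subset-valued) kernel sets back to fractional polynomials; the checks you list (the $\beta$-assumption transfers, polar directions transfer, the kernel points are genuine members of $\calH$) are precisely the ones that need verifying. One bookkeeping caveat deserves mention. You posit a homogeneous linearization $p_h(x)=\sum_{j=1}^k (v_h)_j\varphi_j(x)$ with $v_h\in\R^k$, but the convention from \cite{agarwal2004approximating} that the paper uses elsewhere places the coefficient vectors in $\R^{k+1}$ because of a data-dependent constant term $\psi_0(h)$ — this is why Theorem~\ref{thm:uncertainpoly} reports an \expkernel\ of size $O(1/\e^{k/2})$ rather than $O(1/\e^{(k-1)/2})$, and why $h_v(x)=\|x\|^2-2\innerprod{x}{v}+\|v\|^2$ is assigned linearization dimension $d+1$. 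For \expkernel s the extra coordinate is invisible because width is translation-invariant, but $T_r$ applies a pointwise $1/r$-th power and is \emph{not} translation-invariant, so $\psi_0$ cannot be dropped; running Theorem~\ref{thm:exprconstruction} in $\R^{k+1}$ gives $N=\tO(\e^{-(rk+4)/2})$ and sets of size $O(\e^{-rk/2})$, a factor $\e^{-r/2}$ above the theorem statement. The paper's own enclosing-ball application ($r=2$, $k=d+1$, yielding $N=\tO(\e^{-(d+3)})$ and set size $O(\e^{-(d+1)})$) matches the $\R^{k+1}$ bounds, not the exponents displayed in Theorem~\ref{thm:fpoweruncertainpoly}, so this looks like an off-by-one in the statement rather than a flaw in your reasoning; your exponents are correct exactly under the homogeneous reading of ``linearization of dimension $k$'' that you adopt.
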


\subsection{Stochastic Moving Points}

We can extend our stochastic models to moving points.
In the existential model, each point $v$ is present with probability $p_v$ and
follows a trajectory $v(t)$ in $\R^d$ when present ($v(t)$ is the position of $v$ at time $t$).
In the locational model, each point $v$ is associated with a distribution of  trajectories (the support size is finite)
and the actual trajectory of $v$ is a random sample for the distribution.
Such uncertain trajectory models have been used in several applications in spatial databases \cite{zheng2011probabilistic}.
For ease of exposition, we assume the existential model in the following.
Suppose each trajectory is a polynomial of $t$ with degree at most $r$.
For each point $v$, any direction $u$ and time $t$, define the polynomial
$
f_v(u, t) =\innerprod{v(t)}{u}
$
and let $\calH$ include $f_v$ with probability $p_v$.
For a set $\calP$ of points, the directional width at time $t$ is
$
\extent_\calH (u,t) = \max_{v\in \calP} f_v(u, t) -\min_{v\in \calP} f_v(u, t).
$
Each polynomial $f_v$ admits a linearization of dimension $k=(r+1)d-1$.
Using Theorem~\ref{thm:uncertainpoly},
we can see that there is a set $M$ of $O(n^{2k})$ deterministic moving points, such that
the directional width of $M$ in any direction $u$
is the same as the expected directional width of $\calP$ in direction $u$.
Moreover, for any $\epsilon>0$, there exists an \expkernel\
(which consists of only deterministic moving points) of size $O(\epsilon^{-(k-1)/2})$
and an \probkernel\
(which consists of both deterministic and stochastic moving points) of size $\tO(\eps^{-k}\perror^{-2})$.

\subsection{Shape Fitting Problems}

Theorem~\ref{thm:uncertainpoly} can be also applied to some stochastic variants of certain shape fitting problems.
We first consider the following variant of the minimum enclosing ball problem over stochastic points.
We are given a set $\calP$ of stochastic points (under either existential or locational model),
find the center point $c$ such that
$\E[\max_{v\in \calP} \| v-c \|^2]$ is minimized.
It is not hard to see that the problem is equivalent to minimizing the expected area of the
enclosing ball in $\R^2$.
For ease of exposition, we assume the existential model where $v$ is present with probability $p_v$.
For each point $v\in P$, define the polynomial
$
h_v(x) = \|x\|^2 - 2\innerprod{x}{v} + \|v\|^2,
$
which admits a linearization of dimension $d+1$ \cite{agarwal2004approximating}.
Let $\calH$ be the family of uncertain polynomials $\{h_v\}_{v\in \calP}$
($h_v$ exists with probability $p_v$).
We can see that for any $x\in \R^d$,
$\max_{v\in \calP} \|x-v\|^2 = \max_{h_v\in \calH} h_v(x)$.
Using Theorem~\ref{thm:uncertainpoly},
\footnote{
We can see from the proof that all results that hold for width/extent also hold for support function/maximum.
} we can see
that there is a set $M$ of $O(n^{2d+2})$ deterministic polynomials such that
$\max_{h\in M} h(x) = \E[\max_{v\in \calP} \|x-v\|^2]$ for any $x\in \R^d$
and
a set $S$ of $O(\epsilon^{-(d+1)/2})$ deterministic polynomials such that
$(1-\epsilon) \E[\max_{v\in \calP} \|x-v\|^2]\leq \max_{h\in S} h(x) \leq \E[\max_{v\in \calP} \|x-v\|^2]$ for any $x\in \R^d$.
We can store the set $S$ instead of the original point set in order to answer the following queries:
given a point $v$, return the expected length of the furthest point from $v$.
The problem of finding the optimal center $c$ can be also carried out over $S$,
which can be done in $O(\epsilon^{-O(d^2)})$ time:
We can decompose the arrangement of $n$ semialgebraic surfaces in $\R^d$
into $O(n^{O(d+k)})$ cells of constant description complexity, where $k$ is the linearization dimension
(see e.g., \cite{agarwal2000arrangements}). By enumerating all those cells in the arrangement of $S$,
we know which polynomials lie in the upper envelopes, and we can compute the minimum value in
each such cell in constant time when $d$ is constant.

The above argument can also be applied to the following variant of the spherical shell
for stochastic points.
We are given a set $\calP$ of stochastic points (under either existential or locational model).
Our objective is to find the center point $c$ such that
$\E[\obj(c)]=\E[\max_{v\in P} \| v-c \|^2-\min_{v\in P} \| v-c \|^2]$ is minimized.
The problem is equivalent to minimizing the expected area of the
enclosing annulus in $\R^2$.
The objective can be represented as a polynomial of linearization dimension $k=d+1$.
Proceeding as for the enclosing balls, we can show
there is a set $S$ of $O(\epsilon^{-(k-1)/2})$ deterministic polynomials such that
$(1-\epsilon) \E[\obj(c)]\leq \extent_S(x) \leq \E[\obj(c)]$ for any $x\in \R^d$.
We would like to make a few remarks here.
\begin{enumerate}
\item
Let us take the minimum enclosing ball for example.
If we examine the construction of set $S$,
each polynomial $h\in S$ may {\em not} be of the form
$h(x) = \|x\|^2 - 2\innerprod{x}{v} + \|v\|^2$, therefore does not
translate back to a minimum enclosing ball problem over deterministic points.
\item
Another natural objective function for the minimum enclosing ball and
the spherical shell problem
would be
the expected radius $\E[\max_{v\in P} \dist(v,c) ]$ and
the expected shell width $\E[\max_{v\in P} \dist(v,c)-\min_{v\in P} \dist(v,c) ]$.
However, due to the fractional powers (square roots) in the objectives,
simply using an \expkernel\ does not work.
This is unlike the deterministic setting.
\footnote{
In particular, there is no stochastic analogue of Lemma 4.6 in \cite{agarwal2004approximating}.
}
We leave the problem of finding small coresets for the spherical shell problem as an interesting open problem.
However, under the $\beta$-assumption,
we can use \exprkernel s to handle such fractional powers, as in the next subsection.
\end{enumerate}

\begin{reptheorem}{thm:squarecenter} (restated)
Suppose $\calP$ is a set of $n$ independent stochastic points in $\R^d$ under either existential or locational
uncertainty model.
There are linear time approximation schemes
for the following problems:
(1)  finding a center point $c$ to minimize $\E[\max_{v\in \calP} \| v-c \|^2]$;
(2) finding a center point $c$ to minimize $\E[\obj(c)]=\E[\max_{v\in P} \| v-c \|^2-\min_{v\in P} \| v-c \|^2]$.
Note that when $d=2$ the above two problems correspond to minimizing the expected areas of the enclosing ball and the enclosing annulus, respectively.
\end{reptheorem}

\subsection{Shape Fitting Problems (Under the $\beta$-assumption)}
\label{subsec:shapefitting}
In this subsection, we consider
several shape fitting problems in the existential model {\em under the $\beta$-assumption}.
We show how to use Theorem~\ref{thm:fpoweruncertainpoly}
to obtain linear time approximation schemes for those problems.

\begin{enumerate}
\item
(Minimum spherical shell)
We first consider the minimum spherical shell problem.
Given a set $\calP$ of stochastic points (under the $\beta$-assumption),
our goal is to find the center point $c$ such that $ \E[\max_{v\in P} \| v-c \|-\min_{v\in P} \| v-c \|]$ is minimized.
For each point $v\in P$, let
$
h_v(x) = \|x\|^2 - 2\innerprod{x}{v} + \|v\|^2,
$
which admits a linearization of dimension $d+1$.
It is not hard to see that $\E[\max_{v\in P} \| v-c \|]=\E[\max_{v\in P} \sqrt{h_v(c)}]$ and $\E[\min_{v\in P} \| v-c \|]=\E[\min_{v\in P} \sqrt{h_v(c)}]$.
Using Theorem~\ref{thm:fpoweruncertainpoly},
we can see that there are
$N=\tO\bigl(\e^{-(d+3)}\bigr)$ sets $S_i$, each containing
$O\bigl(\e^{-(d+1)}\bigr)$ fractional polynomial $\sqrt{h_v}$s such that for all $x\in \R^d$,
\begin{align}
\label{eq:minenclosing}
\frac{1}{N}\sum_{i\in [N]}(\max_{S_i}\sqrt{h_v(x)}-\min_{S_i}\sqrt{h_v(x)})\in (1\pm \e)(\E[\max_{v\in P} \| v-x \|]-\E[\min_{v\in P} \| v-x \|]).
\end{align}
Note that our \exprkernel\ satisfies the subset constraint.
Hence, each function $\sqrt{h_v}$ corresponds to an original point in $\calP$.
So, we can store $N$ point sets $P_i\subseteq \calP$, with $|P_i|=O\bigl(\e^{-d}\bigr)$ as the coreset for the original point set.
By \eqref{eq:minenclosing}, an optimal solution for the coreset is an $(1+\epsilon)$-approximation for the original problem.

Now, we briefly sketch how to compute the optimal solution for the coreset.
Consider all points in $\cup_i P_i$.
Consider the arrangement of $O\bigl(\e^{-O(d)}\bigr)$ hyperplanes, each bisecting a pair of points in $\cup_i P_i$.
For each cell $C$ of the arrangement,
for any point $v\in C$, the ordering of all points in $\cup_i P_i$ is fixed.
We then enumerate all those cells in the arrangement
and try to find the optimal center in each cell.
Fix a cell $C$.
For any point set $P_i$, we know which point is the furthest one and which point is the closest one from points in $C_0$.
Say they are $v_i=\arg\max_{v\in P_i}\|v-x\|$ and $v'_i=\arg\min_{v\in P_i}\|v-x\|$.
Hence, our problem can be formulated as the following optimization problem:
$$
\min_x \frac{1}{N}\sum_i (d_i-d'_i), \ \ \text{s.t.}\quad d_i^2=\|v_i-x\|^2,d'^2_i=\|v'_i-x\|^2,d_i,d'_i\geq 0,\forall i\in[N]; x\in C_0.
$$
The polynomial system has a constant number of variables and constraints, hence can be solved in constant time.
More specifically, we can introduce a new variable $t$ and let $t=\frac{1}{N}\sum_i (d_i-d'_i)$.
All polynomial constraints define a semi-algebraic set.
By using constructive version of Tarski-Seidenberg theorem, we can project out all variables except $t$ and the resulting set is still a semi-algebraic set (which would be a finite collection of points and intervals in $\R^1$) (See e.g.,\cite{basu2011algorithms}).

\item (Minimum enclosing cylinder, Minimum cylindrical shell)
Let $\calP$ be a set of stochastic points in the existential uncertainty model under the $\beta$-assumption.
Let $\dist(\ell,v)$ denote the distance between a point $v\in \R^d$ and a line $\ell\subset \R^d$.
The goal for the minimum enclosing cylinder problem is to find a line $\ell$ such that
$\E[\max_{v\in \calP}\dist(\ell,v)]$ is minimized,
while that for the minimum cylindrical shell problem is to minimize $\E[\max_{v\in \calP}\dist(\ell,v)-\min_{v\in \calP}\dist(\ell,v)]$.
The algorithms for both problems are almost the same
and we only sketch the one for the minimum enclosing cylinder problem.

We follow the approach in \cite{agarwal2004approximating}.
We represent a line $\ell\in \R^d$ by a $(2d-1)$-tuple
$(x_1,\ldots,x_{2d-1})\in \R^{2d-1}$:
$
\ell= \{p+tq\mid t\in \R\},
$
where $p=(x_1,\cdots,x_{d-1},0)$ is the intersection point of $\ell$ with the hyperplane $x_d=0$
and $q=(x_d,\ldots,x_{2d-1}),\|q\|^2=1$ is the orientation of $\ell$.
Then for any point $v\in \R^d$, we have that
$$
\dist(\ell,v)=\|(p-v)-\innerprod{p-v}{q}q\|,
$$
where the polynomial $\dist^2(\ell,v)$ admits a linearization of dimension $O(d^2)$.
Now, proceeding as for the minimum enclosing ball problem
and using Theorem~\ref{thm:fpoweruncertainpoly},
we can obtain a coreset $\calS$ consisting
$N=O\bigl(\e^{-O(d^2)}\bigr)$ deterministic point sets $P_i\subseteq \calP$.

We briefly sketch how to obtain the optimal solution for the coreset.
We can also decompose $\R^{2d-1}$ (a point $x$ in the space with $\|(x_d,\ldots,x_{2d-1})\|=1$ represents a line in $\R^d$)
into $O\bigl(\e^{-O(d^2)}\bigr)$ semi-algebraic cells such that
for each cell, the ordering of the points in $\calS$ (by their distances to a line in the cell) is fixed.
Note that such a cell is a semi-algebraic cell.
For a cell $C$, assume that $v_i=\arg\max_{v\in P_i}\dist(\ell,v_i)$ for all $i\in [N]$,
where $\ell$ is an arbitrary line in $C$.
We can formulate the problem as the following polynomial system:
$$
\min_l \frac{1}{N}\sum_i d_i, \ \ \text{s.t.} \quad  d_i^2=\dist^2(\ell,v_i),d_i\geq 0,\forall i\in[N]; \ell=(p,q)\in C_0, \|q\|^2=1.
$$
Again the polynomial system has a constant number of variables and constraints.
Thus, we can compute the optimum in constant time.
\end{enumerate}

\begin{reptheorem}{thm:shapefittingbeta} (restated)
Suppose $\calP$ is a set of $n$ independent stochastic points in $\R^d$,
each appearing with probability at least $\beta$, for some fixed constant $\beta>0$.
There are linear time approximation schemes for minimizing
the expected radius (or width) for the minimum spherical shell, minimum enclosing cylinder, minimum cylindrical shell problems over $\calP$.
\end{reptheorem}

\section{Concluding Remarks}
We initiate the study of constructing coresets for various stochastic geometric extent problems.
Our work opens up several avenues for further research.
One obvious further direction is to construct constant sized \exprkernel s efficiently without the $\beta$-assumption.
A promising approach is the importance sampling, based on the sensitivity of functions, developed in \cite{langberg2010}.
Note that such a construction would lead to efficient approximation schemes for several shape fitting problems
without the $\beta$-assumption.
While the size bounds we obtained for \expkernel s are tight (match the deterministic setting),
the bounds for \probkernel s may not.
Thus obtaining better bounds for \probkernel\
is an interesting open problem.
Another interesting and important further direction is to extend the concept of coresets
to other problems (e.g., clustering) over stochastic datasets.

\vspace{0.2cm}
\noindent
{\bf Acknowledgement:}
Jian Li would like to thank
the Simons Institute for the Theory of Computing for providing a wonderful
research enviorment,
where part of this research was carried out.
Jian Li would also like to thank Leonard Schulman and Timothy Chan for stimulating discussions.

\bibliographystyle{plain}
\bibliography{core}

\begin{thebibliography}{10}

\bibitem{ADP13}
A.~Abdullah, S.~Daruki, and J.M. Phillips.
\newblock Range counting coresets for uncertain data.
\newblock In {\em Proceedings 29th ACM Syposium on Computational Geometry},
  pages 223--232, 2013.

\bibitem{ackermann2010clustering}
Marcel~R Ackermann, Johannes Bl{\"o}mer, and Christian Sohler.
\newblock Clustering for metric and nonmetric distance measures.
\newblock {\em ACM Transactions on Algorithms (TALG)}, 6(4):59, 2010.

\bibitem{afshani2011approximate}
P.~Afshani, P.K. Agarwal, L.~Arge, K.G. Larsen, and J.M. Phillips.
\newblock {(Approximate)} uncertain skylines.
\newblock In {\em Proceedings of the 14th International Conference on Database
  Theory}, pages 186--196, 2011.

\bibitem{agarwal2012range}
P.K. Agarwal, S.-W. Cheng, and K.~Yi.
\newblock Range searching on uncertain data.
\newblock {\em ACM Transactions on Algorithms (TALG)}, 8(4):43, 2012.

\bibitem{agarwal2012nearest}
P.K. Agarwal, A.~Efrat, S.~Sankararaman, and W.~Zhang.
\newblock Nearest-neighbor searching under uncertainty.
\newblock In {\em Proceedings of the 31st Symposium on Principles of Database
  Systems}, pages 225--236, 2012.

\bibitem{AHPSYZ14}
P.K. Agarwal, S.~Har-Peled, S.~Suri, H.~Y{\i}ld{\i}z, and W.~Zhang.
\newblock Convex hulls under uncertainty.
\newblock In {\em Proceedings of the 22nd Annual European Symposium on
  Algorithms}, pages 37--48, 2014.

\bibitem{agarwal2004approximating}
P.K. Agarwal, S.~Har-Peled, and K.R. Varadarajan.
\newblock Approximating extent measures of points.
\newblock {\em Journal of the ACM}, 51(4):606--635, 2004.

\bibitem{agarwal2005geometric}
P.K. Agarwal, S.~Har-Peled, and K.R. Varadarajan.
\newblock Geometric approximation via coresets.
\newblock {\em Combinatorial and Computational Geometry}, 52:1--30, 2005.

\bibitem{agarwal2008robust}
P.K. Agarwal, S.~Har-Peled, and H.~Yu.
\newblock Robust shape fitting via peeling and grating coresets.
\newblock {\em Discrete \& Computational Geometry}, 39(1-3):38--58, 2008.

\bibitem{agarwal1992farthest}
P.K. Agarwal, J.~Matou{\v{s}}ek, and S.~Suri.
\newblock Farthest neighbors, maximum spanning trees and related problems in
  higher dimensions.
\newblock {\em Computational Geometry - Theory and Applications},
  1(4):189--201, 1992.

\bibitem{agarwal2000arrangements}
P.K. Agarwal and M.~Sharir.
\newblock {\em {\em Arrangements and their applications.} Handbook of
  Computational Geometry, {\em J. Sack and J. Urrutia (eds.)}}, pages 49--119.
\newblock Elsevier, Amsterdam, The Netherlands, 2000.

\bibitem{anthony2009neural}
Martin Anthony and Peter~L Bartlett.
\newblock {\em Neural network learning: Theoretical foundations}.
\newblock cambridge university press, 2009.

\bibitem{bs-ads-04}
D.~Bandyopadhyay and J.~Snoeyink.
\newblock Almost-{D}elaunay simplices: Nearest neighbor relations for imprecise
  points.
\newblock In {\em Proceedings of the 15th ACM-SIAM Symposium on Discrete
  Algorithms}, pages 410--419, 2004.

\bibitem{barequet2001efficiently}
G.~Barequet and S.~Har-Peled.
\newblock Efficiently approximating the minimum-volume bounding box of a point
  set in three dimensions.
\newblock {\em Journal of Algorithms}, 38(1):91--109, 2001.

\bibitem{basu2011algorithms}
Saugata Basu, Richard Pollack, and M~Roy.
\newblock Algorithms in real algebraic geometry.
\newblock {\em AMC}, 10:12, 2011.

\bibitem{chan2000approximating}
T.M. Chan.
\newblock Approximating the diameter, width, smallest enclosing cylinder, and
  minimum-width annulus.
\newblock In {\em Proceedings of the 16th Annual Son Computational Geometry},
  pages 300--309, 2000.

\bibitem{Cha06}
T.M. Chan.
\newblock Faster core-set constructions and data-stream algorithms in fixed
  dimensions.
\newblock {\em Computational Geometry: Theory and Applications}, 35:20--35,
  2006.

\bibitem{chen2009coresets}
K.~Chen.
\newblock On coresets for k-median and k-means clustering in metric and
  euclidean spaces and their applications.
\newblock {\em SIAM Journal on Computing}, 39(3):923--947, 2009.

\bibitem{cheng2008cleaning}
R.~Cheng, J.~Chen, and X.~Xie.
\newblock Cleaning uncertain data with quality guarantees.
\newblock {\em Proceedings of the VLDB Endowment}, 1(1):722--735, 2008.

\bibitem{cormode2008approximation}
G.~Cormode and A.~McGregor.
\newblock Approximation algorithms for clustering uncertain data.
\newblock In {\em Proceedings of the 27th Symposium on Principles of Database
  Systems}, pages 191--200, 2008.

\bibitem{CM03}
G.~Cormode and S.~Muthukrishnan.
\newblock Radial histograms for spatial streams.
\newblock Technical Report 2003-11, Center for Discrete Mathematics and
  Computer Science (DIMACS), 2003.

\bibitem{conf/cidr/DeshpandeGM05}
A.~Deshpande, C.~Guestrin, and S.~Madden.
\newblock Using probabilistic models for data management in acquisitional
  environments.
\newblock In {\em CIDR}, pages 317--328, 2005.

\bibitem{DBLP:conf/vldb/DeshpandeGMHH04}
A.~Deshpande, C.~Guestrin, S.~Madden, J.M. Hellerstein, and W.~Hong.
\newblock Model-driven data acquisition in sensor networks.
\newblock In {\em VLDB}, pages 588--599, 2004.

\bibitem{deshpande2006matrix}
A.~Deshpande, L.~Rademacher, S.~Vempala, and G.~Wang.
\newblock Matrix approximation and projective clustering via volume sampling.
\newblock In {\em Proceedings of the 17th ACM-SIAM symposium on Discrete
  algorithm}, pages 1117--1126, 2006.

\bibitem{dong2007data}
X.~Dong, A.Y. Halevy, and C.~Yu.
\newblock Data integration with uncertainty.
\newblock In {\em Proceedings of the 33rd International Conference on Very
  Large Data Bases}, pages 687--698, 2007.

\bibitem{DHLS13}
A.~Driemel, H.~HAverkort, M.~L\"offler, and R.I. Silveira.
\newblock Flow computations on imprecise terrains.
\newblock {\em Journal of Computational Geometry}, 4:38--78, 2013.

\bibitem{dudley1974metric}
R.M. Dudley.
\newblock Metric entropy of some classes of sets with differentiable
  boundaries.
\newblock {\em Journal of Approximation Theory}, 10(3):227--236, 1974.

\bibitem{edelsbrunner1986constructing}
H.~Edelsbrunner, J.~O'Rourke, and R.~Seidel.
\newblock Constructing arrangements of lines and hyperplanes with applications.
\newblock {\em SIAM Journal on Computing}, 15(2):341--363, 1986.

\bibitem{ES11}
W.~Evans and J.~Sember.
\newblock The possible hull of imprecise points.
\newblock In {\em Proceedings of the 23rd Canadian Conference on Computational
  Geometry}, 2011.

\bibitem{feldman2009private}
D.~Feldman, A.~Fiat, H.~Kaplan, and K.~Nissim.
\newblock Private coresets.
\newblock In {\em Proceedings of the 41st Annual ACM Symposium on Theory of
  Computing}, pages 361--370, 2009.

\bibitem{FL11}
D.~Feldman and M.~Langberg.
\newblock A unified framework for approximating and clustering data.
\newblock In {\em Proceedings of the 43rd ACM Symposium on Theory of
  Computing}, pages 569--578, 2011.

\bibitem{feldman2012data}
Dan Feldman and Leonard~J Schulman.
\newblock Data reduction for weighted and outlier-resistant clustering.
\newblock In {\em Proceedings of the twenty-third annual ACM-SIAM symposium on
  Discrete Algorithms}, pages 1343--1354. SIAM, 2012.

\bibitem{FHKS16}
Martin Fink, John Hershberger, Nirman Kumar, and Subhash Suri.
\newblock Hyperplane seperability and convexity of probabilistic point sets.
\newblock In {\em Proceedings Symposium on Computational Geometry}, 2016.

\bibitem{ghosh1998support}
P.K. Ghosh and K.V. Kumar.
\newblock Support function representation of convex bodies, its application in
  geometric computing, and some related representations.
\newblock {\em Computer Vision and Image Understanding}, 72(3):379--403, 1998.

\bibitem{guha2009exceeding}
S.~Guha and K.~Munagala.
\newblock Exceeding expectations and clustering uncertain data.
\newblock In {\em Proceedings of the 28th Symposium on Principles of Database
  Systems}, pages 269--278, 2009.

\bibitem{gss-cscah-93}
L.J. Guibas, D.~Salesin, and J.~Stolfi.
\newblock Constructing strongly convex approximate hulls with inaccurate
  primitives.
\newblock {\em Algorithmica}, 9:534--560, 1993.

\bibitem{HP11}
S.~Har-Peled.
\newblock On the expected complexity of random convex hulls.
\newblock {\em arXiv:1111.5340}, 2011.

\bibitem{har2004coresets}
S.~Har-Peled and S.~Mazumdar.
\newblock On coresets for k-means and k-median clustering.
\newblock In {\em Proceedings of the 36th Annual ACM Symposium on Theory of
  Computing}, pages 291--300, 2004.

\bibitem{har2011geometric}
Sariel Har-Peled.
\newblock {\em Geometric approximation algorithms}, volume 173.
\newblock American mathematical society Providence, 2011.

\bibitem{har2004shape}
Sariel Har-Peled and Yusu Wang.
\newblock Shape fitting with outliers.
\newblock {\em SIAM Journal on Computing}, 33(2):269--285, 2004.

\bibitem{hm-ticpps-08}
M.~Held and J.S.B. Mitchell.
\newblock Triangulating input-constrained planar point sets.
\newblock {\em Information Processing Letters}, 109(1):54--56, 2008.

\bibitem{huang2015approximating}
Lingxiao Huang and Jian Li.
\newblock Approximating the expected values for combinatorial optimization
  problems over stochastic points.
\newblock In {\em The 42nd International Colloquium on Automata, Languages, and
  Programming}, pages 910--921. Springer, 2015.

\bibitem{jeffery2006declarative}
S.R. Jeffery, G.~Alonso, M.J. Franklin, W.~Hong, and J.~Widom.
\newblock Declarative support for sensor data cleaning.
\newblock In {\em Pervasive Computing}, pages 83--100. 2006.

\bibitem{JLP11}
A.G. J{\o}rgensen, M.~L\"offler, and J.M. Phillips.
\newblock Geometric computation on indecisive points.
\newblock In {\em Proceedings of the 12th Algorithms and Data Structure
  Symposium}, pages 536--547, 2011.

\bibitem{KCS11b}
P.~Kamousi, T.M. Chan, and S.~Suri.
\newblock The stochastic closest pair problem and nearest neighbor search.
\newblock In {\em Proceedings of the 12th Algorithms and Data Structure
  Symposium}, pages 548--559, 2011.

\bibitem{KCS11a}
P.~Kamousi, T.M. Chan, and S.~Suri.
\newblock Stochastic minimum spanning trees in euclidean spaces.
\newblock In {\em Proceedings of the 27th Symposium on Computational Geometry},
  pages 65--74, 2011.

\bibitem{k-bmips-08}
H.~Kruger.
\newblock Basic measures for imprecise point sets in {$\mathbb{R}^d$}.
\newblock Master's thesis, Utrecht University, 2008.

\bibitem{langberg2010}
M.~Langberg and L.J. Schulman.
\newblock Universal $\epsilon$-approximators for integrals.
\newblock In {\em Proceedings of the 21st Annual ACM-SIAM Symposium on Discrete
  Algorithms}, 2010.

\bibitem{li2016range}
J.~Li and H.~Wang.
\newblock Range queries on uncertain data.
\newblock {\em Theoretical Computer Science}, 609(1):32--48, 2016.

\bibitem{LLS01}
Y.~Li, P.M. Long, and A.~Srinivasan.
\newblock Improved bounds on the samples complexity of learning.
\newblock {\em Journal of Computer and System Sciences}, 62:516--527, 2001.

\bibitem{loffler2009shape}
M.~L{\"o}ffler and J.~Phillips.
\newblock Shape fitting on point sets with probability distributions.
\newblock In {\em Proceedings of the 17th European Symposium on Algorithms},
  pages 313--324, 2009.

\bibitem{ls-dtip-08}
M.~L{\"o}ffler and J.~Snoeyink.
\newblock {Delaunay} triangulations of imprecise points in linear time after
  preprocessing.
\newblock In {\em Proceedings of the 24th Sympoium on Computational Geometry},
  pages 298--304, 2008.

\bibitem{LvK08}
M.~L\"offler and M.~van Kreveld.
\newblock Approximating largest convex hulls for imprecise points.
\newblock {\em Journal of Discrete Algorithms}, 6:583--594, 2008.

\bibitem{matousek1991computing}
J.~Matou{\v{s}}ek.
\newblock Computing the center of planar point sets.
\newblock {\em Discrete and Computational Geometry}, 6:221, 1991.

\bibitem{enclosingball14}
A.~Munteanu, C.~Sohler, and D.~Feldman.
\newblock Smallest enclosing ball for probabilistic data.
\newblock In {\em Proceedings of the 30th Annual Symposium on Computational
  Geometry}, 2014.

\bibitem{nt-teb-00}
T.~Nagai and N.~Tokura.
\newblock Tight error bounds of geometric problems on convex objects with
  imprecise coordinates.
\newblock In {\em Jap.\ Conf.\ on Discrete and Comput.\ Geom.}, LNCS 2098,
  pages 252--263, 2000.

\bibitem{obj-ue-05}
Y.~Ostrovsky-Berman and L.~Joskowicz.
\newblock Uncertainty envelopes.
\newblock In {\em Abstracts of the 21st European Workshop on Comput.\ Geom.},
  pages 175--178, 2005.

\bibitem{Phi16}
Jeff~M. Phillips.
\newblock Coresets and sketches.
\newblock In {\em Handbook of Discrete and Computational Geometry}, number
  Chapter 49. CRC Press, 3rd edition, 2016.

\bibitem{gss-egbra-89}
D.~Salesin, J.~Stolfi, and L.J. Guibas.
\newblock Epsilon geometry: building robust algorithms from imprecise
  computations.
\newblock In {\em Proceedings of the 5th Symposium on Computational Geometry},
  pages 208--217, 1989.

\bibitem{schneider1993convex}
R.~Schneider.
\newblock {\em Convex bodies: the Brunn-Minkowski theory}, volume~44.
\newblock Cambridge University Press, 1993.

\bibitem{SVY13}
S.~Suri, K.~Verbeek, and H.~Y{\i}ld{\i}z.
\newblock On the most likely convex hull of uncertain points.
\newblock In {\em Proceedings of the 21st European Symposium on Algorithms},
  pages 791--802, 2013.

\bibitem{kl-lbbsd-10}
M.~van Kreveld and M.~L{\"o}ffler.
\newblock Largest bounding box, smallest diameter, and related problems on
  imprecise points.
\newblock {\em Computational Geometry: Theory and Applications}, 43:419--433,
  2010.

\bibitem{vapnik1971uniform}
V.N. Vapnik and A.Y. Chervonenkis.
\newblock On the uniform convergence of relative frequencies of events to their
  probabilities.
\newblock {\em Theory of Probability \& Its Applications}, 16(2):264--280,
  1971.

\bibitem{XLJ16}
Jie Xue, Yuan Li, and Ravi Janardan.
\newblock On the separability of stochasitic geometric objects, with
  applications.
\newblock In {\em Proceedings Symposium on Computational Geometry}, 2016.

\bibitem{YAPV04}
H.~Yu, P.K. Agarwal, R.~Poreddy, and K.~Varadarajan.
\newblock Practical methods for shape fitting and kinetic data structures using
  coresets.
\newblock {\em Algorithmica}, 52(378-402), 2008.

\bibitem{zheng2011probabilistic}
K.~Zheng, G.~Trajcevski, X.~Zhou, and P.~Scheuermann.
\newblock Probabilistic range queries for uncertain trajectories on road
  networks.
\newblock In {\em Proceedings of the 14th International Conference on Extending
  Database Technology}, pages 283--294, 2011.

\end{thebibliography}

\appendix
\section{Missing Details in Section~\ref{sec:expkernel}}
\label{sec:appendix}

\eat{
\begin{figure}[t]
\centering
\includegraphics[width=0.3\linewidth]{searchmin}
\caption{
$W=\{v_3=v_{\max},v_4,v_5,v_6,v_7=v_{\min}\}$ is the set of vertices surrounded by the dashed cycle.
}
\label{fig:searchmin}
\end{figure}
}

\eat{

\paragraph{An $O(\frac{1}{\sqrt{\epsilon}} n\log^2 n)$ time algorithm for constructing \expkernel s:}
We can directly get an \expkernel\ of size $O(1/\sqrt{\epsilon})$
in $O(\frac{1}{\sqrt{\epsilon}} n\log^2 n)$ time,
based on the algorithm proposed in \cite{agarwal2004approximating}.
The algorithm also requires the affine transformation $T$
such that $M'=T(M)$ is $\alpha$-fat.
Assume that $\alpha \hypercube \subset \CH(M')\subset \hypercube$
where $\hypercube=[-1,1]^d$.
Let $\S$ be the sphere of radius $\sqrt{2}+1$ centered at the origin.
Let $\delta=\sqrt{\epsilon\alpha/2}$.
Compute a $\delta$-net $\calI$ of size $O(1/\delta)$ on $\S$,
i.e., for any point $x\in \S$, there is a point $y\in\calI$ such that $||x-y||\leq \delta$.
For each point $y\in \calI$, 
we choose the point $b(y)$ in $\CH(M')$ that is closest to $y$.
Let $\calS=\cup_{y\in \calI} b(y)$.
It is known that the above algorithm produces an $\epsilon$-kernel of size $O(1/\sqrt{\epsilon})$ for $M$~\cite{agarwal2004approximating}.
We only need to show that the step for finding the nearest neighbor
can be implemented efficiently without an explicit representation of $M'$.

\begin{lemma}
Assume $\alpha\hypercube\subset\M'\subset \hypercube$ for some $\alpha>0$.
Let $x$ be a point such that $\dist(o,x)=\sqrt{2}+1$ where $o$ is the origin.
We can find in $O(n\log^2 n)$ time the vertex in $M'$ that is closest to $x$.
\end{lemma}
\begin{proof}
Let $v_1,\ldots, v_k$ (in clockwise order) be the vertices of $M'$.
We use $\angle(x,v_i)$ to denote the polar angle of the vector $v_i-x$.
Consider the function $\angle(x,v_i)$ as a function of $i$.
W.l.o.g., assume $x=(0,-\sqrt 2-1)$, which is strictly below $M'$.
So, $0\leq \angle(x,v_i)\geq \pi$ for all $i$.
It is easy to see that $\angle(x,v_i)$ is also a cyclic shift of unimodal function.
Hence, similar to the proof of Lemma~\ref{lm:extreme},
we can find $i_{\min}$ ($i_{\max}$ resp.) that
minimizes (maximizes resp.) $\angle(x,v_i)$
using binary search in $O(n\log^2 n)$ time.
Consider the sequence $W$ of vertices that are between $v_{i_{\max}}$ and $v_{i_{\min}}$
and closer to $x$ (See Figure~\ref{fig:partition}(ii)).
Consider the function $\dist(x,v_i)$ (as a function of $i$) for $v_i\in W$.
Again, it is a unimodal function (Note this is not necessarily true outside $W$),
and we can use binary search to find the minimum value in $O(n\log^2 n)$ time.
\end{proof}
}

\subsection{Details for Section~\ref{sec:linearalgo}}

\vspace{0.2cm}
\noindent
\textbf{Lemma~\ref{lm:transform}.}
We find an affine transform $T$ in $O(2^{O(d)} n\log n)$ time,
such that the convex polytope $\M'=T(\M)$ is $\alpha$-fat for some constant $\alpha$.
\begin{proof}
By the results in \cite{barequet2001efficiently},
we only need to construct an approximate bounding box,
which can be done as follows:
We first identify two points $y_1$ and $y_2$ in $\M$ such that their distance is a constant approximation of
the diameter of $\M$. Then we project the points in $\M$ to a hyperplane $H\in \R^{d-1}$ perpendicular to the line through $y_1$ and $y_2$, and recursively identify two points among the projected points as the approximate diameter. 
Hence, it suffices to show how to identify such two points $y_1$ and $y_2$.
Let $\delta=\arccos(1/2)$.
Suppose we are working on $\R^d$.
We compute a set $\calI$ of $O(\delta^{-(d-1)})$ points on the unit sphere $\mathbb{S}^{d-1}$
such that for any point $v\in \mathbb{S}^{d-1}$, there is a point $u\in \calI$ such that $\angle(u,v)\leq \delta$ (see e.g., \cite{agarwal1992farthest, chan2000approximating}).
From Lemma~\ref{lm:extreme},
we know that we can compute for each direction $u\in \mathbb{S}^{d-1}$, the point $x(u)\in \M$ that maximizes $\innerprod{u}{x(u)}$
in $O(n\log n)$ time.
For each $u\in \calI$, compute both $x(u)$ and $x(-u)$, and pick the pair that maximizes
$\|x(u)-x(-u)\|$. Now, we argue this is a constant approximation of the diameter.
Suppose the diameter of $\M$ is $(y_1, y_2)$ where $y_1,y_2\in \M$.
Consider the direction $v=(y_1-y_2)/\|y_1-y_2\|$.
Without loss of generality, assume $y_1=\arg\max_{y} \innerprod{y}{v}$
and $y_2=\arg\max_{y} \innerprod{y}{-v}$.
Moreover, there is a direction $u\in \calI$ such that $\angle(u,v)\leq \delta$.
Therefore, we can get  that
\begin{align*}
\dw(\M, u) & = f(\M, u)+f(\M,-u) \geq  \innerprod{y_1}{u}+ \innerprod{y_2}{-u} \\
&=\innerprod{u}{y_1-y_2} =\|y_1-y_2\| \cos \angle(u, v)
\geq \|y_1-y_2\| /2.
\end{align*}
In the third equation, we use the simple fact that $\cos \angle(u, v)=\innerprod{u}{v}/\|u\| \|v\|$.
\end{proof}

\vspace{0.2cm}
\noindent
\textbf{Lemma~\ref{lm:directionkernel}.}
$S=\{x(u)\}_{u\in \calI}$ is an $\eps$-kernel for $M'$.
\begin{proof}
Consider an arbitrary direction $v\in \mathbb{S}^{d-1}$ with $\|v\|=1$.
Suppose the point $a\in \M'$ maximizes $\innerprod{v}{a}$
$b\in \M'$ maximizes $\innerprod{-v}{b}$.
Hence, $\dw(\M',v)=\innerprod{v}{a}-\innerprod{v}{b}=\innerprod{v}{a-b}$.
By the construction of $\calI$, there is a direction $u\in \calI$ (with  $\|u\|=1$)
such that $\|u-v\|\leq \delta$. Then, we can see that
\begin{align*}
\dw(S,v) & \geq \innerprod{v}{x(u)}-\innerprod{v}{x(-u)}=\innerprod{v}{x(u)-x(-u)} \\
& = \innerprod{u}{x(u)-x(-u)} + \innerprod{v-u}{x(u)-x(-u)} \\
& \geq \innerprod{u}{a-b}- \|v-u\| \|x(u)-x(-u)\|  \\
& = \innerprod{v}{a-b} + \innerprod{u-v}{a-b}- \|v-u\| \|x(u)-x(-u)\|   \\
& \geq  \innerprod{v}{a-b} - \|v-u\| \|x(u)-x(-u)\| - \|u-v\| \|a-b\| \\
& \geq  \dw(\M',v) - O(\delta d) \geq (1-\epsilon) \dw(\M',v)
\end{align*}
In the last and 2nd to last inequalities, we use the fact that $\M'$ is $\alpha$-fat
(i.e., $\alpha \hypercube \subset \M'\subset \hypercube$).
\end{proof}

\subsection{Details for Section~\ref{sec:expsubset}}

\vspace{0.2cm}
\noindent
\textbf{Theorem~\ref{thm:beta}.}
Under the $\beta$-assumption, there is an \expkernel\ in $\R^d$ (for $d=O(1)$), which is
of size $O(\beta^{-(d-1)}\epsilon^{-(d-1)/2} \log (1/\epsilon) )$ and satisfies the subset constraint,
in the existential uncertainty model.

\begin{proof}
Our algorithm is inspired by the peeling idea in \cite{agarwal2008robust}. 
Let $\epsilon_1= \epsilon \alpha\beta^2/4\sqrt{d}$, where $\alpha$ is a constant defined later.
We repeat the following for $L=O(\log_{1-\beta} \epsilon_1) = O(\log (1/\epsilon))$ rounds:
In round $i$, we first compute an $(\epsilon_1/\sqrt{d})$-kernel $\calS_i$
(of size $O((\sqrt{d}/\epsilon_1)^{(d-1)/2})=O(\beta^{-(d-1)}\epsilon^{-(d-1)/2})$)
for the remaining points (in the deterministic sense)
and then delete all points of $\calS_i$.
Let $\calS=\cup_i \calS_i$.
Now, we show that $\calS$ is an \expkernel\ for $\calP$.

We first establish a lower bound of $\dw(\calP,u)$ for any unit vector $u\in\mathbb{S}^{d-1}$.
Assume without loss of generality that $\alpha \hypercube \subset \CH(\calP)\subset \hypercube$
where $\hypercube=[-1,1]^d$ and $\alpha$ is a constant only depending on $d$.
Since $\alpha\hypercube\subset \CH(P)$,
we know there is a point $v\in \CH(P)$
such that $\innerprod{u}{v}\geq \alpha$
and a different point
$w\in \CH(P)$
such that $\innerprod{u}{w}\leq -\alpha$.
Hence, we have that
$$\dw(\calP,u)\geq \beta^2(\innerprod{u}{v}-\innerprod{u}{w})\geq 2\alpha\beta^2.$$

Fix an arbitrary direction $u\in \S^{d-1}$.
Now, we bound the difference between $f(\calP,u)$ and $f(\calS,u)$.
We show that for any real value $x\in [-\sqrt{d},\sqrt{d}]$,
\begin{align}
\label{eq:probbound}
\Pr_{P\sim \calP}[f(P, u) \geq x] \leq  \Pr_{S\sim \calS}[f(S, u) \geq x-\epsilon_1]+\epsilon_1.
\end{align}
In fact, a proof of the above statement provides a proof for Theorem~\ref{thm:betaprob}
(i.e., $\calS$ is an \probkernel\ as well).

Let $\calL_\calP=\{v_1, v_2,\ldots, v_L\}$ be the set of $L$ points $v\in \calP$ that maximize $\innerprod{v}{u}$
(i.e., the first $L$ vertices in the canonical order w.r.t. $u$).
Similarly, let $\calL_\calS=\{w_1, w_2,\ldots, w_L\}$ be the set of $L$ points $v\in \calS$ that maximize $\innerprod{w}{u}$.
We distinguish two cases:
\begin{enumerate}
\item
$\calL_\calP=\calL_\calS$: If $x\geq \innerprod{u}{v_L}$,
we can see that $\Pr_{P\sim \calP}[f(P, u) \geq x] =  \Pr_{S\sim \calS}[f(S, u) \geq x]$.
If $x< \innerprod{u}{v_L}$,
both $\Pr_{P\sim \calP}[f(P, u) \geq x]$ and $\Pr_{S\sim \calS}[f(S, u) \geq x]$ are
at least $1-\prod_{v\in \calL_\calP}(1-p_v)\geq 1-(1-\beta)^L\geq 1-\epsilon_1$.
\item
Suppose $j$ is the smallest index such that $v_j\ne w_j$.
For $x> \innerprod{u}{v_j}$,
we can see that
$\Pr_{P\sim \calP}[f(P, u) \geq x] =  \Pr_{S\sim \calS}[f(S, u) \geq x]$.
Now, we focus on the case where $x\leq \innerprod{u}{v_j} $.
From the construction of $\calS$, we can see that
$\innerprod{w_{j'}}{u} \geq \innerprod{v_j}{u} - \epsilon_1$ for all $j'\geq j$.\footnote{
To see this, consider the round in which $w_{j'}$ is chosen.
Let $t$ be the vertex minimizing $\innerprod{t}{u}$.
As $v_j$ is not chosen, we must have
$\innerprod{w_{j'}}{u}-\innerprod{t}{u} \geq (1-\epsilon_1/\sqrt{d}) (\innerprod{v_{j}}{u}-\innerprod{t}{u})$.
}
Hence, for $x\leq \innerprod{u}{v_j} $, we can see that
\begin{align*}
\Pr_{S\sim \calS}[f(S, u) \geq x-\epsilon_1]
\geq  1-\prod_{v\in \calL_\calS}(1-p_v)
\geq 1-\epsilon_1.
\end{align*}

\end{enumerate}
So, in either case, \eqref{eq:probbound} is satisfied.
We also need the following basic fact about the expectation:
For a random variable $X$, if $\Pr[X\geq a]=1$, then
$\Exp[X]=\int_{b}^{\infty} \Pr[X\geq x]\d x + b$ for any $b\leq a$.
Since $-\sqrt{d}\leq f(P, u)\leq \sqrt{d}$ for any realization $P$, we have
\begin{align*}
f(\calP,u) & =\int_{-\sqrt{d}}^{\infty} \Pr_{P\sim \calP}[f(P, u) \geq x] \d x -\sqrt{d} \\
&\leq \int_{-\sqrt{d}}^{\infty} \Pr_{S\sim \calS}[f(S, u) \geq x-\epsilon_1] \d x + 2\sqrt{d}\epsilon_1 -\sqrt{d}\\
&\leq \int_{-\sqrt{d}-\epsilon_1}^{\infty} \Pr_{S\sim \calS}[f(S, u) \geq x] \d x -\sqrt{d}-\epsilon_1 + 3\sqrt{d}\epsilon_1\\
& = f(\calS,u)+3\sqrt{d}\epsilon_1,
\end{align*}
where the only inequality is due to \eqref{eq:probbound} and the fact that
$\Pr_{P\sim \calP}[f(P, u) \geq x]=\Pr_{S\sim \calS}[f(S, u) \geq x]=0$ for $x>1$.
Similarly, we can get that $f(\calS, -u)\geq
f(\calP,-u) -3\epsilon_1\sqrt{d}$.
By the choice of $\epsilon_1$,
we have that $6\sqrt{d}\epsilon_1\leq \epsilon\cdot 2\alpha\beta^2\leq \epsilon\dw(\calP,u)$. Hence,
$\dw(\calS,u) \geq \dw(\calP,u) -6\sqrt{d}\epsilon_1 \geq (1-\epsilon)\dw(\calP,u)$.
\end{proof}

\subsection{Locational uncertainty}
Similar results are possible for uncertain points with locational uncertainty.
Now there are $m$ possible locations, and thus ${m \choose 2}$ hyperplanes $\Gamma$ that partition $\R^d$.
 We can replicate all bounds in this setting, except that $m$ replaces $n$ in each bound.
The main difficulty is in replicating Lemma \ref{lem:10} that given a direction $u$ calculates the vertex of $M$; for locational uncertain points this is described in Lemma \ref{lem:10-loc}.  Moreover, the $O(m^2 \log m)$ bound for $\R^2$ is also carefully described in Lemma \ref{lem:20loc}.

In the locational uncertainty model, Lemma~\ref{lm:complexity} also holds
with a stronger general position assumption.
With the new general position assumption, it is straightforward to show
that the gradient vector is different for two adjacent cones in $\conedecomp(\Gamma)$.
Other parts of the proof is essentially the same as Lemma~\ref{lm:complexity}.
The details can be found below.
Theorem~\ref{thm:expconstruction} also holds for the locational model without any change in the proof (the running time becomes $O(m\log^2 m)$).

Now, we prove that Lemma~\ref{lm:complexity} also holds for the locational model.
For this purpose, we need a stronger general position assumption:
(1) For any $v\in \calP$, $\sum_{s\in S}p_{vs}\in (0,1)$. This suggests that we need to consider
the model with both existential and locational uncertainty.
We can make this assumption hold by subtracting an infinitesimal value from each probability value
without affecting the directional width in any essential way.
(2) For any two points $v_1,v_2\in \calP$, two positions $s_1,s_2$ and
two subsets of positions $S_1, S_2$, $p_{v_1,s_1} (\sum_{s\in S_2} p_{v_2,s}) s_1
\ne p_{v_2,s_2} (\sum_{s\in S_1} p_{v_1,s}) s_2$ (this is indeed a general position assumption since we only have a finite number of equations
to exclude but uncountable number of choices of the positions).

\vspace{0.2cm}
\noindent
\textbf{Lemma~\ref{lm:complexity}.} (for the locational model)\textbf{.}
Assuming the locational model and the above general position assumption,
the complexity of $\M$ is the same as the cardinality of $\conedecomp(\Gamma)$, i.e.,
$|M|=|\conedecomp(\calP)|.$
Moreover, each cone $C\in \conedecomp(\calP)$ corresponds to exactly one
vertex $v$ of $\CH(M)$ in the following sense:
$\grad f(\M,u) = v$ for all $u\in \interior C$.

\vspace{-0.2cm}
\begin{proof}
The proof is almost the same as that for Lemma~\ref{lm:complexity} except
that we need to show $f(\M,u)$ is different for two adjacent cones in $\conedecomp(\Gamma)$.
Again, let $C_1,C_2$ be two adjacent cones separated by some hyperplane in $\Gamma$.
Suppose $u_1\in \interior C_1$ and $u_2\in \interior C_2$.
Consider the canonical orders $O_1$ and $O_2$ of $\calP$ with respect to $u_1$ and $u_2$ respectively.
W.l.o.g., assume that $O_1=\{s_1,\ldots, s_i, s_{i+1},\ldots,s_n\}$
and $O_1=\{s_1,\ldots, s_{i+1}, s_{i},\ldots,s_n\}$.

Using the notations from Lemma~\ref{lem:10-loc},
$f(\calP,u)$ can computed by
$\sum_{s \in S,v \in \P} \Pr^R(v,s,u) \innerprod{s}{u}$.
Hence,
$\grad f(\calP,u) = \sum_{s \in S,v \in \P} \Pr^R(v,s,u)  s$.

Suppose $s_i$ is a possible location for $v_1$ and
$s_{i+1}$ is a possible location for $v_2$.
If $v_1\ne v_2$, we have
$
\grad f(\calP,u_1)-\grad f(\calP,u_2) =
\Pr^R_\emptyset(s_i,u_1) \cdot \Bigl(s_i p_{v_1s_i}  \sum_{s' \in S^R(s_i,u_1)} p_{v_{2}s'} -
 s_{i+1} p_{v_2s_{i+1}}  \sum_{s' \in S^R(s_i,u_2)} p_{v_{1}s'} \Bigr)\ne 0.
$
If $v_1= v_2=v$, we have
$
\grad f(\calP,u_1)-\grad f(\calP,u_2) =
\Pr^R_\emptyset(s_i,u_1) \cdot \Bigl(s_i p_{v s_i}  -
 s_{i+1} p_{v s_{i+1}}  \Bigr)\ne 0.
$
\end{proof}

\eat{
We start with some negative results.

\begin{observation}
For some small constant $\epsilon>0$, there exists an instance $\calP$ of existentially uncertain points
such that no constant size coreset $\calS$ satisfies the following:
\begin{align}
(1-\epsilon) \Pr_{\calP}\Bigl[\dw(P,u)\leq (1-\epsilon)x\Bigr]\leq \Pr_\calS\Bigl[\dw(S,u)\leq x\Bigr] \leq
(1+\epsilon) \Pr_\calP\Bigl[\dw(P,u)\leq (1+\epsilon)x\Bigr]
\end{align}
for all $x\geq 0$ and all directions $u$.
\end{observation}
\begin{proof}
The instance $\calP$ just consists of
 $n$ ($n$ is even) equal-spaced points on the unit circle in $\R^2$.
Each point has existential probability $1/3$.
It is easy to see that there are $n/2$ different directions passing through two points \jeff{why not ${n \choose 2}$ pairs?}
(say, they are $u_1,u_2,\ldots, u_{n/2}$).
Let $x=1/100n$ and consider the function
$f(\calP, u)=\Pr[\dw(\calP, u)\leq x]$.
It is not hard to see that $f(u) \geq 1/9 \times (2/3)^{n-2}$
when $u$ is sufficiently close to some $u_i$.
Otherwise, $f(\calP, u)=0$. So, $f$ is a piecewise constant function with $\Omega(n)$ pieces.
Hence any $1\pm \epsilon$ multiplicative approximation of $f$ must have
at least $\Omega(n)$ continuous pieces.
But if $\calS$ is of constant size, $f(\calS, u)$ has only a
constant number of continuous pieces.

\jeff{I am not convinced by this argument (it relates to one of the possible dangers of saying ``constant" instead of relating to $\eps$).  A problem is the error $f(u)$ also depends on $n$.  So it is possible that some pairs of points can be approximated with other nearby pairs of points.  Note that the desired bound never actually uses $f(\calP,u) = \Pr[\dw(\calP,u)\leq x]$ but rather $f^+(\calP,u) = \Pr[\dw(\calP,u)\leq (1+\eps) x]$ or $f^-(\calP,u)= \Pr[\dw(\calP,u)\leq (1-\eps)x]$.}
\jian{Yes, I am not convinced neither. The argument was lousy. But I think some variant of the above argument should work...}
\end{proof}
}

\section{Missing Details in Section~\ref{sec:probkernel}}
\label{app:probkernel}

\textbf{Theorem~\ref{lm:boundprob2}.}
Let $\perror_1=O(\frac{\perror}{\max\{\lambda,\lambda^2\}})$
  and
$N=O(\frac{1}{\perror_1^2}\log \frac{1}{\delta})
=O(\frac{\max\{\lambda^2,\lambda^4\}}{\perror^2}\log \frac{1}{\delta}).$
With probability at least $1-\delta$, for any $t\geq 0$ and any direction $u$, we have that
$
\Pr\Bigl[\dw(\calS,u)\leq t\Bigr] \in
\Pr\Bigl[\dw(\calP,u)\leq t\Bigr]\pm \perror.
$

\begin{proof}
Fix an arbitrary direction $u$ (w.l.o.g., say it is the x-axis)
and rename all points in $\calP$ as $v_1,v_2,\ldots,v_n$ as before.
Consider the Poissonized instance of $\calP$.
Let $v'_1,\ldots, v'_N$ be the $N$ points in $\calS$
(also sorted in nondecreasing order of their x-coordinates).
Now, we create a coupling between all mass in
$\fP$ and that in $\fQ$, as follows.
We process all points in $\fP$ from left to right, starting with $v_1$.
The process has $N$ rounds. In each round, we assign exactly $1/N$ units of mass in $\fP$
to a point in $\fQ$.
In the first round, if $v_1$ contains less than $1/N$ units of mass,
we proceed to $v_2, v_3, \ldots v_i$ until we reach $1/N$ units collectively.
We split the last node $v_i$ into two node $v_{i1}$ and $v_{i2}$
so that the mass contained in $v_1,\ldots, v_{i-1},v_{i1}$ is exactly $1/N$,
and we assign those nodes to $v'_1$. We start the next round with $v_{i2}$.
If $v_1$ contains more than $1/N$ units of mass,
we split $v_1$ into $v_{11}$ ($v_{11}$ contains $1/N$ units) and $v_{12}$
and we start the second round with $v_{12}$. We repeat this process
until all mass in $\fP$ is assigned.

The above coupling can be viewed as a mass transportation from $\fP$ to $\fQ$.
We will need one simple but useful property about this transportation:
for any vertical line $x=t$, at most $\perror_1$ units of mass
are transported across the vertical line
(by Theorem~\ref{thm:vapnik}).

In the construction of the coupling, many nodes in $\fP$ may be split.
We rename them to be $v_1,\ldots, v_m$ (according to the order in which they are processed).
The sequence $v_1,\ldots, v_m$ can be divided into $N$ segments, each assigned to a point in $\calS$.
For a point $v'_i$ in $\calS$, let $\segment(i)$ be the segment (the set of points) assigned to $v'_i$.
For any node $v$ and real $t>0$, we use $H(v,t)$ to denote the right open halfplane defined by
the vertical line $x=x(v)+t$, where $x(v)$ is the $x$-coordinate of $v$ (see Figure~\ref{fig:interval}).

Let $X_i$ ($Y_i$ resp.) be the Poisson distributed random variable corresponding to $v_i$  ($v'_i$ resp.)
(i.e., $X_i\sim \pois(\lambda_{v_i})$ and $Y_i\sim \pois(\lambda/N)$ ) for all $i$.
For any $H\subset \R^2$, we write $X(H) =\sum_{v_i\in H\cap \calP} X_i$ and
$Y(H) =\sum_{v'_i\in H\cap \calS} Y_i$.
We can rewrite $\Pr[\dw(\calS,u)\leq t]$ as follows:
\begin{align}
\label{eq:S}
\Pr[\dw(\calS,u)\leq t] & =\sum_{i=1}^N \Pr[v'_i \text{ is the leftmost point and }\dw(\calS,u)\leq t] +\Pr[\text{no point in }\calS\text{ appears} ] \notag\\
& =\sum_{i=1}^N \Pr[Y_i\ne 0] \, \Pr\Bigl[\sum_{j=1}^{i-1}Y_j=0\Bigr]\,\Pr[Y(H(v'_i, t))=0]+\Pr[\sum_{v'_i\in \calS}Y_i=0]
\end{align}
Similarly, we can write that
\footnote{
Note that splitting nodes does not change the distribution of $\dw(\calP,u)$:
Suppose a node $v$ (corresponding to r.v. $X$) was spit to two nodes $v_1$ and $v_2$
(corresponding to $X_1$ and $X_2$ resp.). We can see that
$\Pr[X\ne 0]=\Pr[X_1\ne 0\text{ and }X_2\ne 0]=\Pr[X_1+X_2\ne 0]$.
}
\begin{align}
\label{eq:P}
\Pr[\dw(\calP,u)\leq t]&=\sum_{i=1}^m \Pr[X_i\ne 0]\, \Pr\Bigl[\sum_{j=1}^{i-1}X_j=0\Bigr]\, \Pr[X(H(v_i, t))=0]+\Pr[\sum_{v_i\in \calP}X_i=0]\notag\\
&=\sum_{i=1}^N\sum_{k\in \segment(i)} \Pr[X_k\ne 0]\, \Pr\Bigl[\sum_{j=1}^{k-1}X_j=0\Bigr]\, \Pr[X(H(v_k, t))=0]+\Pr[\sum_{v_i\in \calP}X_i=0]
\end{align}
We proceed by attempting to show each
each summand of \eqref{eq:P} is close to the corresponding one in \eqref{eq:S}.
First, we can see that
$
\Pr[\sum_{v'_i\in \calS}Y_i=0]=\Pr[\sum_{v_i\in \calP}X_i=0]
$
since both $\sum_{v'_i\in \calS}Y_i$ and $\sum_{v_i\in \calP}X_i$ follow
the Poisson distribution $\pois(\lambda)$.

For any segment $i$, we can see that
$\sum_{k\in \segment(i)}\lambda_{v_k}=\lambda/N$.
Moreover, we have $\lambda_{v_k}\leq \lambda/N \leq \perror/32$, thus
$\exp(-\lambda_{v_k})\in (1-\lambda_{v_k}, (1+\perror/16)(1-\lambda_{v_k}))$.
\begin{align}
\label{eq:ynot0}
\sum_{k\in \segment(i)} \Pr[X_k\ne 0] &
=\sum_{k\in \segment(i)}(1-\exp(-\lambda_{v_k}))
\in (1\pm \frac{\perror}{16}) \sum_{k\in \segment(i)}\lambda_{v_k}\notag \\
& \subset (1\pm \frac{\perror}{8}) (1-\exp (\frac{\lambda}{N}))
  =(1\pm \frac{\perror}{8})\Pr[Y_i\ne 0].
\end{align}
Then,
we notice that for any $k\in \segment(i)$ (i.e., $v_k$ is in the segment assigned to $v'_i$), it holds that
\begin{align}
\label{eq:yiis0}
 \Pr\Bigl[\,\sum_{j=1}^k X_j=0\,\Bigr] \in
 [e^{-i \lambda /N}, e^{-\lambda (i-1)/N}]
\subset  (1\pm \frac{\perror}{8}) e^{-\lambda (i-1)/N}
= (1\pm \frac{\perror}{8})\Pr\Bigl[\,\sum_{j=1}^{i-1}Y_j=0\,\Bigr].
\end{align}
The first inequality holds because $\sum_{j=1}^k X_j \sim \pois\bigl(\sum_{j=1}^k \lambda_{v_j}\bigr)$
and $\lambda (i-1)/N\leq \sum_{j=1}^k \lambda_{v_j}\leq \lambda i/N$.

If we can show that $\Pr[X(H(v_k, t))=0]$ is close to $\Pr[Y(H(v'_i, t))=0]$ for $k\in \segment(i)$, we can finish
the proof easily since each summand of $\eqref{eq:P}$ would be close to the corresponding one in $\eqref{eq:S}$.
However, this is in general not true and we have to be more careful.

\begin{figure}[t]
\centering
\includegraphics[width=0.6\linewidth]{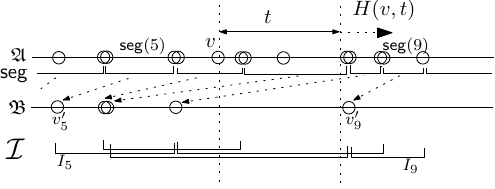}
\caption{Illustration of the interval graph $\calI$.
For illustration purpose, co-located points
(e.g., points that are split in $\fP$)
are shown
as overlapping points.
The arrows indicate the assignment of the segments
to the points in $\fQ$.
Theorem~\ref{thm:vapnik} ensures that any vertical line can not
stab many intervals.
}
\label{fig:interval}
\end{figure}

Recall that the sequence $v_1,\ldots, v_m$ is divided into $N$ segments.
Let $K=\lambda/\perror$.
We say that the $i$th segment (say $\segment(i)=\{v_j, v_{j+1}, \ldots, v_k\}$) is a {\em good segment}
if
$$
E_i=\max\Bigl\{\bigl|\fQ(H(v'_i,t))-\fP(H(v_j,t))\bigr|\, ,\, \bigl|\fQ(H(v'_i,t))-\fP(H(v_k,t))\bigr|\Bigr\}\leq \frac{1}{K}.
$$
Otherwise, the segment is {\em bad}.
For a good segment $\segment(i)$ and any $k\in \segment(i)$,
\begin{align}
\label{eq:his0}
\Pr[X(H(v_k, t))=0] &
=\exp\bigl(-\lambda \fP(H(v_k,t))\bigr)
\in \exp\bigl(-\lambda \fQ(H(v'_i,t))\pm \lambda/K \bigr)\notag\\
&\subset
\Pr[Y(H(v'_i, t))=0]e^{\pm\lambda /K}
\subset
\Pr[Y(H(v'_i, t))=0](1\pm \perror/8).
\end{align}
We use $\goodseg$ to denote the set of good segments
and $\badseg$ the set of bad segments.
Now, we consider the summations in both \eqref{eq:S} and \eqref{eq:P} with only good segments.
We have that
\begin{align*}
&\sum_{i\in \goodseg} \sum_{k\in \segment(i)} \Pr[X_k\ne 0]\, \Pr\Bigl[\sum_{j=1}^{k-1}X_j=0\Bigr]\, \Pr[X(H(v_k, t))=0]\\
\in \,\, &
\sum_{i\in \goodseg}  \Pr\Bigl[\sum_{j=1}^{i-1}Y_j=0\Bigr](1\pm \perror/8)\, \Pr[Y(H(v'_i, t))=0](1\pm \perror/8) \Pr[Y_i\ne 0]  (1\pm \perror/8)\\
\subset \,\, &
\sum_{i\in \goodseg} \Pr[Y_i\ne 0]\, \Pr\Bigl[\sum_{j=1}^{i-1}Y_j=0\Bigr]\,\Pr[Y(H(v'_i, t))=0]\pm \perror/2,
\end{align*}
where the first inequality is due to \eqref{eq:yiis0} and \eqref{eq:his0}
and the second holds because \eqref{eq:ynot0}

Now, we show the total contributions of bad segments to both \eqref{eq:S} and \eqref{eq:P}
are small. We partition all of the bad segments into $\log (1/\perror)+1$ different sets $B_0,\ldots,B_{\log \frac{1}{\perror}-1}$. Let $B_i=\{i\mid \frac{2^i}{K}< E_i \leq \frac{2^{i+1}}{K}\}$ for $0\leq i\leq \log (1/\perror)-1$. Let $B_{\log \frac{1}{\perror}}=\{i\mid E_i>\frac{1}{\lambda}\}$.

With the above notations, we prove the following crucial inequality:
\begin{equation} \label{eq:1}
\sum_{i=0}^{\log \frac{1}{\perror}} |B_i|\cdot 2^i = O(\perror_1 N K).
\end{equation}

Now, we prove \eqref{eq:1}.
Consider all points $v_1,\ldots,v_m$ and $v'_1,\ldots,v'_n$ lying on the same $x$-axis.
For each $i$ (with $\segment(i)=\{v_j, v_{j+1}, \ldots, v_k\}$),
we draw the minimal interval $I_i$ that contains $v'_i, v_j$ and $v_k$.
If the $i$th segment is bad and belongs to $B_j$, we also say $I_i$ is a {\em bad interval} of label $j$.
All intervals $\{I_i\}_i$ define an interval graph $\calI$.
We can see that any vertical line can stab at most $\perror_1 N+1$ intervals,
because at most $\perror_1$ unit of mass can be transported across the vertical line,
and each interval is responsible for a transportation of exactly $1/N$ units of mass
(except the one that intersects the vertical line).
Hence, the interval graph $\calI$ can be colored with at most $\perror_1 N+1$ colors
(this is because the clique number of $\calI$ is at most $\perror_1 N+1$ and
the chromatic number of an interval graph is the same as its clique number).
Consider a color class $C$ (which consists of a set of non-overlapping intervals).
Imagine we move an interval $I$ of length $t$ along the $x$-axis from left to right.
When the left endpoint of $I$ passes through an bad interval of label $j$ in $C$,
by the definition of bad segments, the right endpoint of $I$ passes through $\Omega(2^j N/K)$ segments. Suppose the color class $C$ contains $b_j$ bad segments in $B_j$. Since the right endpoint of $I$ can pass through at most $N$ segments, we have the following inequalities by summing over all labels:
$$
\sum_{j=0}^{\log \frac{1}{\perror}-1} b_j\cdot 2^j N/K\leq N.
$$
Summing up all color classes, we obtain (\ref{eq:1}).

For $B_j$ ($0\leq j\leq \log \frac{1}{\perror}$), we can bound the total contribution
as follows.
By the definition of $B_j$, we can see that
\begin{align*}
&\sum_{i\in B_j} \sum_{k\in \segment(i)} \Pr[X_k\ne 0]\, \Pr\Bigl[\sum_{j=1}^{k-1}X_j=0\Bigr]\, \Pr[X(H(v_k, t))=0]\\
\subset \,\, &
(1\pm 5\cdot 2^j\perror)\sum_{i\in B_j} \Pr[Y_i\ne 0]\, \Pr\Bigl[\sum_{j=1}^{i-1}Y_j=0\Bigr]\,\Pr[Y(H(v'_i, t))=0],
\end{align*}
Thus, the total contribution of bad segments in $B_j$ ($0\leq j\leq \log \frac{1}{\perror}-1$) to the corresponding summands in (\eqref{eq:S}-\eqref{eq:P})
is at most
$$
5\cdot 2^j\perror \sum_{i\in B_j} \Pr[Y_i\ne 0]= 5 |B_j|\cdot 2^j\perror \times (1-\exp{(-\frac{\lambda}{N})})=O(|B_i| 2^j\perror \lambda/N),
$$
where $\Pr[Y_i\ne 0]=1-\exp{(-\frac{\lambda}{N})}$ (since $Y_i\sim \pois(\frac{\lambda}{N})$).

For $B_{\log \frac{1}{\perror}}$, the total contribution is bounded by the following. \begin{align*}
&\Biggl|\sum_{i\in B_{\log \frac{1}{\perror}}}\Bigl( \sum_{k\in \segment(i)} \Pr[X_k\ne 0]\, \Pr\Bigl[\sum_{j=1}^{k-1}X_j=0\Bigr]\, \Pr[X(H(v_k, t))=0] \\
& -\Pr[Y_i\ne 0]\, \Pr\Bigl[\sum_{j=1}^{i-1}Y_j=0\Bigr]\,\Pr[Y(H(v'_i, t))=0]\Bigr)\Biggr|\\
\leq \,\, &
\sum_{i\in B_{\log \frac{1}{\perror}}} \sum_{k\in \segment(i)} \Pr[X_k\ne 0]\,+\sum_{i\in B_{\log \frac{1}{\perror}}} \Pr[Y_i\ne 0]\,
\leq \,\,  3 \sum_{i\in B_{\log \frac{1}{\perror}}} \Pr[Y_i\ne 0] \\
&\leq \,\, 3 |B_{\log \frac{1}{\perror}}| \times (1-\exp{(-\frac{\lambda}{N})} =O(|B_{\log \frac{1}{\perror}}| \lambda/N)
\end{align*}

Summing up all $j$ and using (\ref{eq:1}), we obtain the following inequality .
$$
\sum_{j=0}^{\log \frac{1}{\perror}} O(|B_j| 2^j \perror \lambda/N) = O(\perror_1 \perror \lambda K )\leq \frac{\perror}{4}.
$$
This finishes the proof.
\eat{
A key observation is that there are at most $O(\perror_1 N K)$ bad segments.
This can be seen as follows.
Consider all points $v_1,\ldots,v_m$ and $v'_1,\ldots,v'_n$ lying on the same $x$-axis.
For each $i$ (with $\segment(i)=\{v_j, v_{j+1}, \ldots, v_k\}$),
we draw the minimal interval $I_i$ that contains $v'_i+t, v_j+t$ and $v_k+t$.
If the $i$th segment is bad, we also say $I_i$ is a {\em bad interval}.
All intervals $\{I_i\}_i$ define an interval graph $\calI$.
We can see that any vertical line can stab at most $\perror_1 N+1$ intervals,
because at most $\perror_1$ unit of mass can be transported across the vertical line,
and each interval is responsible for a transportation of exactly $1/N$ units of mass
(except the one that intersects the vertical line).
Hence, the interval graph $\calI$ can be colored with at most $\perror_1 N+1$ colors
(this is because the clique number of $\calI$ is at most $\perror_1 N+1$ and
the chromatic number of an interval graph is the same as its clique number).
Consider a color class $C$ (which consists of a set of non-overlapping intervals).
Imagine we move an interval $I$ of length $t$ along the $x$-axis from left to right.
When the left endpoint of $I$ passes through an bad interval in $C$,
by the definition of bad segments, the right endpoint of $I$ passes through $\Omega(N/K)$ segments.
Since the right endpoint of $I$ can pass through at most $N$ segments,
there are at most $O(K)$ bad segments in color class $C$.
So there are at most $O(\perror_1 N K)$ bad segments overall.

The total contribution of bad segments to \eqref{eq:S} is at most
\begin{align*}
\sum_{i\in \badseg} \Pr[Y_i\ne 0] \leq
 O(\perror_1 N K)\times (1-\exp{(-\frac{\lambda}{N})})=O(\perror_1 \lambda K)\leq \frac{\perror}{4},
\end{align*}

The same argument also shows that
the contribution of bad segments to \eqref{eq:P} is also at most $\frac{\perror}{4}$.
Hence, the difference between \eqref{eq:S} and \eqref{eq:P} is at most $\perror$.
This finishes the proof.
}
\end{proof}

\section{Missing Details in Section~\ref{sec:rfunction}}
\label{app:exprkernel}

\textbf{Lemma \ref{lm:expr1}.}
Let $N=O\bigl(\e_1^{-2}\e_0^{-(d-1)/2}\log (1/\e_0)\bigr)$, where $\e_0=(\e/4(r-1))^r$, $\e_1=\eps\beta^2$. For any $t\geq 0$ and any direction $u\in \calP^{\polar}$, we have that
$$  \Prob_{P\sim \calS}\Bigl[\max_{v\in P}\innerprod{u}{v}^{1/r}\geq t\Bigr]\in \Prob_{P\sim \calP}\Bigl[\max_{v\in \calE(P)}\innerprod{u}{v}^{1/r}\geq t)\Bigr]\pm \e_1/4, \text{ and }
$$
$$ \Prob_{P\sim \calS}\Bigl[\min_{v\in P}\innerprod{u}{v}^{1/r}\geq t\Bigr]\in \Prob_{P\sim \calP}\Bigl[\min_{v\in \calE(P)}\innerprod{u}{v}^{1/r}\geq t)\Bigr]\pm \e_1/4. \quad\quad
$$

\begin{proof}
The argument is almost the same as that in Lemma~\ref{lm:quant1}. Let $L=O(\e_0^{-(d-1)/2})$.
We still build a mapping $g$ that maps each realization $\calE(P)$
to a point in $\R^{dL}$, as follows:
Consider a realization $P$ of $\calP$.
Suppose $\calE(P)=\{(x^1_1,\ldots, x^1_d),\ldots, (x^L_1,\ldots, x^L_d)\}$
(if $|\calE(P)|<L$, we pad it with $(0,\ldots, 0)$).
We let
$
g(\calE(P))=(x^1_1,\ldots, x^1_d,\ldots, x^L_1,\ldots, x^L_d)\in \R^{dL}.
$
For any $t\geq 0$ and any direction $u\in \calP^{\polar}$,
note that $\max_{v\in \calE(P)}\innerprod{u}{v}^{1/r}\geq t$ holds
if and only if there exists some $1\leq i\leq |\calE(P)|$ satisfies that
$\sum_{j=1}^d x^i_j u_j\geq t^r$,
which is equivalent to saying that
point $g(\calE(P))$ is in the union of the those $|\calE(P)|$ half-spaces.

Let $X$ be the image set of $g$. Let $(X,\calR^i)$ ($1\leq i\leq L)$) be a range space, where $\calR^i$ is the set of half spaces $\{\sum_{j=1}^d x_j^iu_j\geq t\mid u=(u_1,\ldots,u_d)\in \R^d, t\geq0\}$. Let $\calR'=\{\cup r_i\mid r_i\in \calR^i,i\in [L]\}$. Note that each $(X,\calR^i)$ has
VC-dimension $d+1$. By Theorem~\ref{thm:quant1}, we have that the VC-dimension of $(X,\calR')$ is bounded by
$O((d+1)L\lg L)=O(\e_0^{-(d-1)/2} \log (1/\e_0))$.
Then by Theorem~\ref{thm:quant2}, for any $t$ and any direction $u$, we have that
$
 \Prob_{P\sim \calS}[\max_{v\in P}\innerprod{u}{v}^{1/r}\geq t]\in \Prob_{P\sim \calP}[\max_{v\in \calE(P)}\innerprod{u}{v}^{1/r}\geq t)]\pm \e_1/4.
$
The proof for the second statement is the same.
\end{proof}

\section{Computing the Expected Direction Width}
\label{sec:computing}

We handle both the existential and location model of uncertain points in this section.
For any direction $u$, denote by
$\omega(\calP,u)$ the expected width of $\calP$ along the direction
$u$, and $f(\calP,u) = \E_{P \sim \calP}[\max_{p \in P} \innerprod{u}{p}]$ is the support function.  Recall $\omega(\calP,u) = f(\calP,u) - f(\calP,-u)$ by linearity of expectation.

\subsection{Computing Expected Width for Existential Uncertainty}
 The existential model is a bit simpler and we handle that first.
 Recall in this model we
let $\calP$ be a set of $n$ uncertain points, and each point $v\in
\calP$ has a probability $p_v$.
We have the following two lemmas.

\begin{lemma}\label{lem:10}
For any direction $u$, we can compute $\omega(\calP,u)$, $f(\calP,u)$, and $\grad f(\calP,u)$ in $O(n\log
n)$ time; if the points of $\calP$ are already sorted along the
direction $u$, then we can compute them in $O(n)$ time.
\end{lemma}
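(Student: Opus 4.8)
The plan is to turn $f(\calP,u)$ into a closed-form sum over the points of $\calP$ that can be evaluated in one linear scan once the points are sorted along $u$, and then to read off $\omega(\calP,u)$ and $\grad f(\calP,u)$ from the same scan. First I would recall the description of the support function already used in the proof of Lemma~\ref{lm:complexity}: fix $u$ and list $\calP=\{v_1,v_2,\ldots,v_n\}$ in the canonical order with respect to $u$, i.e., $v_1\succ_u v_2\succ_u\cdots\succ_u v_n$, ties broken consistently. In a realization $P\sim\calP$ the point attaining $f(P,u)=\max_{p\in P}\innerprod{u}{p}$ is the $\succ_u$-largest present point, so $v_i$ attains it exactly when $v_i$ is present and $v_1,\ldots,v_{i-1}$ are all absent; by independence this has probability $\pu(v_i,u)=p_{v_i}\prod_{j<i}(1-p_{v_j})$. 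Hence
\[
f(\calP,u)=\sum_{i=1}^n \pu(v_i,u)\,\innerprod{v_i}{u},
\qquad
\grad f(\calP,u)=\sum_{i=1}^n \pu(v_i,u)\, v_i ,
\]
and $\omega(\calP,u)=f(\calP,u)-f(\calP,-u)$. The gradient formula is exactly \eqref{eq:grad} applied to $M=M(\calP)$ (recall $f(\calP,u)=f(M,u)$ by Lemma~\ref{lm:existence}): each weight $\pu(v_i,u)$ is constant on the open cone of directions inducing this canonical order, so on the interior of that cone $f(\calP,\cdot)$ coincides with the linear map $\sum_i\pu(v_i,u)\innerprod{v_i}{\cdot}$, whose gradient is $\sum_i\pu(v_i,u)v_i$ (which, by \eqref{eq:grad2}, is also the vertex of $M$ extreme in direction $u$ — the fact relied on in Lemma~\ref{lm:extreme}).

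Algorithmically I would (i) compute $\innerprod{v}{u}$ for all $v\in\calP$ in $O(n)$ time ($d$ constant); (ii) sort $\calP$ by this key to obtain the canonical order in $O(n\log n)$ time, or skip this step when the points are already supplied sorted along $u$; (iii) compute the prefix products $\pi_1=1$, $\pi_{i+1}=\pi_i(1-p_{v_i})$ incrementally, yielding all $\pu(v_i,u)=p_{v_i}\pi_i$ in $O(n)$ time; and (iv) accumulate the two sums above, in $O(n)$ time. To produce $\omega(\calP,u)$ I would run the same accumulation over the reversed order — which is precisely the canonical order for $-u$ — giving $f(\calP,-u)$ in another $O(n)$ time. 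The overall cost is $O(n\log n)$, dominated by the sort, and $O(n)$ when the sorted order is given.

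I do not expect a genuine obstacle here; the only points needing care are bookkeeping ones. The formula tacitly uses the convention $f(\emptyset,u)=0$ (as in Lemma~\ref{lm:complexity}), so that $\sum_i\pu(v_i,u)=1-\prod_j(1-p_{v_j})$ rather than $1$; since the same convention is applied to $-u$, this is harmless for $\omega$. Also, $\grad f(\calP,u)$ is the true gradient only when $u$ lies in the interior of a cone of $\conedecomp(\Gamma)$, i.e., when $u$ breaks no tie among the values $\innerprod{v}{u}$; on a boundary direction the returned vector is a valid one-sided value consistent with the tie-breaking rule, which is all that Lemma~\ref{lm:extreme} and the constructions in Section~\ref{sec:linearalgo} use. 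Extending the same scheme to the locational model — summing over location–point pairs rather than points — is the content of the separately stated Lemma~\ref{lem:10-loc}.
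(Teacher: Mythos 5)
Your argument is essentially the same as the paper's: you derive $f(\calP,u)=\sum_i \pu(v_i,u)\innerprod{v_i}{u}$ and $\grad f(\calP,u)=\sum_i \pu(v_i,u)v_i$ with $\pu(v_i,u)=p_{v_i}\prod_{j<i}(1-p_{v_j})$, then project, sort, and compute the weights by a single prefix-product scan, which is precisely the paper's $\Pr^R(v,u)$-based computation. The remarks about the $f(\emptyset,u)=0$ convention and one-sided gradients on cone boundaries are correct but not needed beyond what the paper states.
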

\begin{proof}
Consider any direction $u$.
Without loss of generality, assume $\|u\|=1$.
In the following, we first show how to compute $f(\calP,u)$. The value $f(\calP,-u)$ can be computed in a similar manner and we ignore the discussion. After having $f(\calP,u)$ and $f(\calP,-u)$, $\omega(\calP,u)$ can be computed immediately by $\omega(\calP,u) = f(\calP,u) - f(\calP,-u)$. Finally, we will discuss how to compute $\grad f(\calP,u)$.
Let $\rho(u)$ be the ray of direction $u$ in the plane passing through the origin.

Consider a point $v\in \calP$.
Note that $\innerprod{v}{u}$ is
the coordinate  of the perpendicular projection of $v$ on $\rho(u)$.
Denote by $\calP^R(v,u)$ the subset of
points $v'\in \calP$ such that $v'>_u v$ (i.e., $\innerprod{v'}{u}>\innerprod{v}{u}$).
Denote by $\Pr^R(v,u)$ the probability that $v$ appears in a realization but all points of
$\calP^R(v,u)$ do not appear (i.e., $\innerprod{v}{u}$ is the largest among all
points of $\calP$ that appear in the realization). Hence, we have
\begin{equation}\label{eq:PrR}
\Pr^R(v,u)=p_v\cdot \prod_{v'>_u v}(1-p_{v'}).
\end{equation}

Now $f(\calP,u)$ can be seen as the expected largest coordinate of the
projections of the points in $\calP$ on $\rho(u)$.
According to the definition of $\Pr^R(v,u)$, we have
$f(\calP,u)= \sum_{v\in \calP}\Pr^R(v,u)\innerprod{v}{u}$.

Based on the above discussion, we can compute $f(\calP,u)$ in the
following way. First, we project all points of $\calP$ on $\rho(u)$ and
obtain the coordinate $\innerprod{u}{v}$ for each $v\in \calP$. Second,
we sort all points of $\calP$ by the
coordinates of their projections on $\rho(u)$. Then, the values
$\Pr^R(v,u)$ for all points $v\in \calP$ can be obtained in $O(n)$
time by considering the projection points on $\rho(u)$ from right to
left. Finally, $f(\calP,u)$ can be computed in additional $O(n)$ time.
Therefore, the total time for computing $f(\calP,u)$ is $O(n\log
n)$, which is dominated by the sorting.
If the points of $\calP$ are
given sorted along the direction $u$, then we can avoid the sorting
step and compute $f(\calP,u)$ in overall $O(n)$ time.

It remains to compute $\grad f(\calP,u)$. Recall that $\grad f(\calP,u)=\sum_{v\in \calP}\Pr^R(v,u)v$ by the proof of Lemma \ref{lm:complexity}. Note that the above has already computed $\Pr^R(v,u)$ for all points $v\in \calP$. Therefore, $\grad f(\calP,u)$ can be computed in additional $O(n)$ time.
The lemma thus follows.
\end{proof}

\begin{lemma}\label{lem:20}
We can build a data structure of $O(n^2)$ size in $O(n^2\log n)$ time
that can compute $\omega(\calP,u)$, $f(\calP,u)$, and $\grad f(\calP,u)$ in $O(\log n)$ time for any query direction $u$. Further, we can construct $M$ explicitly in $O(n^2\log n)$ time.
\end{lemma}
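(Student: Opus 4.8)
The plan is to realize the arrangement $\conedecomp(\Gamma)$ explicitly --- in $\R^2$ it is simply a cyclic list of angular intervals --- to compute the vertex of $M$ attached to each cone by sweeping the direction once around the circle while maintaining the gradient incrementally, and then to store the resulting list sorted by angle so that a query becomes a binary search. Throughout I would invoke Lemma~\ref{lm:complexity} (each cone of $\conedecomp(\Gamma)$ corresponds to a distinct vertex of $\CH(M)$, with $\grad f(\calP,u)$ constant on each cone) and Lemma~\ref{lem:10} (one gradient evaluation in $O(n\log n)$).

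\emph{Building the arrangement.} In $\R^2$ the set $\Gamma$ is a collection of $\binom{n}{2}$ lines through the origin, each line $H_{r,w}$ tagged with the pair $\{r,w\}$ inducing it. Sorting these lines by angle produces the cyclic sequence of cones $C_1,\dots,C_k$ with $k=O(n^2)$, in $O(n^2\log n)$ time and $O(n^2)$ space. Consecutive cones $C_i,C_{i+1}$ are separated by exactly one line $H_{r,w}$, and, exactly as argued in the proof of Lemma~\ref{lm:complexity}, the canonical order of $\calP$ in $C_i$ and in $C_{i+1}$ differ by a single swap of two order-adjacent points, namely $r$ and $w$.

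\emph{Computing all vertices of $M$.} First I would pick $u_1\in\interior C_1$ and use Lemma~\ref{lem:10} to compute, in $O(n\log n)$ time, the canonical order of $\calP$ with respect to $u_1$ together with the first vertex $v_1=\grad f(\calP,u_1)=\sum_{v\in\calP}\pu(v,u_1)\,v$. Then I would walk through $C_2,\dots,C_k$ maintaining the current canonical order in an augmented balanced binary search tree keyed by rank, every node storing the product of $(1-p_v)$ over its subtree; this supports in $O(\log n)$ time the three operations needed: find the rank of a given point, evaluate a prefix product $\prod_{j<i}(1-p_{v_j})$, and perform an adjacent transposition. Crossing from $C_i$ to $C_{i+1}$ along $H_{r,w}$, I would look up the (consecutive) ranks of $r$ and $w$, say $i$ and $i+1$, read off $D=\prod_{j<i}(1-p_{v_j})$, and set
\[
v_{i+1}=v_i \;\pm\; D\,p_r p_w\,(r-w),
\]
where the sign is dictated by which way the transposition goes; by \eqref{eq:grad} this is precisely the change of $\grad f(\calP,\cdot)$ across the separating line $H_{r,w}$, as computed inside the proof of Lemma~\ref{lm:complexity}. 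After updating the tree, the next cone is handled identically. Each of the $O(n^2)$ crossings costs $O(\log n)$, so all vertices $v_1,\dots,v_k$ are obtained in $O(n^2\log n)$ time; since the cones come in cyclic angular order and each corresponds to a distinct vertex of $\CH(M)$, the $v_i$ are already listed in boundary order, which \emph{is} an explicit description of $M$.

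\emph{The data structure, queries, and the hard part.} I would store the $k=O(n^2)$ pairs (boundary angle of $C_i$, vertex $v_i$) in an array sorted by angle; this is the claimed $O(n^2)$-space structure, built in $O(n^2\log n)$ time. Given a query direction $u$, one binary search locates the cone $C_i\ni u$ in $O(\log n)$ time, whence $\grad f(\calP,u)=v_i$ and $f(\calP,u)=\innerprod{u}{v_i}$ by \eqref{eq:grad2}; a second binary search locates $C_j\ni -u$, giving $f(\calP,-u)=\innerprod{-u}{v_j}$ and hence $\omega(\calP,u)=\innerprod{u}{v_i}-\innerprod{-u}{v_j}$, all in $O(\log n)$. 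The only delicate point is the incremental bookkeeping in the middle step: maintaining the canonical order together with its prefix products of $(1-p_v)$ under $O(n^2)$ adjacent transpositions at $O(\log n)$ cost each, and determining the orientation of each transposition --- hence the sign in the update of $v_i$ --- from the line being crossed. Everything else (sorting the angles, the two binary searches, reading off $M$) is routine, and the same argument works for the locational model with $m$ in place of $n$.
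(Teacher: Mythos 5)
Your overall plan matches the paper's: sort the $O(n^2)$ breakpoint angles $\beta(v,v')$, walk around the circle updating the gradient vector (equivalently, the coefficients of $f(\calP,u)=a\cos\theta_u+b\sin\theta_u$, since $a,b$ are exactly the two coordinates of $\grad f$) across each crossing, and store the resulting sorted list for $O(\log n)$ binary-search queries; the explicit construction of $M$ falls out of Lemma~\ref{lm:complexity} as you say. The one place you deviate is the per-crossing update. You carry an augmented BST over the current canonical order, with subtree products of $(1-p_v)$, so as to find the ranks of $r,w$, extract $D=\prod_{j<i}(1-p_{v_j})$, and apply $\grad f\leftarrow\grad f\pm D\,p_r p_w(r-w)$; that costs $O(\log n)$ per crossing and requires maintaining a point-to-node map under rotations. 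The paper avoids all of this by maintaining a plain array of the values $\Pr^R(v,u)$: a crossing swaps only the pair $v_1,v_2$, and these two entries update multiplicatively in $O(1)$ via $\Pr^R(v_1,u')=\Pr^R(v_1,u)/(1-p_{v_2})$ and $\Pr^R(v_2,u')=\Pr^R(v_2,u)\cdot(1-p_{v_1})$, from which $\grad f$ updates in $O(1)$ as well. (If you actually wanted $D$, it is already available for free as $\Pr^R(v_2,u)/p_{v_2}$ with $v_2$ the front point of the swapped pair, so no data structure beyond this array and the current permutation is needed.) Both routes give $O(n^2\log n)$ total since the sort dominates, but the paper's array bookkeeping makes the walk itself $O(n^2)$ and is considerably simpler than the BST. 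One small slip in the query formula: with $\omega(\calP,u)=f(\calP,u)+f(\calP,-u)$ you should return $\innerprod{u}{v_i}+\innerprod{-u}{v_j}=\innerprod{u}{v_i-v_j}$, not $\innerprod{u}{v_i}-\innerprod{-u}{v_j}$.
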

\begin{proof}
Consider any direction $u$ with $||u||=1$. We follow the definitions and notations in the proof of Lemma \ref{lem:10}.
We first show how to build a data structure to compute $f(\calP,u)$. Computing $f(\calP,-u)$ can be done similarly. Again, after having $f(\calP,u)$ and $f(\calP,-u)$, $\omega(\calP,u)$ can be computed immediately by $\omega(\calP,u) = f(\calP,u) - f(\calP,-u)$.

Denote by $o$ the origin. For any {\em ray} $\rho$ through $o$ in the plane, we refer to the {\em
angle} of $\rho$ as the angle $\alpha$ in $[0,2\pi)$ such that after we rotate the
$x$-axis around $o$ counterclockwise by $\alpha$ the $x$-axis has
the same direction as $\rho$ (see Fig.~\ref{fig:angle}(a)).
For any (undirected) line $l$ through
$o$, we refer to the {\em
angle} of $l$ as the angle $\alpha$  in $[0,\pi)$ such that after we rotate the
$x$-axis around $o$ counterclockwise by $\alpha$ the $x$-axis is collinear with
$l$.

Recall $\rho(u)$ is the ray through $o$ with direction $u$.
We define the {\em angle} of $u$ as the angle of the ray $\rho(u)$, denoted by $\theta_u$.
For ease of discussion, we assume $\theta_u$ is in $[0,\pi)$ since the the case $\theta_u\in [\pi,2\pi)$ can be handled similarly.

\begin{figure}[t]
\begin{minipage}[t]{\linewidth}
\begin{center}
\includegraphics[totalheight=1.4in]{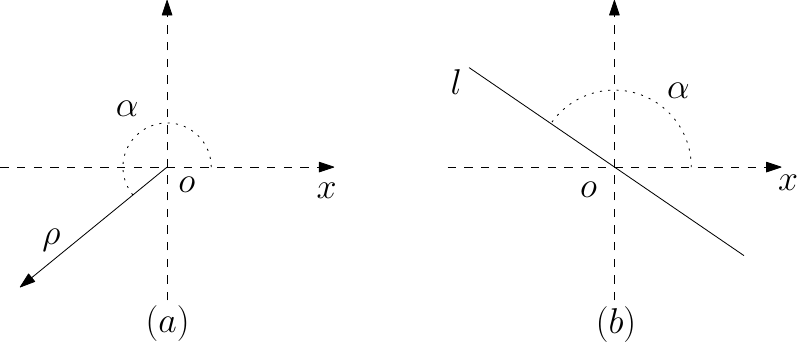}
\caption{\footnotesize Illustrating the definition of the angle
$\alpha$ of: (a) a ray $\rho$ and (b) a line $l$.}
\label{fig:angle}
\end{center}
\end{minipage}
\end{figure}


We call the order of the points of $\calP$ sorted by the coordinates
of their projections on the ray $\rho(u)$ the {\em canonical order} of $\calP$
with respect to $u$. An easy observation is that when we increase the
angle $u$, the canonical order of $\calP$ does not change until $u$ is
perpendicular to a line containing two points of $\calP$. There are
$O(n^2)$ lines in the plane each of which contains two points of
$\calP$ and the directions of these lines partition $[0,\pi)$ into
$O(n^2)$ intervals such that if $\theta_u$ changes in each interval the
canonical order of $\calP$ does not change. In the following, we show
that for each of the above intervals, the value of $f(\calP,u)$ is a function of
the angle $\theta_u$, and more specifically $f(\calP,u)=a\cdot
\cos(\theta_u)+b\cdot \sin(\theta_u)$ where $a$ and $b$ are constants when $\theta_u$
changes in the
interval. As preprocessing for the lemma, we will compute the function
$f(\calP,u)$ for each interval; for each query direction $u$, we first
find the interval that contains $\theta_u$ by binary search in $O(\log n)$
time and then obtain
the value $f(\calP,u)$ in constant time using the function for
the interval. The details are given below.

For simplicity of discussion,
we make a general position assumption that no three points of $\calP$
are collinear.
For any two points $v$ and $v'$ in $\calP$,
let $\beta(v,v')$ denote the angle of the line perpendicular to the
line containing $v$ and $v'$, and we also say $\beta(v,v')$ is {\em defined}
by $v$ and $v'$. We sort all $O(n^2)$ angles $\beta(v,v')$ for
$v,v'\in \calP$ in increasing order, and let
$\beta_1,\beta_2,\ldots,\beta_h$ be the sorted list with $h=O(n^2)$.
For simplicity, let $\beta_0=0$ and $\beta_{h+1}=\pi$. These angles
partition $[0,\pi)$ into $h+1$ intervals.
Consider an interval $I_i=(\beta_i,\beta_{i+1})$ for any $0\leq i\leq
h$. Below we compute the
function $f(\calP,u)=a\cdot \cos(\theta_u)+b\cdot \sin(\theta_u)$ for
$\theta_u\in
(\beta_i,\beta_{i+1})$. Again, note that when $\theta_u$ changes in $I_i$, the
canonical order of $\calP$ does not change.

According to the proof of Lemma \ref{lem:10},
$f(\calP,u)= \sum_{v\in \calP}\Pr^R(v,u) \innerprod{v}{u}$.
Since the canonical order of $\calP$ does not change for any $\theta_u\in
I_i$,  for any $v\in \calP$, $\Pr^R(v,u)$ is a constant when $\theta_u$ changes in $I_i$.
Next, we consider the coordinate $\innerprod{v}{u}$ on $\rho(u)$.

For each point $v\in \calP$, let $\alpha_v$ be the angle of the
ray originating from $o$ and containing $v$ (i.e., directed from $o$ to $v$),
and let $d_v$ be the length of the
line segment $\overline{vo}$. Note that $\alpha_v$ and $d_v$ are fixed
for the input.  Then, we have (see Fig.~\ref{fig:coordinate})
$$\innerprod{v}{u}=d_v\cdot \cos(\alpha_v-\theta_u)=d_v\cdot \cos(\alpha_v)\cdot \cos(\theta_u)
+d_v\cdot \sin(\alpha_v)\cdot \sin(\theta_u).$$

Hence, we have the following
\begin{equation*}
\begin{split}
f(\calP,u)
&=\sum_{v\in \calP}\Pr^R(v,u)\innerprod{v}{u}\\
&=\sum_{v\in
\calP}\Pr^R(v,u)\cdot d_v\cdot [\cos(\alpha_v)\cdot
\cos(\theta_u)+\sin(\alpha_v)\cdot \sin(\theta_u)]\\
&=\cos(\theta_u)\cdot \Big[\sum_{v\in \calP}\Pr^R(v,u)\cdot d_v\cdot
\cos(\alpha_v)\Big]+\sin(\theta_u)\cdot
\Big[\sum_{v\in \calP}\Pr^R(v,u)\cdot d_v\cdot\sin(\alpha_v)\Big].\\
\end{split}
\end{equation*}

Let $a=\sum_{v\in \calP}\Pr^R(v,u)\cdot d_v\cdot \cos(\alpha_v)$
and $b=\sum_{v\in \calP}\Pr^R(v,u)\cdot d_v\cdot\sin(\alpha_v)$.
Hence, $a$ and $b$ are constants when $\theta_u$ changes in $I_i$.
Then, we have $f(\calP,u)=a\cdot \cos(\theta_u)+b\cdot \sin(\theta_u)$, for any $\theta_u\in I_i$. Therefore, if we know the
two values $a$ and $b$, we can compute $f(\calP,u)$ in constant
time for any direction $\theta_u\in I_i$.

\begin{figure}[t]
\begin{minipage}[t]{\linewidth}
\begin{center}
\includegraphics[totalheight=1.4in]{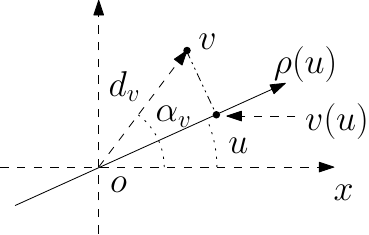}
\caption{\footnotesize Illustrating the computation of the coordinate
$x(v,u)$ on $l(u)$: $v(u)$ is the perpendicular projection of $v$ on
$l(u)$. The length of $\overline{ov}$ is $d_v$. }
\label{fig:coordinate}
\end{center}
\end{minipage}
\end{figure}

In the sequel, we show that we can compute $a$ and $b$ for all
intervals $I_i=(\beta_i,\beta_{i+1})$ with $i=0,1,\ldots,h$ in
$O(n^2)$ time. For each
interval $I_i$, we use $a(I_i)$ and $b(I_i)$ to denote the
corresponding $a$ and $b$ respectively for the interval $I_i$.

Suppose we have computed $a(I_i)$ and $b(I_i)$ for the interval $I_i$, and also suppose
have computed the value $\Pr^R(v,u)$ for each point
$v\in \calP$ when $\theta_u\in I_i$ (note that $\Pr^R(v,u)$
is a constant for any $\theta_u\in I_i$).
Initially, we can compute these values for the interval $I_0$ in $O(n\log n)$ time
by Lemma \ref{lem:10}.
Below, we show that we can obtain $a(I_{i+1})$ and $b(I_{i+1})$ in
constant time, based on the above values maintained for $I_i$.

Recall that $I_i=(\beta_i,\beta_{i+1})$ and
$I_{i+1}=(\beta_{i+1},\beta_{i+2})$. Suppose the angle $\beta_{i+1}$
is defined by the two points $v_1$ and $v_2$ of $\calP$. In other words,
$\beta_{i+1}$ is the angle of the line perpendicular to the
line through $v_1$ and $v_2$. If we increase the angle $\theta_u$ in
$(\beta_i,\beta_{i+2})$, the canonical order of $\calP$ does not
change except that $v_1$ and $v_2$ exchange their order when $\theta_u$ passes the value
$\beta_{i+1}$. Therefore, for each point $v\in \calP\setminus
\{v_1,v_2\}$, the value $\Pr^R(v,u)$ is a constant
for any $\theta_u\in (\beta_i,\beta_{i+2})$.
Based on this observation, we can compute $a(I_{i+1})$ in the following way.

We first analyze the change of the values $\Pr^R(v_1,u)$ and
$\Pr^R(v_2,u)$ when $\theta_u$ changes from $I_i$ to $I_{i+1}$. Let
$u$ and $u'$ be any two directions such that $\theta_u$  is
in $I_i$ and $\theta_{u'}$ is in $I_{i+1}$. Without loss of
generality, we assume $\innerprod{v_1}{u}<\innerprod{v_2}{u}$, and thus,
$\innerprod{v_1}{u'}>\innerprod{v_2}{u'}$ since $v_1$ and $v_2$ exchange their order.
Observe that $\Pr^R(v_1,u')=\Pr^R(v_1,u)/
(1-p_{v_2})$ and $\Pr^R(v_2,u')=\Pr^R(v_2,u)\cdot (1-p_{v_1})$. Thus, we
can obtain $\Pr^R(v_1,u')$ and $\Pr^R(v_2,u')$ in constant time since
we already maintain $\Pr^R(v_1,u)$ and $\Pr^R(v_2,u)$. Consequently, we have
\begin{align}
\label{eq:aR}
a(I_{i+1}) &=
 a(I_i)-[\Pr^R(v_1,u)d_{v_1}\cos(\alpha_{v_1})+\Pr^R(v_1,u)d_{v_2}\cos(\alpha_{v_2})]\\ \nonumber
&+[\Pr^R(v_1,u')d_{v_1}\cos(\alpha_{v_1})+\Pr^R(v_1,u')d_{v_2}\cos(\alpha_{v_2})]\\ \nonumber
&=a(I_i)+d_{v_1}\cos(\alpha_{v_1})\cdot [\Pr^R(v_1,u')-\Pr^R(v_1,u)]
+d_{v_2}\cos(\alpha_{v_2})\cdot [\Pr^R(v_2,u')-\Pr^R(v_2,u)].
\end{align}

Hence, after we compute $\Pr^R(v_1,u')$ and $\Pr^R(v_2,u')$, we can obtain $a(I_{i+1})$ in constant time.

Similarly, we can obtain $b(I_{i+1})$ in constant time. Also, the values
$\Pr^R(v_1,u)$ and $\Pr^R(v_2,u)$ are updated for $\theta_u\in I_{i+1}$.

In summary, after the $O(n^2)$ angles $\beta(v,v')$ are sorted in $O(n^2\log
n)$ time, the above computes the functions
$f(\calP,u)=a(I_i)\cdot \cos(\theta_u)+b(I_i)\cdot \sin(\theta_u)$ for all intervals $I_i$
with $i=0,1,\ldots,h$, in additional $O(n^2)$ time. This finishes our
preprocessing.

Consider any query direction $u$. By binary search, we first find the
two angles $\beta_i$ and $\beta_{i+1}$ such that $\beta_i\leq
\theta_u<\beta_{i+1}$. If $\beta_i\neq \theta_u$, then $\theta_u$ is in $I_i$ and we can
use the function $f(\calP,u)=a(I_i)\cos(\theta_u)+b(I_i)\sin(\theta_u)$ to
compute $f(\calP,u)$ in constant time. If $\beta_i= \theta_u$, then the
function $f(\calP,u)=a(I_i)\cos(\theta_u)+b(I_i)\sin(\theta_u)$ still gives the
correct value of  $f(\calP,u)$ since when $\theta_u=\beta_i$ the
projections of the two points of $\calP$ defining $\beta_i$ on $\rho(u)$
overlap and we can still consider the canonical order of $\calP$ for
$\theta_u=\beta_i$ the same as that for $\theta_u\in I_i$. Hence, the query time is
$O(\log n)$.

Next, we show how to compute $\grad f(\calP,u)$. Recall that $\grad
f(\calP,u)=\sum_{v\in \calP}\Pr^R(v,u)v$ by the proof of Lemma
\ref{lm:complexity}. As preprocessing, we compute the value
$\sum_{v\in \calP}\Pr^R(v,u)v$ for each interval $\theta_u\in
(\beta_i,\beta_{i+1})$ for $i=0,1,\ldots,h$. This can be done in
$O(n^2)$ time (after we sort all angles), by using the similar idea as
above. Specifically, suppose we already have $\grad
f(\calP,u)=\sum_{v\in \calP}\Pr^R(v,u)v$ for $\theta_u\in
(\beta_i,\beta_{i+1})$; then we can compute $\grad
f(\calP,u)=\sum_{v\in \calP}\Pr^R(v,u)v$ for $\theta_u\in
(\beta_{i+1},\beta_{i+2})$ in constant time. This is because when $\theta_u$
changes from $(\beta_i,\beta_{i+1})$ to $(\beta_{i+1},\beta_{i+2})$,
$\Pr^R(v,u)$ does not change for any $v\in \calP\setminus\{v_1,v_2\}$
and $\Pr^R(v,u)$ for $v\in \{v_1,v_2\}$ can be updated in constant
time, as shown above. Due to the above preprocessing, given any
direction $u$, we can compute $\grad f(\calP,u)$ in $O(\log n)$ time
by binary search, similar to that of computing $f(\calP,u)$. Further,
according to Lemma \ref{lm:complexity}, the above preprocessing
essentially computes $M$, in totally $O(n^2\log n)$ time.
The lemma thus follows.
\end{proof}

\subsection{Computing Expected Width for Locational Uncertainty}

In this setting let $\P$ be a set of $n$ uncertain points each taking
one of several locations from a set of $m$ locations in $S$.  The
probability that a point $v\in \P$ is in location $s \in S$ is denoted
$p_{v,s}$.  To simplify analysis and discussion, we assume each
location $s \in S$ only has the potential to be realized by any one
uncertain point $v \in \P$.

We now replicate the lemmas in the previous section for this setting.  We use the same notation and structure when possible.

\begin{lemma}\label{lem:10-loc}
For any direction $u$, we can compute $\omega(\P,u)$, $f(\P,u)$, and $\grad f(\calP,u)$ in $O(m\log
m)$ time; if the points of $S$ are already sorted along the
direction $u$, then we can compute them in $O(m)$ time.
\end{lemma}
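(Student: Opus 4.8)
The plan is to mirror the argument for Lemma~\ref{lem:10}, with the locations in $S$ playing the role that the points of $\calP$ played there, and exploiting the standing assumption that each location $s\in S$ is realized by exactly one uncertain point, which I denote $v(s)\in\calP$. First I would fix $u$ with $\|u\|=1$, project every location $s\in S$ onto the ray through the origin with direction $u$ to obtain the coordinate $\innerprod{s}{u}$, and sort the $m$ locations by these coordinates; this sort is the only $O(m\log m)$ step, and it is skipped when $S$ is already sorted along $u$. For a location $s$ write $S^R(s,u)=\{s'\in S:\innerprod{s'}{u}>\innerprod{s}{u}\}$ (ties broken by the fixed consistent rule), and let $\Pr^R(v(s),s,u)$ be the probability that $v(s)$ is realized at $s$ and no realized point projects strictly to the right of $s$.

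Because the locations of a single point are mutually exclusive while distinct points are independent, conditioning on $v(s)$ being at $s$ already forbids $v(s)$ from all its other locations, so
\[
\Pr^R(v(s),s,u)=p_{v(s),s}\prod_{w\in\calP\setminus\{v(s)\}}\Bigl(1-\sum_{s'\in S^R(s,u)\cap\,\mathrm{loc}(w)}p_{w,s'}\Bigr),
\]
the direct analogue of~\eqref{eq:PrR}, where $\mathrm{loc}(w)$ denotes the set of locations of $w$; and then $f(\calP,u)=\sum_{s\in S}\Pr^R(v(s),s,u)\,\innerprod{s}{u}$. I would compute all $m$ values $\Pr^R(v(s),s,u)$ by a single right-to-left sweep of the sorted list: maintain an array indexed by the uncertain points storing, for each point $w$, the quantity $Q_w$ equal to the total mass of $w$ over the locations already swept (those strictly to the right of the current one), together with the running product $\Pi=\prod_{w\in\calP}(1-Q_w)$. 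When the sweep reaches a location $s$ with $v=v(s)$, before updating we have $\Pr^R(v,s,u)=p_{v,s}\,\Pi/(1-Q_v)$; the division is legitimate because $Q_v<1$ (the general position assumption gives $\sum_{s'}p_{v,s'}<1$, and in any case $s$ itself is not counted in $Q_v$). Then I increase $Q_v$ by $p_{v,s}$ and rescale $\Pi$ by the factor $(1-Q_v^{\mathrm{new}})/(1-Q_v^{\mathrm{old}})$, all in $O(1)$ time, accumulating $\Pr^R(v,s,u)\,\innerprod{s}{u}$ into a running sum for $f(\calP,u)$.

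Running the same sweep with direction $-u$ (equivalently, left to right) yields $f(\calP,-u)$, and then $\omega(\calP,u)=f(\calP,u)-f(\calP,-u)$. By the locational version of Lemma~\ref{lm:complexity} we have $\grad f(\calP,u)=\sum_{s\in S}\Pr^R(v(s),s,u)\,s$, so the gradient is read off from the same swept quantities in an extra $O(m)$ time; hence all three outputs cost $O(m\log m)$ overall, or $O(m)$ when $S$ is already sorted along $u$. The work is essentially bookkeeping, and the only points that need care --- the mild obstacles here --- are making sure the ``remove $v(s)$ from the product'' correction is applied exactly (this is precisely where the one-point-per-location assumption is used) and verifying that the factor $1-Q_v$ never vanishes so that the incremental update of $\Pi$ is well defined; both follow immediately from the mutual exclusivity of a point's locations.
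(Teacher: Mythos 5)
Your proof is correct and follows essentially the same plan as the paper's: sort the $m$ locations by $\innerprod{\cdot}{u}$, sweep right to left maintaining per-point residual mass (your $Q_w$ is exactly $1-A_w[s]$ in the paper's notation and your running product $\Pi$ is $\Pr^R_\emptyset(s,u)$), and extract $\Pr^R(v(s),s,u)=p_{v(s),s}\,\Pi/(1-Q_{v(s)})$ in amortized $O(1)$ per location. One small remark: you write $\grad f(\calP,u)=\sum_{s}\Pr^R(v(s),s,u)\,s$ with the location vector $s$, which is the correct expression (it matches the formula in the appendix proof of the locational Lemma~\ref{lm:complexity}); the version stated inside the paper's proof of this lemma, with $v$ in place of $s$, is a typo.
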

\begin{proof}
Again, we first compute $f(\P,u)$ since $\omega(\calP,u) = f(\calP,u) - f(\calP,-u)$ and $f(\calP,-u)$ can be computed similarly.

We follow the structure and proof of Lemma \ref{lem:10} and just note the changes.
The first change is that we need to keep a bit more structure since there is now dependence between the different locations of each uncertain point $v$.
Denote by $S^R(s,u)$ the subset of $s' \in S$ such that $\innerprod{s'}{u} > \innerprod{s}{u}$ and denote by $\Pr^R_\emptyset(s,u)$ as the probability that no point $v \in \P$ appears at a larger location than $s \in S$ along direction $u$.  To describe this probability we first define a vector $A_v$ indexed by $s$ as $A_v[s] = 1 - \sum_{s' \in S^R(s,u)} p_{v,s'}$ as the probability that uncertain point $v$ does not appear in any of its possible locations which are after $s$ along direction $u$.
Now we can define
\[
\Pr^R_\emptyset(s,u) = \prod_{v \in \P} A_v[s].
\]
Finally, we define $\Pr^R(s,v,u)$ as the probability that the largest point along $u$ is uncertain point $v \in \P$ at location $s \in S$.  This updates equation (\ref{eq:PrR}) to be
\[
\Pr^R(v,s,u)=p_{v,s} \cdot \Pr^R_\emptyset(s,u) / A_v[s].
\]
Note the two key differences.  First we need to sum the probabilities for each location of $v$ since they are mutually exclusive.  Second, value $A_v[s]$ needs to be factored out of $\Pr^R_\emptyset(s,u)$ because it is already accounted for in $p_{v,s}$ locating $v$ at $s$, again since they are mutually exclusive.


It follows
that $f(\P,u) = \sum_{s \in S,v \in \P} \Pr^R(v,s,u) \innerprod{s}{u}$.

To compute $f(\P,u)$ we again start by projecting all points $P$ onto $\rho(u)$ obtaining coordinates $\innerprod{u}{s}$ and sorting, if needed.  This takes $O(m \log m)$ time.
Given these coordinates, sorted, it now takes a bit more work in the locational setting to show that $f(\P,u)$ can be computed in $O(m)$ additional time.  We focus on computing all $m$ values $\Pr^R(v,s,u)$; from there is straightforward to compute $f(\P,u)$ in $O(m)$ time.

We sweep over the points $s \in P$ from largest $\innerprod{s}{u}$ value to smallest, and we maintain each $A_v[s]$ and $\Pr^R_\emptyset(s,u)$ along the way.  Given these, it is not hard to calculate $\Pr^R(v,s',u)$ in constant time with $p_{v,s'}$ for $s' \in S$ as the next smallest value $\innerprod{s'}{u}$.  The important observation is that we only need to update $A_v[s]^{\text{new}} = A_v[s]^{\text{old}} - p_{v,s}$ if $p_{v,s} > 0$ (which by assumption holds for only one $v \in \P$).  Then $\Pr^R_\emptyset(s,u)$ is updated by multiplying by $A_v[s]^{\text{new}}/A_v[s]^{\text{old}}$.  Both operations can be done in constant time as needed to complete the proof.

It remains to compute $\grad f(\calP,u)$. It is not hard to see
that in the locational model $$\grad f(\calP,u)=\sum_{s\in S, v\in \calP}\Pr^R(v,s,u)v.$$
Note that the above has already computed the $m$ values $\Pr^R(v,s,u)$ for all $s\in S$ and $v\in \calP$. Therefore, $\grad f(\calP,u)$ can be computed in additional $O(m)$ time.
\end{proof}

\begin{lemma}\label{lem:20loc}
We can build a data structure of $O(m^2)$ size in $O(m^2\log m)$ time
that can compute $\omega(\P,u)$, $f(\P,u)$,  and $\grad f(\calP,u)$ in $O(\log m)$ time for any query
direction $u$. Further, we can construct $M$ explicitly in $O(m^2\log m)$ time.
\end{lemma}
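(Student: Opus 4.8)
The plan is to replay the proof of Lemma~\ref{lem:20}, with the $m$ locations of $S$ playing the role of the $n$ uncertain points and with Lemma~\ref{lem:10-loc} replacing Lemma~\ref{lem:10}. As there, it suffices to build the structure for $f(\P,u)$: the value $f(\P,-u)$ is obtained symmetrically and $\dw(\P,u)=f(\P,u)-f(\P,-u)$, while $\grad f(\P,u)$ is maintained in the same sweep. Assume $\|u\|=1$ and let $\theta_u\in[0,\pi)$ be the angle of $\rho(u)$ (the range $[\pi,2\pi)$ is symmetric). The canonical order of $S$ along $\rho(u)$ is unchanged unless $\theta_u$ is perpendicular to a line through two locations of $S$; under the general-position assumption (no three locations collinear) these give $h=O(m^2)$ distinct critical angles $\beta(s,s')$, which I would sort in $O(m^2\log m)$ time as $\beta_0=0<\beta_1<\cdots<\beta_h<\beta_{h+1}=\pi$, partitioning $[0,\pi)$ into intervals $I_i=(\beta_i,\beta_{i+1})$.

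On each $I_i$ the canonical order of $S$ is fixed, so by Lemma~\ref{lem:10-loc} every $\Pr^R(v,s,u)$ is constant on $I_i$. Writing $\innerprod{s}{u}=d_s\cos(\alpha_s-\theta_u)=d_s\cos\alpha_s\cos\theta_u+d_s\sin\alpha_s\sin\theta_u$ and using $f(\P,u)=\sum_{s\in S,\,v\in\P}\Pr^R(v,s,u)\innerprod{s}{u}$ from Lemma~\ref{lem:10-loc}, one gets $f(\P,u)=a(I_i)\cos\theta_u+b(I_i)\sin\theta_u$ with
\[
a(I_i)=\sum_{s\in S,\,v\in\P}\Pr^R(v,s,u)\,d_s\cos\alpha_s,\qquad
b(I_i)=\sum_{s\in S,\,v\in\P}\Pr^R(v,s,u)\,d_s\sin\alpha_s,
\]
and likewise $\grad f(\P,u)=\sum_{s\in S,\,v\in\P}\Pr^R(v,s,u)\,s=:g(I_i)$ is a constant vector on $I_i$. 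I would compute $a(I_0),b(I_0),g(I_0)$ together with the auxiliary quantities $\Pr^R(v,s,u)$, $\Pr^R_\emptyset(s,u)$, $A_v[s]$ for $\theta_u\in I_0$ in $O(m\log m)$ time via Lemma~\ref{lem:10-loc}, store the triple $(a(I_i),b(I_i),g(I_i))$ for each interval, and sweep $\theta_u$ from $I_0$ to $I_h$, updating incrementally. When $\theta_u$ crosses $\beta_{i+1}$, exactly two adjacent locations $s_1,s_2$ swap in the canonical order; say $s_1$ is a possible location of $v_1$ and $s_2$ of $v_2$. Since each location is realized by a unique uncertain point, the only probabilities that can change are $\Pr^R(v_1,s_1,\cdot)$ and $\Pr^R(v_2,s_2,\cdot)$, and each changes by a multiplicative factor computable in $O(1)$ time by the same $A_v[\cdot]$-rescaling used in the sweep of Lemma~\ref{lem:10-loc} (and when $v_1=v_2$ nothing changes at all, since swapping two locations of the same point does not alter any $\Pr^R$ value). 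Hence $a,b,g$ are updated in $O(1)$ per critical angle, exactly as in \eqref{eq:aR}, so the whole sweep costs $O(m^2)$ beyond the sort: preprocessing $O(m^2\log m)$, space $O(m^2)$.

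A query direction $u$ is then answered by binary search for the interval $I_i$ (or endpoint $\beta_i$) containing $\theta_u$, followed by a constant-time evaluation of $a(I_i)\cos\theta_u+b(I_i)\sin\theta_u$, of $g(I_i)$, and of the symmetric quantities for $-u$; this gives $O(\log m)$ query time for $f(\P,u)$, $\dw(\P,u)$, and $\grad f(\P,u)$. For the explicit construction of $M$: in $\R^2$ the set $\Gamma$ consists of the $\binom m2$ lines through the origin orthogonal to location pairs, so $\conedecomp(\Gamma)$ is precisely the partition of the full circle into the $2(h+1)=O(m^2)$ angular cones obtained by extending the $\beta(s,s')$ to $[0,2\pi)$; by Lemma~\ref{lm:complexity} in its locational form, each cone $C$ contributes the single vertex $\grad f(\P,u)=g(C)$ of $M$. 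Running the sweep over the full circle and reading off $g(C)$ for every cone therefore yields all $O(m^2)$ vertices of $M$ in angular order, from which $M$ is written out explicitly, all in $O(m^2\log m)$ time. The one step needing care is verifying that the incremental update of $\Pr^R_\emptyset(\cdot)$ and of the affected $A_v[\cdot]$ entries across a critical angle remains $O(1)$ — this is exactly where the assumption that each location is owned by a single uncertain point is used, and it is the (mild) technical obstacle of the argument; everything else is a routine bookkeeping adaptation of Lemma~\ref{lem:20}.
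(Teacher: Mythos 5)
Your proof is correct and follows essentially the same approach as the paper's: the same $O(m^2)$ critical angles from pairs of locations, the same piecewise $a\cos\theta_u+b\sin\theta_u$ representation of $f(\P,u)$ on each angular interval, the same $O(1)$-per-angle incremental update of $a$, $b$, and $\grad f(\P,u)$ via the arrays $A_v[\cdot]$ and $\Pr^R_\emptyset(\cdot)$, the same $O(\log m)$ binary-search query, and the same reading of the vertices of $M$ off the constant gradient in each cone. Two small remarks: the paper makes explicit that the $O(1)$ update requires storing all arrays $A_v[\cdot]$, costing $O(mn)=O(m^2)$ space since $n\leq m$ (you use them but do not flag the space), and your observation that when $v_1=v_2$ no value $\Pr^R(v,s,u)$ changes at all is correct and slightly sharper than the paper's terse remark that that case is merely ``easier.''
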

\begin{proof}
Again we first discuss the case for computing $f(\P,u)$. For ease of discussion, we assume the angle $\theta_u$ is in $[0,\pi)$.
We again follow the structure of Lemma \ref{lem:20}.  The geometry is largely the same, except that there are $h = O(m^2)$ angles $\beta_1, \beta_2, \ldots, \beta_h$ since each pair $s, s' \in S$ now defines an angle $\beta(s,s')$.
But it remains to compute $f(\P,u) = a \cdot \cos(\theta_u) + b \cdot
\sin (\theta_u)$ for some constants $a$ and $b$ for any $\theta_u$ in each
$(\beta_i, \beta_{i+1})$.  The argument is virtually the same,
replacing $\Pr^R(v,u)$ with $\Pr^R(v,s,u)$.

It remains to show that we can calculate the constants $a(I_{i+1})$
and $b(I_{i+1})$ for an interval $I_{i+1} = (\beta_{i+1},
\beta_{i+2})$ efficiently, given the values for interval $I_i =
(\beta_i, \beta_{i+1})$.  Assume $\beta_{i+1}$ is defined for two
points $s_1, s_2 \in S$, where $\innerprod{s_1}{u} <
\innerprod{s_2}{u}$ for $\theta_u \in I_i$ and $\innerprod{s_1}{ u'} >
\innerprod{s_2}{u'}$ for $\theta_{u'} \in I_{i+1}$.  By definition, the
ordering among all other pairs of points is unchanged within
$(\beta_i,\beta_{i+2})$.    Let only $v_1$ take location $s_1$ with
positive probability and only $v_2$ take location $s_2$ with positive
probability.  We focus on the more general case where $v_1 \neq v_2$;
when $v_1 = v_2$, it is easier to update.

We again focus on updating $a$ and the algorithm for $b$ is symmetric.  By the $v_1 \neq v_2$ assumption $A_{v_1}[s_1]$ and $A_{v_2}[s_2]$ are unchanged in interval $(\beta_i,\beta_{i+2})$.  However from directions $u$ to $u'$ $A_{v_2}[s_1]$ increases by $p_{v_2,s_2}$ and $A_{v_1}[s_2]$ decreases by $p_{v_1,s_1}$; all other such values are unchanged.
Let $A_{v}[s]^u$ denote the value in direction $u$.
Hence
\[
\Pr^R_\emptyset(s_1,u') = \Pr^R_\emptyset(s_1,u) \frac{A_{v_2}^{u'}[s_1]}{A_{v_2}^u[s_1]},
\]
and so if we can update $A_{v_2}[s_1]$, we can update $\Pr^R_\emptyset(s_1,u')$ in constant time.  Only $A_{v_1}[s_2]$ and $\Pr^R_\emptyset(s_2,u')$ also need to be updated.
And then
\[
\Pr^R(v_1,s_1,u')
=
p_{v_1,s_1} \cdot \Pr^R_\emptyset(s_1,u')/A_{v_1}[s_1]
=
\Pr^R(v_1,s_1,u) \frac{\Pr^R_\emptyset(s_1,u')}{\Pr^R_\emptyset(s_1,u)}
\]
can also be updated in constant time, and similar for $\Pr^R(v_2,s_2,u')$.  Thus the only remaining difficulty is accessing and updating $A_{v_1}[s_2]$ and $A_{v_2}[s_1]$.  We can easily do this for if we store the full size $m$ array $A_v[\cdot]$ for each $v \in \V$.  Note that this takes $O(m \cdot n)$ space, but since the output is a structure of size $O(m^2)$ and $n \leq m$, this is not prohibitive. (We note that these full arrays are not explicitly required in Lemma \ref{lem:10-loc}, which only requires $O(m)$ space.)

Finally, we can update $a(I_{i+1})$ from $a(I_i)$ similarly to equation (\ref{eq:aR}) using $\Pr^R(v,s,u)$ in place of $\Pr^R(v,u)$.  Thus in $O(m^2)$ time, after sorting all interval breakpoints in $O(m^2 \log m)$ time, we can build a data structure that allows calculation of $f(\P,u)$ for any $u$ in $O(\log m)$ time.

Next, to compute $\grad f(\calP,u)$, recall that in the locational
model $\grad f(\calP,u)=\sum_{s\in S, v\in \calP}\Pr^R(v,s,u)v$ by the
proof of Lemma \ref{lm:complexity}. As preprocessing, we compute
the value $\sum_{s\in S, v\in \calP}\Pr^R(v,u)v$ for each interval
$\theta_u\in (\beta_i,\beta_{i+1})$ for $i=0,1,\ldots,h$. This can be done in
$O(m^2)$ time (after we sort all angles), by using the similar idea as
above. The argument is similar to that in Lemma \ref{lem:20} and we
ignore the details. Due to the above preprocessing, given any
direction $u$, we can compute $\grad f(\calP,u)$ in $O(\log m)$ time
by binary search. Further, the above preprocessing essentially
computes $M$, in totally $O(m^2\log m)$ time.
The lemma thus follows.
\end{proof}

\end{document}